\documentclass[conference,final]{IEEEtran}
\IEEEoverridecommandlockouts
% The preceding line is only needed to identify funding in the first footnote. If that is unneeded, please comment it out.
%\usepackage{cite}
\usepackage{amsmath,amssymb,amsfonts}
\usepackage{algorithmic}
\usepackage{graphicx}
\usepackage{textcomp}
\usepackage{amsthm}
\usepackage{mathtools}
\usepackage{semantic} 
\usepackage[export]{adjustbox}
\usepackage{mwe}
\def\BibTeX{{\rm B\kern-.05em{\sc i\kern-.025em b}\kern-.08em
    T\kern-.1667em\lower.7ex\hbox{E}\kern-.125emX}}

%
% Theorem environments; needs to be loaded before hyperref so that
% \autoref works

%\newcommand{\ownthmSpaceAbove}{5pt}
%\newcommand{\ownthmSpaceBelow}{5pt}
% all thmtools hooks are within the theorem group, so we need to adjust global definitions
\newcommand{\resetCurThmBraces}{%
\gdef\curThmBraceOpen{(}%
\gdef\curThmBraceClose{)}}
\resetCurThmBraces
\newcommand{\removeThmBraces}{%
\gdef\curThmBraceOpen{}%
\gdef\curThmBraceClose{}}

\newenvironment{notheorembrackets}{\removeThmBraces}{\resetCurThmBraces}
\usepackage{etoolbox}
\patchcmd{\thmhead}{(#3)}{\curThmBraceOpen #3\curThmBraceClose }{}{}

\usepackage{hyperref}
\hypersetup{hidelinks,final}
\usepackage[capitalize,nameinlink]{cleveref} % fix for links

%
% Standard macros to typeset papers on category theory and semantics
%
% Unless \catname is defined, make it bold.
% So, call \providecommand{\catname}{\mathcal}
% or \reprovidecommand{\catname}{\mathcal}
% before calling this file if you prefer calligraphic names for categories. 
% 
% Same applies to other commands.
% 

\usepackage{bm} % Needed for defbbname

%
% nicer overlines for math symbols
%

% Usage \mybar{<scale factor>}{<right shift>}{<math expression>}
%
\newcommand{\mybar}[3]{%
  \mathrlap{\hspace{#2}\overline{\scalebox{#1}[1]{\phantom{\ensuremath{#3}}}}}\ensuremath{#3}
}

\newcommand{\barF}{\mybar{0.6}{2.5pt}{F}}
\newcommand{\barB}{\mybar{0.6}{2.5pt}{B}}
\newcommand{\barBs}{\mybar{0.6}{1.8pt}{B}}
\newcommand{\barPow}{\mybar{0.7}{1.5pt}{\Pow}}
\newcommand{\barSigmas}{\mybar{0.9}{0pt}{\Sigmas}}
\newcommand{\barrho}{\mybar{0.9}{1pt}{\rho}}

% From Stefan

\providecommand{\catname}{\mathbf} 
\providecommand{\clsname}{\mathcal}
\providecommand{\oname}[1]{{\operatorname{\mathsf{#1}}}}

%% Defining category names like \BA, \BB, etc
\def\defcatname#1{\expandafter\def\csname B#1\endcsname{\catname{#1}}}
\def\defcatnames#1{\ifx#1\defcatnames\else\defcatname#1\expandafter\defcatnames\fi}
\defcatnames ABCDEFGHIJKLMNOPQRSTUVWXYZ\defcatnames

%% Defining \CA, \CB, etc
\def\defclsname#1{\expandafter\def\csname C#1\endcsname{\clsname{#1}}}
\def\defclsnames#1{\ifx#1\defclsnames\else\defclsname#1\expandafter\defclsnames\fi}
\defclsnames ABCDEFGHIJKLMNOPQRSTUVWXYZ\defclsnames

%% Defining \BBA, \BBB, etc
\def\defbbname#1{\expandafter\def\csname BB#1\endcsname{{\bm{\mathsf{#1}}}}}
\def\defbbnames#1{\ifx#1\defbbnames\else\defbbname#1\expandafter\defbbnames\fi}
\defbbnames ABCDEFGHIJKLMNOPQRSTUVWXYZ\defbbnames

%% Some standard categories
\def\Set{\catname{Set}}

% Misc

\providecommand{\argument}{\operatorname{-\!-}}

		               % Overline wave
			                     % Underline

\DeclareOldFontCommand{\bf}{\normalfont\bfseries}{\mathbf}
 	       % Small '+' for indexing	

%% Some standard functors
			                 % Powerset
		           % Finite powerset
		         % Countable powerset
		         % Non-empty powerset
\providecommand{\Id}{\operatorname{Id}}

%% General categorical notation

\providecommand{\id}{\mathsf{id}}
\providecommand{\op}{\mathsf{op}}
\providecommand{\comp}{\mathbin{\circ}}

				             % Initial/final map

%% Various arrows
				             % Injection
		       % Partial function
			         % Double arrow

\providecommand{\xto}[1]{\,\xrightarrow{#1}\,}

\providecommand{\To}{\mathrel{\Rightarrow}}			           % Double arrow
			         % Double arrow

\providecommand{\dar}{\kern-1.2pt\operatorname{\downarrow}}	
\providecommand{\uar}{\kern-1.2pt\operatorname{\uparrow}}	

%% Logic

%% Order

%% Products
\providecommand{\fst}{\oname{fst}}
\providecommand{\snd}{\oname{snd}}

%% Coproducts
\providecommand{\inl}{\oname{inl}}
\providecommand{\inr}{\oname{inr}}

\DeclareSymbolFont{Symbols}{OMS}{cmsy}{m}{n}
\DeclareMathSymbol{\iobj}{\mathord}{Symbols}{"3B}
%\DeclareRobustCommand{\iobj}{\emptyset}

%% CCC

\providecommand{\ev}{\oname{ev}}

% Semantic brackets
\usepackage{stmaryrd}

% Typographic
		 % possibly line-breaking comma
		       % non-line-breaking hyphen
		               % Empty object whose emptiness 
                                                           % is not recognized by LaTeX

			             % Comments in equations
\providecommand{\pacman}[1]{}					                     % Hide a piece of text

\newcommand{\undefine}[1]{\let #1\relax}					                       % Make a command undefined

				                 % Undefine the QED symbol

% -1 superscript for the inversion operator
\providecommand{\mone}{{\text{\kern.5pt\rmfamily-}\mathsf{\kern-.5pt1}}}

\providecommand{\smin}{\smallsetminus}

%\makeatletter
%\@ifpackageloaded{enumitem}{}{
%  \usepackage[loadonly]{enumitem}                          % without [loadonly]  
%}                                                          % conflicts with Beamer		
%\makeatother

% Condensed list environments
%\newlist{citemize}{itemize}{1}
%\setlist[citemize]{label=\labelitemi,wide} 
%leftmargin=0cm,itemindent=.7cm,labelwidth=\itemindent,labelsep=-.3cm,align=left}

%\newlist{cenumerate}{enumerate}{1}
%\setlist[cenumerate,1]{label=\arabic*.~,ref={\arabic*},wide} 
%leftmargin=0cm,itemindent=.7cm,labelwidth=\itemindent,labelsep=-.3cm,align=left}

%\newenvironment{citemize}{\begin{itemize}[leftmargin=0cm,itemindent=.7cm,labelwidth=\itemindent,labelsep=-.3cm,align=left]}{\end{itemize}}
%\newenvironment{cenumerate}{\begin{enumerate}[leftmargin=0cm,itemindent=.7cm,labelwidth=\itemindent,labelsep=-.3cm,align=left]}{\end{enumerate}}
 
%% A macro for defining mixfix operators
\makeatletter
\def\mfix#1{\oname{#1}\@ifnextchar\bgroup\@mfix{}}	       % processing odd arguments 
\def\@mfix#1{#1\@ifnextchar\bgroup\mfix{}}			           % processing even arguments
\makeatother

%% E.g. 

\providecommand{\case}[3]{\mfix{case}{\mathbin{}#1}{of}{#2}{\kern-1pt;}{\mathbin{}#3}}

\usepackage{dsfont}

\DeclareMathSymbol{\mathinvertedexclamationmark}{\mathord}{operators}{'074}
\DeclareMathSymbol{\mathexclamationmark}{\mathord}{operators}{'041}
\makeatletter
\newcommand{\raisedmathinvertedexclamationmark}{%
  \mathord{\mathpalette\raised@mathinvertedexclamationmark\relax}%
}
\newcommand{\raised@mathinvertedexclamationmark}[2]{%
  \raisebox{\depth}{$\m@th#1\mathinvertedexclamationmark$}%
}
\makeatother

\newcommand{\product}{\times}
\newcommand{\wt}{\widetilde}

\newcommand{\Pt}{V}
\newcommand{\R}{\mathcal{R}}
\newcommand{\st}{\mathsf{st}}

\newcommand{\Pow}{\mathcal{P}}
\newcommand{\xTo}{\xRightarrow}

\newcommand{\primrec}{\mathsf{pr}\,}

\newcommand{\lft}{{\mathsf{l}}}
\newcommand{\rgt}{{\mathsf{r}}}

\newcommand{\smc}{\mathbin{;}}
\newcommand{\hatR}{{\widehat{R}}}
\newcommand{\qand}{\quad\text{and}\quad}
\newcommand{\qqand}{\qquad\text{and}\qquad}

\newcommand{\deq}[1]{{\rotatebox{#1}{$=$}}}
\newcommand{\dleq}[1]{{\rotatebox{#1}{$\preceq$}}}
\newcommand{\dgeq}[1]{{\rotatebox{#1}{$\succeq$}}}
\newcommand{\cev}[1]{\reflectbox{\ensuremath{\vec{\reflectbox{\ensuremath{#1}}}}}}

\newcommand{\HO}{\mathcal{HO}}

\newcommand{\Gra}{\mathbf{Gra}}

\newcommand{\happrox}{\mathrel{\widehat\lesssim}}

\newcommand{\under}[1]{\lvert#1\rvert}
\newcommand{\SKI}{\mathrm{SKI}}

\newcommand{\ap}{{\mathrm{ap}}}
\newcommand{\var}{\mathsf{var}}

%Henning's macros
%\usepackage[bbgreekl]{mathbbol}

\newcommand{\Sigmas}{\Sigma^{\star}}

\newcommand{\ar}{\mathsf{ar}}

\newcommand{\epito}{\twoheadrightarrow}

\newcommand{\can}{\mathsf{can}}

\newcommand{\seq}{\subseteq}
\newcommand{\ol}{\overline}
\newcommand{\out}{\mathsf{out}}
\newcommand{\outl}{\mathsf{outl}}
\newcommand{\outr}{\mathsf{outr}}

\providecommand{\C}{}
\providecommand{\D}{}

\renewcommand{\C}{{\mathbb{C}}}
\renewcommand{\D}{{\mathbb{D}}}

\renewcommand{\id}{{\mathsf{id}}}
\newcommand{\Nat}{\mathds{N}}

\newcommand{\Rel}{\mathbf{Rel}}

 % Addded for LICS

\newcommand{\goes}[2]{\ensuremath{#1 \rightarrow #2}}

\newcommand{\goesv}[3]{\ensuremath{#1 \xrightarrow{~#3~} #2}}
% \newcommand{\goesv}[3]{\begin{tikzcd}[ampersand replacement=\&]#1 \arrow[r, "#3"]
%     \& #2\end{tikzcd}}
%\newcommand{\retsv}[3]{\ensuremath{#1 \Downarrow_{#3} #2}}

% \newcommand{\retsv}[3]{\begin{tikzcd}[ampersand replacement=\&]#1
%     \arrow{r}[anchor=center]{/}[anchor=south west]{#3} \& #2\end{tikzcd}}

% \newcommand{\retsvr}[3]{\begin{tikzcd}[ampersand replacement=\&]#1
%     \arrow[color=red]{r}[anchor=center]{/}[anchor=south west]{#3} \& #2\end{tikzcd}}

% Change these macros in order to swap the S\times vs \times S order. Minor manual adaptions are (only) required only in the following proofs:
% P-curry-h, prop:trc-surjective, thm:specifications-vs-transformations

%\newcommand{\ev}{\mathsf{ev}}
%\newcommand{\st}{\mathsf{st}}
%\newcommand{\xto}{\xrightarrow}\newcommand{\sbrack}[1]{[\![ #1 ]\!]}

%\newcommand{\sbrack}[1]{\lbrack\mkern-2.5mu \lbrack #1 \rbrack \mkern-2.5mu \rbrack}

%\newcommand{\ssbrack}[1]{\lbrack\mkern-2.5mu\lbrack\mkern-2.5mu \lbrack \mkern-2.5mu #1 \rbrack \mkern-2.5mu \rbrack \mkern-2.5mu \rbrack}

%\newcommand{\csbrack}[1]{\widetilde{[\![ #1 ]\!]}}
\newcommand{\f}{\oname{f}}

% end of Henning's macros

% Stefan's macros
\newcommand{\takeout}[1]{\empty}

\newcommand{\ini}{\iota}

\newcommand{\wh}{\widehat}

\DeclareMathOperator{\Alg}{\mathbf{Alg}}

\renewcommand{\rho}{\varrho}

\newcommand{\opp}{\mathsf{op}}

% usage: \pullbacklabel{angle}
\newcommand{\pullbackangle}[2][]{\arrow[phantom,to path={
                     -- ($ (\tikztostart)!1cm!#2:([xshift=8cm]\tikztostart) $)
                        node[anchor=west,pos=0.0,rotate=#2,
                        inner xsep = 0]
                        {\begin{tikzpicture}[minimum
                        height=1mm,baseline=0,#1]
    \draw[-] (0,0) -- (.5em,.5em) -- (0,1em);
                        \end{tikzpicture}}}]{}}

\makeatletter
\newsavebox{\@brx}
\newcommand{\llangle}[1][]{\savebox{\@brx}{\(\m@th{#1\langle}\)}%
  \mathopen{\copy\@brx\kern-0.5\wd\@brx\usebox{\@brx}}}
\newcommand{\rrangle}[1][]{\savebox{\@brx}{\(\m@th{#1\rangle}\)}%
  \mathclose{\copy\@brx\kern-0.5\wd\@brx\usebox{\@brx}}}
\makeatother

\renewcommand{\comp}{\cdot}
\renewcommand{\c}{\colon}

\newcommand{\xra}[1]{\xrightarrow{~#1~}}
\renewcommand{\xto}{\xra}
\let\xmpsto=\xmapsto
\renewcommand{\xmapsto}[1]{\xmpsto{~#1~}}

\newcommand{\V}{\mathcal{V}}
\newcommand{\monoto}{\rightarrowtail}
\newcommand{\subto}{\hookrightarrow}
\newcommand{\commu}{\ensuremath{\circlearrowleft}}
\renewcommand{\Nat}{\mathbb{N}}

\newcommand{\fset}{{\mathbb{F}}}
\newcommand{\vcat}{{\Set}^{\fset}}

\newcommand{\gcat}{\mathbb{C}}

\newcommand{\mS}{{\mu\Sigma}}

\renewcommand{\epsilon}{\varepsilon}

\newcommand{\app}{\,}

%%% Local Variables:
%%% mode: latex
%%% TeX-master: "weak-similarity-HO"
%%% End:

\usepackage{commath}

\usepackage{enumitem}
\setlist[enumerate,1]{label=(\arabic*),font=\normalfont,align=left,leftmargin=0pt,labelindent=0pt,listparindent=\parindent,labelwidth=0pt,itemindent=!,topsep=3pt,parsep=0pt,itemsep=3pt,start=1}
\setlist[enumerate,2]{label=(\alph*),,font=\normalfont,align=left,leftmargin=0pt,labelindent=0pt,listparindent=\parindent,labelwidth=0pt,itemindent=!,topsep=3pt,parsep=0pt,itemsep=3pt,start=1}
\setlist[itemize]{labelindent=*,leftmargin=*}
\setlist[description]{labelindent=*,leftmargin=*,itemindent=-1 em}

\usepackage{ifdraft}

% Dirty hacks here

%
% Showing labels
%
\usepackage[inline,final]{showlabels}
%\showlabels{bibitem} 
%
\usepackage{seqsplit}
\usepackage{xstring}
\usepackage{xcolor}
\usepackage{bbding}

\makeatletter
 
\makeatother

\overfullrule=1mm

\usepackage{graphicx,csquotes}

\usepackage{tikz}
\usepackage{tikz-cd}

\tikzstyle{shiftarr}=[
        rounded corners,%
        to path={--([#1]\tikztostart.center)
                     -- ([#1]\tikztotarget.center) \tikztonodes
                     -- (\tikztotarget)},
]

\tikzset{
    commutative diagrams/.cd,
    %arrow style=tikz, %% diagram arrows and inline arrows in the text should have the same heads
    diagrams={>=stealth},
    row sep=large,
    column sep = huge
}

%\tikzcdset{arrow style=tikz, diagrams={>={Computer Modern Rightarrow[width=4pt,length=3pt]}}}
%\tikzcdset{arrow style=tikz, diagrams={>=to}}

\usetikzlibrary{decorations.pathmorphing}
\usetikzlibrary{cd,calc,positioning,automata,arrows,shapes}
%
% PDF bookmarks in draft mode
%
\ifdraft{%
  \makeatletter
  \setcounter{tocdepth}{2}
  \makeatother
}{}

\tikzcdset{scale cd/.style={every label/.append style={scale=#1},
    cells={nodes={scale=#1}}}}

\newcommand{\descto}[3][]{\arrow[phantom]{#2}[#1]{\text{\footnotesize{}#3}}}

\usepackage[footnote,marginclue,nomargin,final]{fixme}
\FXRegisterAuthor{hu}{ahu}{HU} %Henning
\FXRegisterAuthor{sm}{asm}{SM} %Stefan
\FXRegisterAuthor{st}{ast}{ST} %Stelios
\FXRegisterAuthor{ls}{als}{LS} %Lutz
\FXRegisterAuthor{sg}{asg}{SG} %Sergey

\usepackage{xspace}

 %% xspace avoids having to do \isos{}

\usepackage{accents}
\newcommand{\dbtilde}[1]{\accentset{\approx}{#1}}

\numberwithin{equation}{section}

\newtheorem{theorem}{Theorem}[section]
\newtheorem{lemma}[theorem]{Lemma}
\newtheorem{proposition}[theorem]{Proposition}
\newtheorem{corollary}[theorem]{Corollary}

\theoremstyle{definition}
\newtheorem{definition}[theorem]{Definition}
\newtheorem{example}[theorem]{Example}
\newtheorem{remark}[theorem]{Remark}
 %% definitions in non-italic
 %% definitions in non-italic
\newtheorem{construction}[theorem]{Construction}
\newtheorem{rem}[theorem]{Remark} %% remarks in non-italic
 %% notation in non-italic
\newtheorem{notation}[theorem]{Notation} %% notation in non-italic
 %% notation in non-italic
\newtheorem{assumptions}[theorem]{Assumptions} %% notation in non-italic

%
% Switching on page numbers; switch off for final version
%
\pagestyle{plain}

\begin{document}

\title{Weak Similarity in Higher-Order Mathematical Operational Semantics
\thanks{Henning Urbat and Stefan Milius acknowledge support by the Deutsche Forschungsgemeinschaft (DFG, German
  Research Foundation) -- project number 470467389. Stelios Tsampas and Lutz Schröder acknowledge support by the Deutsche Forschungsgemeinschaft (DFG, German
  Research Foundation) -- project number 419850228.
}
}

\author{\IEEEauthorblockN{Henning Urbat, Stelios Tsampas, Sergey Goncharov, Stefan Milius, Lutz Schröder}
\IEEEauthorblockA{\textit{Friedrich-Alexander-Universität Erlangen-Nürnberg} \\
$\{$henning.urbat, stelios.tsampas, sergey.goncharov, stefan.milius, lutz.schroeder$\}$@fau.de}
}
%\and
%\IEEEauthorblockN{2\textsuperscript{nd} Given Name Surname}
%\IEEEauthorblockA{\textit{dept. name of organization (of Aff.)} \\
%\textit{name of organization (of Aff.)}\\
%City, Country \\
%email address or ORCID}
%\and
%\IEEEauthorblockN{3\textsuperscript{rd} Given Name Surname}
%\IEEEauthorblockA{\textit{dept. name of organization (of Aff.)} \\
%\textit{name of organization (of Aff.)}\\
%City, Country \\
%email address or ORCID}
%\and
%\IEEEauthorblockN{4\textsuperscript{th} Given Name Surname}
%\IEEEauthorblockA{\textit{dept. name of organization (of Aff.)} \\
%\textit{name of organization (of Aff.)}\\
%City, Country \\
%email address or ORCID}
%\and
%\IEEEauthorblockN{5\textsuperscript{th} Given Name Surname}
%\IEEEauthorblockA{\textit{dept. name of organization (of Aff.)} \\
%\textit{name of organization (of Aff.)}\\
%City, Country \\
%email address or ORCID}
%\and
%\IEEEauthorblockN{6\textsuperscript{th} Given Name Surname}
%\IEEEauthorblockA{\textit{dept. name of organization (of Aff.)} \\
%\textit{name of organization (of Aff.)}\\
%City, Country \\
%email address or ORCID}
%}

\IEEEoverridecommandlockouts
\IEEEpubid{\makebox[\columnwidth]{} \hspace{\columnsep}\makebox[\columnwidth]{ }}

\maketitle

\begin{abstract}
Higher-order abstract GSOS is a recent extension of Turi and Plotkin's framework of Mathematical Operational Semantics to higher-order languages. The fundamental well-behavedness property of all specifications within the framework is that coalgebraic strong (bi)similarity on their operational model is a congruence. In the present work, we establish  a corresponding congruence theorem for \emph{weak} similarity, which is shown to instantiate to well-known concepts such as Abramsky's applicative similarity for the $\lambda$-calculus. On the way, we develop several techniques of independent interest at the level of abstract categories, including relation liftings of mixed-variance bifunctors and higher-order GSOS laws, as well as Howe's method.
\end{abstract}

%\begin{IEEEkeywords}
%TODO
%\end{IEEEkeywords}

\section{Introduction}
\label{sec:intro}

\noindent Following the emergence of structural approaches to operational semantics
(SOS), e.g.~\cite{DBLP:conf/stacs/Kahn87,DBLP:journals/jlp/Plotkin04a}, operational
reasoning has developed into a widely used methodology in formal
reasoning on higher-order languages. Numerous powerful operational techniques have been developed, tested, and
refined, such as logical relations~\cite{tait1967intensional,
  DBLP:journals/iandc/Statman85,DBLP:journals/iandc/OHearnR95,
  DBLP:journals/corr/abs-1103-0510} and Howe's method~\cite{DBLP:conf/lics/Howe89,
  DBLP:journals/iandc/Howe96,DBLP:conf/lics/LagoGL17}. These methods have been
found to be quite robust, being capable
of providing solutions
to challenging problems such as congruence proofs and reasoning about contextual
equivalence, even in rather involved settings such as effectful,
e.g.\ nondeterministic, higher-order languages.

Unfortunately, such power comes at a price.  Operational methods are
known to be both complex, requiring a daunting amount of machinery in
order to be instantiated, and specialized, in the sense that they need
to be developed on a per-case basis, and any small perturbation in the
problem setting may break earlier machinery. A key ingredient that is needed to alleviate these issues is a sufficiently general
rigorous notion of \emph{SOS specification of programming language semantics}; without
it, reasoning is inevitably bound to specific instances of SOS
specifications, and the only `free' mathematical principle is
induction on the structure of terms.  Capturing the essence of SOS in
a single, precise definition in order to reason at a greater level of
generality has thus been a topic of lasting interest. \emph{Rule
  formats} such as GSOS~\cite{DBLP:journals/jacm/BloomIM95} provide a handle to reason about classes of languages, as opposed to
one language at a time. For instance, the property that bisimilarity
is a congruence holds for any language adhering to the GSOS format. On
a more abstract and conceptual level, Turi and Plotkin's
framework of Mathematical Operational
Semantics~\cite{DBLP:conf/lics/TuriP97}, a.k.a.\ \emph{abstract
  GSOS}, shows that rule formats such as GSOS are instances of a
general principle, namely that operational rules amount to certain natural transformations, so-called \emph{GSOS laws}.  Abstract GSOS has been 
instantiated in quite diverse settings~\cite{56f40c248cb44359beb3c28c3263838e,
  DBLP:conf/fossacs/KlinS08, DBLP:conf/lics/FioreS06,
  DBLP:journals/tcs/MiculanP16, DBLP:conf/fscd/0001MS0U22}. % However,
% even though abstract GSOS does an excellent job at distilling the
% nature of SOS at a desirable level of abstraction, it only supports
% first-order languages.

% For
% instance, GSOS specifications over a signature $\Sigma$ (using the same
% letter for the respective endofunctor in $\Set$) are
% \emph{equivalent} to natural transformations in $\Set$ of type
% \begin{equation*}
%   \rho_X \colon \Sigma (X \times (\mypowfin X)^{L}) \to (\mypowfin \Sigmas X)^{L},
% \end{equation*}
% where $\mypowfin$ is the finite powerset functor and $L$ is a set of labels.

In recent work~\cite{gmstu23} we have reconciled Turi and
Plotkin's ideas, originally applicable only in first-order
settings, with higher-order languages. The main insight  is that
\emph{dinatural} transformations are able to express higher-order
operational rules in ways that the original approach based on
naturality could not. Like a classical GSOS
law, a higher-order GSOS law is a form of distributive law of a syntax
functor~$\Sigma$ over a behaviour functor~$B$, but in the context of
higher-order languages,~$B$ in general needs to be a mixed-variance
bifunctor, in the sense that it depends covariantly on the set of
states or terms when these appear as results of functions, and
contravariantly when they are used as arguments of functions. It is
this phenomenon of mixed variance that necessitates the use of
dinatural transformations.
%For a syntax
% endofunctor of the form $\Sigma=V+\Sigma'\c \C\to \C$, with $V \in \C$ to be
% thought of as an object of variables, and a behaviour \emph{bifunctor} $B \c
% \gcat^{\opp} \times \gcat \to \gcat$, their abstract format of choice,
% called \emph{(pointed) higher-order GSOS laws}, is a
% family of morphisms
% \[
%   \rho_{X,Y} \c \Sigma(jX \times B(jX,Y)) \to B(jX,\Sigmas(jX + Y)),
% \]
% \emph{dinatural} in $X \in \Pt/\gcat$ and \emph{natural} in
% $Y \in \gcat$, with $j \c \Pt/\gcat \to \gcat$ denoting
% the forgetful functor. %from the coslice category $\Pt/\gcat$ to
%                        %$\gcat$. SM: Do not repeat; it's indicated by
%                        %the typing.
The main result of~\cite{gmstu23} is that the operational semantics of a higher-order GSOS law is \emph{compositional}: for the initial (term)
model $\mS$, coalgebraic bisimilarity for the endofunctor
$B(\mS,\argument)$ is a congruence. For instance, in the case where $B(X,Y)$ is the
behaviour bifunctor for the $\lambda$-calculus, this instantiates to a
\emph{strong} variant of Abramsky's \emph{applicative
  bisimilarity}~\cite{Abramsky:lazylambda}, which unlike applicative
bisimilarity proper makes $\beta$-reductions observable.

% Our aims 
The main contribution of the present paper is a generalization of our previous
congruence result~\cite{gmstu23} from strong bisimilarity to \emph{weak (bi)similarity}. It applies to higher-order
GSOS laws whose initial model forms a \emph{higher-order lax
  bialgebra}, extending the corresponding first-order
concept~\cite{DBLP:conf/concur/BonchiPPR15}. When instantiated to the call-by-name
$\lambda$-calculus, weak (bi)similarity amounts to standard
applicative (bi)similarity. Hence we obtain a more useful general
compositionality theorem, an instance of which is the classical result
that applicative bisimilarity (rather than a previously unstudied
notion of strong applicative bisimilarity as in~\cite{gmstu23}) in the
call-by-name $\lambda$-calculus is a
congruence~\cite{Abramsky:lazylambda}. Our approach is parameterized in
such a way that strong similarity is an instance of weak similarity, so
our main result subsumes that of~\cite{gmstu23}.

The passage from strong to weak similarity comes with a number of
technical challenges; most notably, simple and well-established
proof techniques such as coinduction up to congruence now fail. To prove our main theorem, we develop an
abstract categorical version of Howe's method (\cref{prop:howe}). The
abstraction depends centrally on new notions of bifunctorial graph and
relation liftings (applied to liftings of the mixed-variance behaviour
functor), which may in fact turn out to be of independent interest as
% However, in doing so we take the first steps towards developing a
% theory\stnote{Methodology?} of abstract operational reasoning based
% on higher-order GSOS laws. Indeed, the reasoning in this work is
% operational in that it pertains to operational semantics in the form
% of higher-order GSOS laws, similarly to how standard operational
% reasoning revolves around SOS. The subject matter of our formal
% discourse, in other words the program properties we are dealing
% with, are weak simulations. In order to be able to reason about weak
% simulations on higher-order programs abstractly, we develop a
% framework of \emph{bifunctorial relation lifting}, essentially
generalizations of relation liftings of
functors~\cite{hj98,DBLP:books/cu/J2016} to higher-order
behaviours.

For full proofs and additional details, see Appendix.

\paragraph*{Related Work}

Borthelle et al.~\cite{DBLP:conf/lics/BorthelleHL20} and Hirschowitz and
Lafont~\cite{DBLP:journals/lmcs/HirschowitzL22}
have recently developed a framework for congruence of applicative
bisimilarity based on Howe's method.  Their approach is conceptually quite
different from ours: operational rules are given
as endofunctors on a presheaf category of \emph{transition systems} over models
of a signature endofunctor, and the initial algebra for the rule
endofunctor represents the induced transition system for the given
semantics.

Dal Lago et al.~\cite{DBLP:conf/lics/LagoGL17} propose a
generalization of Howe's method for call-by-value $\lambda$-calculi
with algebraic effects, based on the theory of relators. Their notion
of a \emph{computational} $\lambda$-calculus is parametrized over a
signature $\Sigma$ and a monad $T$ on sets, representing syntax and
effects of the language. The operational semantics is given in
big-step form.

Bonchi et al.~\cite{DBLP:conf/concur/BonchiPPR15} employ lax
bialgebras to establish up-to techniques for weak bisimulations in the
context of (first-order) abstract GSOS. Besides the differences in
scope, two approaches diverge also in the way the are based on
relation liftings: Bonchi et al.\ lift endofunctors from
sets to preorders and further to up-closed relations, while we lift
bifunctors from an abstract category~$\C$ to relations over~$\C$, the
up-closure being replaced with the abstract
\emph{good-for-simulations} condition
(\Cref{def:good-for-simulations-relation}).

\section{Preliminaries}
\subsection{Category Theory}\label{sec:categories}
We assume familiarity with basic category theory. In the following we recall some relevant terminology and notation.

%\medskip\noindent\emph{Products and coproducts.} %% SM: don't use
%manual sectioning
\paragraph*{Products and coproducts}
Given objects
$X_1, X_2$ in a category~$\C$, we write $X_1\times X_2$ for the
product and $\langle f_1, f_2\rangle\c X\to X_1\times X_2$ for the
pairing of morphisms $f_i\c X\to X_i$, $i=1,2$. We let
$X_1+X_2$ denote the coproduct, $\inl\c X_1\to X_1+X_2$ and
$\inr\c X_2\to X_1+X_2$ the injections, $[g_1,g_2]\c X_1+X_2\to X$ the
copairing of morphisms $g_i\colon X_i\to X$, $i=1,2$, and
$\nabla=[\id_X,\id_X]\colon X+X\to X$ the codiagonal.

\paragraph*{Locally distributive categories}
A category $\C$ is \emph{distributive} if it has finite products and
coproducts, and for each $X\in \C$ the endofunctor $X\times(-)$ on
$\C$ preserves finite coproducts. It is \emph{locally distributive} if
for each $X\in \C$ the slice category $\C/X$ is distributive. Recall
that $\C/X$ has as objects all pairs $(Y,p_Y)$ of an object $Y\in \C$
and a morphism $p_Y\c Y\to X$, and a morphism from $(Y,p_Y)$ to
$(Z,p_Z)$ is a morphism $f\c Y\to Z$ of $\C$ such that
$p_Y = p_Z\comp f$. The coslice category $X/\C$ is defined dually.

\begin{example}\label{ex:categories}
Examples of locally distributive categories include the category $\Set$ of sets and functions, the category $\Set^{\C}$ of presheaves on a small category $\C$ and natural transformations, and the categories of posets and monotone maps, nominal sets and equivariant maps, metric spaces and non-expansive maps. In fact, they are all \emph{lextensive}~\cite[Cor~4.9]{cbl93}.
\end{example}

%\noindent\emph{Algebras.}
\paragraph*{Algebras}
Given an endofunctor $F$ on a category $\gcat$, an \emph{$F$-algebra}
is a pair $(A,a)$ which consists of an object~$A$ (the \emph{carrier} of the algebra)
and a morphism $a\colon FA\to A$ (its \emph{structure}). A
\emph{morphism} from $(A,a)$ to an $F$-algebra $(B,b)$ is a morphism
$h\colon A\to B$ of~$\gcat$ such that $h\comp a = b\comp Fh$. Algebras
for $F$ and their morphisms form a category $\Alg(F)$, and an
\emph{initial} $F$-algebra is simply an initial object in that
category.  We denote the initial $F$-algebra by $\mu F$ if it exists,
and its structure by $\ini\colon F(\mu F) \to \mu F$. If
$\C$ has binary products, initial algebras entail a useful definition
principle known as \emph{primitive recursion}: for every morphism
$a\c F(\mu F\times A)\to A$ there exists a unique morphism
$\primrec a$ making the square below commute.
  \begin{equation}\label{eq:primitive-recursion}
    \begin{tikzcd}
      F(\mu F)
      \ar{r}{\iota}
      \ar{d}[swap]{F\langle \id,\, \primrec a\rangle}
      &
      \mu F
      \ar{d}{\primrec a}
      \\
      F(\mu F\times A) \ar{r}{a}
      &
      A
    \end{tikzcd}
\end{equation} 
More generally, a \emph{free $F$-algebra} on an object $X$ of $\C$ is an
$F$-algebra $(F^{\star}X,\iota_X)$ together with a morphism
$\eta_X\c X\to F^{\star}X$ of~$\C$ such that for every algebra $(A,a)$
and every morphism $h\colon X\to A$ in $\C$, there exists a unique
$F$-algebra morphism $h^\star\colon (F^{\star}X,\iota_X)\to (A,a)$
such that $h=h^\star\comp \eta_X$. If free algebras
exist on every object, their formation induces a monad
$F^{\star}\colon \C\to \C$, the \emph{free monad} generated by~$F$. (Conversely, in complete and well-powered categories, existence of a free monad implies existence of free algebras~\cite[Thm.~4.2.15]{manes76}.) For every $F$-algebra $(A,a)$, we obtain an
Eilenberg-Moore algebra $\wh{a} \colon F^{\star} A \to A$ as the free
extension of $\id_A\c A\to A$.

% By Lambek's Lemma~\cite{Lambek68}, its
%structure $\ini\colon F(\mu F) \to \mu F$ is an isomorphism.

The most familiar example of functor algebras are algebras for a
signature.  An \emph{algebraic signature} consists of a set~$\Sigma$
of operation symbols together with a map $\ar\colon \Sigma\to \Nat$
associating to every $\f\in \Sigma$ its \emph{arity}
$\ar(\f)$. Symbols of arity $0$ are called \emph{constants}. Every
signature~$\Sigma$ induces the polynomial set functor
$\coprod_{\f\in\Sigma} (\argument)^{\ar(\f)}$, which we denote by the
same letter $\Sigma$. An algebra for the functor $\Sigma$ is
precisely an algebra for the signature $\Sigma$, viz.~a set $A$
equipped with an operation $\f^A\colon A^n \to A$ for every $n$-ary
operation symbol $\f\in \Sigma$. Morphisms of $\Sigma$-algebras are
maps respecting the algebraic structure.  Given a set $X$ of
variables, the free algebra $\Sigmas X$ is the $\Sigma$-algebra of
$\Sigma$-terms with variables from~$X$. In particular, the free
algebra on the empty set is the initial algebra $\mu \Sigma$; it is
formed by all \emph{closed terms} of the signature. For every
$\Sigma$-algebra $(A,a)$, the induced Eilenberg-Moore algebra
$\wh{a}\colon \Sigmas A \to A$ is given by the map evaluating terms
over $A$ in the algebra.

A relation ${R}\seq A\times A$ on a $\Sigma$-algebra
$A$ is called a \emph{congruence} if for every $n$-ary  $\f\in \Sigma$ and elements $R(a_i,a_i')$, $i=1,\ldots,n$, one has 
$R(\f^A(a_1,\ldots,a_n),\f^A(a_1',\ldots,a_n'))$.
Note that we do not require $R$ to be an equivalence relation.

% \medskip\noindent\emph{Coalgebras.}
\paragraph*{Coalgebras}
Dual to the notion of algebra, a \emph{coalgebra} for an
endofunctor $F$ on $\gcat$ is a pair $(C,c)$ of an object $C$ (the
\emph{carrier}) and a morphism $c\colon C\to FC$ (its
\emph{structure}).
% A \emph{morphism} from an $F$-coalgebra
%$(C,c)$ to an $F$-coalgebra $(D,d)$ is a morphism
%$h\colon C\to D$ of $\C$ such that $Fh\comp c = d\comp h$.

\subsection{Higher-Order Abstract GSOS}\label{sec:abstract-gsos}
We review the core principles behind \emph{higher-order abstract GSOS}~\cite{gmstu23}, a categorical framework modelling the operational semantics of higher-order languages. 
It is parametric in
\begin{enumerate}
\item a category $\C$ with finite products and coproducts;
\item an object $\Pt\in \C$ of \emph{variables};
\item two functors $\Sigma\c\C \to \C$ and $B\c \C^\opp\times \C\to
  \C$, where $\Sigma=\Pt+\Sigma'$ for some
  functor $\Sigma'\colon \C \to \C$, and free $\Sigma$-algebras exist on every object (hence $\Sigma$ generates a free monad $\Sigmas$).
\end{enumerate}
Informally, the functors $\Sigma$ and $B$ represent the \emph{syntax}
and the \emph{behaviour} of a higher-order language. The initial
algebra $\mS$ is the object of programs, and the requirement that
$\Sigma=\Pt+\Sigma'$ asserts that variables are programs. An object of
$V/\C$, the coslice category of \emph{$\Pt$-pointed objects}, is
thought of as a set $X$ of programs with an embedding $p_X\c V\to X$
of the variables.

\begin{example}\label{ex:ho}
  A simple instantiation is given by $\Pt=\emptyset$, a polynomial functor
  $\Sigma$ and the bifunctor $B_0(X,Y)=Y+Y^X$ on $\Set$. A map
  $\gamma_0\c \mS\to \mS+\mS^\mS$, that is, a $B_0(\mS,-)$-coalgebra
  with carrier $\mS$, can be thought of as a description of the
  operational behaviour of deterministic higher-order programs: every
  program $p\in \mS$ either performs a silent computation step
  reducing $p$ to $\gamma(p)\in \mS$, or it acts as a function
  $\gamma(p)\in \mS^\mS$ mapping programs to programs.
\end{example}
In order to actually construct coalgebras $\gamma_0$ as in the above example, we use the following concept:

\begin{definition}\label{def:ho-gsos-law}
  A \emph{($\Pt$-pointed) higher-order GSOS law} of $\Sigma$ over $B$
  is a family of morphisms
  \begin{align}\label{eq:ho-gsos-law}
    \rho_{(X,p_X),Y} \c \Sigma (X \times B(X,Y))\to B(X, \Sigma^\star (X+Y))
  \end{align}
  dinatural in $(X,p_X)\in \Pt/\C$ and natural in $Y\in \C$.
\end{definition}

\begin{notation}\label{not:rho}
\begin{enumerate}
\item We usually write $\rho_{X,Y}$ for $\rho_{(X,p_X),Y}$, as the
  point $p_X\c V\to X$ will always be clear from the context.
\item For every $\Sigma $-algebra $(A,a)$, we regard $A$ as {$\Pt$-pointed} by
  \[p_A = \bigl(\Pt\xra{\inl} \Pt+\Sigma' A = \Sigma  A \xra{a} A\bigr).\] 
\end{enumerate}
\end{notation}

\begin{definition}\label{def:operational-model}
The \emph{operational model} of a higher-order GSOS law $\rho$ in \eqref{eq:ho-gsos-law} is the $B(\mS,-)$-coalgebra 
\begin{equation*}
\gamma\c \mS\to B(\mS,\mS)
\end{equation*}
obtained via primitive recursion as the unique morphism making the diagram \eqref{diag:gamma} in \Cref{fig:gamma} commute.
\begin{figure*}
\begin{equation}\label{diag:gamma}
\begin{tikzcd}[column sep=5em]
\Sigma(\mS) \ar{rrr}{\iota} \ar{d}[swap]{\Sigma\langle \id, \gamma\rangle} & & & \mS \ar{d}{\gamma} \\
\Sigma(\mS\times B(\mS,\mS)) \ar{r}{\rho_{\mS,\mS}} & B(\mS,\Sigmas(\mS+\mS)) \ar{r}{B(\id,\Sigmas \nabla)} & B(\mS,\Sigmas(\mS)) \ar{r}{B(\id,\hat\ini)} & B(\mS,\mS) 
\end{tikzcd}
\end{equation}
\caption{Operational model of a higher-order GSOS law}\label{fig:gamma}
\end{figure*}
Here we regard the initial algebra $\mu\Sigma$ as $V$-pointed as in \Cref{not:rho},
and $\wh{\ini}$ is the $\Sigmas$-algebra corresponding
to $\iota\c \Sigma(\mS)\to \mS$.
\end{definition}

\begin{rem}
The commutative diagram \eqref{diag:gamma} states that $(\mu\Sigma,\ini,\gamma)$ forms a \emph{bialgebra} for the higher-order GSOS law~$\rho$; in fact, it is the initial such bialgebra~\cite[Prop.~4.20]{gmstu23}. An important difference to first-order abstract GSOS~\cite{DBLP:conf/lics/TuriP97} is that a final bialgebra usually does not exist even for simple deterministic behaviour functors~\cite[Ex.~4.21]{gmstu23}. This in part explains why higher-order compositionality results are technically involved and first-order proof methods fail.
\end{rem}

Let us illustrate the above concepts in the setting of
\Cref{ex:ho}. A higher-order GSOS law of a polynomial functor $\Sigma$
over $B_0(X,Y)=Y+Y^X$ is a family of maps
\begin{equation*}
\rho^0_{X,Y}\c \Sigma(X\times (Y+Y^X))\to \Sigmas(X+Y)+(\Sigmas(X+Y))^X 
\end{equation*}
dinatural in $X\in \Set$ and natural in $Y\in \Set$. Intuitively, on
input $\f((p_1,b_1),\ldots,(p_n,b_n))$ for $\f\in \Sigma$, the map
$\rho^0_{X,Y}$ specifies the behaviour of the program
$\f(p_1,\ldots,p_n)$ in terms of the behaviours
$b_1,\ldots, b_n\in Y+Y^X$ of its subprograms
$p_1,\ldots,p_n$. (Di)naturality of $\rho^0$ ensures that the maps
$\rho^0_{X,Y}$ are parametrically polymorphic, that is, they do not
look into the structure of their arguments. This can be made formal
via the following syntactic representation of higher-order GSOS
laws. Fix metavariables $x$, $x_i$, $y_i$ and $y_i^z$ for $i\in \Nat$
and $z\in \{x, x_1,x_2,x_3,\ldots\}$. An \emph{$\HO$ rule} is an
expression of the form \eqref{eq:rule-red} or \eqref{eq:rule-non-red},
where $\f\in \Sigma$, $n=\ar(\f)$, $W\seq \{1,\ldots, n\}$,
$\ol{W}=\{1,\ldots,n\}\smin W$, and $t$ is a $\Sigma$-term in the
variables appearing in the premise, and additionally in $x$ for
\eqref{eq:rule-non-red}.
\begin{equation}\label{eq:rule-red}
  \inference{(x_j\to y_j)_{j\in
      W}\quad(\goesv{x_{k}}{y^{z}_{k}}{z})_{k\in\ol{W},\,z \in \{x_1,\ldots,x_n\}}}{\f(x_1,\ldots,x_{n})\to
    t}
\end{equation}
\begin{equation}\label{eq:rule-non-red}
  \inference{(x_j\to y_j)_{j\in W}\quad(\goesv{x_{k}}{y^{z}_{k}}{z})_{k\in\ol{W},\,z \in \{x_1,\ldots,x_n,x\}}}{\goesv{\f(x_1,\ldots,x_{n})}{t}{x}}
\end{equation}
 An \emph{$\HO$ specification} is a complete set $\R$ of $\HO$
  rules, that is, for
  each $n$-ary operation symbol $\f\in \Sigma$ and $W \seq
  \{1,\ldots,n\}$ there is exactly one rule of the form \eqref{eq:rule-red} or \eqref{eq:rule-non-red} in $\R$.

\begin{example}\label{ex:ski-calculus}
  The \emph{extended $\SKI$ calculus}, previously termed \emph{unary $\SKI$ calculus}~\cite{gmstu23}, is a combinatory
  logic expressively equivalent to Curry's \emph{$\SKI$
    calculus}~\cite{10.2307/2370619}, hence to the untyped
  $\lambda$-calculus. Its signature is given by
  $\Sigma=\{S/0,K/0,I/0,S'/1,K'/1,S''/2,\circ/2\}$ with arities as
  indicated. Informally, the operator $\argument \circ \argument$
  corresponds to function application (we write $s\,t$ for
  $s\circ t$), and the constants $S,K,I$ represent the functions
  $(s,t,u)\mapsto (s\app u)\app (t\app u)$, $(s,t)\mapsto s$, and
  $s\mapsto s$. The operators $S',S'',K'$ serve auxiliary
  purposes. The operational semantics is given by an $\HO$
  specification~\cite[Fig.~1]{gmstu23}. For instance, the rules for
  application are
  \begin{equation}\label{eq:ski-rules-app}
    \inference{x_1\to y_1}{x_1 \app x_2\to y_1 \app x_2}
    \qquad
    \inference{x_1\xto{x_2} x_1^{x_2}}{x_1 \app x_2\to x_1^{x_2}}
  \end{equation}
\end{example}
\begin{rem}\label{rem:missing-premises}
  By convention, a rule with incomplete premises represents the set of
  $\HO$ rules obtained by adding missing premises in every feasible
  way. For example, in the first rule of \eqref{eq:ski-rules-app} we
  can add $x_2\to y_2$, or $x_2\xto{x_1} y_2^{x_1}$ and
  $x_2\xto{x_2} y_2^{x_2}$.
\end{rem}
\begin{notheorembrackets}
\begin{proposition}[\cite{gmstu23}]
  Higher-order GSOS laws of $\Sigma$ over~$B_0$ correspond bijectively
  to $\HO$ specifications.
\end{proposition}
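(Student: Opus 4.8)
The plan is to decompose a higher-order GSOS law into independent components, one per pair $(\f,W)$ of an operation symbol $\f$ and a set $W$ of ``reducing'' argument positions, and to identify each component with a single $\HO$ rule. First I would exploit polynomiality: since $\Sigma=\coprod_{\f\in\Sigma}(\argument)^{\ar(\f)}$, a family of maps out of $\Sigma(X\times B_0(X,Y))$ splits into a family indexed by the $\f$, each a map out of $(X\times(Y+Y^X))^{n}$ with $n=\ar(\f)$. Distributivity of $\Set$ gives $(X\times(Y+Y^X))^{n}\cong\coprod_{W\seq\{1,\dots,n\}}\prod_{j\in W}(X\times Y)\times\prod_{k\in\ol W}(X\times Y^X)$, so each $\f$-component splits further into one map per subset $W$,
\[
 \rho^0_{X,Y,\f,W}\c \textstyle\prod_{j\in W}(X\times Y)\times\prod_{k\in\ol W}(X\times Y^X)\to \Sigmas(X+Y)+(\Sigmas(X+Y))^X ,
\]
still natural in $Y$ and dinatural in $X$. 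The index $W$ is exactly the bookkeeping in the premises: a position $j\in W$ contributes a pair $(x_j,y_j)\in X\times Y$, i.e.\ a reduction premise $x_j\to y_j$, whereas a position $k\in\ol W$ contributes $(x_k,y_k^{(\argument)})\in X\times Y^X$, i.e.\ the family of premises $\goesv{x_k}{y_k^{z}}{z}$. Completeness of an $\HO$ specification -- exactly one rule per $(\f,W)$ -- corresponds precisely to the law being defined on all of its polynomial domain.

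Next I would pin down each component by evaluating it at a generic tuple of variables. Take $X_0=\{x_1,\dots,x_n\}$ (extended by a fresh label $x$ in the non-reducing case) and let $Y_0=\{y_j:j\in W\}\cup\{y_k^{z}:k\in\ol W,\ z\in X_0\}$ be the set of output metavariables; then feed in the generic element whose $j$-th reducing entry is $(x_j,y_j)$ and whose $k$-th non-reducing entry is $(x_k,\ z\mapsto y_k^{z})$. The value lands in $\Sigmas(X_0+Y_0)+(\Sigmas(X_0+Y_0))^{X_0}$. If it lies in the left summand it is a $\Sigma$-term $t$ over $X_0\cup Y_0$, yielding a reducing rule \eqref{eq:rule-red}; if it lies in the right summand, evaluating the resulting function at the generic label $x$ yields a term $t$ that may additionally use $x$, giving a non-reducing rule \eqref{eq:rule-non-red}. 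Here the exponent $X$ in the codomain's right summand is precisely the transition label, which is why the non-reducing case needs the extra variable $x$.

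It remains to argue that this assignment is a bijection. In one direction, for fixed $X$ naturality in $Y$ exhibits the domain $\prod_{j\in W}(X\times Y)\times\prod_{k\in\ol W}(X\times Y^X)$ as a coproduct of representable functors of $Y$, so the component is determined by its values at the generic $Y_0$; dinaturality in $X$ then forces these values to be relabellings of the single generic value, since every tuple arises from the generic one by reindexing the variable set. In the other direction, an $\HO$ rule furnishes exactly such a generic value, which extends by substitution to a family of maps; one checks that substitution commutes with term formation, so that the family is natural in $Y$ and dinatural in $X$. The two passages are mutually inverse by construction.

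The main obstacle is the dinaturality in $X$ caused by the mixed variance: $X$ occurs covariantly as the carrier of subterms and of the head $\f(x_1,\dots,x_n)$, but contravariantly as the transition label and as the domain of the argument functions $y_k^{(\argument)}\in Y^X$. Consequently one cannot simply invoke the covariant Yoneda lemma; instead both the recovery of a component from its generic value and the verification that a rule induces a genuinely dinatural family rest on checking the dinaturality wedge condition directly (the analogue of checking that first-order GSOS rules define natural transformations, now complicated by the contravariant occurrences). Concretely, this amounts to verifying that evaluating an argument function, or firing a transition, at an arbitrary label is the reindexing of the generic instance along the classifying map of that label.
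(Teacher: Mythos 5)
Your proposal is correct and takes essentially the same route as the paper's (Yoneda-based) proof from \cite{gmstu23}: you decompose the law over pairs $(\f,W)$ using polynomiality and distributivity, evaluate at generic metavariable sets, and handle the mixed variance by direct wedge-condition arguments, which is precisely how the paper's explicit substitution-based description of $\rho^0$ arises. The only caution is that your phrase ``relabellings of the single generic value'' is, by itself, insufficient for the function summand---a single classifying map $f\c\{x_1,\dots,x_n\}\to X$ only determines the restriction of $\rho^0_{X,Y}(w)$ along $f$---but your final paragraph supplies the needed per-label argument via the extended variable set with the fresh $x$, so this is a presentational rather than mathematical gap.
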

\end{notheorembrackets}
\noindent
The bijection is based on the Yoneda lemma, and maps an~$\HO$
specification $\mathcal{R}$ to the higher-order GSOS law $\rho^0$
defined as follows. Given $X,Y\in \Set$ and
\[w=\f((p_1,b_1),\ldots, (p_n,b_n))\in \Sigma(X\times B_0(X,Y)),\]
consider the unique rule in $\mathcal{R}$ matching $\f$ and
$W=\{ j\in \{1,\ldots,n\} : b_j\in Y \}$. If the rule is of the form
\eqref{eq:rule-red}, then
\[\rho^0_{X,Y}(w)\in \Sigmas(X+Y)\seq B_0(X,\Sigmas(X+Y)) \]
is the term obtained by taking the term $t$ in \eqref{eq:rule-red} and applying the following substitutions for $i\in \{1,\dots,n\}$, $j\in W$, $k\in \ol{W}$:
\[
  x_i\mapsto p_i,\qquad y_j\mapsto b_j,\qquad y_k^{x_i} \mapsto
  b_k(p_i).
\]
If the rule is of the form \eqref{eq:rule-non-red}, then 
\[
  \rho^0_{X,Y}(w)\in (\Sigmas(X+Y))^X\seq B_0(X,\Sigmas(X+Y))
\]
is the map $e\mapsto t_e$, where $t_e$ is obtained by taking the term
$t$ in~\eqref{eq:rule-non-red} and applying the above substitutions
along with
\[
  x\mapsto e
  \qquad\text{and}\qquad
  y_k^{x}\mapsto b_k(e)\quad(k\in\ol{W}).
\]
Instantiating \Cref{def:operational-model}, the operational
model of a higher-order GSOS law $\rho^0$ is the
$B_0(\mS,-)$-coalgebra
\begin{equation}\label{eq:operational-model-ho}
\gamma_0\c \mS\to \mS+\mS^\mS
\end{equation}
that runs programs in $\mS$ according to the rules in the corresponding $\HO$ specification.

\section{Compositionality for $\mathcal{HO}$ Specifications}\label{sec:cool}
\noindent Our eventual goal is to reason about weak simulations and their congruence properties on operational
models of higher-order GSOS laws. The required categorical
machinery is developed from \Cref{sec:graphs-relations-preorders} onwards. In the present section we motivate the categorical abstractions by again investigating the special case of $\HO$
specifications, that is, we continue to work in the setting of \Cref{ex:ho}.
\begin{notation}\label{notation:ho-format}
  \begin{enumerate}
  \item In addition to the polynomial functor $\Sigma$
    and~$B_0(X,Y)=Y+Y^X$, we will also consider the bifunctor
\[ B(X,Y)=\Pow B_0(X,Y)=\Pow(Y+Y^X)\c \Set^\opp \times \Set\to \Set, \]
where $\Pow\c \Set\to \Set$ is the powerset functor.
\item Given $X\in \Set$, a coalgebra $c\c C\to C+C^X$ for the functor $B_0(X,-)$ and $p\in C$, we write
\begin{align*}
p\to \ol{p} \qquad&\text{if}\qquad \text{$c(p)\in C$ and $\ol{p}=c(p)$},\\
p \not\to \qquad& \text{if}\qquad \text{$c(p)\not\in C$ (that is, $c(p)\in C^X$)},\\ 
p\xto{x} p_x \;\;\;\;\;&\text{if}\qquad \text{$c(p)\in C^X$, $x\in X$, and $p_x=c(p)(x)$}.  
\end{align*}
In the first case, we say that $p$ \emph{reduces}. Moreover, we put
\begin{align*}
  p\To \ol{p}
  \;\;\;\;&
  \text{if}\;\;\;\; \exists k\geq 0.\,\exists p_0,\ldots,p_k.\, p=
  p_0\to \cdots\to p_k=\ol{p},
  \\
  p \Downarrow \ol{p}
  \;\;\;\;&
  \text{if}\;\;\;\; p\To\ol{p} \text{ and } \ol{p}\not\to.
\end{align*}
The \emph{weak transition system} of $c\c C\to C+C^X$ is the coalgebra
$\wt{c}\c C\to \Pow(C+C^X)$ for the functor $B(X,-)$ where
\[
  \wt{c}(p)= \{\, \ol{p}\in C : p\To \ol{p} \,\} \,\cup\, \{\,
  c(\ol{p}) : p\Downarrow \ol{p}\,\}.
\] 
\end{enumerate}
\end{notation}
\begin{definition}\label{def:weak-sim}
  A \emph{weak simulation} on a $B_0(X,-)$-coalgebra $c\c C\to C+C^X$
  is a relation $R\seq C\times C$ such that for every $R(p,q)$ and
  $\ol{p}\in C$, the following conditions hold:
  \begin{align*}
    p\To \ol{p}
    \quad&\implies\quad
    \exists \ol{q}\in C.\, q\To\ol{q}\,\wedge\, R(\ol{p},\ol{q});
    \\
    p\Downarrow \ol{p}
    \quad&\implies \quad \exists \ol{q}\in C.\, q\Downarrow\ol{q}\,\wedge\, \forall x\in X.\, R(\ol{p}_x,\ol{q}_x).
  \end{align*}
  \emph{Weak similarity} is the greatest weak simulation on $(C,c)$,
  viz. the union of all weak simulations, denoted $\lesssim_{(C,c)}$
  or just $\lesssim$.
\end{definition}
Note that dropping the first condition leads to the same weak similarity relation. We include it to match the abstract view on weak simulations in \Cref{rem:weak-vs-strong}\ref{rem:weak-vs-strong-2} below.   
 
%\begin{remark}\label{rem:bisim-vs-sim}
%Due to reductions being deterministic, the weak bisimilarity relation $\approx$ is given by
%\[ \approx\;=\; \lesssim\cap \gtrapprox. \]
%\end{remark}

\begin{remark}\label{rem:weak-vs-strong}
  We make some observations that will be key to
  our categorical generalization of weak simulations in
  \Cref{sec:weak-simulations} and \ref{sec:weak-sim}.
\begin{enumerate}
\item\label{rem:weak-vs-strong-1} From a conceptual perspective, weak
  simulations can be understood in terms of \emph{relation liftings}
  of the involved functors. Let $\Rel$ denote the category whose
  objects are pairs $(X,R)$ of a set $X$ and a binary relation
  $R\seq X\times X$, and whose morphisms $h\c (X,R)\to (Y,S)$ are maps
  $h\c X\to Y$ such that $(h\times h)[R]\seq S$. The functors $\Pow$,
  $B_0$ and $B=\Pow\comp B_0$ lift to functors $\barPow$, $\barB_0$
  and $\barB$ on $\Rel$ making the diagram below commute, where
  $\under{-}$ is the forgetful functor $(X,R)\mapsto X$.
  \[ 
    \begin{tikzcd}
      \Rel^\opp\times \Rel
      \ar[shiftarr = {yshift=1.5em}]{rr}{\barBs}
      \ar{d}[swap]{\under{-}^\opp\times \under{-}}
      \ar{r}{\barBs_0}
      &
      \Rel \ar{d}{\under{-}}
      \ar{r}{\barPow}
      &
      \Rel \ar{d}{\under{-}}
      \\
      \Set^\opp \times \Set
      \ar[shiftarr = {yshift=-1.5em}]{rr}{B}
      \ar{r}{B_0}
      &
      \Set \ar{r}{\Pow}
      &
      \Set
    \end{tikzcd}
  \]
  \begin{enumerate}
  \item The lifting $\barPow$ of $\Pow$ is given by
    \[ \barPow(X,R)=(\Pow X, S_R), \qquad \barPow h=\Pow h,   \]
    where $S_R$ is the (one-sided) \emph{Egli-Milner relation} on $\Pow X$:
    \[
      S_R(U,V)
      \quad\iff\quad
      \forall u\in U.\ \exists v\in V.\,
      R(u,v).
    \]
  \item The lifting $\barB_0$ of $B_0$ is given by
    \begin{align*}
      \barB_0((X,R),(Y,S)) &= (B_0(X,Y),E^0_{R,S}),\\
      \barB_0(h,k) & = B_0(h,k),
    \end{align*}
    where $E^0_{R,S}(u,v)$ holds for $u,v\in B_0(X,Y)=Y+Y^X$ whenever
    either of the following conditions is satisfied:
    \begin{itemize}
    \item $u,v\in Y\, \wedge\, S(u,v)$;
    \item $u,v\in Y^X\, \wedge\, \forall x,x'\in X.\, (R(x,x')\implies S(u(x),v(x')))$.
    \end{itemize}
    We note that
    $((X,R),(Y,S))\mapsto (Y^X, E_{R,S}^0\cap Y^X\times Y^X)$ is the
    internal hom-functor of the cartesian closed category $\Rel$.
\item Finally, we put $\barB = \barPow\comp \barB_0$.
  More explicitly,
  \[
    \barB((X,R),(Y,S)) = (B(X,Y), E_{R,S}),\quad \barB(h,k)= B(h,k),
  \]
  where $E_{R,S}$ is the relation on $\Pow(Y+Y^X)$ defined as follows:
  \[ E_{R,S}(U,V)\quad\iff\quad \forall u\in U.\ \exists v\in V.\, E_{R,S}^0(u,v). \]
\end{enumerate}

\item\label{rem:weak-vs-strong-2} A relation $R\seq C\times C$ forms a
  weak simulation on the coalgebra $c\c C\to C+C^X$ iff there exists a
  map $\wt{c}_R$ making the diagram \eqref{eq:weak-sim-diag-1}
  commute, where $\outl$ and $\outr$ are the left and right
  projections and $\Delta_X\seq X\times X$ is the identity relation.
  \begin{equation}\label{eq:weak-sim-diag-1}
    \begin{tikzcd}[column sep=2.8em]
      C \ar{d}[swap]{\wt{c}} & R \ar{l}[swap]{\outl_R} \ar{d}{\wt{c}_R} \ar{r}{\outr_R} & C \ar{d}{\wt{c}} \\
      \Pow(C+C^X)  & \ar{l}[swap]{\outl_{\Delta_X,R}} E_{\Delta_X,R} \ar{r}{\outr_{\Delta_X,R}} & \Pow(C+C^X)
    \end{tikzcd}
  \end{equation}
  % By definition of the Egli-Milner relation, this is equivalent to existence of a map $\wt{c}_R$ making \eqref{eq:weak-sim-diag-2} laxly commute.
%\begin{equation}\label{eq:weak-sim-diag-2}
%\begin{tikzcd}[column sep=2.8em]
%C \ar[phantom]{dr}[description, pos=.5]{\dleq{-45}} \ar{d}[swap]{\wt{c}} & R \ar[phantom]{dr}[description, pos=.5]{\dleq{45}} \ar{l}[swap]{\outl_R} \ar{d}{\wt{c}_R} \ar{r}{\outr_R} & C \ar{d}{\wt{c}} \\
%\Pow(C+C^X)  & \ar{l}[swap]{\outl_{\Delta_X,R}} E_{\Delta_X,R} \ar{r}{\outr_{\Delta_X,R}} & \Pow(C+C^X)
%\end{tikzcd}
%%\end{equation}
%Note that a map $\wt{c}_R$ satisfying \eqref{eq:weak-sim-diag-1} is necessarily unique, while a map $\wt{c}_R$ satisfying \eqref{eq:weak-sim-diag-2} is not.
\item\label{rem:weak-vs-strong-3} For a relation $R\seq C\times C$ to be a weak simulation, it suffices to restrict the premises of the two weak simulation conditions to strong transitions: for every $R(p,q)$ and $\ol{p}\in C$,
\begin{align*}
p\to \ol{p} \quad&\implies\quad \exists \ol{q}\in C.\, q\To\ol{q}\,\wedge\, R(\ol{p},\ol{q}); \\
p\not\to \quad&\implies\quad \exists \ol{q}\in C.\, q\Downarrow\ol{q}\,\wedge\, \forall x\in X.\, R({p}_x,\ol{q}_x).
\end{align*}
This amounts to the existence of a map $\wt{c}_R$ making the
diagram~\eqref{eq:weak-sim-diag-3} commute. Here we regard
$c\c C\to C+ C^X$ as a map $c\c C\to \Pow(C+C^X)$ by postcomposing
with $b\mapsto \{b\}$.
\begin{equation}\label{eq:weak-sim-diag-3}
\begin{tikzcd}[column sep=2.8em]
C \ar{d}[swap]{{c}} & R \ar{l}[swap]{\outl_R} \ar{d}{\wt{c}_R} \ar{r}{\outr_R} & C \ar{d}{\wt{c}} \\
\Pow(C+C^X)  & \ar{l}[swap]{\outl_{\Delta_X,R}} E_{\Delta_X,R} \ar{r}{\outr_{\Delta_X,R}} & \Pow(C+C^X)
\end{tikzcd}
\end{equation}
%\begin{equation}\label{eq:weak-sim-diag-4}
%\begin{tikzcd}[column sep=2.5em]
%C \ar[phantom]{dr}[description, pos=.5]{\dleq{-45}} \ar{d}[swap]{{c}} & R \ar[phantom]{dr}[description, pos=.5]{\dleq{45}} \ar{l}[swap]{\outl_R} \ar{d}{\wt{c}_R} \ar{r}{\outr_R} & C \ar{d}{\wt{c}} \\
%\Pow(C+C^X)  & \ar{l}[swap]{\outl_{\Delta_X,R}} E_{\Delta_X,R} \ar{r}{\outr_{\Delta_X,R}} & \Pow(C+C^X)
%\end{tikzcd}
% \end{equation}

\end{enumerate}
\end{remark}

The compositionality theorem for $\HO$ specifications~\cite[Prop.~3.2]{gmstu23} asserts that strong similarity on the
operational model~\eqref{eq:operational-model-ho} is a
congruence with respect to the operations from the signature
$\Sigma$. (It is worth noting here that strong similarity coincides
with strong bisimilarity because reductions are
deterministic.) However, for weak similarity that result fails:

\begin{example}\label{ex:weak-bis-not-cong}
Consider the signature $\Sigma=\{c,d,u\}$ where $c,d$ are constants and $u$ is unary, along with the $\HO$ specification given by the following four rules:
\[   \inference{}{\goesv{c}{c}{x}}\; \inference{}{\goes{d}{c}} \; \inference{\goes{x_1}{y_1}}{\goes{u(x_1)}{u(x_1)}} \; \inference{\goesv{x_1}{y_1^{x_1}}{x_1}}{\goes{u(x_1)}{c}} \]
Then $c\lesssim d$ but $u(c)\not\lesssim u(d)$: we have $u(c)\Downarrow c$ while $u(d)\to u(d)\to u(d)\to \cdots$, which means that $u(d)\Downarrow t$ holds for no term $t$. Thus $\lesssim$ is not a congruence on the initial algebra $\mu\Sigma$.
\end{example}

This example illustrates that unrestricted $\HO$ rules are too liberal
for our purposes: They allow operators to behave completely
differently depending on whether a given subterm reduces or not, which
is clearly against the spirit of weak similarity where individual
reduction steps are meant to be unobservable. In the following we
devise a natural condition on $\HO$ specifications that avoids
congruence-breaking behaviour.

\begin{definition} Suppose that $\R$ is an $\HO$ specification. We say that a rule
 \eqref{eq:rule-red}/\eqref{eq:rule-non-red} of $\R$ is \emph{sound for weak transitions} if 
its corresponding \emph{weak rule} \eqref{eq:rule-red-weak}/\eqref{eq:rule-non-red-weak} shown below
\begin{equation}\label{eq:rule-red-weak}
  \inference{(x_j\To y_j)_{j\in
      W}\quad({x_{j}}\xTo{z}{y^{z}_{k}})_{k\in \ol{W},\,z \in \{x_1,\ldots,x_n\}}}{\f(x_1,\ldots,x_{n})\To
    t}
\end{equation}
\begin{equation}\label{eq:rule-non-red-weak}
  \inference{(x_j\To y_j)_{j\in
      W}\quad({x_{j}}\xTo{z}{y^{z}_{k}})_{k\in \ol{W},\,z \in \{x_1,\ldots,x_n,x\}}}{\f(x_1,\ldots,x_{n})\xTo{x}
    t}
\end{equation}
is sound in the operational model
\eqref{eq:operational-model-ho}. This means that for all
$p_1,\ldots,p_n,\ol{p}_1,\ldots,\ol{p}_n\in \mS$ such that $p_j\To
\ol{p}_j$ for $j\in W$ and $p_k\Downarrow \ol{p}_k$ for $k\in \ol{W}$, there exists $\ol{p}\in \mS$ where
\begin{itemize}
\item for \eqref{eq:rule-red-weak}, one has $\f(p_1,\ldots,p_n)\To \ol{p}$ and the term $\ol{p}\in \mS$ emerges from $t$ via the substitutions
\[
  x_i\mapsto p_i,
  \;\;
  \;\; y_j\mapsto \ol{p}_j,
  \;\;
  y_k^{x_i} \mapsto (\ol{p}_k)_{p_i},
\]
for $i\in \{1,\ldots,n\}$, $j\in W$, $k\in \ol{W}$;
\item for \eqref{eq:rule-non-red-weak}, one has $\f(p_1,\ldots,p_n)\Downarrow \ol{p}$ and, for every $e\in \mS$, the term $\ol{p}_e\in \mS$ emerges from $t$ via the substitutions
\[
  x_i\mapsto p_i,
  \;\;
  x\mapsto e,
  \;\; y_j\mapsto \ol{p}_j,
  \;\;
  y_k^{x_i} \mapsto (\ol{p}_k)_{p_i},
  \;\;
  y_k^x \mapsto (\ol{p}_k)_e,
\]
for $i\in \{1,\ldots,n\}$, $j\in W$, $k\in \ol{W}$.
\end{itemize}
\end{definition}

\begin{example}
In the specification of \Cref{ex:weak-bis-not-cong} the last rule is unsound for weak
  transitions: we have $d\Downarrow c$ but $u(d)\not\To c$ since
  $u(d)$ reduces to itself. The remaining rules are sound. (For the first two note that all premise-free rules are sound.)
\end{example}

\begin{remark}\label{rem:sound-vs-lax-bialgebra}
  The soundness condition can be expressed in terms of the
  higher-order GSOS law $\rho^0$ of $\Sigma$ over $B_0(X,Y)=Y+Y^X$
  corresponding to the specification $\R$. First, we turn $\rho^0$
  into a higher-order GSOS law $\rho$ of $\Sigma$ over
  $B(X,Y)=\Pow(Y+Y^X)$, where $\rho_{X,Y}$ is defined to be the
  composite
  \begin{equation}\label{eq:rho0-to-rho}
    \begin{tikzcd}[column sep=50, row sep=25]
      \Sigma(X\times \Pow B_0(X,Y)) \ar{r}{\Sigma \st_{X,B_0(X,Y)}} &
      \Sigma\Pow(X\times B_0(X,Y)) \ar{dl}[description]{\delta_{X\times B_0(X,Y)}} \\
      \Pow\Sigma(X\times B_0(X,Y)) \ar{r}{\Pow\rho^0_{X,Y}} & \Pow B_0(X,\Sigmas(X+Y)) 
    \end{tikzcd}
    \hspace*{-1cm}
  \end{equation}
Here $\st$ is the \emph{canonical strength} of the powerset functor $\Pow$,
\[ \st_{X,Z}\c X\times \Pow Z \to \Pow(X\times Z),\;\; (x,U)\mapsto \{\, (x,u) : u\in U \,\},\] and $\delta\c \Sigma\Pow\to \Pow\Sigma$ is the natural transformation whose component $\delta_Z\c \Sigma \Pow Z\to \Pow\Sigma Z$ is given by
\[ \f(Z_1,\ldots,Z_n) \mapsto \{\, \f(z_1,\ldots,z_n) : z_i\in Z_i\text{ for $i=1,\ldots,n$} \,\}.   \] 
The operational model of $\rho$ is easily seen to be the composite
\[ \mS \xto{\gamma_0} B_0(\mS,\mS)\xto{\eta} \Pow B_0(\mS,\mS)=B(\mS,\mS), \]
where $\gamma_0$ is the operational model of $\rho^0$ and $\eta(b)=\{b\}$.

Then the rules of $\R$ are sound for weak transitions iff the diagram \eqref{eq:lax-bialgebra-ho} commutes laxly. Here $\wt{\gamma}$ is the weak transition system of $\gamma$, and the partial order $\preceq$ on a hom-set $\Set(X,\Pow Y)$ is given by $f\preceq g$ iff $f(x)\seq g(x)$ for all $x\in X$. In the terminology introduced later (\Cref{D:lax-bialg}),
$\ini$ and $\wt{\gamma}$ thus form a \emph{lax bialgebra} for the higher-order GSOS law
$\rho$.
\begin{figure*}
\begin{equation}\label{eq:lax-bialgebra-ho}
\begin{tikzcd}
\Sigma(\mS) \ar{r}{\ini} \ar{d}[swap]{\Sigma\langle \id, \wt{\gamma}\rangle} & \mS \ar{r}{\wt{\gamma}} & \Pow(\mS+\mS^\mS) \\
\Sigma(\mS\times \Pow(\mS+\mS^\mS)) \ar{r}{\rho_{\mS,\mS}} & \Pow(\Sigmas(\mS+\mS) + (\Sigmas(\mS+\mS))^\mS) \ar[phantom]{u}[description]{\dgeq{-90}} \ar{r}[yshift=.6em]{\Pow(\Sigmas\nabla+(\Sigmas\nabla)^\mS)} & \Pow(\Sigmas(\mS) + (\Sigmas(\mS))^\mS) \ar{u}[swap]{\Pow(\hat\ini+\hat\ini^\mS)}  
\end{tikzcd}
\end{equation}
\end{figure*}
\end{remark}

\begin{theorem}\label{thm:cool}
For every $\HO$ specification whose rules are sound for weak transitions, the weak similarity relation $\lesssim$ on the canonical model $\gamma\c \mS\to \mS+\mS^\mS$ is a congruence.
\end{theorem}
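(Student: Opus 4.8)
The plan is to run an instance of Howe's method, which remains available here even though coinduction up to congruence fails (\Cref{ex:weak-bis-not-cong}). I would construct the \emph{Howe closure} $\happrox$ of weak similarity and prove that it both coincides with $\lesssim$ and is compatible with all operations, whence $\lesssim$ is a congruence. Concretely, define $\happrox\ \seq\ \mS\times\mS$ as the least relation closed under the single rule
\[
\frac{a_i\happrox b_i\ \ (1\le i\le n)\qquad \f(b_1,\ldots,b_n)\lesssim q}{\f(a_1,\ldots,a_n)\happrox q}\qquad (\f\in\Sigma,\ n=\ar(\f)).
\]
As every element of $\mS$ has the form $\f(\vec a)$, this rule simultaneously encodes compatibility with $\Sigma$ and right-composition with $\lesssim$.

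The first, routine, step records the algebraic invariants of $\happrox$. Reflexivity follows by structural induction on terms, and yields $\lesssim\,\seq\,\happrox$ (apply the rule with $a_i=b_i$). Compatibility, i.e.\ that $\happrox$ is a congruence, is immediate from the rule together with reflexivity of $\lesssim$. Using that $\lesssim$ is reflexive and transitive---both because the identity relation and the relational composite of two weak simulations are again weak simulations---one shows by induction on the derivation of $p\happrox q$ that $\happrox$ is closed under right-composition, $\happrox\circ\lesssim\ \seq\ \happrox$. These are exactly the invariants consumed by the key lemma below.

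The crux is the \emph{simulation lemma}: $\happrox$ is itself a weak simulation. By \Cref{rem:weak-vs-strong}\ref{rem:weak-vs-strong-3} it suffices to match single strong steps, so I would prove by structural induction on $p$ that $p\happrox q$ implies (i) $p\to\ol p$ entails $q\To\ol q$ and $\ol p\happrox\ol q$ for some $\ol q$, and (ii) $p\not\to$ entails $q\Downarrow\ol q$ and $p_x\happrox\ol q_x$ for all $x$, for some $\ol q$. Given $p=\f(\vec a)\happrox q$ witnessed by $a_i\happrox b_i$ and $\f(\vec b)\lesssim q$, the behaviour of $\f(\vec a)$ is fixed by the unique rule of $\R$ matching $\f$ and the set $W$ of reducing arguments. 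The induction hypothesis turns the premise transitions of the $a_i$ into matching \emph{weak} transitions of the $b_i$ (up to $\happrox$); feeding these into the associated \emph{weak} rule and invoking the hypothesis that $\R$ is \emph{sound for weak transitions}, $\f(\vec b)$ performs the matching weak transition to a term $\ol p'$ obtained from the rule conclusion $t$ over the $b$-side. A substitutivity argument for $\happrox$ then gives $\ol p\happrox\ol p'$, and since $\f(\vec b)\lesssim q$ with $\lesssim$ a weak simulation, $q$ matches $\ol p'$, so right-composition re-establishes $\ol p\happrox\ol q$. It is exactly here that soundness for weak transitions is indispensable: without it the weak rule need not hold (\Cref{ex:weak-bis-not-cong}) and the chain collapses.

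Once $\happrox$ is known to be a weak simulation, maximality of $\lesssim$ gives $\happrox\seq\lesssim$, and with $\lesssim\seq\happrox$ we conclude $\happrox=\lesssim$; since $\happrox$ is a congruence, so is $\lesssim$. I expect the simulation lemma, and within it the higher-order premises $a_k\xto{z}y^z_k$, to be the principal obstacle. The substitutions prescribed by soundness apply the value reached by $b_k$ to the \emph{argument term} $b_i$, producing $(\ol b_k)_{b_i}$, whereas the induction hypothesis only supplies $(a_k)_{a_i}\happrox(\ol b_k)_{a_i}$ at the argument $a_i$; closing the gap requires relating $(\ol b_k)_{a_i}$ and $(\ol b_k)_{b_i}$ for $a_i\happrox b_i$, i.e.\ a substitutivity property of $\happrox$ under function application. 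This is precisely the phenomenon that motivates comparing functions at \emph{related} rather than equal arguments, as in the internal-hom lifting $E^0_{R,S}$ of \Cref{rem:weak-vs-strong} and the abstract \emph{good-for-simulations} condition (\Cref{def:good-for-simulations-relation}); organising the induction and the relation liftings so that these applications stay under control is the technically delicate heart of the proof.
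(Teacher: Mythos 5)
Your overall strategy is the paper's own: form the Howe closure of $\lesssim$, establish reflexivity, the congruence property, and weak transitivity ($\happrox\circ\lesssim\;\subseteq\;\happrox$), prove that $\happrox$ is a weak simulation by matching single strong steps against weak ones (\Cref{rem:weak-vs-strong}\ref{rem:weak-vs-strong-3}), and conclude ${\happrox}={\lesssim}$ by maximality. Your closure rule and the invariants you record agree with \Cref{not:howe} and \Cref{rem:howe-props}, and induction on derivations of $\happrox$ is essentially the paper's induction on the stage $m$ of the closure.

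However, there is a genuine gap in the crux, and it is exactly the one you flag at the end without resolving. Your simulation lemma states clause (ii) at \emph{equal} arguments: $p\not\to$ entails $q\Downarrow\ol q$ with $p_x\happrox\ol q_x$ for all $x$, which is \eqref{eq:sim-cond-2}. With this induction hypothesis the induction step cannot be closed. There you need cross-argument relations such as $(a_k)_{a_i}\happrox(\ol b_k)_{b_i}$ for $a_i\happrox b_i$, and your proposed bridge---substitutivity of $\happrox$ under application, i.e.\ $(\ol b_k)_{a_i}\happrox(\ol b_k)_{b_i}$---is in fact available (functional values in the operational model are given by terms $t(x)$, so congruence of $\happrox$ applies), but composing it with the induction hypothesis $(a_k)_{a_i}\happrox(\ol b_k)_{a_i}$ requires \emph{transitivity} of $\happrox$, which the Howe closure does not enjoy: only weak transitivity $\happrox\circ\lesssim\;\subseteq\;\happrox$ holds, and the middle relation in your chain is $\happrox$, not $\lesssim$. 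Trying to route the chain through $\lesssim$ instead is circular, since substitutivity of $\lesssim$ is precisely what the theorem asserts. The paper's repair is not an auxiliary substitutivity lemma but a strengthening of the induction statement itself: in lieu of \eqref{eq:sim-cond-2} one proves \eqref{eq:strengthened-induction-hypothesis}, namely $p\not\to$ implies $r\Downarrow\ol r$ with $p_d\happrox\ol r_e$ for \emph{all related pairs} $d\happrox e$. Then the induction step obtains the cross-argument relations directly from the hypothesis on the subterms, and the only place where congruence, weak transitivity, and the syntactic form $p_e=t_p(e)$ are needed is the base case: $p_d=t_p(d)\happrox t_p(e)=p_e\lesssim\ol r_e$, hence $p_d\happrox\ol r_e$. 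This is exactly the content of \Cref{rem:bifunctor-lifting}: with \eqref{eq:sim-cond-2} alone the proof gets stuck. So your write-up correctly locates the obstacle, but the induction as you propose it would fail; the missing idea is to build the comparison at $\happrox$-related arguments into the induction hypothesis rather than to recover it afterwards.
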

The proof uses Howe's method~\cite{DBLP:journals/iandc/Howe96}, a standard technique for establishing higher-order congruence results.

\begin{notation}\label{not:howe}
The \emph{Howe closure} of a relation $R\,\seq \mS\times \mS$ is the relation
\[ \hatR = \bigcup_{m\in \Nat} \hatR_m \]
on $\mS$ where $\hatR_0\seq \hatR_1\seq \hatR_2\seq \cdots$
are defined inductively: $\hatR_0 = R$ and for every $m\in \Nat$ and $p,r\in \mS$, one has $\hatR_{m+1}(p,r)$ whenever $\hatR_m(p,r)$ or
\[
\exists \f\in\Sigma, \vec{p},\vec{q}\in (\mS)^{\ar(\f)}.\;  p=\f(\vec{p})\,\wedge\,\hatR_m(\vec{p},\vec{q}) \, \wedge\,  R(\f(\vec{q}),r).
\]
Here $\hatR_m(\vec{p},\vec{q})$ means $\hatR_m(p_i,q_i)$ for $i=1,\ldots,\ar(\f)$.
\end{notation}

\begin{rem}\label{rem:howe-props}
\begin{enumerate}
\item  If $R$ is reflexive, then the Howe closure $\hatR$ is
  a congruence: put $r=f(\vec{q})$ in the definition of
  $\hatR_{m+1}$.
\item  If $R$ is transitive, then $\hatR$ satisfies a weak
  transitivity property: $\hatR(p,r)$ and $R(r,r')$ implies $\hatR(p, r')$
  for all $p,r,r'\in \mS$. This follows by induction on the
  least $m$ such that $\hatR_m(p,r)$.
\item Thus, if $R$ is both reflexive and transitive (in particular, if it is some
  weak similarity relation), then $\hatR$ is the least weakly transitive congruence containing $R$. 
\end{enumerate}
\end{rem}
\begin{proof}[Proof of \Cref{thm:cool}]
Form the Howe closure $\happrox$ of $\lesssim$.
Since~$\happrox$ is a congruence, it suffices to prove ${\happrox} =
{\lesssim}$. The inclusion ${\lesssim}\seq{\happrox}$ is clear. For
the inclusion ${\happrox} \seq {\lesssim}$ %
%\smnote{Please loose making manual spaces and type it properly.}
we show that $\happrox$ is a weak simulation; then the inclusion holds because $\lesssim$ is the greatest weak simulation. By \Cref{rem:weak-vs-strong}\ref{rem:weak-vs-strong-3}, we need to establish the following for every $p\happrox r$ and $\ol{p}\in \mS$:
\begin{align}
p\to \ol{p} &\implies \exists \ol{r}\in \mS.\, r\To\ol{r}\,\wedge\,
\ol{p}\happrox \ol{r}; \label{eq:sim-cond-1} \\
% \hspace*{-.1cm}\parbox{1.5cm}{
p \not\to &\implies
  \exists \ol{r}\in \mS.\,  r\Downarrow\ol{r} \wedge 
  \forall e\in \mS.\, {p}_e\happrox \ol{r}_e.
\label{eq:sim-cond-2}
\end{align}
In lieu of \eqref{eq:sim-cond-2} we will actually prove a stronger statement:
\begin{align}
p \not\to &\implies\exists \ol{r}\in \mS.\, r\Downarrow\ol{r}\,\wedge\, \forall d\happrox e.\, {p}_d\happrox \ol{r}_e.\label{eq:strengthened-induction-hypothesis}
\end{align}
The proof is by induction on the least $m$ such that $p\happrox_m r$.

\medskip\noindent \textbf{Induction base ($m=0$).} Suppose that $p\happrox_0 r$, that is, $p\lesssim r$.

\medskip\noindent \emph{Proof of \eqref{eq:sim-cond-1}.} If $p\to \ol{p}$,  since $\lesssim$ is a weak simulation, there exists $\ol{r}\in \mS$ such that $r\To \ol{r}$ and $\ol{p}\lesssim \ol{r}$, hence also $\ol{p}\happrox \ol{r}$.

\medskip\noindent \emph{Proof of \eqref{eq:strengthened-induction-hypothesis}.} If $p\not\to$, since $\lesssim$ is a weak simulation, there exists $\ol{r}\in \mS$ such that $r\Downarrow \ol{r}$ and $p_e\lesssim \ol{r}_e$ for $e\in \mS$. By definition of the $\HO$ format, there exists a term $t_p(x)$ in a single variable $x$ such that ${p}_e = t_p(e)$ for $e\in \mS$.
Since $\happrox$ is a congruence, it follows that, for $d\happrox e$, 
\[ {p}_d = t_{{p}}(d)\happrox t_{{p}}(e) = {p}_e \lesssim \ol{r}_e  \]
Thus ${p}_d\happrox\ol{r}_e$ by weak transitivity of the relation $\happrox$.

\medskip\noindent \textbf{Induction step ($m\to m+1$).} Suppose that $p\happrox_{m+1} r$. We only verify condition \eqref{eq:strengthened-induction-hypothesis}, the argument for \eqref{eq:sim-cond-1} is analogous. Thus suppose that $p\not\to$. If $p\happrox_{m} r$, we are done by induction. Otherwise, by definition of $\happrox_{m+1}$, there exists an $n$-ary operation symbol $\f\in\Sigma$ and $\vec{p},\vec{q}\in(\mS)^n$ such that
\[  p=\f(\vec{p}),\qquad \vec{p}\, \mathbin{\happrox_{m}}\vec{q},\qquad q:=\f(\vec{q}) \lesssim r. \]
To avoid bulky notation, we consider the representative case of a binary operator $\f$ where $p_1$ reduces (say $p_1\to \ol{p}_1$) and $p_2\not\to$. Then we know by induction that
\begin{itemize}
\item $\exists \ol{q}_1\in \mS.\, q_1\To \ol{q}_1 \,\wedge\,\ol{p}_1\happrox \ol{q}_1$;

  \smallskip
\item $\exists \ol{q}_2\in \mS.\, q_2\Downarrow \ol{q}_2\,\wedge\,\forall d\happrox e.\, (p_2)_d\happrox (\ol{q}_2)_e$.  
\end{itemize}
%\medskip\noindent \emph{Proof of \eqref{eq:sim-cond-1}.} If $p\to \ol{p}$, then the rule applying to $p$ has the form
%\[
%  \inference{(x_j\to y_j)_{j\in
%      W}\qquad(\goesv{x_{i}}{y^{z}_{i}}{z})_{i\in\{1,\ldots,n\}\smin
%      W,\,z \in \{x_1,\ldots,x_n\}}}{\f(x_1,\ldots,x_{n})\to
%    t}.
%\]
%For notational simplicity let us assume w.l.o.g.\ that $W=\{1,\ldots,k\}$ for some $k\in \{0,\ldots,n\}$. Then  
%\[  
%\ol{p}=t[p_1,\ldots,p_n,\ol{p_1},\ldots,\ol{p_k}, (p_{k+1})_{p_1},\ldots, (p_{k+1})_{p_n},\ldots, (p_n)_{p_1},\ldots, (p_n)_{p_n}].
%\]
%Since the above rule is sound for weak transitions, and we have $q_j\To \ol{q_j}$ for $j\in W$ and $q_i\xTo{e} (\ol{q_i})_e$ for $i\in \{1,\ldots, n\}\smin W$ and $e\in \mS$, we get 
%\[ q\To\ol{q}:=t[q_1,\ldots,q_n,\ol{q_1},\ldots,\ol{q_k}, (q_{k+1})_{q_1},\ldots, (q_{k+1})_{q_n},\ldots, (q_n)_{q_1},\ldots, (q_n)_{q_n}].
% \]
%Thus $\ol{p}\happrox \ol{q}$ because $\happrox$ is a congruence and the terms substituted in for the variables of $t$ are related by $\happrox$. Moreover, since $\lesssim$ is a weak simulation and $q\lesssim r$, there exists $\ol{r}\in \mS$ such that $r\To \ol{r}$ and $\ol{q}\lesssim \ol{r}$. It follows that $\ol{p}\happrox \ol{r}$ because
%$\ol{p}\happrox \ol{q}\lesssim \ol{r}$ and $\happrox$ is weakly transitive.
 Since $p=\f(p_1,p_2)\not\to$, the rule applying to $p$ has the form
\[
  \inference{x_1\to y_1\quad x_2\xto{{x_1}}y_2^{x_1} \quad x_2\xto{x_2}y_2^{x_2}\quad x_2\xto{x} y_2^x}{\goesv{\f(x_1,x_2)}{t(x_1,x_2,x,y_1,y_2^{x_1},y_2^{x_2},y_2^x)}{x}}.
\]
Thus, for every $d\in \mS$,
\[ p\xto{d} p_d = t(p_1,p_2,d,\ol{p}_1,(p_2)_{p_1}, (p_2)_{p_2}, (p_2)_{d}). \]
The above rule is sound for weak transitions, and we have $q_1\To \ol{q}_1$ and $q_2\Downarrow \ol{q}_2$, so there exists $\ol{q}\in \mS$ such that
\[
  q\Downarrow \ol{q} \qand \ol{q}\xto{e} \ol{q}_e
  =
  t(q_1,q_2,e,\ol{q}_1,(\ol{q}_2)_{q_1}, (\ol{q}_2)_{q_2}, (\ol{q}_2)_{e})
\]
for all $e\in \mS$. Thus for $d\happrox e$ we
have $p_d\happrox \ol{q}_e$ because $\happrox$ is a congruence and the
terms substituted in $t$ for the variables are related by
$\happrox$. Moreover, since $\lesssim$ is a weak simulation and
$q\lesssim r$, there exists $\ol{r}\in \mS$ such that
$r\Downarrow \ol{r}$ and $\ol{q}_e\lesssim \ol{r}_e$ for all
$e\in \mS$. It follows that $p_d\happrox \ol{r}_e$ for $d\happrox e$
because ${p}_d\happrox \ol{q}_e\lesssim \ol{r}_e$ and the relation
$\happrox$ is weakly transitive.
\end{proof}

%\begin{corollary}
%For every $\HO$ specification whose rules are sound for weak transitions,
% weak bisimilarity $\approx$ on the canonical model $\gamma\c \mS\to \mS+\mS^\mS$ is a congruence.
%\end{corollary}
%
%\begin{proof}
%This is immediate from \Cref{thm:cool} since $\approx\;=\; \lesssim \cap \gtrapprox$ (see \Cref{rem:bisim-vs-sim}), and converses and intersections of congruences are congruences.
%\end{proof}

\begin{rem}\label{rem:bifunctor-lifting}
\begin{enumerate}
\item The strengthening \eqref{eq:strengthened-induction-hypothesis} of the induction hypothesis is required, for otherwise the proof gets stuck: the argument in the induction step showing ${p}_d\happrox \ol{q}_e$ for $d\happrox e$ (or even ${p}_e\happrox \ol{q}_e$ for $e\in \mS$) relies on relations such as $({p_{2}})_{p_1}\happrox (\ol{q_{2}})_{q_1}$, which only hold by \eqref{eq:strengthened-induction-hypothesis}, not by \eqref{eq:sim-cond-2}.
\item The strengthened induction hypothesis
  \eqref{eq:sim-cond-1} + \eqref{eq:strengthened-induction-hypothesis}
  can be expressed via the relation lifting of the bifunctor $\barB$,
  see \Cref{rem:weak-vs-strong}\ref{rem:weak-vs-strong-1}: It amounts
  to the existence of a map $\delta$ making the diagram below commute.
\[
\begin{tikzcd}[column sep=3.5em]
\mS \ar{d}[swap]{{\gamma}} & {\happrox} \ar{l}[swap]{\outl_{\happrox}} \ar{d}{\delta} \ar{r}{\outr_{\happrox}} & \mS \ar{d}{\wt{\gamma}} \\
\Pow(\mS+\mS^\mS)  & \ar{l}[swap]{\outl_{{\happrox},{\happrox}}} E_{{\happrox},{\happrox}} \ar{r}{\outr_{{\happrox},{\happrox}}} & \Pow(\mS+\mS^\mS)
\end{tikzcd}
\]
\item The induction does not go through when the Howe closure~$\happrox$ is replaced with more obvious candidates. If $\happrox$ is taken to be the least congruence containing $\lesssim$, then already the induction base fails, as the argument requires weak transitivity of $\happrox$. If $\happrox$ is taken to be the least transitive congruence containing $\lesssim$, it is no longer clear how to construct $\happrox$ as a union of inductively defined relations $\happrox_m$ in a way that makes the induction step work. It thus appears that Howe's method is the simplest and most natural approach to the present result.
\end{enumerate}
\end{rem}
We conclude this section by identifying a natural class of $\HO$
specifications, the \emph{cool $\HO$ specifications}, whose rules are
sound for weak transitions. It resembles first-order formats such as
\emph{cool
  GSOS}~\cite{DBLP:journals/tcs/Bloom95,DBLP:journals/tcs/Glabbeek11}
for labelled transition systems, and \emph{cool stateful
  SOS}~\cite{DBLP:conf/fscd/0001MS0U22} for stateful computations.

\begin{definition}\label{def:cool}
  \begin{enumerate}
  % \item We say that an operator $o \in \mathcal{O}$ with $ar(o) = n$ is
  %   \emph{passive} if for every input $c \in S$, there is a choice of an output
  %   $b_{c} \in S$ and a target $t_{c} \in
  %   \Sigmas{\{(x_{1},\dots,x_{n})\}} \cup \term$ such that for \emph{any}
  %   sequence of premisses $(l_{1},\dots,l_{n})$, the respective rule is necessarily
  %   \[
  %     \inference[c]{l_{1} && l_{2} && \dots &&
  %       l_{n}}{\goesv{o(x_{1},\dots,x_{n})}{t_{c}}{b_{c}}}
  %     \]
  \item An $n$-ary operator $\f\in \Sigma$ is \emph{passive} if it is
    specified by a premise-free rule (cf.\
    \Cref{rem:missing-premises})
    \begin{equation}\label{eq:passive-1}
      \inference{}{\f(x_1,\ldots,x_{n})\to t}
      \;\;\text{or}\;\;
      \inference{}{\f(x_1,\ldots,x_{n})\xto{x} t}
    \end{equation}
    where $t$ is a term in the variables $x_1,\ldots,x_n$ or
    $x_1,\ldots,x_n,x$, resp. Thus the behaviour of $\f$ does not
    depend on the behaviour of its subterms. An \emph{active} operator
    is one which is not passive.
    
  \item An $\HO$ specification is \emph{cool} if for every active
    $n$-ary operator $\f$ there exists $j\in \{1,\ldots, n\}$ (called
    the \emph{receiving position of $\f$}) such that all rules for
    $\f$ are of the form
    \begin{equation}\label{eq:active-1}
      \raisebox{15pt}{$
      \inference{x_j\to y_j}{\f(x_1,\ldots,x_j,\ldots x_{n})\to
        \f(x_1,\ldots,y_j, \ldots x_n)}$}
    \end{equation}
    % \begin{equation}\label{eq:active-2}
    
    \vspace*{-25pt}
    \[
      \inference{(\goesv{x_{j}}{y^{z}_{j}}{z})_{z \in \{x_1,\ldots,x_n\}}}{\goes{\f(x_1,\ldots,x_{n})}{t}}
      \,\text{or}\,
      \inference{(\goesv{x_{j}}{y^{z}_{j}}{z})_{z \in
          \{x_1,\ldots,x_n,x\}}}{\goesv{\f(x_1,\ldots,x_{n})}{t}{x}}
    %\end{equation}
  \]
    where $t$ is a term in the variables $x_i$ and $y_j^{x_i}$ ($i\in \{1,\ldots,n\}\smin \{j\}$), and moreover in $x$ and $y_j^x$ for the third rule in~\eqref{eq:active-1}.
  \end{enumerate}
\end{definition}

\noindent
Coolness thus asserts that for active $\f$, a program
$p=\f(p_1,\ldots,p_n)$ must run its $j$-th subprogram $p_j$ (for some
fixed $j$ depending only on $\f$) until it does not further reduce, correctly
propagate all reduction steps of $p_j$ to $p$, and continue the computation as a program $t$ that no longer refers to $p_j$. 

\begin{proposition}\label{prop:cool-vs-sound}
For cool $\HO$ specifications, all rules are sound for weak transitions.
\end{proposition}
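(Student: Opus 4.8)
The plan is to verify the \emph{sound for weak transitions} condition rule by rule, distinguishing passive and active operators as in \Cref{def:cool}. The technical core is a \emph{propagation lemma}: if $j$ is the receiving position of an active $n$-ary operator $\f$ and $p_j\To q$ in the operational model, then
\[
  \f(p_1,\ldots,p_j,\ldots,p_n)\To\f(p_1,\ldots,q,\ldots,p_n),
\]
with all non-receiving arguments held fixed. I would prove this by induction on the length of the reduction $p_j\To q$: in each single step $p_j'\to p_j''$ the receiving subterm reduces, so by determinism of reductions (a consequence of $B_0(X,Y)=Y+Y^X$, which assigns each state either one reduct or one function) together with coolness, the unique rule for $\f$ that fires is the first rule of \eqref{eq:active-1}, yielding $\f(\ldots,p_j',\ldots)\to\f(\ldots,p_j'',\ldots)$ and leaving the remaining arguments untouched.

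With this lemma in hand, the rule shapes are dispatched as follows. For a \emph{passive} $\f$ every rule is premise-free, so its weak rule is realised by a single strong transition: either $\f(\vec{p})\to t[\vec{x}\mapsto\vec{p}]$ in the reducing case, or $\f(\vec{p})\not\to$ acting as the function $e\mapsto t[\vec{x}\mapsto\vec{p},\,x\mapsto e]$ in the non-reducing case; the required substitutions match because $t$ does not mention any premise variable. For the \emph{first} rule of an active $\f$ (receiving position reduces, so $j\in W$), the prescribed $\ol{p}=\f(p_1,\ldots,\ol{p}_j,\ldots,p_n)$ is delivered directly by the propagation lemma applied to $p_j\To\ol{p}_j$.

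The crux is the remaining two active shapes, where $j\in\ol{W}$ and hence $p_j\Downarrow\ol{p}_j$. Here I would first run $p_j$ to normal form via the propagation lemma, reaching $\f(p_1,\ldots,\ol{p}_j,\ldots,p_n)$ with $\ol{p}_j\not\to$. Since the receiving subterm no longer reduces, the applicable rule is the second or third rule of \eqref{eq:active-1}, which fires \emph{independently of the behaviour of the other arguments}: its only premises constrain $x_j$, and its continuation $t$ refers to the receiving argument solely through the applications $y_j^{x_i}$ ($i\neq j$) and $y_j^x$. In the reducing case this gives one further step $\f(\ldots,\ol{p}_j,\ldots)\to\ol{p}$, hence $\f(\vec{p})\To\ol{p}$; in the function case it gives $\f(\ldots,\ol{p}_j,\ldots)\xto{e}\ol{p}_e$ together with $\f(\ldots,\ol{p}_j,\ldots)\not\to$, hence $\f(\vec{p})\Downarrow\f(\ldots,\ol{p}_j,\ldots)$. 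In each case the resulting term coincides with the one prescribed by the soundness condition, since the values substituted for $y_j^{x_i}$ and $y_j^x$ are exactly $(\ol{p}_j)_{p_i}$ and $(\ol{p}_j)_e$, namely what $\ol{p}_j$ produces when probed.

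The main obstacle I anticipate is the bookkeeping of substitutions in this last step, i.e.\ checking that the term produced by firing the strong rule on $\f(\ldots,\ol{p}_j,\ldots)$ equals the term mandated by the weak soundness condition. This hinges crucially on the coolness stipulation that $t$ does not refer to $x_j$: it is precisely this that makes it harmless to hold the non-receiving arguments fixed throughout propagation (they are never reduced to $\ol{p}_i$), and that renders the extra hypotheses $p_i\To\ol{p}_i$, $p_k\Downarrow\ol{p}_k$ for $i,k\neq j$ irrelevant to the conclusion. A secondary point needing care is the appeal to determinism and to the exact correspondence between the rules of $\R$ and the transitions of the operational model, which guarantees that at every stage precisely the intended rule fires.
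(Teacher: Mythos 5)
Your proposal is correct and follows essentially the same route as the paper's proof: the propagation lemma is exactly the paper's step of applying the first rule of \eqref{eq:active-1} $k$ times along the reduction sequence $p_j\To\ol{p}_j$, after which the second or third rule fires once on $\f(p_1,\ldots,\ol{p}_j,\ldots,p_n)$, with the substitution bookkeeping resolved by the coolness requirement that $t$ not mention $x_j$. Your treatment is somewhat more explicit than the paper's (the induction for propagation, the determinism remark, and the irrelevance of the hypotheses on non-receiving arguments are all left implicit there), but no new idea is involved.
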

Thus, we obtain as an instance of \Cref{thm:cool}:
\begin{corollary}\label{cor:cool}
For cool $\HO$ specifications, the weak similarity relation on the operational model is a congruence.
\end{corollary}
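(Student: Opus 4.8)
The plan is to separate the two kinds of operators singled out by \Cref{def:cool}. A passive operator is the easy case: its premise-free rule as in~\eqref{eq:passive-1} fires in the operational model irrespective of the behaviour of the arguments, so for any $p_1,\ldots,p_n\in\mS$ one has $\f(p_1,\ldots,p_n)\to t(p_1,\ldots,p_n)$ (first form) or $\f(p_1,\ldots,p_n)\not\to$ with $\f(p_1,\ldots,p_n)\xto{e}t(p_1,\ldots,p_n,e)$ for all $e$ (second form). As $t$ contains no behaviour variables, the witness $\ol p$ required for soundness is simply $t(p_1,\ldots,p_n)$, resp.\ $\f(p_1,\ldots,p_n)$ itself, and a single strong step, resp.\ zero steps, establishes the weak conclusion.

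The substance of the argument lies with active operators. Fix an active $\f$ with receiving position $j$; by coolness all its rules are instances of~\eqref{eq:active-1}, namely the propagation rule (a) whenever the $j$-th argument reduces, together with exactly one of (b)/(c) whenever the $j$-th argument is a function, this choice and the associated continuation term $t$ being uniform across all ways of filling in the premises for the remaining positions. The central step I would isolate is a \emph{propagation lemma}: for all $p_1,\ldots,p_n$ and all $\ol p_j$ with $p_j\To\ol p_j$, one has
\[ \f(p_1,\ldots,p_j,\ldots,p_n)\To\f(p_1,\ldots,\ol p_j,\ldots,p_n). \]
I would prove this by induction on the length of a reduction sequence $p_j=p_j^0\to\cdots\to p_j^m=\ol p_j$. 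At each step $p_j^l\to p_j^{l+1}$ the $j$-th argument of $\f(p_1,\ldots,p_j^l,\ldots,p_n)$ reduces, so $j$ lies in the active set $W$ and, by coolness, the unique applicable rule is the propagation rule (a), yielding a single step $\f(\ldots,p_j^l,\ldots)\to\f(\ldots,p_j^{l+1},\ldots)$; chaining these gives the claim, the base case $m=0$ being reflexivity of $\To$. Here I rely on determinism of reductions to know that each intermediate $p_j^l$ genuinely reduces and hence triggers rule (a) rather than (b)/(c).

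With the propagation lemma in hand, soundness of the three weak rules follows uniformly. For the weak form of (a), the conclusion $\f(p_1,\ldots,p_n)\To\f(p_1,\ldots,\ol p_j,\ldots,p_n)$ is exactly the lemma. For the weak forms of (b) and (c), the hypothesis gives $p_j\Downarrow\ol p_j$, i.e.\ $p_j\To\ol p_j$ with $\ol p_j\not\to$; the lemma reduces $\f(p_1,\ldots,p_n)$ to $\f(p_1,\ldots,\ol p_j,\ldots,p_n)$, at which point the $j$-th argument is a function, so $j\in\ol W$ and the applicable rule is (b), resp.\ (c). Firing it produces the term $t$ with each $y_j^{x_i}$ instantiated to $(\ol p_j)_{p_i}$ and, for (c), $y_j^x$ instantiated to $(\ol p_j)_e$, which is precisely the witness $\ol p$, resp.\ family $\ol p_e$, demanded by the soundness condition: in case (b) this yields $\f(p_1,\ldots,p_n)\To\ol p$, and in case (c) it yields $\f(p_1,\ldots,p_n)\Downarrow\f(p_1,\ldots,\ol p_j,\ldots,p_n)$ together with the correct labelled transitions.

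I expect the main obstacle to be the bookkeeping forced by coolness rather than any deep difficulty. After propagating to $\ol p_j$, the rule that actually fires is indexed by the active set $W''=\{\,i\neq j: p_i\text{ reduces}\,\}$, which need not equal the set $W$ attached to the weak rule under consideration; one must invoke the fact that coolness forces all $W$ with $j\in\ol W$ to share the same form and the same continuation term $t$, so that the fired rule carries the intended conclusion. A secondary point to handle carefully is that the remaining arguments $p_i$ (for $i\neq j$) are carried along unchanged throughout, so that their behaviours never interfere with the applicability of rule (a) nor with the final instantiation of $t$; only the $j$-th premise $p_j\To\ol p_j$ (resp.\ $p_j\Downarrow\ol p_j$) is ever used.
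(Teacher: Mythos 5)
Your proposal is correct and follows essentially the same route as the paper: there the corollary is obtained by combining \Cref{prop:cool-vs-sound} (coolness implies all rules are sound for weak transitions) with \Cref{thm:cool}, and the paper's proof of that proposition is exactly your argument --- iterate the propagation rule (a) along the reduction sequence $p_j \to \cdots \to \ol{p}_j$ (your propagation lemma), then fire the unique (b)/(c) rule, with uniform continuation term $t$, at $\f(p_1,\ldots,\ol{p}_j,\ldots,p_n)$. The only step left implicit in your write-up is the concluding appeal to \Cref{thm:cool}, which converts soundness of all rules into the congruence property of weak similarity; your framing in terms of the paper's soundness condition makes clear this is intended.
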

This generalizes corresponding congruence results for cool first-order specifications~\cite{DBLP:journals/tcs/Bloom95,DBLP:journals/tcs/Glabbeek11,DBLP:conf/fscd/0001MS0U22}.

\begin{example}
The extended $\SKI$ calculus (\Cref{ex:ski-calculus}) has application $-\circ-$ as its only active operator, whose rules \eqref{eq:ski-rules-app} are cool. Therefore weak similarity on the operational model is a congruence. This means that, for instance, $p\lesssim q$ implies $p\app r \lesssim q\app r$ and $r\app p\lesssim r\app q$ for all $r\in \mS$. 
\end{example}

The aim of the following sections is to generalize the congruence result of \Cref{thm:cool} to the level of abstract higher-order GSOS laws. The technical key lies in 
the construction of relation liftings of bifunctors (\Cref{S:lifting}), along with
a suitable categorification of Howe's method (\Cref{sec:howe}).

\section{Graphs, Relations, and Preorders}\label{sec:graphs-relations-preorders}
\noindent For our categorical account of weak similarity we will need to restrict to base categories where operations on relations, such as union or composition, are well-behaved and interact with each other in a way familiar from the category of sets. Therefore, we work under the following global assumptions:

\begin{assumptions}\label{asm:cat}
  From now on, fix a category $\C$ such that
  \begin{enumerate}
  \item\label{asm:cat-1} $\C$ is complete, cocomplete, and well-powered;
   \item\label{asm:cat-3} strong epimorphisms in $\C$ are pullback-stable: for every pullback as shown below, if $e$ is strongly epic then so is $\ol{e}$;
   \[
\begin{tikzcd}
\bullet \pullbackangle{-45} \ar{r} \ar{d}[swap]{\ol{e}} & \bullet \ar{d}{e} \\
\bullet \ar{r} & \bullet 
\end{tikzcd}
   \]
  \item\label{asm:cat-2} $\C$ is locally distributive.
  \end{enumerate}
\end{assumptions}
All categories of \Cref{ex:categories} satisfy these
assumptions. Since $\C$ is complete and well-powered, the subobjects of a fixed object form a complete lattice, and every morphism has a (strong epi, mono)-factorization~\cite[Prop.~4.4.3]{borceux94}. All our results easily generalize to arbitrary proper factorization systems.

\subsection{Graphs and Relations}\label{sec:graphs-and-relations}
We review some terminology for the
categorical version of graphs (more precisely, directed multigraphs) and relations.

\subsubsection{Graphs in a category}\label{sec:graphs-in-a-category}
 A \emph{graph} in $\C$ is a quadruple $(X,R,\outl_R,\outr_R)$ given by two objects $X,R\in \C$ and a parallel pair of morphisms $\outl_R, \outr_R\c R\to X$. A graph is usually denoted by its pair $(X,R)$ of objects. A \emph{morphism} from $(X,R)$ to a graph $(Y,S)$ is a pair $h=(h_0,h_1)$ of $\C$-morphisms making the diagram below commute:
\begin{equation}\label{diag:hom}
\begin{tikzcd}
X \ar{d}[swap]{h_0} & R \ar{l}[swap]{\outl_R} \ar{d}{h_1} \ar{r}{\outr_R} & X \ar{d}{h_0} \\
Y  & \ar{l}[swap]{\outl_S} S \ar{r}{\outr_S} & Y
\end{tikzcd}
\end{equation}
We let $\Gra(\C)$ denote the category of graphs in $\C$ and their
morphisms. For every $X\in \C$ we write
$\Gra_X(\C) \subto \Gra(\C)$
for the non-full subcategory consisting of all graphs of the form $(X,R)$ and graph morphisms $h$ such that $h_0=\id_{X}$. 

\subsubsection{Relations in a category}\label{sec:relations-in-a-category}
A graph $(X,R)\in \Gra(\C)$ is a \emph{relation} if $\outl_R$
  and $\outr_R$ are jointly monic, or equivalently if the morphism
  $\langle \outl_R,\, \outr_R \rangle\c R\to X\times X$ is monic. We let
$\Rel(\C)\subto \Gra(\C)$ and $\Rel_X(\C)\subto \Gra_X(\C)$
denote the full subcategories given by relations; note that $\Rel_X(\C)$ is thin, i.e.\ an ordered set, and a complete lattice when  isomorphic relations are identified. Both subcategories are reflective: The reflection of a graph $(X,R)$ is given by
$(\id_X,e_R)\c (X,R)\epito (X,R^\dag)$
where $e_R$ and $R^\dag$ are obtained via the (strong epi, mono)-factorization of $\langle \outl_R,\,\outr_R\rangle$:
\begin{equation}\label{diag:Rdag}
  \begin{tikzcd}[column sep=7em]
    R
    \ar[two heads]{r}{e_R}
    \ar[shiftarr={yshift=1.75em}]{rr}{\langle \outl_R,\,\outr_R\rangle } &
    R^\dag
    \ar[rightarrowtail]{r}{\langle
      \outl_{R^\dag},\,\outr_{R^\dag}\rangle }
    & 
    X \times X
  \end{tikzcd}
\end{equation}
The various categories are connected by the functors
\[
\begin{tikzcd}
  \Gra_X(\C) \ar[shiftarr ={yshift=1.7em}]{rr}{\under{-}}
  \ar[yshift=.2em]{r}{(-)^\dag} \ar[hook]{d}
  &
  \Rel_X(\C) \ar[hook, yshift=-.2em]{l} \ar[hook]{d} \ar{r}{\under{-}}
  & \C \ar[equals]{d}
  \\
  \Gra(\C) \ar[shiftarr ={yshift=-1.6em}]{rr}{\under{-}}
  \ar[yshift=.2em]{r}{(-)^\dag}
  &
  \Rel(\C) \ar[hook, yshift=-.2em]{l} \ar[yshift=.2em]{r}{\under{-}} & \C \ar[hook, yshift=-.2em]{l}
\end{tikzcd}
\]
where $(-)^\dag$ denotes the reflector and $\under{-}$ is the projection functor given by
$(X,R)\mapsto X$ and $h\mapsto h_0$.
We regard $\C$ as a full subcategory of $\Rel(\C)$ by identifying $X\in \C$ with the \emph{identity relation} $(X,X,\id_X,\id_X)\in \Rel(\C)$, which we simply denote by $(X,X)$.

\subsubsection{Limits and colimits}\label{sec:graph-cocomplete} The categories $\Gra(\C)$, $\Gra_X(\C)$, $\Rel(\C)$, $\Rel_X(\C)$ are complete and cocomplete. Coproducts in $\Gra(\C)$ and $\Gra_X(\C)$, denoted by $(X,R)+(Y,S)$ and $(X,R)+_X (X,S)$, are formed using $\C$-coproducts. Coproducts in $\Rel(\C)$ are given by $(X,R)\vee (Y,S)=((X,R)+(Y,S))^\dag$ and in $\Rel_X(\C)$ by $(X,R)\vee_X (X,S) = ((X,R)+_X (X,S))^\dag$.

Products $(X,R)\times (Y,S)$ in both $\Gra(\C)$ and $\Rel(\C)$ are formed in $\C$. The product $(X,R)\times_X (X,S)$ in $\Gra_X(\C)$ and $\Rel_X(\C)$ is the pullback of $\langle \outl_R,\outr_R\rangle$ and $\langle \outl_S,\outr_S\rangle$.

\subsubsection{Composition of graphs and relations}\label{sec:graph-rel-comp} The \emph{composite} $(X,R)\smc (X,R')$ of two graphs $(X,R)$ and $(X,R')$ is the graph
$(X,R\smc R')$
defined via the following pullback:
\begin{equation}\label{eq:composite-graph}
\begin{tikzcd}[column sep=0.65em]
  & &
  R\smc R' %\ar[phantom]{dd}[very near start]{\pb{2}{135}}
  \pullbackangle{-90}
  \ar{dl}[swap]{\pi_{R\smc R',R}}
  \ar{dr}{\pi_{R\smc R',R'}}
  \ar[shiftarr = {xshift=-50}]{ddll}[description]{\outl_{R\smc R'}}
  %\ar[bend right=8em]{ddll}[swap]{\outl_{R\smc R'}}
  \ar[shiftarr = {xshift=50}]{ddrr}[description]{\outr_{R\smc R'}}
  &  & \\
& R \ar{dl}[swap]{\outl_R} \ar{dr}[description]{\outr_R} & & R' \ar{dl}[description]{\outl_{R'}} \ar{dr}{\outr_{R'}}  & \\
X & & X & & X 
\end{tikzcd}
\end{equation}
The \emph{composite} of two relations $(X,R),(X,R')$, given by
\[ (X,R)\bullet (X,R') \;=\; ((X,R)\smc (X,R'))^\dag, \] defines a
bifunctor $(-)\bullet (-)$ on $\Rel_X(\C)$ (that is, composition is
a monotone map on the ordered set of relations). Using \Cref{asm:cat}\ref{asm:cat-3},\ref{asm:cat-2}, relation composition can be shown to distribute over coproducts. This is the key property of relations needed for our account of Howe's method in \Cref{sec:howe}.

\subsubsection{Reflexive and transitive relations}\label{sec:morphism-preorder}
Given graphs $(X,R)$ and $(X,R')$ in $\Gra_X(\C)$, we put
$(X,R)\leq (X,R')$ if there exists a $\Gra_X(\C)$-morphism from
$(X,R)$ to $(X,R')$. For relations, $(X,R)\leq (X,R')\leq (X,R)$
implies $(X,R)\cong (X,R')$.  A relation $(X,R)$ is \emph{reflexive}
if $(X,X)\leq (X,R)$, and \emph{transitive} if
$(X,R)\bullet (X,R)\leq (X,R)$.

\subsubsection{Reindexing}\label{sec:transfer}
Every morphism $f\c X\to Y$ in $\C$ induces a functor
$f_\star\c  \Gra_X(\C)\to \Gra_Y(\C)$ given by
\[
\qquad (X,R,\outl_R,\outr_R)\;\mapsto\; (Y,R,f\comp \outl_R, f\comp \outr_R).\]
Readers familiar with the language of fibrations may note that $\under{-}\c \Gra(\C)\to \C$ is a bifibration with fibres $\Gra_X(\C)$, and~$f_\star$ is the reindexing functor induced by opcartesian lifts.

\subsection{Preorders}
We extend some of the above terminology to graphs over preordered
objects. Recall that a \emph{preorder} on a set $X$ is a reflexive and
transitive relation ${\preceq}\seq X\times X$. Replacing elements
$1\to X$ with ``generalized elements'' $Y\to X$, one obtains a
categorical notion of preorder.

\subsubsection{Preorders in a category}\label{sec:preordered-object}
A \emph{preordered object} in $\C$ is a pair $(X,\preceq)$ of an
object $X\in \C$ and a family ${\preceq} = (\preceq_{Y})_{Y\in \C}$
where $\preceq_{Y}$ is a preorder on the hom-set $\C(Y,X)$ satisfying
\[ f\preceq_Y g \quad\implies\quad f\comp h\preceq_Z g\comp h \quad\text{for all $h\c Z\to Y$}.\]
We usually drop subscripts and write $\preceq$ for $\preceq_Y$, $\preceq_Z$, etc.
\begin{example}\label{ex:preordered-object}
  \begin{enumerate}
  \item\label{ex:preordered-object-1} Every preordered set
    $(X,\preceq)$ in the usual order-theoretic sense can be regarded
    as a preordered object in $\Set$ by taking the pointwise preorder
    on $\Set(Y,X)$:
    \[
      f\preceq g \quad\iff\quad \forall y\in Y.\, f(y)\preceq g(y).
    \]
  \item\label{ex:preordered-object-discrete} On every $X\in \C$, one
    has the \emph{discrete} preordered object $(X,=)$, where $=$ is
    the equality preorder.
  \end{enumerate}
\end{example}

\subsubsection{Preordered functors}
 A \emph{preordered functor} is a functor $F\c \D\to \C$ equipped with a preorder $(FD,\preceq)$ for all $D\in \D$.
% and $Ff\c FC\to FC'$ is a preorder morphism for each $f\c C\to C'$.
% Thus $F$ can be regarded as a functor $F'\c \D\to \Pre(\C)$.

\begin{example}
The powerset functor $\Pow\c \Set\to \Set$ is preordered by taking the inclusion preorder $\seq$ on $\Pow X$.
\end{example}

\subsubsection{Right-lax morphisms}
Given a preordered object $(Y,\preceq)$, a \emph{right-lax morphism} from a
graph $(X,R)$ to a graph $(Y,S)$ is a pair $h=(h_0, h_1)$ of
$\C$-morphisms such that
\[
\begin{tikzcd}
X \ar[phantom]{dr}[description, pos=.4]{\deq{-45}} \ar{d}[swap]{h_0} & R \ar[phantom]{dr}[description, pos=.4]{\dleq{45}} \ar{l}[swap]{\outl_R} \ar{d}{h_1} \ar{r}{\outr_R} & X \ar{d}{h_0} \\
Y & \ar{l}[swap]{\outl_S} S \ar{r}{\outr_S} & Y
\end{tikzcd}
\]
For preordered $(X,\preceq)$ we put $(X,R)\preceq (X,S)$ if there exists a right-lax morphism $h\c (X,R)\to (X,S)$ where $h_0=\id_X$.
\begin{example}
Given relations $(X,R)$, $(X,S)$ on a preordered set $(X,\preceq)$, regarded as a preordered object as in \Cref{ex:preordered-object}\ref{ex:preordered-object-1}, we have $(X,R)\preceq (X,S)$ iff for $x,y\in X$,
\[ R(x,y) \quad\implies\quad \exists z\in X.\,  S(x,z) \wedge z\preceq y.  \]
\end{example}
Graphs over a fixed preordered object $(X,\preceq)$ and right-lax morphisms $h$ satisfying $h_0=\id_X$ form a category, but in contrast to the unordered case, the full subcategory of relations is usually not reflective. This turns out to be the main technical challenge for our preorder-based approach to simulations. The key concept to overcome this issue is as follows:  

\begin{definition}\label{def:good-for-simulations-relation} Let $(X,\preceq)$ be a preordered object. A relation $(X,S)$ is \emph{good for simulations} if, for all $(X,R)\in \Gra_X(\C)$,
\[ (X,R)\preceq (X,S) \quad\implies\quad (X,R)\leq (X,S). \]
\end{definition}
Note that $(X,R)$ ranges over graphs, not just relations, and that the implication ``$\Longleftarrow$''  also holds trivially. The good-for-simulations condition thus ensures that right-lax graph morphisms into $(X,S)$ can be turned into strict ones.

\begin{example}\label{ex:good-for-simulations}
For every relation $(X,R)$, the relation $(\Pow X, S_R)$ is good for
simulations, where $\Pow X$ is equipped with the inclusion preorder
and $S_R$ is the Egli-Milner relation (\Cref{rem:weak-vs-strong}\ref{rem:weak-vs-strong-1}). This follows from the observation that $S_R$ is up-closed: $S_R(A,B)$ and $B\seq B'$ implies $S_R(A,B')$.
\end{example}

\section{Lifting (Bi-)Functors and Higher-Order GSOS Laws}
\label{S:lifting}
\noindent As pointed out in \Cref{rem:bifunctor-lifting}, the compositionality proof for $\HO$ specifications implicitly relies on the fact that the behaviour bifunctor $B(X,Y)=\Pow(Y+Y^X)$ admits a lifting to the category of relations. We next study liftings of endofunctors, mixed-variance bifunctors, and higher-order GSOS laws on $\C$ to the categories $\Gra(\C)$ and $\Rel(\C)$ of graphs and relations. We start with the case of endofunctors, which is straightforward and well-known:

\begin{definition}\label{def:endofunctor-lift} Let $\Sigma\c \C\to\C$ be an endofunctor.
\begin{enumerate}
\item A \emph{graph lifting} of $\Sigma$ is a functor
  $\ol{\Sigma}\c \Gra(\C)\to \Gra(\C)$ making the diagram on the left below
  commute.
  
\item A \emph{relation lifting} of $\Sigma$ is a functor
  $\ol{\Sigma}\c \Rel(\C)\to \Rel(\C)$ making the diagram on the right
  below commute.
\end{enumerate}
\[
\begin{tikzcd}
\Gra(\C) \ar{d}[swap]{\under{-}}  \ar{r}{\ol{\Sigma}} & \Gra(\C) \ar{d}{\under{-}}  \\
\C \ar{r}{\Sigma}  & \C
\end{tikzcd}
\qquad
\begin{tikzcd}
\Rel(\C) \ar{d}[swap]{\under{-}}  \ar{r}{\ol{\Sigma}} & \Rel(\C) \ar{d}{\under{-}}  \\
\C \ar{r}{\Sigma}  & \C
\end{tikzcd}
\]
\end{definition}
\begin{construction}\label{cons:lifting-endofunctor}
  Every functor $\Sigma\c \C\to \C$ admits a \emph{canonical graph
    lifting} $\ol{\Sigma}_\Gra\c \Gra(\C)\to \Gra(\C)$ and a
  \emph{canonical relation lifting}
  $\ol{\Sigma}_\Rel\c \Rel(\C)\to\Rel(\C)$ defined as follows:
  \begin{enumerate}
  \item The functor $\ol{\Sigma}_\Gra$ 
    is given on objects and morphisms by
    \[
      (X,R)\mapsto (\Sigma X,\Sigma
      R,\Sigma\outl_R,\Sigma\outr_R),\quad h\mapsto (\Sigma h_0,\Sigma
      h_1).
    \]
\item The functor $\ol{\Sigma}_\Rel$ is the composite
  \[
    %\,\Rel(\C)\subto \Gra(\C)\xto{\ol{\Sigma}_\Gra} \Gra(\C)\xto{(-)^\dag} \Rel(\C). 
    \begin{tikzcd}[column sep = 25, cramped]
      \Rel(\C)\ar[hook]{r}
      &
      \Gra(\C) \ar{r}{\ol{\Sigma}_\Gra}
      &
      \Gra(\C)\ar{r}{(-)^\dag}
      &
      \Rel(\C).
    \end{tikzcd}
  \]
\end{enumerate}
\end{construction}
\noindent (This is similar to the usual Barr extension~\cite{Barr70b}, except that
relations are treated as objects rather than as morphisms.)

\begin{example}\label{ex:lifting-endofunctor}
For a polynomial functor $\Sigma$ on $\Set$, the canonical relation lifting is the restriction of the canonical graph lifting to $\Rel$. Thus $\ol{\Sigma}_\Rel(X,R)=(\Sigma X,\Sigma R)$ where $\Sigma R(\f(x_1,\ldots,x_n),\f(x_1',\ldots,x_n'))$ iff
$R(x_i,x_i')$ for all $i$.
\end{example}

\begin{proposition}\label{prop:free-monad-lift}
Suppose that $\Sigma\c \C\to \C$ preserves strong epimorphisms and generates a free monad $\Sigma^\star$. Then $\ol{\Sigma}_\Gra$ and $\ol{\Sigma}_\Rel$ generate free monads satisfying
\[ (\ol{\Sigma}_\Gra)^\star = (\barSigmas)_\Gra \qqand (\ol{\Sigma}_\Rel)^\star = (\barSigmas)_\Rel.  \]
\end{proposition}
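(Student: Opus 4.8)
The plan is to treat the two liftings in turn, reducing the graph case to a pointwise computation and then transporting the result to relations along the reflector $(\argument)^\dag$.

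First I would observe that $\Gra(\C)$ is nothing but the functor category $[\mathcal{J},\C]$, where $\mathcal{J}$ is the two-object category with a parallel pair of arrows $e\rightrightarrows v$; under this identification a graph morphism is a natural transformation, and the canonical lifting $\ol{\Sigma}_\Gra$ is precisely postcomposition with $\Sigma$. Since limits and colimits in $[\mathcal{J},\C]$ are computed pointwise, free $\ol{\Sigma}_\Gra$-algebras can be computed pointwise: the free $\ol{\Sigma}_\Gra$-algebra on $(X,R)$ has carrier $(\Sigma^\star X,\Sigma^\star R,\Sigma^\star\outl_R,\Sigma^\star\outr_R)=(\barSigmas)_\Gra(X,R)$, with structure and unit given by the free-$\Sigma$-algebra structure and unit at each of the two vertices. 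The one thing to check is the universal property: given an $\ol{\Sigma}_\Gra$-algebra $(Y,S,b)$ — equivalently a functor $\mathcal{J}\to\Alg(\Sigma)$ — and a graph morphism $h$ into it, the pointwise free extensions $(h_0)^\star,(h_1)^\star$ assemble into a graph morphism, compatibility with $\outl,\outr$ following from the uniqueness clause of the free $\Sigma$-algebra universal property. This yields $(\ol{\Sigma}_\Gra)^\star=(\barSigmas)_\Gra$ as monads, the monad structure being that of $\Sigma^\star$ taken pointwise. Note that this half uses neither preservation of strong epimorphisms nor the standing \Cref{asm:cat}.

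For relations I would exploit that $(\argument)^\dag\colon\Gra(\C)\to\Rel(\C)$ is a reflector onto the full subcategory $\Rel(\C)$ (established in \Cref{sec:relations-in-a-category}) and that, by definition, $\ol{\Sigma}_\Rel=(\argument)^\dag\circ\ol{\Sigma}_\Gra\circ I$ with $I\colon\Rel(\C)\hookrightarrow\Gra(\C)$ the inclusion. Two easy remarks set up the transfer. Because $I$ is fully faithful and $(\argument)^\dag\dashv I$, for a relation $(Y,S)$ an $\ol{\Sigma}_\Rel$-algebra structure on $(Y,S)$ is the same datum as an $\ol{\Sigma}_\Gra$-algebra structure on it; hence $\ol{\Sigma}_\Rel$-algebras are exactly the $\ol{\Sigma}_\Gra$-algebras whose carrier is a relation. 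The candidate free $\ol{\Sigma}_\Rel$-algebra on a relation $(X,R)$ is then the reflection $((\barSigmas)_\Gra(X,R))^\dag=(\Sigma^\star X,(\Sigma^\star R)^\dag)=(\barSigmas)_\Rel(X,R)$ of the free graph algebra, and its universal property will follow from that of the free graph algebra together with the reflection, provided the free-algebra structure descends along the reflection morphism.

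The crux — and the step I expect to be the main obstacle — is exactly this descent, which is where the hypothesis that $\Sigma$ preserves strong epimorphisms is indispensable. Concretely, one must show that $(\argument)^\dag$ inverts $\ol{\Sigma}_\Gra$ applied to reflection units: for every graph $(Z,T)$ with reflection $e_T\colon T\epito T^\dag$, the reflected morphism $\bigl((\argument)^\dag\,\ol{\Sigma}_\Gra\,(\id_Z,e_T)\bigr)$ is an isomorphism, i.e.\ $(\Sigma T)^\dag\cong(\Sigma T^\dag)^\dag$ inside $\Sigma Z\times\Sigma Z$. This holds because $\Sigma e_T$ is again a strong epimorphism (by hypothesis) and, since $\langle\Sigma\outl_{T^\dag},\Sigma\outr_{T^\dag}\rangle\circ\Sigma e_T=\langle\Sigma\outl_T,\Sigma\outr_T\rangle$, precomposing a morphism by a strong epimorphism does not change its image: strong epimorphisms are orthogonal to monomorphisms, so the two morphisms factor through exactly the same subobjects of $\Sigma Z\times\Sigma Z$. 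Applying this at $(Z,T)=(\barSigmas)_\Gra(X,R)$ lets me transport the free-algebra structure $\iota$ of $(\barSigmas)_\Gra(X,R)$ to $(\barSigmas)_\Rel(X,R)$ and check that the reflection morphism is an algebra morphism. The universal property of the free $\ol{\Sigma}_\Rel$-algebra then follows by a routine diagram chase: any $\ol{\Sigma}_\Rel$-algebra is an $\ol{\Sigma}_\Gra$-algebra on a relation, the free graph extension into it factors uniquely through the reflection by the universal property of $(\argument)^\dag$, and the resulting map is an $\ol{\Sigma}_\Rel$-algebra morphism. This gives $(\ol{\Sigma}_\Rel)^\star=(\argument)^\dag\circ(\barSigmas)_\Gra\circ I=(\barSigmas)_\Rel$ as monads, completing the proof.
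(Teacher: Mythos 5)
Your proof is correct and takes essentially the same route as the paper: the graph case is settled by the same pointwise universal-property check (the paper does it directly, without the functor-category framing, and likewise uses uniqueness of free extensions for compatibility with $\outl,\outr$), and the relation case rests on exactly the paper's key lemma that $((-)^\dag\circ\ol{\Sigma}_\Gra)(X,R)\cong(\ol{\Sigma}_\Rel\circ(-)^\dag)(X,R)$, proved the same way from preservation of strong epimorphisms and uniqueness of (strong epi, mono)-factorizations. The only cosmetic difference is that the paper packages the transfer to $\Rel(\C)$ as an adjunction $\Alg(\ol{\Sigma}_\Gra)\rightleftarrows\Alg(\ol{\Sigma}_\Rel)$ obtained from a cited lifting theorem of Hermida--Jacobs, whereas you carry out the equivalent reflection-based diagram chase by hand.
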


Next we turn to liftings of mixed-variance bifunctors.
\begin{definition}\label{def:bifunctor-lift} A \emph{relation lifting} of a functor $B\c \C^\opp\times \C\to \C$ is a functor $\barB$ such that the diagram below commutes.
\[
\begin{tikzcd}
 \Rel(\C)^\opp\times \Rel(\C)  
 \ar{d}[swap]{\under{-}^\opp\times \under{-}} \ar{r}{\barBs} & \Rel(\C) \ar{d}{\under{-}}  \\
\C^\opp \times \C \ar{r}{B}  & \C 
\end{tikzcd}
\]
\end{definition}
\noindent Every bifunctor admits a canonical relation lifting,
generalizing the lifting $\barB_0$ of
\Cref{rem:weak-vs-strong}\ref{rem:weak-vs-strong-1}. Since the
construction is more involved than for endofunctors, and
our compositionality result works with any lifting, we refer to the Appendix (\Cref{sec:bifunctor-lift}). Finally, we lift higher-order GSOS
laws:
\begin{definition}\label{def:rho-lift}
  Let $\Sigma\c \C\to \C$ and $B\c \C^\opp\times \C\to \C$ be functors
  with relation liftings $\ol{\Sigma}$ and $\barB$, respectively,
  where $\Sigma$ preserves strong epimorphisms and
  $\ol{\Sigma}=\ol{\Sigma}_\Rel$ is the canonical lifting. Given a
  $V$-pointed higher-order GSOS law
  \[
    \rho_{X,Y}\c \Sigma(X\times B(X,Y))\to B(X,\Sigmas(X+Y))
  \]
  of $\Sigma$ over $B$, a \emph{relation lifting} of $\rho$ is a
  $(V,V)$-pointed higher-order GSOS law
  \[
    \begin{tikzcd}
      \ol{\Sigma}((X,R)\times \barB((X,R),(Y,S)))
      \ar{d}{\barrho_{(X,R),(Y,S)}}
      \\
      \barB((X,R),\ol{\Sigma}^\star((X,R)\vee(Y,S))) 
    \end{tikzcd}
  \]
  of $\ol{\Sigma}$ over $\barB$ such that
  \[
    (\barrho_{(X,R),(Y,S)})_0=\rho_{X,Y}
  \]
  for $((X,R),p_{(X,R)})\in (\Pt,\Pt)/\Rel(\C)$ and
  $(Y,S)\in \Rel(\C)$. Here we regard $X$ as $V$-pointed by
  $p_X=(p_{(X,R)})_0\c V\to X$.
\end{definition}

\begin{rem}\label{rem:rho-rel-lifting}
\begin{enumerate}
\item Recall that for a $V$-pointed higher-order GSOS law $\rho$ we assume the functor $\Sigma$ to be of the form $V+\Sigma'$. This implies $\ol{\Sigma}=(V,V)\vee \ol{\Sigma'}$, as required.
\item The product $\times$ and coproduct $\vee$ in $\Rel(\C)$ are formed as explained in \Cref{sec:graph-cocomplete}, and we have $\ol{\Sigma}^\star= \barSigmas$ by \Cref{prop:free-monad-lift}. It follows that $(\barrho_{(X,R),(Y,S)})_0$ is a $\C$-morphism of type $\Sigma(X\times B(X,Y))\to B(X,\Sigmas(X+Y))$.
\item Since $\Rel(\C)$-morphisms are uniquely determined by their $(-)_0$-component, a higher-order GSOS law $\rho$ admits at most one lifting $\barrho$. The requirement that the morphisms $\barrho_{(X,R),(Y,S)}$ form a higher-order GSOS law of $\ol{\Sigma}$ over $\barB$ is thus vacuous: the (di-)naturality of $\barrho$ is implied by that of $\rho$.
\end{enumerate}
\end{rem}
\noindent For the canonical relation lifting $\barB$ of $B$, every higher-order GSOS law admits a relation lifting (see Appendix, \Cref{sec:can-lift-gsos-laws}).

\section{Weak Simulations}\label{sec:weak-simulations}
\noindent We next introduce the notion of weak simulation featuring in our abstract congruence result. 

\begin{notation}\label{not:f-lifting}
  Fix a functor $F\c \C\to \C$ and a relation lifting~$\barF$. We
  denote the relation $\barF(X,R)$ by $(FX,E_R)$.
\end{notation}
We recall the notion of \emph{(lifting)
  bisimulation}~\cite{DBLP:books/cu/J2016} for coalgebras. We use the
term \emph{simulation} instead, as this is what the concept amounts to
in our applications, due to the use of asymmetric liftings such as the
one-sided Egli-Milner lifting. An alternative approach to simulations uses lax liftings~\cite{hj04}.

\begin{definition}\label{def:simulation}
  Let $(C,c)$ be an $F$-coalgebra. A relation $(C,R)$ is a
  \emph{simulation} on $(C,c)$ if $c_\star(C,R)\leq \barF(C,c)$, that
  is, there exists a morphism $c_R$ making \eqref{eq:simulation}
  commute.
  \begin{equation}\label{eq:simulation}
    \begin{tikzcd}[column sep = 35]
      C \ar{d}[swap]{c} & R \ar{l}[swap]{\outl_R} \ar{d}{c_R} \ar{r}{\outr_R} & C \ar{d}{c} \\
      FC  & \ar{l}[swap]{\outl_{E_R}} E_R \ar{r}{\outr_{E_R}} & FC
    \end{tikzcd}
  \end{equation}
  If it exists, the greatest simulation with respect to the partial
  order $\leq$ on $\Rel_C(\C)$ is called the \emph{similarity relation} on
  $(C,c)$.
\end{definition}

\begin{lemma}
\label{lem:sim-props} Suppose that the functor $\barF$ satisfies the following conditions for all $X\in \C$ and $(X,R),(X,S)\in \Rel_X(\C)$:
\begin{enumerate}
\item\label{eq:good-for-simulations-1} the relation $\barF(X,X)$ is reflexive;
\item\label{eq:good-for-simulations-2} $\barF(X,R)\bullet \barF(X,S)\leq \barF((X,R)\bullet (X,S))$. 
\end{enumerate}
Then for every $F$-coalgebra $(C,c)$ the similarity relation exists, and it is reflexive and transitive. 
\end{lemma}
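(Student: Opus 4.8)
The plan is to establish the three claims — existence, reflexivity, transitivity — in that order, with the existence argument splitting into showing that simulations are closed under the relevant joins so that the union of all simulations is again a simulation.

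First I would set up the correspondence between simulations and the partial order $\leq$ on $\Rel_C(\C)$. By \Cref{def:simulation}, a relation $(C,R)$ is a simulation exactly when $c_\star(C,R)\leq \barF(C,c)$, i.e.\ $(C,R)\leq c^\star\barF(C,c)$ where I reindex along $c$ in the fibre $\Rel_C(\C)$. Since $\Rel_C(\C)$ is a complete lattice (as noted in \Cref{sec:relations-in-a-category}), the natural candidate for the similarity relation is the join $(C,{\lesssim})=\bigvee\{(C,R): (C,R)\text{ is a simulation}\}$. To see that this join is itself a simulation, I would argue that $c_\star$ preserves joins (it is the opcartesian reindexing functor of the bifibration $\under{-}\c\Gra(\C)\to\C$ from \Cref{sec:transfer}, hence a left adjoint and so cocontinuous on fibres) and that $\barF$ is monotone. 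Concretely, if each $(C,R_i)\leq c^\star\barF(C,c)$ then their join in $\Rel_C(\C)$ also lies below $c^\star\barF(C,c)$, giving existence of a greatest simulation.

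Next, for \textbf{reflexivity}, I would use hypothesis~\ref{eq:good-for-simulations-1}. I must show the identity relation $(C,C)$ is a simulation, i.e.\ $c_\star(C,C)\leq \barF(C,c)$. Now $c_\star(C,C)$ is just the graph of $c$, and the commutativity of the lifting square (\Cref{def:bifunctor-lift}) together with the functoriality of $\barF$ on the morphism $c\c(C,C)\to(FC,FC)$ shows $\barF$ maps the identity relation on $C$ into the identity relation on $FC$; since $\barF(C,C)$ is reflexive by assumption, $(FC,FC)\leq\barF(C,C)$, and precomposing with $c$ yields the required factorization. Hence $(C,C)$ is a simulation, so $(C,C)\leq(C,{\lesssim})$, which is reflexivity.

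For \textbf{transitivity}, I would use hypothesis~\ref{eq:good-for-simulations-2} together with the fact (stated in \Cref{sec:graph-rel-comp}) that relation composition distributes over coproducts and is monotone. The idea is that the composite $(C,{\lesssim})\bullet(C,{\lesssim})$ is again a simulation: if $(C,R)$ and $(C,S)$ are simulations with witnessing maps $c_R,c_S$, I would build a witness for $(C,R)\bullet(C,S)$ by composing the two squares \eqref{eq:simulation} and invoking functoriality of $\barF$ on the composite, landing in $\barF(C,R)\bullet\barF(C,S)$, which by~\ref{eq:good-for-simulations-2} sits below $\barF((C,R)\bullet(C,S))$; so $(C,R)\bullet(C,S)$ is a simulation. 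Applying this to $R=S={\lesssim}$ gives $(C,{\lesssim})\bullet(C,{\lesssim})\leq(C,{\lesssim})$ by maximality, which is exactly transitivity. \emph{The main obstacle} I anticipate is the careful bookkeeping in the transitivity step: composing the simulation squares requires forming the pullback defining $R\smc R'$ and threading a witness morphism through it before factoring through $\barF$, and one must check that the resulting map genuinely factors through the composite lifting rather than merely through the graph composite — here the distributivity properties guaranteed by \Cref{asm:cat} and the monotonicity of $\bullet$ are precisely what make the diagram chase close up.
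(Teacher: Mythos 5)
Your overall strategy coincides with the paper's proof: show that simulations are closed under (set-indexed) joins, that the identity relation $(C,C)$ is a simulation (hypothesis~(1), exactly as in the paper), and that the composite of two simulations is a simulation (hypothesis~(2) together with compatibility of $c_\star$ with composition), and then obtain similarity as the join of all simulations, reading off reflexivity and transitivity from maximality. Your reflexivity and transitivity paragraphs track the paper's argument essentially verbatim; in particular, the chain $c_\star((C,R)\bullet(C,S)) \leq c_\star(C,R)\bullet c_\star(C,S) \leq \barF(C,R)\bullet\barF(C,S) \leq \barF((C,R)\bullet(C,S))$ you sketch is literally the paper's computation, with the first inequality being the lemma about reindexing and relation composition that the paper isolates in its appendix.

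The step that does not survive scrutiny as written is the existence argument. Being a simulation is \emph{not} the condition of lying below a fixed relation: it is the post-fixed-point condition $(C,R)\leq c^\star\barF(C,R)$, where the bound depends on $R$ (the notation $\barF(C,c)$ in \Cref{def:simulation} is loose; the diagram \eqref{eq:simulation} shows the bound is $E_R$). Hence your concluding sentence ``if each $(C,R_i)\leq c^\star\barF(C,c)$ then their join also lies below $c^\star\barF(C,c)$'' is not an applicable piece of lattice trivia about downsets of a fixed element --- if it were, the greatest simulation would simply \emph{be} $c^\star\barF(C,c)$ and there would be nothing to prove. What is actually needed is the Knaster--Tarski-style argument: $c_\star(C,R_i)\leq \barF(C,R_i)\leq \barF\bigl(C,\bigvee_j (C,R_j)\bigr)$ by monotonicity of $\barF$, and then $c_\star\bigl(\bigvee_i (C,R_i)\bigr)\leq \barF\bigl(C,\bigvee_j (C,R_j)\bigr)$ using compatibility of $c_\star$ with joins. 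You do name both ingredients, but then replace them by the fixed-bound argument, which is a misreading. Two further points your sketch glosses over and the paper handles explicitly: first, $c_\star$ does not map relations to relations, so it cannot literally ``preserve joins in $\Rel_C(\C)$''; the paper computes with coproducts of \emph{graphs} (where preservation is immediate) and then invokes the lax compatibility $c_\star\bigl((-)^\dag\bigr)\leq \bigl(c_\star(-)\bigr)^\dag$ --- the same laxness is what underlies the composition inequality in your transitivity step. Second, well-poweredness from \Cref{asm:cat} is what guarantees that the simulations form a small set, so that the join you want to take exists at all.
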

\noindent The conditions in the above lemma are similar to ones
occurring in work on \emph{lax extensions}, e.g.~by Marti and
Venema~\cite{MartiVenema15}.

In the setting of $\HO$ specifications,
where $F=\Pow B_0(X,-)$, a weak simulation on a $B_0(X,-)$-coalgebra
$(C,c)$ as per \Cref{def:weak-sim} is precisely a simulation on the
weak transition system $(C,\wt{c}\hspace{1pt})$. As observed in
\Cref{rem:weak-vs-strong}\ref{rem:weak-vs-strong-3}, in order to check
the weak simulation conditions for $R(p,q)$, it suffices to show that
strong transitions from $p$ are simulated by weak transitions from
$q$. This turns out to be the only property of weak simulations needed
for our categorical congruence proof, and so we take it as our
abstract definition:

\begin{definition}\label{def:weakaning-and-weak-sim}
A \emph{weakening} of a coalgebra $c\c C\to FC$ is a coalgebra $\wt{c}\c C\to FC$ such that for every relation $(C,R)$ the following two statements are equivalent:
\begin{enumerate}
\item $(C,R)$ is a simulation on $(C,\wt{c}\hspace{1pt})$;
\item there exists a morphism $\wt{c}_R$ making \eqref{eq:weak-sim-diag} commute.
\end{enumerate}
\
\begin{equation}\label{eq:weak-sim-diag}
\begin{tikzcd}[column sep = 35]
C \ar{d}[swap]{c} & R \ar{l}[swap]{\outl_R} \ar{d}{\wt{c}_R} \ar{r}{\outr_R} & C \ar{d}{\wt{c}} \\
FC  & \ar{l}[swap]{\outl_{E_R}} E_R \ar{r}{\outr_{E_R}} & FC
\end{tikzcd}
\end{equation}
A \emph{weak simulation} on $(C,c)$, with respect to a given weakening $(C,\wt{c})$, is a relation $(C,R)$ satisfying the two equivalent properties above. If it exists, the greatest weak simulation is called \emph{weak similarity}, denoted $\lesssim_{(C,c)}$ or just $\lesssim$.  
\end{definition}

\begin{rem}\label{rem:weakening}
\begin{enumerate}
\item\label{rem:weakening-2} For the trivial weakening $\wt{c}=c$, weak simulations are just (strong) simulations.
\item The above definition is agnostic about how the
  weakening~$\wt{c}$ is actually constructed from~$c$. The
  construction of weak coalgebras has
  been studied in specific order-enriched settings~\cite{DBLP:journals/corr/Brengos13,DBLP:journals/jlp/BrengosMP15,DBLP:conf/icalp/GoncharovP14}.
  Our present abstract approach is flexible in the choice
  of~$\wt{c}$. For example, the weak transition system $\wt{c}$ of
  \Cref{notation:ho-format} is an instance of the framework of~\cite{DBLP:journals/corr/Brengos13}, but the choice
  $\wt{c}=c$ as in part \ref{rem:weakening-2} above is not.
\end{enumerate}
\end{rem}

\section{Howe's Method, Categorically}\label{sec:howe}
\noindent Next, we set up our version of Howe's method, which regards Howe closures abstractly as initial algebras. In a restricted setting of presheaf categories, this idea already appears in the work of Borthelle et al.~\cite{DBLP:conf/lics/BorthelleHL20} and Hirschowitz and Lafont~\cite{DBLP:journals/lmcs/HirschowitzL22}.   
\begin{notation}\label{not:howe-cat}
  Let $\Sigma\c \C\to \C$ be an endofunctor with its canonical
  relation lifting $\ol{\Sigma}=\ol{\Sigma}_\Rel$
  (\Cref{cons:lifting-endofunctor}). For every $(X,R)\in \Rel_X(\C)$
  and every $\Sigma$-algebra $(X,\xi)$ with monic structure $\xi\c
  \Sigma X\monoto X$, let
  \[ 
    \ol{\Sigma}_{R,\xi}\c \Rel_X(\C)\to \Rel_X(\C) 
  \]
 be the endofunctor (= monotone map) given by
\begin{align*}
(X,S) \quad&\mapsto\quad (X,R) \vee_X \big((\xi_\star\ol{\Sigma}(X,S))\bullet (X,R)\big),
\end{align*}
see \Cref{sec:graphs-and-relations} for the notation. (The assumption that $\xi$ is monic ensures that $\xi_\star$ maps relations to relations.)
Since $\Rel_X(\C)$ is equivalent to a complete lattice, the initial algebra of $\ol{\Sigma}_{R,\xi}$  exists, and we denote it by
  \begin{equation}\label{eq:howe-alg-struct}
    (X,R)\vee_X\big((\xi_\star \ol{\Sigma}(X,\hatR))\bullet (X,R)\big) \xto{\alpha_{R,\xi}} (X,\hatR).
  \end{equation}
  The relation $(X,\hatR)$ is called the \emph{Howe closure} of $(X,R)$ with respect to the algebra $(X,\xi)$.
\end{notation}

\begin{rem}
  We will instantiate the above to the initial algebra
  $(X,\xi)=(\mS,\ini)$; note that the structure $\ini$ is an isomorphism. For
  $\C=\Set$ and $\Sigma$ a polynomial functor, the above definition of
  $\hatR$ is equivalent to the one of \Cref{not:howe}.
\end{rem}
\Cref{lem:howe-props} below establishes some basic properties of Howe closures, generalizing \Cref{rem:howe-props}. For that purpose let us recall the notion of \emph{congruence} for functor
algebras:
\begin{definition}\label{def:cong}
  A \emph{congruence} on a $\Sigma$-algebra $(A,a)$ is a relation
  $(A,R)$ such that $a_\star\ol{\Sigma}_\Rel(A,R)\leq (A,R)$.
\end{definition}
\noindent For a polynomial functor $\Sigma$ on $\Set$, this matches
the definition of congruence from universal algebra (cf.~\Cref{sec:categories}).

\begin{lemma}\label{lem:howe-props}
Let $\Sigma\c \C \to \C$ be an endofunctor. Then for each $(X,R)\in \Rel(\C)$ and each monic algebra $\xi\c \Sigma X\monoto X$,
\begin{enumerate}
\item if $(X,R)$ is reflexive, then $(X,\hatR)$ is reflexive and a congruence on $(X,\xi)$;
\item if $(X,R)$ is transitive, then $(X,\hatR)$ is \emph{weakly transitive}, that is, $(X,\hatR)\bullet (X,R)\leq (X,\hatR)$.
\end{enumerate}
\end{lemma}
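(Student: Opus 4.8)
The plan is to exploit that, since $\Rel_X(\C)$ is (equivalent to) a complete lattice and $\ol{\Sigma}_{R,\xi}$ is monotone, the Howe closure $(X,\hatR)$ is the \emph{least fixed point} of $\ol{\Sigma}_{R,\xi}$; in particular, by Lambek's lemma the initial-algebra structure $\alpha_{R,\xi}$ in~\eqref{eq:howe-alg-struct} is an isomorphism, so that, identifying isomorphic relations,
\begin{equation*}
  (X,\hatR) \;=\; (X,R)\vee_X\bigl((\xi_\star\ol{\Sigma}(X,\hatR))\bullet (X,R)\bigr).
\end{equation*}
This single equation already yields $(X,R)\leq (X,\hatR)$ and $(\xi_\star\ol{\Sigma}(X,\hatR))\bullet (X,R)\leq (X,\hatR)$, and these two facts drive both parts. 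Throughout I shall use the standard calculus of relation composition on $\Rel_X(\C)$ (suppressing the common carrier $X$ where convenient): monotonicity of $\bullet$, the identity relation $(X,X)$ as a two-sided unit, associativity, and---crucially---distribution of $\bullet$ over joins, all of which follow from \Cref{asm:cat} (cf.\ the remark in \Cref{sec:graph-rel-comp}).

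For part~(1), reflexivity is immediate: reflexivity of $(X,R)$ gives $(X,X)\leq (X,R)\leq (X,\hatR)$. For the congruence property I must show $\xi_\star\ol{\Sigma}_\Rel(X,\hatR)\leq (X,\hatR)$, which is \Cref{def:cong}. Starting from the fixed-point equation I compose on the right with the reflexive relation: using $(X,X)\leq (X,R)$, monotonicity of $\bullet$, and the unit law $(\xi_\star\ol{\Sigma}(X,\hatR))\bullet (X,X)\cong \xi_\star\ol{\Sigma}(X,\hatR)$, I obtain
\begin{equation*}
  \xi_\star\ol{\Sigma}(X,\hatR)\cong (\xi_\star\ol{\Sigma}(X,\hatR))\bullet (X,X)\leq (\xi_\star\ol{\Sigma}(X,\hatR))\bullet (X,R)\leq (X,\hatR),
\end{equation*}
as required.

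For part~(2) the plan is an induction along the initial-algebra chain computing the least fixed point: put $\hatR_0=\bot$, $\hatR_{\lambda+1}=\ol{\Sigma}_{R,\xi}(\hatR_\lambda)$, and $\hatR_\lambda=\bigvee_{\mu<\lambda}\hatR_\mu$ at limits, so that $(X,\hatR)=\bigvee_\lambda (X,\hatR_\lambda)$. I prove $(X,\hatR_\lambda)\bullet (X,R)\leq (X,\hatR)$ by transfinite induction. The base case is distribution of $\bullet$ over the empty join ($\bot\bullet R=\bot$), and the limit case is distribution over the directed join together with the induction hypothesis. The successor case is the crux: writing $A_\lambda=\xi_\star\ol{\Sigma}(X,\hatR_\lambda)$,
\begin{align*}
  \hatR_{\lambda+1}\bullet R
  &= \bigl(R\vee_X(A_\lambda\bullet R)\bigr)\bullet R
   = (R\bullet R)\vee_X\bigl((A_\lambda\bullet R)\bullet R\bigr)\\
  &= (R\bullet R)\vee_X\bigl(A_\lambda\bullet(R\bullet R)\bigr)
   \leq R\vee_X(A_\lambda\bullet R) = \hatR_{\lambda+1}\leq \hatR,
\end{align*}
where the second equality is distribution over the binary join, the third is associativity, and the inequality uses transitivity $R\bullet R\leq R$ together with monotonicity of $\bullet$ and $R\leq\hatR$. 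Finally, distributing $\bullet$ over the join $\bigvee_\lambda$ gives $\hatR\bullet R=\bigvee_\lambda(\hatR_\lambda\bullet R)\leq \hatR$, which is the claimed weak transitivity.

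I expect the main obstacle to be not the inductive bookkeeping but securing the algebraic laws of $\bullet$ at the needed level of generality---above all the distribution of relation composition over (possibly infinite) joins, and its associativity. These are exactly the properties flagged in \Cref{sec:graph-rel-comp} as following from pullback-stability of strong epimorphisms (\Cref{asm:cat}\ref{asm:cat-3}) and local distributivity (\Cref{asm:cat}\ref{asm:cat-2}); the proof of part~(2) is essentially the point where ``composition distributes over coproducts'' is cashed in.
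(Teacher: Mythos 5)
Part (1) of your proposal is correct and is essentially the paper's own argument: reflexivity via $(X,X)\leq(X,R)\leq(X,\hatR)$, and the congruence property via the unit law for $\bullet$, monotonicity, and the second component $\alpha_{R,\xi}\comp\inr$ of the algebra structure. Part (2), however, has a genuine gap, and it is exactly the issue you flag in your closing remark but then dismiss. Your transfinite induction needs, at every limit ordinal $\lambda$, the inequality $\bigl(\bigvee_{\mu<\lambda}\hatR_\mu\bigr)\bullet(X,R)\leq\bigvee_{\mu<\lambda}\bigl(\hatR_\mu\bullet(X,R)\bigr)$ (monotonicity of $\bullet$ only gives the \emph{opposite} direction), and at the base case it needs $\bot\bullet(X,R)\cong\bot$; both are instances of distribution of $\bullet$ over infinite (resp.\ empty) joins. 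What the paper proves (\Cref{lem:comp-dist-over-coproducts}), and all that follows from local distributivity --- a finitary condition --- is \emph{binary} distribution. Infinitary distribution can genuinely fail under \Cref{asm:cat}: take $\C$ to be the lattice of closed subsets of $\mathbb{R}$, viewed as a thin category (it is complete, cocomplete, well-powered, locally distributive, and its strong epimorphisms are isomorphisms, hence trivially pullback-stable); there a relation on $X$ is just an element below $X$, composition $\bullet$ is meet, $\vee_X$ is join, and meet does not distribute over infinite joins. So your assertion that the needed laws are ``exactly the properties flagged in \Cref{sec:graph-rel-comp}'' is not correct --- those cover only the binary case --- and your proof of (2) does not go through as written. (The very last step could be salvaged by using that the chain stabilizes, i.e.\ $\hatR=\hatR_\theta$ for some ordinal $\theta$, but the limit stages inside the induction cannot.)

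The irony is that the induction is unnecessary: the fixed-point equation you set up at the outset (via Lambek's lemma) yields (2) in one step, using only binary distribution. Writing $A=\xi_\star\ol{\Sigma}(X,\hatR)$ and running your own successor-step computation with $\hatR$ itself in place of $\hatR_\lambda$:
\[
\hatR\bullet R\;\cong\;\bigl(R\vee_X(A\bullet R)\bigr)\bullet R\;\cong\;(R\bullet R)\vee_X\bigl(A\bullet(R\bullet R)\bigr)\;\leq\;R\vee_X(A\bullet R)\;\cong\;\hatR,
\]
by binary distribution, associativity, transitivity of $(X,R)$, and monotonicity of $\bullet$ and $\vee_X$. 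This is precisely the paper's proof; the transfinite machinery is both the only place where your argument breaks and the only thing separating it from the correct one.
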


\section{Compositionality}\label{sec:weak-sim}
\noindent We proceed to establish our main theorem, which asserts that under natural conditions, weak similarity is a congruence on the operational model of a higher-order GSOS law. 

\begin{assumptions}\label{asm-sim}
In this section we fix the following data:
\begin{enumerate}
\item\label{asm-sim-1} a functor $\Sigma=V+\Sigma'\c \C \to \C$ that preserves strong epimorphisms and generates a free monad $\Sigmas$;
\item\label{asm-sim-2} a preordered bifunctor $B\c \C^\op\times \C\to
  \C$ with a relation lifting $\barB$ that is \emph{good for
    simulations} (\Cref{def:good-for-simulations-bifunctor});
\item\label{asm-sim-3} a $V$-pointed higher-order GSOS law $\rho$ of $\Sigma$ over $B$ that admits a (necessarily unique) relation lifting $\barrho$.
\end{enumerate}
\end{assumptions}
\noindent It remains to explain Assumption \ref{asm-sim}\ref{asm-sim-2}:
\begin{definition}\label{def:good-for-simulations-bifunctor}
A relation lifting $\barB$ of $B$ is \emph{good for si\-mu\-la\-tions} if, for $X,Y\in \C$ and $(X,R),(Y,S),(Y,S')\in \Rel(\C)$,
\begin{enumerate}[label=(G\arabic*)]
\item\label{eq:good-for-simulations-bifunctor-1} the relation $\barB((X,R),(Y,S))$ is good for simulations;
\item\label{eq:good-for-simulations-bifunctor-2} the relation $\barB((X,X),(Y,Y))$ is reflexive;
\item\label{eq:good-for-simulations-bifunctor-3}
  %% no spaghetti code, please
  $
  \begin{array}[t]{@{}l}
    \barB((X,R),(Y,S))\bullet \barB((X,X),(Y,S'))\leq
    \\
    \barB((X,R), (Y,S)\bullet (Y,S')).
  \end{array}
  $
\end{enumerate}
\end{definition}
%\begin{figure*}
%\begin{adjustbox}{varwidth=\textwidth,margin=0 {\abovecaptionskip} 0 0, frame=1pt }
%\begin{align}
% (B(X,Y),B(X,Y)) & \leq \barB((X,X),(Y,Y)) && \text{$\forall X,Y\in \C$} \label{eq:good-for-simulations-bifunctor-1}\\
%\barB((X,R),(Y,S))\bullet \barB((X,X),(Y,S'))  & \leq \barB((X,R), (Y,S)\bullet (Y,S')) && \text{$\forall(X,R),(Y,S),(Y,S')\in \Rel(\C)$} \label{eq:good-for-simulations-bifunctor-2}\\
%(B(X,Y),T)\preceq \barB((X,R),(Y,S)) & \To (B(X,Y),T)\leq \barB((X,R),(Y,S)) && \text{$\forall(B(X,Y),T)\in \Gra_{B(X,Y)}(\C)$} \label{eq:good-for-simulations-bifunctor-3}
%\end{align}
%\end{adjustbox}
%\caption{Conditions on a good-for-simulations relation lifting}\label{fig:good-for-simulations}
%\end{figure*}
%In particular, condition \eqref{eq:good-for-simulations-bifunctor-3} states that the relation $\barB((X,R),(Y,S))$ is good for simulations, cf.\ \Cref{def:good-for-simulations-relation}.
\begin{rem}To motivate Assumption \ref{asm-sim}\ref{asm-sim-3}, let us revisit the setting of $\HO$ specifications, where $B(X,Y)=\Pow(Y+Y^X)$ and $\rho$ is given by \eqref{eq:rho0-to-rho}. Existence of a relation lifting of $\rho$ means that for $(X,R)$, $(Y,S)\in \Rel$ the map $\rho_{X,Y}$ is a $\Rel$-morphism w.r.t.\ the lifting $\barB$ of \Cref{rem:weak-vs-strong}\ref{rem:weak-vs-strong-1}. In the proof of \Cref{thm:cool} (induction base for \eqref{eq:strengthened-induction-hypothesis}) a syntactic argument shows that $p_d \happrox p_e$ for $d \happrox e$, which amounts to the above property for $(X,R)=(Y,S)=(\mu\Sigma,\happrox)$. Hence, the purpose of Assumption \ref{asm-sim}\ref{asm-sim-3} is to replace the syntactic part of that proof by an abstract condition on the law $\rho$.
\end{rem}

In the following we study weak simulations on the operational model $(\mS,\gamma)$
of the higher-order GSOS law $\rho$, understood w.r.t.~to the relation lifting $\barB((\mS,\mS),-)\c \Rel(\C)\to\Rel(\C)$ of the endofunctor $B(\mS,-)\c \C\to \C$ and a given weakening $(\mS,\wt{\gamma})$ of $(\mS,\gamma)$. By \ref{eq:good-for-simulations-bifunctor-2} and \ref{eq:good-for-simulations-bifunctor-3} the lifted endofunctor satisfies the conditions of \Cref{lem:sim-props}, hence the weak similarity relation on $(\mS,\gamma)$ exists. 

The core ingredient for our congruence theorem is a higher-order
variation of \emph{lax models} for monotone GSOS laws~\cite{DBLP:conf/concur/BonchiPPR15}:

\begin{definition}\label{D:lax-bialg}
  A \emph{lax $\rho$-bialgebra} $(X,a,c)$ is given by an object
  $X\in \C$ and morphisms $a\c \Sigma X\to X$ and $c\c X\to B(X,X)$
  such that the diagram below commutes laxly. Note that $X$ is
  $V$-pointed; the point $p_X\colon V \to X$ is induced by the algebra
  $a\colon \Sigma X \to X$ (\Cref{not:rho}).
  \[ 
    \begin{tikzcd}[column sep=1.1em]
      \Sigma X \ar{r}{a} \ar{d}[swap]{\langle \id, c\rangle}
      &
      X \ar{r}{c}
      &
      B(X,X)
      \\
      \Sigma(X\times B(X,X))
      % \ar{r}[yshift=.5em]{\rho_{X,X}}
      \ar{r}[rotate=30,pos=1.5,yshift=.5em]{\rho_{X,X}}
      &
      B(X,\Sigmas(X+X))
      \ar[phantom]{u}[description]{\dgeq{-90}}
      % \ar{r}[yshift=.5em]{B(\id,\Sigmas\nabla)}
      \ar{r}[rotate=30,pos=2.3,yshift=.8em]{B(\id,\Sigmas\nabla)}
      &
      B(X,\Sigmas X) \ar{u}[swap]{B(\id,\wh{a})}  
    \end{tikzcd}
  \] 
\end{definition}
\noindent
This generalizes the notion of \emph{$\rho$-bialgebra}~\cite{gmstu23}
which requires strict commutativity of the above diagram.

 Our congruence theorem rests on the assumption that $(\mS,\ini,\wt{\gamma})$ is a lax $\rho$-bialgebra. As indicated in \Cref{rem:sound-vs-lax-bialgebra}, this expresses in abstract terms that the operational rules encoded by the higher-order GSOS law $\rho$ are sound for weak transitions in the operational model. In the setting of $\HO$ specifications, we proved that this entails the congruence property for weak similarity (\Cref{thm:cool}). The next proposition is key to our categorical generalization of that result.

\begin{proposition}\label{prop:howe}
Suppose that $\wt{\gamma}$ is a weakening of the operational model $(\mS,\gamma)$ such that $(\mS,\ini,\wt{\gamma})$ is a lax $\rho$-bialgebra. Then for every reflexive and transitive
  weak simulation $(\mS,R)$ on $(\mS,\gamma)$, the Howe closure
  $(\mS,\hatR)$ w.r.t.~$(\mS,\ini)$ is a weak simulation.
\end{proposition}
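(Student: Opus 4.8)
The plan is to establish a statement stronger than weak simulation, namely that the graph on $\mS$ with left leg $\gamma$ and right leg $\wt\gamma$ induced by $(\mS,\hatR)$ admits a filler into $\barB((\mS,\hatR),(\mS,\hatR))$, the lifting carrying $\hatR$ in \emph{both} arguments. This implies the claim: by \Cref{lem:howe-props} the relation $\hatR$ is reflexive, so $(\mS,\mS)\leq(\mS,\hatR)$, and contravariance of $\barB$ in its first argument gives $\barB((\mS,\hatR),(\mS,\hatR))\leq\barB((\mS,\mS),(\mS,\hatR))$; composing the filler with this inclusion produces exactly the map required for $(\mS,\hatR)$ to be a weak simulation on $(\mS,\gamma)$ in the strong-premise form of \Cref{def:weakaning-and-weak-sim}. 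This strengthening is the abstract counterpart of the strengthened induction hypothesis \eqref{eq:strengthened-induction-hypothesis}, as anticipated in \Cref{rem:bifunctor-lifting}. To phrase it purely relationally, I let $U$ be the relation on $\mS$ consisting of those pairs that $\gamma\times\wt\gamma$ maps into $\barB((\mS,\hatR),(\mS,\hatR))$ (a relation, being a pullback of a monic pair); the strengthened goal is then precisely $(\mS,\hatR)\leq U$. Since $(\mS,\hatR)$ is the least fixpoint -- equivalently, the least prefixpoint -- of $\ol{\Sigma}_{R,\ini}$ (\Cref{not:howe-cat}), it suffices to verify $\ol{\Sigma}_{R,\ini}(U)\leq U$, which by the decomposition $\ol{\Sigma}_{R,\ini}(U)=(\mS,R)\vee_{\mS}\big((\ini_\star\ol{\Sigma}(\mS,U))\bullet(\mS,R)\big)$ reduces to three inequalities: a base $(\mS,R)\leq U$, a core $\ini_\star\ol{\Sigma}(\mS,U)\leq U$, and an absorption $U\bullet(\mS,R)\leq U$, the last giving $(\ini_\star\ol{\Sigma}(\mS,U))\bullet(\mS,R)\leq U\bullet(\mS,R)\leq U$ by monotonicity of $\bullet$.

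The base and absorption inequalities are the two \emph{transitivity} steps and are handled alike. For the base, $R$ is a weak simulation, so its $(\gamma,\wt\gamma)$-behaviour lands in $\barB((\mS,\mS),(\mS,R))$; to upgrade the first argument from the identity to $\hatR$ I pre-compose with the fact that $\gamma$ respects $\hatR$ contravariantly, i.e.\ $\gamma_\star(\mS,\mS)\leq\barB((\mS,\hatR),(\mS,\hatR))$. This is the abstract substitute for the syntactic \enquote{$p_d\happrox p_e$ for $d\happrox e$}; it follows from the lifting $\barrho$ (Assumption~\ref{asm-sim}\ref{asm-sim-3}) evaluated at $((\mS,\hatR),(\mS,\hatR))$ together with $\hatR$ being a congruence (\Cref{lem:howe-props}). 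Composing these two behaviours through the intermediate $\gamma(p)$ and invoking condition~\ref{eq:good-for-simulations-bifunctor-3} followed by weak transitivity $\hatR\bullet R\leq\hatR$ (\Cref{lem:howe-props}) and covariance lands the result in $\barB((\mS,\hatR),(\mS,\hatR))$, giving $(\mS,R)\leq U$. The absorption inequality is the same computation with the $(\wt\gamma,\wt\gamma)$-form of \Cref{def:weakaning-and-weak-sim} (simulation on $(\mS,\wt\gamma)$) in place of $\gamma_\star(\mS,\mS)$.

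The core inequality $\ini_\star\ol{\Sigma}(\mS,U)\leq U$ carries the weight of the argument and uses the bialgebra structure. The strong operational model satisfies the strict law \eqref{diag:gamma}, so $\gamma$ on $\ini$-images is computed exactly through $\rho$, whereas the lax $\rho$-bialgebra hypothesis supplies the dual inequality for $\wt\gamma$. Feeding the inductive data -- $U$-related arguments assembled under a common operation -- into the relation lifting $\barrho$ (which relates the $\rho$-outputs of $\hatR$-related inputs) and using that $\wh{\ini}$ respects the congruence $\hatR$, I obtain that the $\rho$-computed behaviour of the left term is $\barB((\mS,\hatR),(\mS,\hatR))$-related to that of the right term. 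Since $\wt\gamma$ only dominates the $\rho$-computed behaviour in the preorder, this is a priori a right-lax morphism, which I then upgrade to a strict one using that $\barB((\mS,\hatR),(\mS,\hatR))$ is good for simulations (condition~\ref{eq:good-for-simulations-bifunctor-1}, \Cref{def:good-for-simulations-relation}). This yields the core inequality and completes the prefixpoint verification, whence $(\mS,\hatR)\leq U$.

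The main obstacle I anticipate is precisely the core inequality: the diagram chase that fuses the strict law \eqref{diag:gamma} for $\gamma$, the lax $\rho$-bialgebra square for $\wt\gamma$, the relation-morphism property of $\barrho$, and the compatibility of $\wh{\ini}$ with $\hatR$ into a single right-lax morphism, while correctly tracking the mixed variance of $\barB$ in its contravariant argument. Here the good-for-simulations condition~\ref{eq:good-for-simulations-bifunctor-1} and the compositionality inequality~\ref{eq:good-for-simulations-bifunctor-3} are exactly the abstract tools replacing the concrete appeals to weak transitivity and congruence: the former performs the lax-to-strict passage effected concretely by the $\Downarrow$-closure, and the latter licenses the transitive chaining through the intermediate term. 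A secondary point requiring care is the verification that $U$ is genuinely a relation and that relation composition interacts with the reindexings $\gamma_\star,\wt\gamma_\star$ as used, which rests on \Cref{asm:cat} and the distributivity of $\bullet$ over coproducts noted in \Cref{sec:graphs-and-relations}.
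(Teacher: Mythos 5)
Your overall architecture is the right one, and in fact your relation $U$ is exactly the relation $P$ that the paper constructs by pulling back $\langle\gamma,\wt\gamma\rangle$ against the projections of $\barB((\mS,\hatR),(\mS,\hatR))$; your base and absorption steps, and the final downgrading of the contravariant argument via reflexivity of $\hatR$, also match the paper's $\beta^\lft$ and its concluding diagram. But there is a genuine gap in your \emph{core} inequality $\ini_\star\ol{\Sigma}(\mS,U)\leq U$, and it is not a presentational one. To run your diagram chase you must feed the subterm data into the lifted law $\barrho_{(\mS,\hatR),(\mS,\hatR)}$, whose domain is $\ol{\Sigma}\bigl((\mS,\hatR)\times\barB((\mS,\hatR),(\mS,\hatR))\bigr)$: it requires the argument \emph{pairs themselves} to be $\hatR$-related, not merely their $(\gamma,\wt\gamma)$-behaviours to be $E_{\hatR,\hatR}$-related. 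Membership in $U$ records only the latter. The same problem recurs when you invoke the congruence property of $\hatR$: GSOS rules may copy argument terms verbatim into the result (e.g.\ the rule $x_1\app x_2\to y_1\app x_2$ in \eqref{eq:ski-rules-app}), so to conclude that the resulting terms are $\hatR$-related you need $p_i\mathrel{\hatR}q_i$, which $U$-relatedness does not supply. So plain Knaster--Tarski prefixpoint induction with predicate $U$, as you propose, does not go through.

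The repair is exactly what the paper's use of primitive recursion~\eqref{eq:primitive-recursion} encodes: strengthen the induction predicate from $U$ to the meet $(\mS,\hatR)\times_{\mS}(\mS,U)$ in $\Rel_{\mS}(\C)$. Concretely, show $\ol{\Sigma}_{R,\ini}\bigl((\mS,\hatR)\times_{\mS}(\mS,U)\bigr)\leq (\mS,U)$ (this is the paper's copaired algebra $[\beta^\lft,\beta^\rgt]$); since $\ol{\Sigma}_{R,\ini}\bigl((\mS,\hatR)\times_{\mS}(\mS,U)\bigr)\leq\ol{\Sigma}_{R,\ini}(\mS,\hatR)=(\mS,\hatR)$ by monotonicity and fixpointness, the strengthened predicate is a prefixpoint, whence $(\mS,\hatR)\leq U$. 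With the $\hatR$-component available, your core argument (strict law \eqref{diag:gamma} on the left, lax bialgebra on the right, $\barrho$ in the middle, good-for-simulations \ref{eq:good-for-simulations-bifunctor-1} for the lax-to-strict upgrade) is correct and is the paper's construction of the morphism $h$ in $\beta^\rgt$. One further gloss worth flagging: your claim $\gamma_\star(\mS,\mS)\leq\barB((\mS,\hatR),(\mS,\hatR))$ in the base case does not ``follow'' in one step from Assumption \ref{asm-sim}\ref{asm-sim-3} and \Cref{lem:howe-props}; it is itself a recursion over $\mu\Sigma$ (the paper's morphism $f$ and equation \eqref{eq:gamma-vs-g1}), though the ingredients you cite are the right ones.
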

\noindent In the proof below we denote the relation $\barB((X,S),(Y,T))$ by $(B(X,Y),E_{S,T})$ and its projections by $\outl_{S,T}, \outr_{S,T}$.

\begin{proof}[Proof sketch]
  Form the
  relation $(\mS,P)$ via the pullback 
\begin{equation*}
\begin{tikzcd}
P \pullbackangle{-45} \ar{r}{p} \ar{d}[swap]{\langle \outl_P, \outr_P \rangle} & E_{\hatR,\hatR} \ar{d}{\langle \outl_{\hatR,\hatR},\outr_{\hatR,\hatR}\rangle} \\
\mS\times \mS \ar{r}{\langle \gamma, \wt{\gamma}\rangle} & B(\mS,\mS)\times B(\mS,\mS)
\end{tikzcd}
\end{equation*}
The crucial step is to show existence of $\Rel_{\mS}(\C)$-morphisms
  \begin{align*}
    \beta^\lft \c & (\mS,R)\to (\mS,P), \\
    \beta^\rgt\c & \iota_\star\ol{\Sigma}((\mS,\hatR)\times_{\mS} (\mS,P))\bullet (\mS,R) \to (\mS,P),
  \end{align*}
  where $\times_{\mS}$ is the product in $\Rel_\mS(\C)$
  (\Cref{sec:graph-cocomplete}). Their construction imitates the
  arguments for the induction base and induction step, respectively, in the
  proof of \Cref{thm:cool}.

  Once this is achieved, we can conclude the proof as follows. By
  copairing $\beta^\lft$ and $\beta^\rgt$ we obtain the
  $\Rel_{\mS}(\C)$-morphism
  \[
    \beta = [\beta^\lft,\beta^\rgt]\c 
    \ol{\Sigma}_{R,\iota} \big((\mS,\hatR) \times_{\mS}
    (\mS,P)\big)
    \to (\mS,P)
  \]
  (cf.~\cref{not:howe-cat}) and thus primitive
  recursion~\eqref{eq:primitive-recursion} yields the
  $\Rel_\mS(\C)$-morphism
  $\primrec\beta\c \mu\ol{\Sigma}_{R,\ini}=(\mS,\hatR)\to (\mS,P)$.
  Choose a morphism $r\c (\mS,\mS)\to (\mS,\hatR)$ witnessing that
  $(\mS,\hatR)$ is reflexive (\Cref{lem:howe-props}). Then the
  commutative diagram below proves $(\mS,\hatR)$ to be a weak
  simulation.
  \[
    \begin{tikzcd}[row sep=20, column sep=30,baseline = (E.base)]
      &
      \hatR \ar[bend right=2em]{dl}[swap]{\outl_\hatR} \ar[bend
      left=2em]{dr}{\outr_\hatR} \ar{d}{(\primrec\beta)_1}
      \\
      \mS \ar{d}[swap]{\gamma}
      &
      P  \ar{l}[swap]{\outl_{P}} \ar{d}{p} \ar{r}{\outr_{P}}
      &
      \mS \ar{d}{\wt{\gamma}}
      \\
      B(\mS,\mS)
      &
      \ar{l}[swap]{\outl_{\hatR,\hatR}} E_{\hatR,\hatR}
      \ar{d}[description,inner sep=0]{\barBs(r,(\mS,\hatR))_1}
      \ar{r}{\outr_{\hatR,\hatR}}
      &
      B(\mS,\mS)
      \\
      &
      |[alias = E]|
      E_{\mS,\hatR}  
      \ar[bend left=2em]{ul}{\outl_{\mS,\hatR}}
      \ar[bend right=2em]{ur}[swap]{\outr_{\mS,\hatR}} 
    \end{tikzcd}\tag*{\qedhere}
  \]
\end{proof}
We are ready to state our main result. Recall that we work under the
\Cref{asm:cat} and \ref{asm-sim}.
\begin{theorem}[Compositionality]\label{thm:compositionality-cat}
  Suppose that $\wt{\gamma}$ is a weakening of the operational model
  $(\mS,\gamma)$ such that $(\mS,\ini,\wt{\gamma})$ is a lax
  $\rho$-bialgebra. Then the weak similarity relation on
  $(\mS,\gamma)$ is a congruence.
\end{theorem}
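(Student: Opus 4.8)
The plan is to reprise, at the abstract level, the argument that established \Cref{thm:cool}: form the Howe closure of weak similarity, show it is simultaneously a congruence and a weak simulation, and then sandwich it against weak similarity itself to force the two to coincide. All the genuinely hard work has already been packaged into \Cref{prop:howe} and \Cref{lem:howe-props}, so the proof of the theorem is essentially an assembly step.

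First I would record that weak similarity exists and is well-behaved. By \Cref{def:weakaning-and-weak-sim}, a weak simulation on $(\mS,\gamma)$ is exactly a simulation on the weakening $(\mS,\wt\gamma)$ for the lifted endofunctor $\barB((\mS,\mS),-)\c\Rel(\C)\to\Rel(\C)$, so weak similarity coincides with the similarity relation on $(\mS,\wt\gamma)$. Conditions \ref{eq:good-for-simulations-bifunctor-2} and \ref{eq:good-for-simulations-bifunctor-3} of \Cref{def:good-for-simulations-bifunctor} supply precisely hypotheses (1) and (2) of \Cref{lem:sim-props} for this endofunctor; hence the similarity relation exists and, crucially, is reflexive and transitive. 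I write $R := {\lesssim}$ for this reflexive and transitive weak simulation.

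Next I would invoke the two technical results on the Howe closure $(\mS,\hatR)$ taken with respect to the initial algebra $(\mS,\ini)$, whose structure is an isomorphism and in particular monic as \Cref{not:howe-cat} requires. Since $R$ is reflexive, \Cref{lem:howe-props} makes $(\mS,\hatR)$ a reflexive congruence on $(\mS,\ini)$; since $R$ is transitive, the same lemma makes it weakly transitive. And since $R$ is a reflexive and transitive weak simulation on $(\mS,\gamma)$ and $(\mS,\ini,\wt\gamma)$ is a lax $\rho$-bialgebra, \Cref{prop:howe} shows that $(\mS,\hatR)$ is again a weak simulation. The conclusion then follows by a sandwich: from the initial-algebra structure \eqref{eq:howe-alg-struct}, the relation $(\mS,\hatR)$ is a fixed point of $\ol{\Sigma}_{R,\ini}$, so $(\mS,\hatR)=(\mS,R)\vee_\mS(\cdots)\geq (\mS,R)$, giving $(\mS,{\lesssim})\leq(\mS,\hatR)$; conversely $(\mS,\hatR)$ is a weak simulation while $\lesssim$ is the greatest weak simulation, so $(\mS,\hatR)\leq(\mS,{\lesssim})$. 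As $\leq$ holds in both directions between relations, $(\mS,\hatR)\cong(\mS,{\lesssim})$, and transferring the congruence property of $(\mS,\hatR)$ along this isomorphism — using that $\ol{\Sigma}_\Rel$ and the reindexing functor $\ini_\star$ are monotone — yields that $(\mS,{\lesssim})$ is itself a congruence on $(\mS,\ini)$.

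At the level of this theorem there is no single formidable obstacle, since the hard construction of the bialgebra morphisms $\beta^\lft,\beta^\rgt$ and the primitive-recursion step already live inside \Cref{prop:howe}. The two points that still demand care are verifying that the good-for-simulations axioms genuinely license \Cref{lem:sim-props} — this is what guarantees that weak similarity is reflexive and transitive, exactly the hypothesis \Cref{prop:howe} needs on $R$ — and checking that the final transfer of congruence along $(\mS,\hatR)\cong(\mS,{\lesssim})$ is compatible with the lifting $\ol{\Sigma}_\Rel$ and the reindexing $\ini_\star$. The first of these is where the interplay between the abstract simulation machinery and the bifunctorial liftings must be handled correctly.
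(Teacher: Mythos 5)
Your proof is correct and follows essentially the same route as the paper's: existence, reflexivity and transitivity of weak similarity via \Cref{lem:sim-props} using \ref{eq:good-for-simulations-bifunctor-2} and \ref{eq:good-for-simulations-bifunctor-3}, the sandwich $(\mS,\lesssim)\leq(\mS,\happrox)\leq(\mS,\lesssim)$ obtained from the Howe-closure structure \eqref{eq:howe-alg-struct}, \Cref{prop:howe} and maximality of $\lesssim$, and finally transfer of the congruence property from \Cref{lem:howe-props} along the resulting isomorphism. The only cosmetic difference is that the paper witnesses the first inclusion directly by the morphism $\alpha_{\lesssim,\ini}\comp\inl$, whereas you phrase it via the fixed-point equation for $\ol{\Sigma}_{\lesssim,\ini}$; these are the same argument.
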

\begin{proof}
  Let $(\mS,\lesssim)$ be the weak similarity relation on
  $(\mS,\gamma)$. Its Howe closure $(\mS,\happrox)$ satisfies
  \[
    (\mS,\lesssim)\leq (\mS,\happrox) \leq (\mS,\lesssim).
  \]
  The first inequality is witnessed by the morphism
  $\alpha_{\lesssim,\ini}\comp \inl$, for $\alpha_{\lesssim,\ini}$
  from~\eqref{eq:howe-alg-struct}. For the second one we use that the
  relation $(\mS,\happrox)$ is a weak simulation by \Cref{prop:howe}
  (note that it is reflexive and transitive by \Cref{lem:sim-props},
  so the proposition applies) and that $\lesssim$ is the greatest weak
  simulation. Thus $(\mS,\lesssim)\cong (\mS,\happrox)$, and since
  $(\mS,\happrox)$ is a congruence by \Cref{lem:howe-props}, we
  conclude that so is $(\mS,\lesssim)$.
\end{proof}

By choosing the trivial weakening $\wt{\gamma}=\gamma$ and
equipping~$B$ with the equality preorder, we obtain similarity as an
instance of weak similarity (\Cref{rem:weakening}\ref{rem:weakening-2}), and the laxness condition
on the bialgebra $(\mS,\ini,\gamma)$ holds trivially by
\eqref{diag:gamma}. We obtain

\begin{corollary}
Similarity on $(\mS,\gamma)$ is a congruence.
\end{corollary}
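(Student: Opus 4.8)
The plan is to obtain the corollary as a direct specialization of \Cref{thm:compositionality-cat}. I would make two choices: equip the behaviour bifunctor $B$ with the \emph{equality} (discrete) preorder in the sense of \Cref{ex:preordered-object}\ref{ex:preordered-object-discrete}, and take the \emph{trivial} weakening $\wt{\gamma}=\gamma$ of the operational model. With these choices the abstract notion of weak similarity collapses to ordinary similarity, so that the congruence statement for weak similarity delivered by \Cref{thm:compositionality-cat} is exactly the assertion to be proved.

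Concretely, I would proceed in three steps. First, I would check that \Cref{asm-sim} is still met once $B$ carries the equality preorder. The only preorder-dependent hypothesis is the good-for-simulations condition \ref{eq:good-for-simulations-bifunctor-1}, whereas \ref{eq:good-for-simulations-bifunctor-2} and \ref{eq:good-for-simulations-bifunctor-3} mention only the composition $\bullet$ and the order $\leq$ on relations and are therefore unaffected by the choice of preorder. For the equality preorder, a right-lax graph morphism with $h_0=\id$ is forced to be a strict one, since the comparison $\preceq$ on the right-hand leg in \Cref{def:good-for-simulations-relation} becomes an equality; hence $(X,R)\preceq(X,S)$ coincides with $(X,R)\leq(X,S)$, and so \emph{every} relation over a discretely preordered object is good for simulations, making \ref{eq:good-for-simulations-bifunctor-1} hold for free. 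The standing assumption that $\rho$ admits a relation lifting $\barrho$ is untouched.

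Second, I would invoke \Cref{rem:weakening}\ref{rem:weakening-2}: for $\wt{\gamma}=\gamma$ the weak simulations of \Cref{def:weakaning-and-weak-sim} are precisely the simulations of \Cref{def:simulation}, so the weak similarity relation equals the similarity relation on $(\mS,\gamma)$, which exists because \ref{eq:good-for-simulations-bifunctor-2} and \ref{eq:good-for-simulations-bifunctor-3} put $\barB((\mS,\mS),-)$ into the scope of \Cref{lem:sim-props}. Third, I would verify the remaining hypothesis of \Cref{thm:compositionality-cat}, namely that $(\mS,\ini,\gamma)$ is a lax $\rho$-bialgebra: instantiating \Cref{D:lax-bialg} at $X=\mS$, $a=\ini$, $c=\wt{\gamma}=\gamma$ turns the required lax square into the operational-model diagram \eqref{diag:gamma}, which commutes \emph{strictly}. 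Since lax commutativity with respect to the equality preorder is the same as strict commutativity, the laxness condition holds, and \Cref{thm:compositionality-cat} yields that weak similarity, i.e.\ similarity, on $(\mS,\gamma)$ is a congruence.

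The proof is routine once the framework of \Cref{thm:compositionality-cat} is in place; the only point requiring genuine (if brief) argument is the reduction of condition \ref{eq:good-for-simulations-bifunctor-1} to a triviality under the equality preorder, i.e.\ the observation that right-lax morphisms into a relation over a discrete preordered object are automatically strict. Everything else is a matter of matching the abstract data $(\wt{\gamma},\preceq)$ to the strict special case and reading off \eqref{diag:gamma}.
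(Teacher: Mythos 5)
Your proof is correct and follows exactly the paper's own route: specialize \Cref{thm:compositionality-cat} to the equality preorder on $B$ and the trivial weakening $\wt{\gamma}=\gamma$, so that weak similarity collapses to similarity (\Cref{rem:weakening}\ref{rem:weakening-2}) and laxness of $(\mS,\ini,\gamma)$ is just the strict commutativity of \eqref{diag:gamma}. Your additional observation that condition \ref{eq:good-for-simulations-bifunctor-1} holds automatically under the discrete preorder---since right-lax morphisms into a relation over a discretely preordered object are necessarily strict---is a detail the paper leaves implicit, and you verify it correctly.
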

\noindent This is a variant of the main result of~\cite{gmstu23}. In fact, the present version is more general since
its notion of similarity is parametric in a lifting of $B$, while the
result in \emph{op.\ cit.}\ is about coalgebraic behavioural
equivalence, which corresponds to the \emph{canonical} lifting of $B$ (see Appendix, \Cref{sec:bifunctor-lift}).

\section{Applications}
\noindent We conclude with two applications of \Cref{thm:compositionality-cat}.
\subsection{$\HO$ Specifications}\label{S:HO-specs}
To recover the results of \Cref{sec:cool}, fix an $\HO$
specification~$\R$ over the signature $\Sigma$, corresponding to a
higher-order GSOS law $\rho^0$ of $\Sigma$ over $B_0(X,Y)=Y+Y^X$.  We
take $\C=\Set$ and instantiate the data of \Cref{asm-sim} to
\begin{enumerate}
\item the given polynomial functor $\Sigma$;
  
\item the behaviour functor $B(X,Y)=\Pow(Y+Y^X)$, preordered by
  inclusion, with its relation lifting $\barB$ as in
  \Cref{rem:weak-vs-strong}\ref{rem:weak-vs-strong-1};
  
\item the higher-order GSOS law $\rho$ of $\Sigma$ over $B$ given by
  \eqref{eq:rho0-to-rho}.
\end{enumerate}
\noindent It is not difficult to verify that the above data satisfies the
\Cref{asm-sim}. Then by choosing the weakening $\wt{\gamma}$ to be the
weak transition system associated to $\gamma_0$, see
\Cref{notation:ho-format}, we recover \Cref{thm:cool} as a special
case of \Cref{thm:compositionality-cat}.%
\hunote{Maybe mention nondeterministic systems $B(X,Y)=\Pow(Y+Y^X)$ or
  probabilistic systems $B(X,Y)=\mathcal{S}(Y+Y^X)$ for the
  subdistribution functor $\mathcal{S}$ as further applications? The
  latter also has an Elgi-Milner-type relation lifting, and the notion
  of similarity should be natural. LS: Cite Gavazzo if this is
  mentioned as future work}

\subsection{The $\lambda$-Calculus}\label{sec:lambda-sketch}
\noindent We briefly sketch how our framework applies to
the $\lambda$-calculus, building on ideas
from the 
work of Fiore et al.~\cite{DBLP:conf/lics/FiorePT99} and our previous work~\cite{gmstu23}. The (untyped call-by-name) $\lambda$-calculus is given by the
small-step operational rules shown below, where $s,s',t$ range over
possibly open $\lambda$-terms and $[t/x]$ denotes capture-avoiding
substitution.
\begin{equation}\label{eq:lambda-cbn}
  \begin{array}{l@{\qquad}l}
    \inference[\texttt{app1}]{\goes{s}{s'}} {\goes{s \app t}{s' \app t}}
    &
    \inference[\texttt{app2}]{}{\goes{(\lambda x.s) \app t}{s[t/x]}}
  \end{array}
\end{equation}
Take the presheaf
category $\C=\vcat$, where $\fset$ is the category of finite cardinals
and functions, and the functors
\begin{align*}
&\Sigma\c \C\to \C, && \Sigma X = V + \delta X + X\times X,\\
& B_0\c \C^\opp \times \C\to \C, &&  B_0(X,Y) =\llangle X,Y \rrangle \times (Y+Y^X+1).
\end{align*}
Here $Y^X$ denotes the exponential object in the topos $\vcat$, and
the presheaves $V$, $\delta X$ and $\llangle X,Y\rrangle$ are meant to represent variables, $\lambda$-abstraction, and
substitution, respectively:
\[
  V(n)=n,\; \delta X(n)= X(n+1),\; \llangle X,Y\rrangle(n) =
  \vcat(X^n,Y).
\]
The initial algebra for $\Sigma$ is the presheaf $\Lambda\in \vcat$ given by
\[
  \Lambda(n) =  \text{$\lambda$-terms in free variables from $n=\{0,\ldots,n-1\}$},  
\]
where $\lambda$-terms are formed modulo
$\alpha$-equivalence~\cite{DBLP:conf/lics/FiorePT99}.  In~\cite[Def.~5.8]{gmstu23} we devised a $V$-pointed higher-order GSOS
law $\rho^0$ of $\Sigma$ over $B_0$ whose operational model
\begin{equation}\label{eq:operational-model-lambda}
  \gamma^0
  =
  \langle \gamma^0_1,\gamma^0_2\rangle \c \Lambda\to \llangle \Lambda,\Lambda\rrangle \times (\Lambda+\Lambda^\Lambda +1) 
\end{equation}
captures the operational semantics of \eqref{eq:lambda-cbn}, in the
sense that for every $m,n\in \fset$, $t\in \Lambda(n)$ and
$\vec{u}\in \Lambda(m)^n$:
\begin{itemize}
\item $\gamma^0_1(t)(\vec{u}) = t[\vec{u}] = t[(u_0,\ldots,u_{n-1})/(0,\ldots,n-1)]$;
\item $t\to t'\implies\gamma^0_2(t)=t'\in \Lambda(n)$;
\item $t=\lambda x.t'\Longrightarrow \gamma^0_2(t)\in \Lambda^\Lambda(n)\wedge\forall e\in \Lambda(n).\gamma^0_2(t)(e)=t'[e]$;
\item $\gamma^0_2(t)=\ast$ otherwise (that is, if $t$ is stuck).
\end{itemize}
Here $\gamma^0_2(t)(e)=\ev(\gamma^0_2(t),e)$ for the evaluation map
$\ev\c \Lambda^\Lambda\times \Lambda\to \Lambda$. We instantiate the
data of \Cref{asm-sim} to
\begin{enumerate}
\item the above functor $\Sigma X = V + \delta X + X\times X$;
\item the functor $B(X,Y)=\llangle X,Y\rrangle \times
  \Pow_\star(Y+Y^X)$, where $\Pow_\star\c \vcat\to\vcat$ is the
  pointwise powerset functor given by $X\mapsto \Pow\comp X$ for $X \in\vcat$. (Note
  that we dropped the ``$+1$'' summand from the behaviour, whose role is
  taken over by the empty set.)  The relation lifting $\barB$ is
  constructed  similarly to the one of $B(X,Y)=\Pow(Y+Y^X)$ in
  \Cref{rem:weak-vs-strong}\ref{rem:weak-vs-strong-1};
\item a higher-order GSOS law $\rho$ of $\Sigma$ over $B$ which is derived
  from~$\rho^0$ in a way similar to the construction of
  \Cref{rem:sound-vs-lax-bialgebra}.
\end{enumerate}
The \emph{weak operational model} is the $B(\Lambda,-)$-coalgebra
\[
  \wt{\gamma}=\langle \wt{\gamma}_1, \wt{\gamma}_2\rangle \c
  \Lambda\to \llangle \Lambda,\Lambda\rrangle \times
  \Pow_\star(\Lambda+\Lambda^\Lambda)
\]
given for $t\in \Lambda(n)$ by $\wt{\gamma}_1(t)=\gamma_1^0(t)$ and
\begin{align*}
  \wt{\gamma}_2(t) =
  &\;\{\, \ol{t}\in \Lambda(n) : t\To \ol{t} \,\}\;\cup
  \\
  &\; \{\, f\in \Lambda^\Lambda(n) : \exists \ol{t}.\,t\To \ol{t}\, \wedge\, \gamma^0_2(\ol{t})=f  \,\}.
\end{align*}
Here $\To$ is the reflexive transitive hull of the reduction
relation~$\to$. Weak similarity on \eqref{eq:operational-model-lambda}
then can be shown to coincide with the following concept due to
Abramsky~\cite{Abramsky:lazylambda}:
\begin{definition}\label{def:app-sim}
  \emph{Applicative similarity} is the greatest relation $\lesssim^\ap_0 \,\subseteq\, \Lambda(0) \times \Lambda(0)$ on the set of closed $\lambda$-terms such that if
  $t_{1} \lesssim^\ap_0 t_{2}$ and  $t_{1} \To \lambda x.t_1'$, then there is a term $t_2'$ such that
  %\begin{enumerate}
  %\item $t_{1} \to t_{1}\pr \implies \exists
  %  t_{2}\pr.~t_{2} \to t_{2}\pr \wedge t_{1}\pr \lesssim^\ap_0 t_{2}\pr$,
  %\item $t_{1} = \lambda x.e_{1} \implies \exists e_{2}.\,t_{2} = \lambda x.e_{2} \wedge \forall t \in
  %  \Lambda(0).~e_{1}[t/x]\lesssim^\ap_0e_{2}[t/x]$,
  %\item $t_{2} \to t_{2}\pr \implies \exists
  %  t_{1}\pr.~t_{1} \to t_{1}\pr \wedge t_{1}\pr \lesssim^\ap_0 t_{2}\pr$,
  %\item $t_{2} = \lambda x.e_{2} \implies \exists e_{1}.\,t_{1} = \lambda x.e_{1} \wedge \forall t \in
  %  \Lambda(0).~e_{1}[t/x]\lesssim^\ap_0e_{2}[t/x]$.
  %\end{enumerate}
\[
t_{2} \To \lambda x.t_2'\, \wedge\, \forall e \in
    \Lambda(0).~t_1'[e/x]\lesssim^\ap_0t_2'[e/x]. \]
Its \emph{open extension} is the relation $\lesssim^{\ap}\,\seq \Lambda\times \Lambda$ whose component $\lesssim^{\ap}_n\,\seq \Lambda(n)\times \Lambda(n)$ for $n>0$ is given by
\[ t_1 \lesssim^{\ap}_n t_2 \qquad \text{iff}\qquad t_1[\vec{u}] \lesssim^{\ap}_0 t_2[\vec{u}]\quad \text{for every $\vec{u}\in \Lambda(0)^n$}.\]
\end{definition}
One can verify that $(\Lambda,\ini,\wt{\gamma})$ forms a lax $\rho$-bialgebra, which amounts to observing that the rules \eqref{eq:lambda-cbn} are sound for weak transitions. Consequently, \Cref{thm:compositionality-cat} instantiates to a well-known and fundamental result about the $\lambda$-calculus~\cite{Abramsky:lazylambda}:

\begin{theorem}\label{thm:app-sim-cong}
The open extension of applicative similarity is a congruence: for all $\lambda$-terms $s,t,t'$, one has
\[ t\lesssim^{\ap} t'\implies s\app t\lesssim^{\ap} s\app t'\;\wedge\; t\app s\lesssim^{\ap} t'\app s \;\wedge\; \lambda x.t \lesssim^{\ap} \lambda x.t'. \]
\end{theorem}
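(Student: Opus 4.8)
The plan is to derive the theorem as a direct instance of \Cref{thm:compositionality-cat}, using the data assembled in this subsection. First I would discharge the standing assumptions. The base category $\C=\vcat$ is a presheaf category, hence lextensive and in particular satisfies \Cref{asm:cat} by \Cref{ex:categories}. For \Cref{asm-sim}\ref{asm-sim-1}, the functor $\Sigma X = V+\delta X+X\times X$ is of the required form $V+\Sigma'$; since limits and colimits in $\vcat$ are computed pointwise, the functors $\delta$ and $(-)\times(-)$ preserve pointwise-surjective strong epimorphisms, so $\Sigma$ preserves strong epis and generates the free monad $\Sigmas$ with $\mu\Sigma=\Lambda$.

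Next I would verify \Cref{asm-sim}\ref{asm-sim-2}, namely that the relation lifting $\barB$ of $B(X,Y)=\llangle X,Y\rrangle\times\Pow_\star(Y+Y^X)$ is good for simulations, i.e.\ satisfies \ref{eq:good-for-simulations-bifunctor-1}--\ref{eq:good-for-simulations-bifunctor-3}. Here $B$ is preordered by equality on the substitution component $\llangle X,Y\rrangle$ and by pointwise inclusion on $\Pow_\star$. The $\Pow_\star$-part carries the one-sided Egli--Milner lifting, which is up-closed and thus good for simulations (\Cref{ex:good-for-simulations}), exactly as in \Cref{rem:weak-vs-strong}\ref{rem:weak-vs-strong-1}; the discretely preordered $\llangle X,Y\rrangle$ and the exponential $Y^X$ use the internal-hom lifting of the cartesian closed category $\Rel(\C)$, for which right-lax morphisms are automatically strict. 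Combining these componentwise yields the three conditions. For \Cref{asm-sim}\ref{asm-sim-3} I would check that $\rho$, built from $\rho^0$ as in \Cref{rem:sound-vs-lax-bialgebra}, is a $\Rel(\C)$-morphism for $\barB$; by \Cref{rem:rho-rel-lifting} this lifting is unique whenever it exists, and existence follows the syntactic argument used in the induction base of the proof of \Cref{thm:cool}.

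The crux, and the step I expect to be the main obstacle, is showing that $(\Lambda,\ini,\wt{\gamma})$ is a \emph{lax} $\rho$-bialgebra (\Cref{D:lax-bialg}). Concretely this is the soundness of the rules \eqref{eq:lambda-cbn} for weak transitions: I would trace both rules through the weak operational model $\wt{\gamma}$, checking that the behaviour prescribed by $\rho$ from the weak behaviours of the subterms is dominated, in the inclusion order of $\Pow_\star$, by the actual weak behaviour of the composite term. The delicate case is \texttt{app2}: one must verify that whenever the function position of $s\app t$ weakly reduces to an abstraction $\lambda x.s'$, the $\beta$-redex $(\lambda x.s')\app t$ is correctly accounted for by $\wt{\gamma}$, so that the lax square commutes. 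This is the $\lambda$-calculus incarnation of the coolness of application seen for the $\SKI$ calculus in \Cref{ex:ski-calculus}.

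With the hypotheses in place, \Cref{thm:compositionality-cat} yields that weak similarity $\lesssim$ on $(\Lambda,\gamma)$---which exists and is reflexive and transitive by \Cref{lem:sim-props}---is a congruence on the $\Sigma$-algebra $(\Lambda,\ini)$. It then remains to identify $\lesssim$ with $\lesssim^{\ap}$ and to read off the three implications. Unfolding \Cref{def:weakaning-and-weak-sim} for the present $B$ and $\wt{\gamma}$, the $\Pow_\star(Y+Y^X)$ component gives precisely Abramsky's clause (a weak reduction to an abstraction must be matched, with the resulting functions related on all arguments), while the substitution component $\llangle\Lambda,\Lambda\rrangle$ forces the relation to be stable under all substitutions, which is exactly the open extension of \Cref{def:app-sim}. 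Finally, congruence of $\lesssim^{\ap}$ with respect to $\Sigma X=V+\delta X+X\times X$ unpacks, via the canonical lifting $\ol{\Sigma}_\Rel$ (\Cref{ex:lifting-endofunctor}), into closure under $\delta$ (yielding $\lambda x.t\lesssim^{\ap}\lambda x.t'$) and under the binary product; fixing the other argument of application by reflexivity of $\lesssim^{\ap}$ then yields the two remaining implications.
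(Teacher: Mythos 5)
Your overall route is the paper's: instantiate \Cref{thm:compositionality-cat} with the presheaf data, verify \Cref{asm-sim} (strong epis for $\Sigma$, goodness for simulations of $\barB$, liftability of $\rho$), show that $(\Lambda,\ini,\wt{\gamma})$ is a lax $\rho$-bialgebra via soundness of the weak versions of \texttt{app1} and \texttt{app2}, and then transfer the resulting congruence property to $\lesssim^{\ap}$. All of these verification steps match what the paper does.

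The genuine gap is in the final step, which you present as a definitional unfolding: you claim that unfolding \Cref{def:weakaning-and-weak-sim} for this $B$ and $\wt{\gamma}$ \emph{is} ``exactly the open extension of \Cref{def:app-sim}.'' It is not. Unfolding the weak-simulation conditions gives a substitution-stable relation satisfying Abramsky-style clauses \emph{on open terms at every stage $n$}, whereas the open extension $\lesssim^{\ap}$ is defined purely by relating \emph{closed} instantiations under $\lesssim^{\ap}_0$; identifying the two is a separate proposition in the paper, and its hard direction ($\lesssim^{\ap}\,\seq\,\lesssim$, i.e.\ that the open extension is itself a weak simulation) needs a real argument. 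Concretely: given open terms $t_1\lesssim^{\ap}_n t_2$ with $t_1=\lambda x.t_1'$, you must produce $t_2\To\lambda x.t_2'$ with $t_1'[e/x]\lesssim^{\ap}_n t_2'[e/x]$ for all $e$, and nothing in the definition of the open extension hands you the weak reduction of $t_2$ to an abstraction. The paper's proof supposes it fails, so that $t_2$ either diverges or reduces to a stuck term $y\app s_1\app\cdots\app s_m$ with variable head $y$; in the first case any closed instantiation $\vec{u}$ makes $t_1[\vec{u}]$ an abstraction while $t_2[\vec{u}]$ diverges, and in the second case one chooses $\vec{u}$ with $u_y$ divergent (e.g.\ $(\lambda x.x\app x)\app(\lambda x.x\app x)$), again making $t_2[\vec{u}]$ diverge --- both contradicting $t_1[\vec{u}]\lesssim^{\ap}_0 t_2[\vec{u}]$. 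One also needs closure of $\lesssim^{\ap}_0$ under reduction and the substitution computation $t_1'[e/x][\vec{u}]=t_1'[\vec{u},x][e[\vec{u}]/x]$ to verify the remaining clauses. Without this, your argument establishes that \emph{weak similarity} on $(\Lambda,\gamma)$ is a congruence, but not yet the stated theorem about $\lesssim^{\ap}$.
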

It follows that the open extension of applicative \emph{bi}similarity, viz.~the relation ${\approx^\ap} = {\lesssim^\ap} \cap {\gtrsim^\ap}$, is also a congruence.  

\section{Conclusions and Future Work}
We have developed relation liftings of bifunctors and an abstract
analogue of Howe's method to prove congruence of
coalgebraic weak similarity for higher-order GSOS laws. We have thus taken the first
steps towards operational reasoning in the higher-order
abstract GSOS framework.

Logical relations~\cite{tait1967intensional,
  DBLP:journals/iandc/Statman85,DBLP:journals/iandc/OHearnR95,
  DBLP:journals/corr/abs-1103-0510} are another important operational
reasoning technique that we would like to cover in the future. Logical
relations are typically type-indexed, while higher-order abstract GSOS
has so far been applied to untyped languages. We aim to investigate
typed languages in the context of higher-order abstract GSOS and
develop abstract analogues of logical relations. It is worth noting
that, even in the untyped setting, relation liftings of bifunctors
already share a key characteristic with logical relations, namely that
functions send related inputs to related outputs
(\Cref{rem:weak-vs-strong}\ref{rem:weak-vs-strong-1}).

Another goal is to apply our methods to call-by-value languages. As
already noted in our previous work~\cite[Sec.~5.4]{gmstu23}, this
appears to be more subtle than the call-by-name case. We envision a
{multi-sorted} setting as a possible approach.

Finally, we aim to explore effectful languages. For instance, by
taking the behaviours $\Pow(Y+Y^X)$ or $\mathcal{S}(Y+Y^X)$,
where~$\mathcal{S}$ is the subdistribution functor, our results
already yield a form of compositionality for nondeterministic and
probabilistic combinatory logic. For the latter, exploring behavioural
distances instead of (bi)similarity is also a natural direction; we expect that existing work on probabilistic $\lambda$-calculi~\cite{cd17,gavazzo18} can provide some guidance.
%\clearpage
%\IEEEtriggeratref{20}
\bibliographystyle{IEEEtranS}
\bibliography{mainBiblio-short}

\appendices
\onecolumn

\section*{Appendix}
\noindent This appendix is structured as follows:
\begin{itemize}
\item \Cref{sec:more-on-graphs-and-relations} establishes a number of auxiliary results on graphs and relations.
\item \Cref{sec:omitted-proofs} provides complete proofs of all results from the extended abstract.
\item \Cref{sec:bifunctor-lift} explains how to construct canonical graph and relation liftings of mixed-variance bifunctors.
\item \Cref{sec:can-lift-gsos-laws} addresses canonical graph and relation liftings of higher-order GSOS laws.
\item \Cref{sec:lambda} is a more detailed version of \Cref{sec:lambda-sketch} on the $\lambda$-calculus.
\end{itemize}
Recall that throughout our paper, including this appendix, we work under the \Cref{asm:cat} on the base category $\C$.

\section{More on Graphs and Relations}\label{sec:more-on-graphs-and-relations}
In \Cref{sec:graph-cocomplete} we stated that the category $\Gra(\C)$ is complete and cocomplete, with limits and colimits formed at the level of $\C$. In particular, this means that the product $(X,R)\times (Y,S)$ and coproduct $(X,R)+(Y,S)$ are given by the graphs
  \[
    \begin{tikzcd}[row sep=2em]
      R\times S
      \ar[bend right=2em]{d}[swap]{\outl_R\times\outl_{S}}
      \ar[bend left=2em]{d}{\outr_S\times \outr_{S}}
      \\
      X\times Y
    \end{tikzcd}
\qqand 
    \begin{tikzcd}[row sep=2em]
      R+S
      \ar[bend right=2em]{d}[swap]{\outl_R+\outl_{S}}
      \ar[bend left=2em]{d}{\outr_R+\outr_{S}}
      \\
      X+Y
    \end{tikzcd}
  \] 
Colimits in $\Gra_X(\C)$ are also formed
 at the level of $\C$. In particular, the coproduct of $(X,R), (X,R') \in \Gra_X(\C)$, denoted by $(X,R)+_X (X,R')$, is the graph
  \[
    \begin{tikzcd}[row sep=2em]
      R+R'
      \ar[bend right=2em]{d}[swap]{[\outl_R,\outl_{R'}]}
      \ar[bend left=2em]{d}{[\outr_R,\outr_{R'}]}
      \\
      X
    \end{tikzcd}
  \] 
(Co-)limits in $\Rel(\C)$ and $\Rel_X(\C)$ are formed by taking the (co-)limit in $\Gra(\C)$ and $\Gra_X(\C)$, respectively, and applying the reflector $(-)^\dag$. Note that $\Rel(\C)$ is closed under products in $\Gra(\C)$, since products of monomorphisms are monomorphisms. 

%Similarly, we have the following result about directed colimits in $\Rel_X(\C)$:
%\begin{lemma}\label{lem:rel-closed-under-directed-colimits}
%The category $\Rel_X(\C)$ is closed under directed colimits in $\Gra_X(\C)$. 
%\end{lemma}
%
%\begin{proof}
%Let $D\c \D\to \Rel_X(\C)$ be a directed diagram, where $Dd=(X,R_d)$ for $d\in \D$. Its colimit $(X,R)$ in $\Gra_X(\C)$ is obtained by forming the colimit cocone $R_d\xto{c_d} R$ ($d\in \D$) in $\C$ and taking $\langle \outl_R,\outr_R\rangle$ to be the unique morphism such that $\langle \outl_R,\outr_R\rangle\comp c_d = \langle \outl_{R_d},\outr_{R_d}\rangle$ for all $d$. Since the morphisms $\langle \outl_{R_d},\outr_{R_d}\rangle$ are monic and monomorphisms in $\C$ are smooth (see \Cref{rem:c-props}), it follows that $\langle \outl_R,\outr_R\rangle$ is monic, hence $(X,R)\in \Rel_X(\C)$.   
%\end{proof}
The remaining results of this section are about composition of graphs and relations. We first mention a useful property of pullbacks that follows from our assumptions:

\begin{lemma}\label{lem:pullback-stable}
For every commutative diagram
    \eqref{eq:pullback-stable-strong-epis}, if the outside and the inner square are pullbacks and $e_0,e_1$ are strong epimorphisms, then $e$ is a strong epimorphism.
    \begin{equation}\label{eq:pullback-stable-strong-epis}
      \begin{tikzcd}[column sep=1em, row sep=1em]
        A \ar{rrr}{f} \ar{ddd}[swap]{g} \ar{dr}[inner sep=1]{e} & & &
        B \ar{ddd}{h} \ar{dl}[swap,inner sep=1]{e_0}  \\
        & A' \ar{r}{f'} \ar{d}[swap]{g'}  & B' \ar{d}{h'}  &  \\
        & C' \ar{r}{k'}  & D  & \\
        C \ar{ur}[inner sep=1]{e_1} \ar{rrr}{k} & & & D
        \ar{ul}[swap,inner sep=1]{\id}
      \end{tikzcd}
    \end{equation}
\end{lemma}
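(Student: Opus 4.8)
The plan is to exhibit $e\c A\to A'$ as a composite of two morphisms, each of which arises as a \emph{base change} (pullback) of one of the given strong epimorphisms $e_0$, $e_1$; pullback-stability of strong epimorphisms (\Cref{asm:cat}\ref{asm:cat-3}) then makes both factors strongly epic, and since strong epimorphisms are closed under composition, so is $e$. Concretely, I would first introduce the intermediate pullback $P$ of $h\c B\to D$ along $k'\c C'\to D$, with projections $\pi_B\c P\to B$ and $\pi_{C'}\c P\to C'$. The two factors will be a morphism $A\to P$ and the canonically induced morphism $P\to A'$.

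The heart of the argument consists of identifying each factor as a base change. For the factor $P\to A'$: the inner square is a pullback of $h'$ and $k'$, and $h=h'\comp e_0$ (commutativity of the right face), so applying the two-pullbacks (pasting) lemma to the horizontal composite $B\xto{e_0}B'\xto{h'}D$ shows that $P$ is \emph{also} the pullback of $e_0\c B\to B'$ along $f'\c A'\to B'$. Hence the induced morphism $P\to A'$ is the base change of $e_0$ and is strongly epic. For the factor $A\to P$: the outer square is a pullback of $h$ and $k$, and $k=k'\comp e_1$ (commutativity of the bottom face), so pasting the square with sides $e_1\c C\to C'$ and $\pi_{C'}\c P\to C'$ onto the defining pullback of $P$ identifies $A$ with the pullback of $e_1$ along $\pi_{C'}$. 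Thus $A\to P$ is the base change of $e_1$ and is again strongly epic.

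It then remains to verify that the composite $A\to P\to A'$ coincides with $e$. Using the pullback projections recorded above, one checks $f'\comp(A\to P\to A')=e_0\comp\pi_B\comp(A\to P)=e_0\comp f$ and $g'\comp(A\to P\to A')=\pi_{C'}\comp(A\to P)=e_1\comp g$; since $e$ is the unique morphism into the pullback $A'$ satisfying exactly these two equations, the composite equals $e$. As $e$ is now a composite of two strong epimorphisms, it is a strong epimorphism, as claimed.

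I expect the main obstacle to be the bookkeeping in the two applications of the pasting lemma: one must orient each pasting so that the factor really is a pullback of $e_0$ (respectively $e_1$) rather than of one of the structural projections $\pi_B$, $\pi_{C'}$. All other steps—the identification of the two intermediate pullbacks, the comparison of the composite with $e$, and the closure of strong epimorphisms under composition—are routine once the pullback squares are correctly set up.
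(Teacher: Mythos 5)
Your proof is correct and is essentially the paper's own argument: the paper phrases it in the slice category $\C/D$, factoring $e = e_0\times_D e_1 = (\id\times_D e_1)\comp(e_0\times_D \id)$ and handling each factor by pullback cancellation plus stability of strong epimorphisms, which is exactly your two-step factorization of $e$ through an intermediate pullback (you route through $B\times_D C'$ where the paper routes through $B'\times_D C$ -- a purely symmetric choice). Both arguments rest on the same three ingredients: the pasting/cancellation lemma for pullbacks, pullback-stability of strong epimorphisms, and closure of strong epimorphisms under composition.
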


\begin{proof}
Recall that pullbacks correspond to products in a slice category. Therefore, the lemma states that for any two morphisms $e_0\colon h\to h'$ and $e_1\colon g\to g'$ in $\C/D$ with $e_0,e_1$ strongly epic in $\C$, their product $e=e_0\times_D e_1$ in $\C/D$ is also a strong epimorphism in $\C$. Since $e_0\times_D e_1 = (\id\times_D e_1) \circ (e_0\times_D \id)$, and strong epimorphisms are closed under composition, it suffices to assume that one of the $e_i$ is an identity morphism, w.l.o.g.\ $e_1=\id$. Then we have the following commutative diagram:
    \begin{equation}\label{eq:pullback-stable-strong-epis}
	\begin{tikzcd}
         A \ar[shiftarr={xshift=-3em}]{dd}[swap]{g} \ar{r}{f} \ar{d}[swap]{e} & B \ar{d}{e_0} \ar[shiftarr ={xshift=3em}]{dd}{h} \\
         A' \ar{r}{f'} \ar{d}[swap]{g'}  & B' \ar{d}{h'}    \\
         C=C' \ar{r}{k=k'}  & D   \\
      \end{tikzcd}
    \end{equation}
The lower rectangle and the outside are pullbacks, so the upper rectangle is also a pullback~\cite[Prop.~2.5.9]{borceux94}. Since $e_0$ is strongly epic and strong epimorphisms are pullback-stable by Assumption \ref{asm:cat}\ref{asm:cat-3}, we conclude that $e$ is strongly epic. 
\end{proof}

\begin{notation}
 Recall that the graph $(X,R)\smc (X,R')$ is defined by via the pullback \eqref{eq:composite-graph}. We often write $RR'$ for $R\smc R'$.
\end{notation}
Graph composition extends to a bifunctor
\[ (-)\smc(-)\c \Gra_X(\C)\times \Gra_X(\C)\to \Gra_X(\C) \]
as follows. Given $f\c (X,R)\to (X,S)$ and $f'\c (X,R')\to (X,S')$ in $\Gra_X(\C)$, the morphism
\[f\smc f'\c (X,R)\smc (X,R')\to (X,S)\smc (X,S') \]
is defined by $(f\c f')_0=\id_{X}$, and $(f\smc f')_1$  is the unique $\C$-morphism making the two squares below commute, using the universal property of the pullback $SS'$:
\begin{equation}\label{diag:fcomp}
  \begin{tikzcd}
    R R'  \ar{d}[swap]{\pi_{R R',R}} \ar[dashed]{r}{(f\smc f')_1} & S S' \ar{d}{\pi_{S S',S}} \\
    R \ar{r}{f_1} & S
  \end{tikzcd}
  \quad\quad
  \begin{tikzcd}
    R R'  \ar{d}[swap]{\pi_{R R',R'}} \ar[dashed]{r}{(f\smc f')_1} & S S' \ar{d}{\pi_{S S',S'}} \\
    R' \ar{r}{f_1'} & S'
  \end{tikzcd}
\end{equation}

\begin{lemma}\label{lem:strong-epi-graph-comp}
If $f_1$ and $f_1'$ are strong epimorphisms, then $(f\smc f')_1$ is a strong epimorphism.
\end{lemma}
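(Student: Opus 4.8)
The plan is to recognize the composite-graph construction as a fibered product in a slice category and then invoke \Cref{lem:pullback-stable} directly. First I would observe that, by the defining pullback \eqref{eq:composite-graph}, the graph $(X,R\smc R')$ is precisely the product of $(R,\outr_R)$ and $(R',\outl_{R'})$ in the slice category $\C/X$, and likewise $(X,S\smc S')$ is the product of $(S,\outr_S)$ and $(S',\outl_{S'})$ in $\C/X$. Here $R$ is viewed as an object over $X$ via its right leg $\outr_R$ and $R'$ via its left leg $\outl_{R'}$, matching the cospan over which the pullback is taken, and similarly for $S,S'$.

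Next, since $f\c (X,R)\to (X,S)$ and $f'\c (X,R')\to (X,S')$ are morphisms of $\Gra_X(\C)$ with $f_0=f_0'=\id_X$, the commuting squares \eqref{diag:hom} give $\outr_S\comp f_1=\outr_R$ and $\outl_{S'}\comp f_1'=\outl_{R'}$. Thus $f_1$ is a morphism $(R,\outr_R)\to (S,\outr_S)$ and $f_1'$ a morphism $(R',\outl_{R'})\to (S',\outl_{S'})$ in $\C/X$. Comparing the defining squares \eqref{diag:fcomp} of $(f\smc f')_1$ with the universal property of the product in $\C/X$, I would identify $(f\smc f')_1$ with the product morphism $f_1\times_X f_1'$ in $\C/X$.

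Finally, since $f_1$ and $f_1'$ are strong epimorphisms in $\C$ by hypothesis, \Cref{lem:pullback-stable}---whose content is exactly that the product in $\C/X$ of two morphisms that are strongly epic in $\C$ is again strongly epic in $\C$---yields that $(f\smc f')_1=f_1\times_X f_1'$ is a strong epimorphism, as desired. I expect the only delicate point to be the bookkeeping: correctly matching each factor to the leg ($\outl$ versus $\outr$) under which it lives over $X$, so that the two pullback cospans line up and $f_1,f_1'$ genuinely become slice morphisms. Once the slice-category picture is set up correctly, the statement is an immediate application of \Cref{lem:pullback-stable}.
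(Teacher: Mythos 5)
Your proposal is correct and is essentially the paper's own proof: the paper likewise deduces the lemma by instantiating \Cref{lem:pullback-stable} with the defining pullbacks of $R\smc R'$ and $S\smc S'$ as outer and inner squares and with $e_0=f_1'$, $e_1=f_1$, which is precisely your identification $(f\smc f')_1=f_1\times_X f_1'$ stated in slice-category language (the very reformulation the paper uses to prove \Cref{lem:pullback-stable} itself). Your leg-matching ($R$ over $X$ via $\outr_R$, $R'$ via $\outl_{R'}$) agrees with the paper's diagram, so there is no gap.
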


\begin{proof}
This follows from \Cref{lem:pullback-stable} applied to the commutative diagram
\[
\begin{tikzcd}
RR' \ar{rrr}{\pi_{RR',R'}} \ar{ddd}[swap]{\pi_{RR',R}} \ar{dr}{(f\smc f')_1} & & & R' \ar{ddd}{\outl_{R'}} \ar{dl}[swap]{f_1'}  \\
& SS' \ar{r}{\pi_{SS',S'}} \ar{d}[swap]{\pi_{SS',S}} & S' \ar{d}{\outl_{S'}}  &  \\
& S \ar{r}{\outr_S}  & X  & \\
R \ar{ur}{f_1} \ar{rrr}{\outr_R} & & & X \ar{ul}[swap]{\id}
\end{tikzcd}
\]
\end{proof}
Analogously, composition of relations extends to the bifunctor
\[ (-)\bullet (-) \;=\; (\,\Rel_X(\C)\times \Rel_X(\C) \subto \Gra_X(\C)\times \Gra_X(\C) \xto{(-)\smc (-)} \Gra_X(\C) \xto{(-)^\dag} \Rel_X(\C)\,). \]
\begin{lemma}\label{lem:reflection-comp}
For every $(X,R),(X,R')\in \Gra_X(\C)$, we have 
\[(X,R)^\dag\bullet (X,R')^\dag \;\cong\; ((X,R)\smc (X,R'))^\dag.\]
\end{lemma}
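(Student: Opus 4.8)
The plan is to reduce the statement to a single general fact about the reflector $(-)^\dag$: any $\Gra_X(\C)$-morphism whose $\C$-component is a strong epimorphism reflects to an isomorphism of relations. Write $e_R\c R\epito R^\dag$ and $e_{R'}\c R'\epito R'^\dag$ for the strong-epi parts of the reflections of $(X,R)$ and $(X,R')$ from \eqref{diag:Rdag}, and regard them as $\Gra_X(\C)$-morphisms $(X,R)\to(X,R^\dag)$ and $(X,R')\to(X,R'^\dag)$ with identity $0$-component. First I would form the composite graph morphism $g=e_R\smc e_{R'}\c (X,R)\smc(X,R')\to (X,R^\dag)\smc(X,R'^\dag)$ via the functoriality of graph composition (diagram \eqref{diag:fcomp}). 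By \Cref{lem:strong-epi-graph-comp}, since both $e_R$ and $e_{R'}$ are strong epimorphisms, the $\C$-component $g_1$ is again a strong epimorphism. Note that, by definition of relation composition, the left-hand side of the claim unfolds as $(X,R)^\dag\bullet(X,R')^\dag=\bigl((X,R^\dag)\smc(X,R'^\dag)\bigr)^\dag$, so the target of $g$ is precisely the graph whose reflection appears on the left.

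The key step is then to prove the general fact: if $g\c (X,A)\to(X,B)$ is a $\Gra_X(\C)$-morphism with $g_0=\id_X$ and $g_1$ strongly epic, then $A^\dag\cong B^\dag$ in $\Rel_X(\C)$. The graph-morphism equations give $\langle\outl_A,\outr_A\rangle=\langle\outl_B,\outr_B\rangle\comp g_1$. I would factor $\langle\outl_B,\outr_B\rangle=m\comp e$ into a strong epimorphism $e$ followed by a monomorphism $m$, so that $m$ represents the subobject $B^\dag\monoto X\times X$. Then $\langle\outl_A,\outr_A\rangle=m\comp(e\comp g_1)$, and $e\comp g_1$ is a strong epimorphism because strong epimorphisms are closed under composition. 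Hence this is itself a (strong epi, mono)-factorization of $\langle\outl_A,\outr_A\rangle$, and by the essential uniqueness of such factorizations (guaranteed by \Cref{asm:cat}\ref{asm:cat-1}) its mono part $A^\dag\monoto X\times X$ coincides, as a subobject, with $m$. Therefore $A^\dag\cong B^\dag$.

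Applying this general fact with $A=(X,R)\smc(X,R')$, $B=(X,R^\dag)\smc(X,R'^\dag)$ and the morphism $g$ constructed above yields $\bigl((X,R)\smc(X,R')\bigr)^\dag\cong\bigl((X,R^\dag)\smc(X,R'^\dag)\bigr)^\dag=(X,R)^\dag\bullet(X,R')^\dag$, which is exactly the assertion. I expect the only delicate point to be the essential-uniqueness argument, i.e.\ recognizing that precomposing $\langle\outl_B,\outr_B\rangle$ with the strong epimorphism $g_1$ does not change its image; this is where a genuine (strong epi, mono)-factorization system is essential. Since $\Rel_X(\C)$ is thin, the resulting isomorphism is automatically unique, so there is no further coherence to verify, and everything else is routine bookkeeping with the pullback defining $\smc$.
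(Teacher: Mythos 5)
Your proof is correct and follows essentially the same route as the paper: both construct the graph morphism $(\id,e_R)\smc(\id,e_{R'})$ via functoriality of $\smc$, invoke \Cref{lem:strong-epi-graph-comp} to see that its $\C$-component is strongly epic, and conclude by uniqueness of (strong epi, mono)-factorizations of $\langle\outl_{R\smc R'},\outr_{R\smc R'}\rangle$. Isolating the key step as a general fact about graph morphisms with identity $0$-component and strongly epic $1$-component is merely a clean repackaging of the paper's diagram argument, not a different proof.
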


\begin{proof}
Since $(\id,e_R)\c (X,R)\to (X,R^\dag)$ and $(\id,e_{R'})\c (X,{R'})\to (X,{R'}^\dag)$ are graph morphisms and by \eqref{diag:Rdag}, the following diagram commutes:
\[
\begin{tikzcd}[column sep=15em, row sep=4em]
R\smc {R'} \ar[two heads]{d}[swap]{e_{R\smc {R'}}} \ar[two heads]{r}{((\id,e_R)\smc (\id,e_{R'}))_1} \ar{drr}[swap]{\langle \outl_{R\smc {R'}},\outr_{R\smc {R'}} \rangle}  & R^\dag\smc {R'}^\dag \ar[two heads]{r}{e_{R^\dag\smc {R'}^\dag}} \ar{dr}[description]{\langle \outl_{R^\dag\smc {R'}^\dag},\outr_{R^\dag\smc {R'}^\dag} \rangle} & (R^\dag\smc {R'}^\dag)^\dag \ar[>->]{d}[description]{\langle \outl_{(R^\dag\smc {R'}^\dag)^\dag},\outr_{(R^\dag\smc {R'}^\dag)^\dag} \rangle} \\
(R\smc {R'})^\dag \ar[>->]{rr}{\langle \outl_{(R\smc {R'})^\dag},\outr_{(R\smc {R'})^\dag} \rangle} & & X\times X
\end{tikzcd}
\]
Note that $((\id,e_R)\smc (\id,e_{R'}))_1$ is a strong epimorphism by \Cref{lem:strong-epi-graph-comp}. Thus the desired isomorphism follows from the uniqueness of (strong epi, mono)-factorizations.
\end{proof}

The following lemma shows that graph and relation composition
distribute over coproducts. This is where our assumption that $\C$ is locally distributive is used. 
\begin{lemma}\label{lem:comp-dist-over-coproducts}
For $X\in \C$ we have the following natural isomorphisms in $\Gra_X(\C)$ and $\Rel_X(\C)$, respectively:
\begin{align}
(X,R)\smc ((X,S)+(X,T))& \cong (X,R)\smc (X,S) + (X,R)\smc (X,T),\label{eq:dist-1}\\
((X,S)+(X,T))\smc (X,R)& \cong (X,S)\smc (X,R) + (X,T)\smc (X,R),\label{eq:dist-2} \\
(X,R)\bullet ((X,S)\vee (X,T))& \cong (X,R)\bullet (X,S) \vee (X,R)\bullet (X,T),\label{eq:dist-3}\\
((X,S)\vee (X,T))\bullet (X,R)& \cong (X,S)\bullet (X,R) \vee (X,T)\bullet (X,R). \label{eq:dist-4}
\end{align}
Here we write $+$ and $\vee$ for $+_X$ and $\vee_X$. 
\end{lemma}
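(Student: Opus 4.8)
The plan is to prove the two graph identities \eqref{eq:dist-1} and \eqref{eq:dist-2} directly from local distributivity, and then to obtain the two relation identities \eqref{eq:dist-3} and \eqref{eq:dist-4} by applying the reflector $(-)^\dag$, using \Cref{lem:reflection-comp}.

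For \eqref{eq:dist-1} I would work through the defining pullback \eqref{eq:composite-graph}. Regarding $(R,\outr_R)$, $(S,\outl_S)$, $(T,\outl_T)$ as objects of the slice $\C/X$, the carrier $R\smc R'$ of a right composite is precisely the product in $\C/X$, i.e.\ the pullback of the relevant legs. Under this identification the coproduct $(X,S)+_X(X,T)$ corresponds to the coproduct $(S,\outl_S)+(T,\outl_T)$ in $\C/X$, whose structure map is the copairing $[\outl_S,\outl_T]=\outl_{S+T}$. Since $\C$ is locally distributive (\Cref{asm:cat}\ref{asm:cat-2}), the slice $\C/X$ is distributive, so the canonical comparison morphism
\[ \theta \c (X,R)\smc(X,S) +_X (X,R)\smc(X,T) \longrightarrow (X,R)\smc\big((X,S)+_X(X,T)\big), \]
whose components are the evident maps of pullbacks induced by $\inl\c S\to S+T$ and $\inr\c T\to S+T$, has an underlying $\C$-isomorphism. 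It then remains only to check that $\theta$ is a morphism of $\Gra_X(\C)$, i.e.\ that it commutes with $\outl$ and $\outr$; this is a routine computation using the pullback projections together with $\outl_{S+T}=[\outl_S,\outl_T]$ and $\outr_{S+T}=[\outr_S,\outr_T]$. Since the underlying map is a $\C$-iso and $\theta$ preserves both legs, its inverse is automatically a graph morphism, so $\theta$ is an isomorphism in $\Gra_X(\C)$. Identity \eqref{eq:dist-2} is proved symmetrically, composing on the left and using the leg $\outr_{S+T}=[\outr_S,\outr_T]$ of the coproduct graph.

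For the relation identities I would combine the graph-level results with \Cref{lem:reflection-comp} and the fact that $(-)^\dag\c \Gra_X(\C)\to \Rel_X(\C)$, being a left adjoint, preserves coproducts, so that $((X,A)+_X(X,B))^\dag\cong (X,A)^\dag\vee_X (X,B)^\dag$. Concretely, for \eqref{eq:dist-3}, since $(X,R),(X,S),(X,T)$ are relations (hence fixed by $(-)^\dag$) and $(X,S)\vee_X(X,T)=((X,S)+_X(X,T))^\dag$, applying \Cref{lem:reflection-comp} and then \eqref{eq:dist-1} inside the reflection gives
\[ (X,R)\bullet\big((X,S)\vee_X(X,T)\big) \;\cong\; \big((X,R)\smc((X,S)+_X(X,T))\big)^\dag \;\cong\; \big((X,R)\smc(X,S) +_X (X,R)\smc(X,T)\big)^\dag, \]
which, by coproduct-preservation of $(-)^\dag$ and the definition $(X,A)\bullet(X,B)=((X,A)\smc(X,B))^\dag$, equals $(X,R)\bullet(X,S)\vee_X(X,R)\bullet(X,T)$. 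Identity \eqref{eq:dist-4} follows in the same way from \eqref{eq:dist-2}. Naturality of all four isomorphisms is immediate, as every comparison morphism in sight is induced by universal properties of pullbacks, coproducts, and the reflection.

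The main obstacle I anticipate is the bookkeeping in the graph case: one must track which leg ($\outl$ or $\outr$) of the composite, and which copairing leg of the coproduct graph, plays the role of the structure map into $X$, and then verify that the slice-categorical distributivity isomorphism — which a priori only respects that single ``middle'' leg witnessing the product in $\C/X$ — in fact also respects the two outer legs $\outl$ and $\outr$ carrying the graph structure. Once this compatibility is settled via the explicit pullback projections, the passage to relations through $(-)^\dag$ is purely formal.
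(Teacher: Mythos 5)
Your proposal is correct and follows essentially the same route as the paper's proof: both establish the graph-level isomorphisms \eqref{eq:dist-1}/\eqref{eq:dist-2} by identifying composites with products in the distributive slice $\C/X$ (so that the canonical comparison map, the paper's $[i,j]$, is a $\C$-isomorphism) and then checking compatibility with the $\outl$/$\outr$ legs, and both deduce \eqref{eq:dist-3}/\eqref{eq:dist-4} by applying the reflector $(-)^\dag$, using \Cref{lem:reflection-comp}, the graph-level isomorphism, and the fact that $(-)^\dag$ preserves coproducts as a left adjoint. The only difference is presentational: the paper spells out the pullback-induced components $i$, $j$ and the naturality diagrams explicitly, where you appeal to universal properties.
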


\begin{proof}
We construct the isomorphisms \eqref{eq:dist-1} and \eqref{eq:dist-3}; the other two are analogous.

\medskip\noindent\emph{Proof of \eqref{eq:dist-1}.}
%\begin{enumerate}
%\item By definition, the composite $(X,R)\smc (X,0)$ is given as shown below:
%\[
%\begin{tikzcd}
% & & R 0 \ar[phantom]{dd}[very near start]{\pb{2}{135}} \ar{dl}[swap]{\pi_{R 0,R}}  \ar{dr}{\pi_{R 0,0}} \ar[bend right=8em]{ddll}[swap]{\outl_{R 0}} \ar[bend left=8em]{ddrr}{\outr_{R 0}} &  & \\
%& R \ar{dl}[swap]{\outl_R} \ar{dr}{\outr_R} & & 0 \ar{dl}[swap]{\initial} \ar{dr}{\initial}  & \\
%X & & X & & X 
%\end{tikzcd}
%\]
%Since $\C/X$ is locally distributive, the pullback is given by $R 0=0$. Thus $(X,R)\smc (X,0) \cong (X,0)$, and this isomorphism is natural: for every $f\c (X,R)\to (X,R')$, the square
%\[ 
%\begin{tikzcd}
%(X,R)\smc (X,0) \ar{r}{\cong} \ar{d}[swap]{f\smc (X,0)} & (X,0) \ar[equals]{d} \\
%(X,R')\smc (X,0) \ar{r}{\cong} & (X,0)
%\end{tikzcd}
% \] 
%commutes trivially because $(X,0)$ is the initial object of $\Gra_X(\C)$.
%\item 
By the universal property of the pullback $R(S+T)$, there exists a
unique $\C$-morphism $i\c RS\to R(S+T)$ making the following diagrams commute:
\begin{equation}\label{diag:i}
\begin{tikzcd}
RS \ar[dashed]{rr}{i} \ar{dr}[swap]{\pi_{RS,R}} & & R(S+T) \ar{dl}{\pi_{R(S+T),R}} \\
  & R &
\end{tikzcd}
\quad\quad
\begin{tikzcd}
RS \ar[dashed]{rr}{i} \ar{d}[swap]{\pi_{RS,S}} & & R(S+T) \ar{dl}{\pi_{R(S+T),S+T}} \\
S  \ar{r}{\inl} & S+T &
\end{tikzcd}
\end{equation}
The morphism $j\c RT\to R(S+T)$ is defined analogously. Since the slice category $\C/X$ is distributive, and products in $\C/X$ correspond precisely to pullbacks in $\C$, it follows that 
\[[i,j]\c RS+RT\xto{\cong} R(S+T)\]
is an isomorphism in $\C$. Thus
\[ (\id_X,[i,j])\c (X,R)\smc (X,S) + (X,R)\smc (X,T) \xto{\cong}  (X,R)\smc ((X,S)+(X,T)) \]
is an isomorphism in $\Gra_X(\C)$; the two diagrams below show that is
indeed a $\Gra_X(\C)$-morphism. (Below and henceforth we indicate the
reason why the parts of a commutative diagram commutes, and we write
$\commu$ if a part obviously commutes.) 

\[
  \begin{tikzcd}
    RS + RT
    \ar{rrr}{[i,j]}
    \ar{dr}[description]{\pi_{RS,R}+\pi_{RT,R}}
    \ar{dd}[description]{[\outl_{RS},\outl_{RT}]}
    & & { }&
    R(S+T) \ar{dd}[description]{\outl_{R(S+T)}}
    \ar{dl}[description]{\pi_{R(S+T),R}}
    \\
    { }& R+R
    \descto{ru}{\eqref{diag:i}}
    \descto[pos=.2]{l}{\eqref{eq:composite-graph}}
    \descto{rd}{\commu}
    \ar{dl}[description]{[\outl_R,\outl_R]} \ar{r}{\nabla}
    &
    R \ar{dr}[swap]{\outl_R}
    \descto[pos=.4]{r}{\eqref{eq:composite-graph}}
    &
    { }
    \\
    X \ar{rrr}{\id} & & { } & X
\end{tikzcd}
\quad\quad
\begin{tikzcd}[column sep=4em]
  RS + RT \ar{rr}{[i,j]}
  \ar{dr}[description]{\pi_{RS,S}+\pi_{RT,T}}
  \ar{dd}[description]{[\outr_{RS},\outr_{RT}]}
  &{ } &
  R(S+T) \ar{dd}[description]{\outr_{R(S+T)}}
  \ar{dl}[description]{\pi_{R(S+T),S+T}}
  \\
  { }
  &
  S+T
  \descto{u}{\eqref{diag:i}}
  \descto{d}{\commu}
  \descto[pos=.2]{l}{\eqref{eq:composite-graph}}
  \descto[pos=.4]{r}{\eqref{eq:composite-graph}}
  \ar{dl}[description]{[\outr_S,\outr_T]}
  \ar{dr}[description]{[\outr_S, \outr_T]}
  &
  { }
  \\
  X \ar{rr}{\id} & { }  & X
\end{tikzcd}
\]
Naturality of the isomorphism $(\id_X,[i,j])$ amounts to showing that
the square below commutes for all $\Gra_X(\C)$-morphisms $f\c (X,R)\to
(X,R')$, $g\c (X,S)\to (X,S')$ and $h\c (X,T)\to (X,T')$, where
$[i',j']$ is defined analogously to $[i,j]$ above.
\[
\begin{tikzcd}
R S+R T \ar{r}{[i,j]} \ar{d}[swap]{(fg)_1+(fh)_1} & R (S+T) \ar{d}{(f (g+h))_1} \\
R' S'+R' T' \ar{r}{[i',j']} & R' (S'+T') 
\end{tikzcd}
\] 
It suffices to show that the square commutes when postcomposed with the pullback projections $\pi_{R' (S'+T'),R'}$ and $\pi_{R' (S'+T'),S'+T'}$; the latter follows from the two commutative diagrams below. 
\[
\begin{tikzcd}[outer xsep=0, column sep=8.5,baseline = (B.base)]
  R S+R T \ar{rrr}{[i,j]} \ar{dr}[description]{\pi_{R S,R}+\pi_{R
      T,R}} \ar{ddd}[description]{(fg)_1+(fh)_1}
  & & { } &
  R (S+T) \ar{ddd}[description]{(f(g+h))_1}
  \ar{dl}[description]{\pi_{R (S+T),R}}
  \\
  { }
  &
  R+R
  \descto{ru}{\eqref{diag:i}}
  \descto{rd}{\commu}
  \descto{l}{\eqref{diag:fcomp}}
  \ar{d}[swap]{f_1+f_1} \ar{r}{\nabla}
  &
  R
  \descto{r}{\eqref{diag:fcomp}}
  \ar{d}{f_1}
  &
  { }
  \\
  { }
  &
  R'+R'
  \descto{rd}{\eqref{diag:i}}
  \ar{r}{\nabla} & R' & { }
  \\
  |[alias = B]|
  R' S'+R' T' \ar{rrr}{[i',j']} \ar{ur}[description]{\pi_{R'
      S',R'}+\pi_{R' T,R'}}
  & &{ }
  & R' (S'+T') \ar{ul}[description]{\pi_{R' (S'+T'),R'}} 
\end{tikzcd}
%\quad
\begin{tikzcd}[column sep = 20,baseline = (B.base)]
  R S+R T \ar{rr}{[i,j]} \ar{dr}[description]{\pi_{R S,S}+\pi_{R T,T}}
  \ar{ddd}[description]{(fg)_1+(fh)_1}
  &{ } &
  R (S+T) \ar{ddd}[description]{(f (g+h))_1}
  \ar{dl}[description]{\pi_{R (S+T),S+T}}
  \\
  { }
  &
  S+T \ar{d}{g_1+h_1}
  \descto{u}{\eqref{diag:i}}
  \descto{l}{\eqref{diag:fcomp}}
  \descto{r}{\eqref{diag:fcomp}}
  &
  { }
  \\
  &
  S'+T'
  \descto[pos=.1]{d}{\eqref{diag:i}}
  & \\
  |[alias = B]|
  R' S'+R' T'
  \ar{rr}{[i',j']}
  \ar[pos=.4]{ur}[description]{\pi_{R'    S',R'}+\pi_{R' T,R'}}
  &{ } &
  R' (S'+T')%\rlap{\hspace*{-1ex}\qedhere}
  \ar[pos=.4]{ul}[description]{\pi_{R' (S'+T'),S'+T'}} 
\end{tikzcd}
\] 
\medskip\noindent \emph{Proof of \eqref{eq:dist-3}.} For every $(X,R), (X,S), (X,T)\in \Rel_X(\C)$ we compute
\begin{align*}
(X,R)\bullet ((X,S)\vee (X,T))&\cong (X,R)^\dag\bullet ((X,S)+(X,T))^\dag & \\
&\cong ((X,R)\smc ((X,S)+(X,T)))^\dag \\
&\cong ((X,R)\smc (X,S) +(X,R)\smc (X,T))^\dag \\
&\cong ((X,R)\smc (X,S))^\dag \vee ((X,R)\smc (X,T))^\dag \\
&\cong (X,R)\bullet (X,S) \vee (X,R)\bullet (X,T).
\end{align*}
In the first step we use the definition of $\vee$ and that $(X,R)^\dag \cong (X,R)$ because $(X,R)$ is a relation; the second step follows from \Cref{lem:reflection-comp}; the third step uses the isomorphism \eqref{eq:dist-1} established above; the fourth step uses that the reflector $(-)^\dag$ preserves coproducts, being a left adjoint; the last step follows by definition of $\bullet$. 
%\end{enumerate}
\end{proof}
Recall from \Cref{sec:transfer} the reindexing functor $f_\star\c \Gra_X(\C)\to \Gra_Y(\C)$ induced by a morphism $f\c X\to Y$. 
\begin{lemma}\label{lem:f-star-respects-reflection}
For every $f\c X\to Y$ in $\C$ and $(X,R),(X,R')\in \Gra_X(\C)$ we have
\begin{align}
 f_\star((X,R)\smc (X,R')) &\leq f_\star(X,R)\smc f_\star(X,R'), \label{eq:f-star-respects-comp}\\
 f_\star((X,R)^\dag) &\leq (f_\star(X,R))^\dag. \label{eq:f-star-respects-dag}
\end{align}
\end{lemma}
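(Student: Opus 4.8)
The plan is to exhibit, for each inequality, a $\Gra_Y(\C)$-morphism with first component $\id_Y$, obtained from a universal property. Throughout, recall that $f_\star$ leaves the edge object of a graph unchanged and only postcomposes both projections with $f$, and that $(Y,P)\leq (Y,P')$ means there is a $\C$-morphism $P\to P'$ commuting with the two $Y$-valued projections.

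For \eqref{eq:f-star-respects-comp} I would compare two pullbacks. The edge object of $f_\star((X,R)\smc(X,R'))$ is the pullback $R\smc R'$ of $\outr_R$ and $\outl_{R'}$ formed \emph{over $X$}, whereas the edge object of $f_\star(X,R)\smc f_\star(X,R')$ is the pullback of $f\comp\outr_R$ and $f\comp\outl_{R'}$ formed \emph{over $Y$}. Since the span $(\pi_{R\smc R',R},\pi_{R\smc R',R'})$ already equalizes $\outr_R$ and $\outl_{R'}$, it a fortiori equalizes $f\comp\outr_R$ and $f\comp\outl_{R'}$, so the universal property of the $Y$-pullback yields a unique mediating $\C$-morphism $g$ out of $R\smc R'$ that commutes with the two projections of the $Y$-pullback. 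A direct check then shows $g$ respects the two $Y$-valued projections of the graphs: the left projection of the source is $f\comp\outl_R\comp\pi_{R\smc R',R}$, that of the target is $f\comp\outl_R\comp\pi_{-,R}$, and $\pi_{-,R}\comp g=\pi_{R\smc R',R}$, with the right projections handled symmetrically. Hence $g$ is the required $\Gra_Y(\C)$-morphism.

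For \eqref{eq:f-star-respects-dag} I would invoke the diagonal fill-in property of the $(\text{strong epi},\text{mono})$-factorization system. Write $e_R\c R\epito R^\dag$ with $\langle\outl_{R^\dag},\outr_{R^\dag}\rangle\c R^\dag\monoto X\times X$ for the factorization of $\langle\outl_R,\outr_R\rangle$, and $e_{f_\star R}\c R\epito (f_\star R)^\dag$ with $\langle\outl_{(f_\star R)^\dag},\outr_{(f_\star R)^\dag}\rangle\monoto Y\times Y$ for that of $(f\product f)\comp\langle\outl_R,\outr_R\rangle$. The square with top-right path $\langle f\comp\outl_{R^\dag},\,f\comp\outr_{R^\dag}\rangle\comp e_R$ and left-bottom path $\langle\outl_{(f_\star R)^\dag},\outr_{(f_\star R)^\dag}\rangle\comp e_{f_\star R}$ commutes, since both composites equal $(f\product f)\comp\langle\outl_R,\outr_R\rangle$ (using $\outl_{R^\dag}\comp e_R=\outl_R$ and $\outr_{R^\dag}\comp e_R=\outr_R$). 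As $e_R$ is strongly epic and the mono part of the $(f_\star R)^\dag$-factorization is monic, there is a unique diagonal $h\c R^\dag\to (f_\star R)^\dag$; its lower triangle $\langle\outl_{(f_\star R)^\dag},\outr_{(f_\star R)^\dag}\rangle\comp h=\langle f\comp\outl_{R^\dag},\,f\comp\outr_{R^\dag}\rangle$ states precisely that $h$ is a $\Gra_Y(\C)$-morphism $f_\star((X,R)^\dag)\to (f_\star(X,R))^\dag$, as required.

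Neither step is a genuine obstacle; the only care needed is bookkeeping of which pullback or factorization is formed over $X$ versus over $Y$. It is worth noting that these remain inequalities rather than isomorphisms precisely because matching over $Y$ \emph{after} applying $f$ is a weaker condition than matching over $X$, so the $Y$-level composite and reflection may properly contain the image of their $X$-level counterparts.
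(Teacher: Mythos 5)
Your proof is correct and follows essentially the same route as the paper's: for \eqref{eq:f-star-respects-comp} the paper likewise observes that the $X$-level pullback cone $(\pi_{R\smc R',R},\pi_{R\smc R',R'})$ is a cone for the $Y$-level pullback of $f\comp\outr_R$ and $f\comp\outl_{R'}$ and takes the induced mediating morphism, and for \eqref{eq:f-star-respects-dag} it uses the identical diagonal fill-in of $e_R$ (strong epi) against the mono part of the factorization of $(f\times f)\comp\langle\outl_R,\outr_R\rangle$. The only cosmetic difference is wording (the paper says the cospan morphisms \emph{merge} the projections where you say \emph{equalize}), plus your closing remark on why the inequalities are not isomorphisms, which is a sound observation not present in the paper.
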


\begin{proof}
\emph{Proof of \eqref{eq:f-star-respects-comp}.} By definition, $f_\star(X,R)\smc f_\star(X,R')$ is the relation $(Y,S)$ obtained via the following pullback:
\[
\begin{tikzcd}[column sep=3em]
  & &
  S %\ar[phantom]{dd}[very near start]{\pb{2}{135}}
  \pullbackangle{-90}
  \ar{dl}[swap]{\pi_{S,R}}
  \ar{dr}{\pi_{S,R'}}
  \ar[shiftarr = {xshift=-40}]{ddll}[swap]{\outl_{S}}
  %\ar[bend right=8em]{ddll}[swap]{\outl_{R\smc R'}}
  \ar[shiftarr = {xshift=40}]{ddrr}{\outr_{S}}
  &  & \\
& R \ar{dl}[description]{f\comp\outl_R} \ar{dr}[description]{f\comp \outr_R} & & R' \ar{dl}[description]{f\comp \outl_{R'}} \ar{dr}[description]{f\comp \outr_{R'}}  & \\
Y & & Y & & Y 
\end{tikzcd}
\]
Since $\outr_R,\outl_{R'}$ merge $\pi_{R\smc R',R}$ and $\pi_{R\smc R',R'}$, so do $f\comp \outr_R,f\comp \outl_{R'}$, hence the universal property of the pullback $S$ yields a unique $h_1\c R\smc R'\to S$ such that 
\[\pi_{R\smc R',R}=\pi_{S,R}\comp h_1\qqand \pi_{R\smc R',R'}=\pi_{S,R'}\comp h_1.\]
It follows that
\[ (\id,h_1)\c f_\star((X,R)\smc (X,R')) \to f_\star(X,R)\smc f_\star(X,R')  \]
is a $\Gra_Y(\C)$-morphism, as required.

\medskip\noindent\emph{Proof of \eqref{eq:f-star-respects-dag}.} By definition, $(f_\star(X,R))^\dag$ is the relation $(Y,S)$ obtained via the (strong epi, mono)-factorization $e$ and $\langle \outl_S,\outr_S \rangle$ of the morphism $\langle f\comp \outl_R, f\comp \outr_R\rangle= (f\times f)\comp \langle \outl_R,\outr_R\rangle$. We thus obtain a diagonal fill-in $d\c R^\dag\to S$ making the diagram below commute:
\[
\begin{tikzcd}[column sep=5em]
R \ar{r}{e} \ar[shiftarr = {xshift=-30}]{dd}[swap]{\langle \outl_R,\outr_R\rangle} \ar[two heads]{d}{e_R} & S \ar[>->]{dd}{\langle \outl_S, \outr_S \rangle} \\
R^\dag \ar[dashed]{ur}{d} \ar{d}{\langle \outl_{R^\dag},\outr_{R^\dag}\rangle} & \\
X\times X \ar{r}{f\times f} & Y\times Y 
\end{tikzcd}
\]
 It follows that
\[(\id,d)\c f_\star((X,R)^\dag)\to (f_\star(X,R))^\dag\]
is a $\Gra_Y(\C)$-morphism, as required.
\end{proof}

\begin{corollary}\label{eq:f-star-respects-rel-comp}
For every $f\c X\to Y$ in $\C$ and $(X,R),(X,R')\in \Rel_X(\C)$ we have
\begin{align}
 f_\star((X,R)\bullet (X,R')) & \leq f_\star(X,R)\bullet f_\star(X,R'). \label{eq:f-star-resp-comp-rel}
\end{align}
\end{corollary}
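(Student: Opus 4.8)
The plan is to derive the inequality purely formally, since all the geometric content has already been discharged in \Cref{lem:f-star-respects-reflection}. The two ingredients I would combine are the definition of relation composition, $(X,R)\bullet (X,R')=((X,R)\smc (X,R'))^\dag$, and the monotonicity of the reflector $(-)^\dag\c \Gra_Y(\C)\to \Rel_Y(\C)$ with respect to the order $\leq$: being a functor, $(-)^\dag$ preserves the existence of $\Gra_Y(\C)$-morphisms, so $(Y,A)\leq (Y,B)$ entails $(Y,A)^\dag\leq (Y,B)^\dag$. First I would unfold the left-hand side through the definition of $\bullet$, then push $f_\star$ inside the reflection using \eqref{eq:f-star-respects-dag}, then apply the monotone reflector to \eqref{eq:f-star-respects-comp}, and finally fold the composite back up via the definition of $\bullet$.

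Concretely, the computation I have in mind proceeds along the following chain, all of whose terms live in $\Gra_Y(\C)$:
\begin{align*}
f_\star\big((X,R)\bullet (X,R')\big) &= f_\star\big(((X,R)\smc (X,R'))^\dag\big) \\
&\leq \big(f_\star\big((X,R)\smc (X,R')\big)\big)^\dag \\
&\leq \big(f_\star(X,R)\smc f_\star(X,R')\big)^\dag \\
&= f_\star(X,R)\bullet f_\star(X,R').
\end{align*}
The first equality is the definition of $\bullet$ on the relations $(X,R),(X,R')$; the second step is \eqref{eq:f-star-respects-dag} instantiated at the graph $(X,R)\smc (X,R')$; the third applies $(-)^\dag$ monotonically to \eqref{eq:f-star-respects-comp}; and the last equality folds the composite back up, now for the graphs $f_\star(X,R)$ and $f_\star(X,R')$. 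Note here that $f_\star$ need not send relations to relations (as reindexing postcomposes with the possibly non-monic $f$), so on the right-hand side $\bullet$ is understood via the same recipe $((-)\smc(-))^\dag$, in agreement with the extension of $\bullet$ to graphs.

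I do not expect a genuine obstacle: the only point that warrants a moment's care is the monotonicity of the reflector used in the third step, and this is immediate from functoriality of $(-)^\dag$. In this sense the corollary is a formal consequence of \Cref{lem:f-star-respects-reflection}, whose own proof (through \Cref{lem:strong-epi-graph-comp} and \Cref{lem:pullback-stable}) is where the substantive use of pullback-stability of strong epimorphisms, \Cref{asm:cat}\ref{asm:cat-3}, resides.
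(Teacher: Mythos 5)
Your proof is correct and is essentially identical to the paper's own: the same four-step chain (unfold $\bullet$, apply \eqref{eq:f-star-respects-dag}, apply $(-)^\dag$ to \eqref{eq:f-star-respects-comp} using functoriality, fold back up), with the same justifications at each step. Your side remark that $f_\star$ need not preserve relations, so the right-hand $\bullet$ is read via the recipe $((-)\smc(-))^\dag$, is a fair clarification of a point the paper leaves implicit, but it does not change the argument.
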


\begin{proof}
This follows from the computation
\begin{align*}
 f_\star((X,R)\bullet (X,R')) &= f_\star(((X,R)\smc (X,R'))^\dag) \\
&\leq (f_\star((X,R)\smc (X,R'))^\dag \\
&\leq (f_\star(X,R)\smc f_\star(X,R'))^\dag \\
&=f_\star(X,R)\bullet f_\star(X,R'). 
\end{align*}
The second step uses \eqref{eq:f-star-respects-dag} and the third one uses \eqref{eq:f-star-respects-comp} and functoriality of $(-)^\dag$.
\end{proof}

\section{Omitted proofs}\label{sec:omitted-proofs}

\section*{Proof of \Cref{prop:cool-vs-sound}}
The rules \eqref{eq:passive-1} for a passive operator are clearly
sound, as is the first rule in \eqref{eq:active-1} for an active
operator. To see that the second rule in \eqref{eq:active-1} is sound, suppose that $p_1,\ldots,p_n,\ol{p_j}\in \mS$ are programs such that $p_j\Downarrow \ol{p_j}$. This means that
\[ p_j=p_j^0\to p_j^1 \to \cdots \to p_j^k = \ol{p_j}\qquad \text{for some $k\geq 0$ and $p_j^0,\ldots,p_j^k\in \mS$}.  \] 
By applying the first of \eqref{eq:active-1} (the first ones $k$
times) %and \eqref{eq:active-2}
we obtain
\begin{align*}
 \f(p_1,\ldots,p_j,\ldots,p_n)\to \cdots & \to \f(p_1,\ldots \ol{p_j}, \ldots,p_n)    \\ &\to t[p_1,\ldots, p_{j-1}, p_{j+1},\ldots, p_n, (\ol{p_j})_{p_1},\ldots, (\ol{p_j})_{p_{j-1}},(\ol{p_j})_{p_{j+1}},\ldots, (\ol{p_j})_{p_n}].  
\end{align*}
(The bracket $[\cdots]$ indicates how the variables of $t$ are substituted. Note that $t$ does not depend on $x_j$ and $y_j^{x_j}$.) It follows $\f(p_1,\ldots,p_n)\To t[p_1,\ldots, p_{j-1}, p_{j+1},\ldots, p_n, (\ol{p_j})_{p_1},\ldots, (\ol{p_j})_{p_{j-1}},(\ol{p_j})_{p_{j+1}},\ldots, (\ol{p_j})_{p_n}]$, as required. The soundness of the third rule in \eqref{eq:active-1} is shown analogously.

\section*{Proof of \Cref{prop:free-monad-lift}}

We first establish some auxiliary results.

\begin{rem}\label{rem:c-props}
By putting $D=1$ in \Cref{lem:pullback-stable}, we see that strong epimorphisms in $\C$ are product-stable: $e_0\times e_1$ is
  strongly epic whenever $e_0$ and $e_1$ are strongly epic.
\end{rem}

\begin{lemma}\label{lem:lift-dag-prod}
If $\Sigma$ preserves strong epimorphisms, we have the following isomorphism for every $(X,R),(Y,S)\in \Gra(\C)$:
\[ \ol{\Sigma}_\Rel((X,R)^\dag\times (Y,S)^\dag) \;\cong\; (\ol{\Sigma}_\Gra((X,R)\times (Y,S))^\dag. \]
\end{lemma}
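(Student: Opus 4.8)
The plan is to recognize that both sides of the claimed isomorphism are reflections $(-)^\dag$ of graphs carried by the object $\Sigma(X\times Y)$, and to show that the two underlying joint maps into $\Sigma(X\times Y)\times\Sigma(X\times Y)$ have a common monic part. First I would unfold the definitions. Since products of relations are formed in $\Gra(\C)$ at the level of $\C$ (\Cref{sec:graph-cocomplete}) and products of monomorphisms are monomorphisms, the object $(X,R)^\dag\times(Y,S)^\dag$ is the relation $(X\times Y,\, R^\dag\times S^\dag)$ with projections $\outl_{R^\dag}\times\outl_{S^\dag}$ and $\outr_{R^\dag}\times\outr_{S^\dag}$. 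Applying $\ol{\Sigma}_\Rel=(-)^\dag\comp\ol{\Sigma}_\Gra$, the left-hand side is the reflection of the graph $\ol{\Sigma}_\Gra((X,R^\dag)\times(Y,S^\dag))$, i.e.\ the monic part of the factorization of
\[ g \;=\; \langle \Sigma(\outl_{R^\dag}\times\outl_{S^\dag}),\, \Sigma(\outr_{R^\dag}\times\outr_{S^\dag})\rangle\c \Sigma(R^\dag\times S^\dag)\to \Sigma(X\times Y)\times\Sigma(X\times Y). \]
Likewise the right-hand side is the reflection of $\ol{\Sigma}_\Gra((X,R)\times(Y,S))$, i.e.\ the monic part of the factorization of $j=\langle\Sigma(\outl_R\times\outl_S),\,\Sigma(\outr_R\times\outr_S)\rangle$.

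Next I would relate $j$ and $g$ through a strong epimorphism. The reflection supplies strong epis $e_R\c R\epito R^\dag$ and $e_S\c S\epito S^\dag$ satisfying $\outl_{R^\dag}\comp e_R=\outl_R$, $\outr_{R^\dag}\comp e_R=\outr_R$, and analogously for $S$. By product-stability of strong epimorphisms (\Cref{rem:c-props}) the map $e_R\times e_S\c R\times S\to R^\dag\times S^\dag$ is strongly epic, and since $\Sigma$ preserves strong epimorphisms by hypothesis, so is $w=\Sigma(e_R\times e_S)$. Using that pairing distributes over composition and that $\Sigma$ is a functor, the leg identities yield $(\outl_{R^\dag}\times\outl_{S^\dag})\comp(e_R\times e_S)=\outl_R\times\outl_S$ and similarly on the right, whence $j=g\comp w$.

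Finally I would invoke uniqueness of factorizations. Taking the (strong epi, mono)-factorization $g=m\comp e$ with $m$ monic and $e$ strongly epic, I obtain $j=g\comp w=m\comp(e\comp w)$; as strong epimorphisms are closed under composition, $e\comp w$ is strongly epic, so this exhibits a (strong epi, mono)-factorization of $j$. By uniqueness of such factorizations the monic part of $j$ is isomorphic to $m$, the monic part of $g$, which is precisely the assertion that the right-hand side is isomorphic to the left-hand side. The main obstacle is not conceptual but one of bookkeeping: one must check that the product of relations really is computed in $\C$ with the expected projection legs, so that the identities $\outl_{R^\dag}\comp e_R=\outl_R$ survive being formed into products and then pushed through $\Sigma$; once this is verified, the factorization argument is routine.
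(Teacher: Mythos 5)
Your proof is correct and follows essentially the same route as the paper's: both exhibit the two sides as (strong epi, mono)-factorizations of the same morphism $\langle\Sigma(\outl_R\times\outl_S),\,\Sigma(\outr_R\times\outr_S)\rangle$, using that $\Sigma(e_R\times e_S)$ is strongly epic (product-stability plus the hypothesis on $\Sigma$) and then invoking uniqueness of factorizations. The only cosmetic difference is that the paper presents this as a single commutative diagram rather than spelling out the leg identities $j=g\comp w$ separately.
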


\begin{proof}
This follows from the commutative diagram below and the uniqueness of (strong epi, mono)-factorizations. Note that $e_R\times e_S$ is strongly epic by \Cref{rem:c-props}.
\[
\begin{tikzcd}[column sep=13em, row sep=8em]
\Sigma (R\times S) \ar[two heads]{d}[swap]{e_{\Sigma(R\times S)}} \ar{drr}[swap]{\langle \Sigma(\outl_R\times \outl_S), \Sigma(\outr_R\times \outr_S)\rangle} \ar[two heads]{r}{\Sigma (e_{R}\times e_S)} & \Sigma (R^\dag\times S^\dag) \ar[two heads]{r}{e_{\Sigma(R^\dag\times S^\dag)}} \ar{dr}[description]{\langle \Sigma\outl_{R^\dag\times S^\dag}, \Sigma \outr_{R^\dag\times S^\dag}\rangle } & (\Sigma (R^\dag\times S^\dag))^\dag \ar{d}[description]{\langle \outl_{(\Sigma(R^\dag\times S^\dag))^\dag}, \outr_{(\Sigma(R^\dag\times S^\dag))^\dag}  \rangle} \\
(\Sigma(R\times S))^\dag \ar{rr}{\langle \outl_{(\Sigma (R\times S))^\dag}, \outr_{(\Sigma (R\times S)^\dag} \rangle} & & \Sigma (X\times Y) \times \Sigma (X\times Y)
\end{tikzcd}
\]
\end{proof}

\begin{lemma}\label{lem:ol-sigma-vs-dag}
If $\Sigma$ preserves strong epimorphisms, the following diagrams commute:
\[
\begin{tikzcd}
\Gra(\C) \ar{r}{\ol{\Sigma}_\Gra} \ar{d}[swap]{(-)^\dag} & \Gra(\C) \ar{d}{(-)^\dag} \\
\Rel(\C) \ar{r}{\ol{\Sigma}_\Rel} & \Rel(\C) 
\end{tikzcd}
\]
\end{lemma}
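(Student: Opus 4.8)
The plan is to reduce the stated commutativity to a single isomorphism of relations and then to establish it exactly as in the proofs of \Cref{lem:reflection-comp,lem:lift-dag-prod}, by uniqueness of (strong epi, mono)-factorizations. Unwinding the definitions, both composite functors send a graph $(X,R)$ to a relation on $\Sigma X$: the right-then-down path yields $\bigl(\ol{\Sigma}_\Gra(X,R)\bigr)^\dag=(\Sigma X,(\Sigma R)^\dag)$, whereas the down-then-right path yields $\ol{\Sigma}_\Rel\bigl((X,R)^\dag\bigr)=(\Sigma X,(\Sigma R^\dag)^\dag)$, using that $\ol{\Sigma}_\Rel=(-)^\dag\comp\ol{\Sigma}_\Gra$ on the inclusion $\Rel(\C)\subto\Gra(\C)$. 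Since the carrier is $\Sigma X$ on both sides, it suffices to exhibit a natural isomorphism $(\Sigma R)^\dag\cong(\Sigma R^\dag)^\dag$ in $\Rel_{\Sigma X}(\C)$.

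First I would apply $\Sigma$ to the reflection morphism $e_R\c R\epito R^\dag$ of \eqref{diag:Rdag}. Because $\Sigma$ preserves strong epimorphisms by hypothesis, $\Sigma e_R$ is again strongly epic, and from $\outl_{R^\dag}\comp e_R=\outl_R$ and $\outr_{R^\dag}\comp e_R=\outr_R$ we obtain $\langle\Sigma\outl_{R^\dag},\Sigma\outr_{R^\dag}\rangle\comp\Sigma e_R=\langle\Sigma\outl_R,\Sigma\outr_R\rangle$. I would then assemble the diagram
\[
\begin{tikzcd}[column sep=10em, row sep=6em]
\Sigma R \ar[two heads]{d}[swap]{e_{\Sigma R}} \ar[two heads]{r}{\Sigma e_R} \ar{drr}[swap]{\langle \Sigma\outl_R,\, \Sigma\outr_R\rangle} & \Sigma R^\dag \ar[two heads]{r}{e_{\Sigma R^\dag}} \ar{dr}[description]{\langle \Sigma\outl_{R^\dag},\, \Sigma\outr_{R^\dag}\rangle} & (\Sigma R^\dag)^\dag \ar[>->]{d}{\langle \outl_{(\Sigma R^\dag)^\dag},\, \outr_{(\Sigma R^\dag)^\dag}\rangle} \\
(\Sigma R)^\dag \ar[>->]{rr}{\langle \outl_{(\Sigma R)^\dag},\, \outr_{(\Sigma R)^\dag}\rangle} & & \Sigma X\times \Sigma X
\end{tikzcd}
\]
whose bottom leg is the (strong epi, mono)-factorization defining $(\Sigma R)^\dag$, and whose top leg $\langle\outl_{(\Sigma R^\dag)^\dag},\outr_{(\Sigma R^\dag)^\dag}\rangle\comp\bigl(e_{\Sigma R^\dag}\comp\Sigma e_R\bigr)$ is another (strong epi, mono)-factorization of the \emph{same} morphism $\langle\Sigma\outl_R,\Sigma\outr_R\rangle$: its epi part is a composite of two strong epimorphisms, hence strongly epic, and its mono part is monic by construction. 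Uniqueness of factorizations then supplies the isomorphism $(\Sigma R)^\dag\cong(\Sigma R^\dag)^\dag$, which commutes with the monic pairings into $\Sigma X\times\Sigma X$ and is therefore an isomorphism in $\Rel_{\Sigma X}(\C)$.

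For naturality I would note that both composite functors act on a graph morphism $h=(h_0,h_1)$ by $\Sigma h_0$ on the carrier, and that morphisms in $\Rel(\C)$ are uniquely determined by their underlying $\C$-component; hence the two functors agree on morphisms, and the family of object isomorphisms above is natural by the same uniqueness of factorizations. I expect the only genuine subtlety — and the precise reason the hypothesis on $\Sigma$ is required — to be that $\ol{\Sigma}_\Gra$ does not preserve relations: although $(X,R^\dag)$ is already a relation, $\langle\Sigma\outl_{R^\dag},\Sigma\outr_{R^\dag}\rangle$ need not be monic, so the extra reflection $(-)^\dag$ on the right-hand side is doing real work. Keeping the top leg a legitimate (strong epi, mono)-factorization rests entirely on $\Sigma e_R$ remaining strongly epic, which is exactly where preservation of strong epimorphisms is used.
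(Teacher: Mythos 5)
Your proof is correct and is essentially the paper's own argument: both reduce the claim to the isomorphism $(\Sigma R)^\dag\cong(\Sigma R^\dag)^\dag$ and establish it via the same two-leg factorization diagram, using that $\Sigma e_R$ remains strongly epic and invoking uniqueness of (strong epi, mono)-factorizations. Your additional check that the two composites agree on morphisms (via the fact that $\Rel(\C)$-morphisms are determined by their $(-)_0$-components) is a point the paper leaves implicit, but it is the same approach throughout.
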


\begin{proof}
For each $(X,R)\in \Gra(\C)$ we have
\[
(\ol{\Sigma}_\Gra(X,R))^\dag = (\Sigma X,\Sigma R)^\dag \cong (\Sigma X, (\Sigma R^\dag))^\dag  = \ol{\Sigma}_\Rel((X,R)^\dag) 
\]
The isomorphism in the second step follows from the commutative diagram below and the uniqueness of (strong epi, mono)-factorizations. 
\[
\begin{tikzcd}[column sep=12em, row sep=5em]
\Sigma R \ar[two heads]{d}[swap]{e_{\Sigma R}} \ar{drr}[swap]{\langle \Sigma\outl_R, \Sigma\outr_R\rangle} \ar[two heads]{r}{\Sigma e_{R}} & \Sigma R^\dag \ar[two heads]{r}{e_{(\Sigma R^\dag)^\dag}} \ar{dr}[description]{\langle \Sigma\outl_{R^\dag}, \Sigma \outr_{R^\dag}\rangle } & (\Sigma R^\dag)^\dag \ar{d}[description]{\langle \outl_{(\Sigma R^\dag)^\dag}, \outr_{(\Sigma R^\dag)^\dag}  \rangle} \\
(\Sigma R)^\dag \ar{rr}{\langle \outl_{(\Sigma R)^\dag}, \outr_{(\Sigma R)^\dag} \rangle} & & \Sigma X \times \Sigma X
\end{tikzcd}
\]
\end{proof}
Since $(-)^\dag$ is a left adjoint, the above lemma and \cite[Thm.~2.14]{hj98} yield

\begin{corollary}\label{cor:free-algebra-adjunction}
If $\Sigma$ preserves strong epimorphisms, there is an adjunction
\[
\begin{tikzcd}
\Alg(\ol{\Sigma}_\Gra) \ar[phantom]{r}{\bot} \ar[bend left=1em]{r}{F} & \Alg(\ol{\Sigma}_\Rel) \ar[bend left=1em]{l}{U}
\end{tikzcd}
\]
where the right adjoint $U$ is given by
\[  
(\ol{\Sigma}_\Rel(A,R)\xto{a} (A,R)) \;\;\mapsto\;\; (\ol{\Sigma}_\Gra(A,R) \xto{(\id,e_{\Sigma R})} \ol{\Sigma}_\Rel(A,R)\xto{a} (A,R))\qqand h\mapsto h,   
\]
the left adjoint $F$ by
\[ (\ol{\Sigma}_\Gra(A,R)\xto{a} (A,R)) \;\;\mapsto\;\;  (\ol{\Sigma}_\Rel(A,R^\dag) \xto{a^\dag} (A,R^\dag)) \qqand h\mapsto h^\dag, \]
and the unit at $((A,R),a)\in \Alg(\ol{\Sigma}_\Gra)$ by
\[ (\id,e_{R})\c ((A,R),a)\to UF((A,R),a). \]
\end{corollary}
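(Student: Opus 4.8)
The plan is to read \Cref{cor:free-algebra-adjunction} as an instance of the standard principle that an adjunction lifts to the categories of algebras of two endofunctors as soon as these endofunctors are intertwined with the adjunction by a natural isomorphism; this is exactly the content of \cite[Thm.~2.14]{hj98}. The base adjunction I would use is the reflection $(-)^\dag\dashv I$ of \Cref{sec:relations-in-a-category}, where $I\c\Rel(\C)\subto\Gra(\C)$ is the inclusion and $(-)^\dag$ the reflector. Its unit at a graph $(X,R)$ is the reflection morphism $(\id,e_R)\c(X,R)\to(X,R^\dag)$ from \eqref{diag:Rdag}; since $I$ is fully faithful and the reflection of a relation is (isomorphic to) itself, the counit is an isomorphism. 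The two endofunctors being lifted are $\ol\Sigma_\Gra$ on $\Gra(\C)$ and $\ol\Sigma_\Rel$ on $\Rel(\C)$.

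The one nontrivial hypothesis --- the intertwining isomorphism --- is supplied verbatim by \Cref{lem:ol-sigma-vs-dag}, which (precisely under the assumption that $\Sigma$ preserves strong epimorphisms) produces a natural isomorphism $\phi$ with components $\phi_{(X,R)}\c(\ol\Sigma_\Gra(X,R))^\dag\to\ol\Sigma_\Rel((X,R)^\dag)$, i.e.\ $(-)^\dag\comp\ol\Sigma_\Gra\cong\ol\Sigma_\Rel\comp(-)^\dag$. Feeding the adjunction $(-)^\dag\dashv I$ together with $\phi$ into \cite[Thm.~2.14]{hj98} then immediately yields an adjunction $F\dashv U$ between $\Alg(\ol\Sigma_\Gra)$ and $\Alg(\ol\Sigma_\Rel)$ whose two legs commute with the respective forgetful functors and whose unit lifts the unit of the base adjunction. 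The hypothesis on $\Sigma$ is thus used only through \Cref{lem:ol-sigma-vs-dag}.

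What remains is to unwind the theorem's construction so as to match the explicit formulas claimed for $F$, $U$, and the unit; this bookkeeping is where I expect essentially all the (mild) work to sit, and hence the main obstacle. The right adjoint $U$ is governed by the mate of $\phi$ under $(-)^\dag\dashv I$; because $I$ is the inclusion and the counit is trivial, this mate is the natural transformation $\ol\Sigma_\Gra\comp I\Rightarrow I\comp\ol\Sigma_\Rel$ whose component at a relation $(A,R)$ is the reflection $(\id,e_{\Sigma R})\c\ol\Sigma_\Gra(A,R)\to\ol\Sigma_\Rel(A,R)$. Precomposing an $\ol\Sigma_\Rel$-algebra structure $a$ with this component gives the stated $\ol\Sigma_\Gra$-algebra $a\comp(\id,e_{\Sigma R})$, and morphisms are carried along unchanged. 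Dually, the left adjoint $F$ sends $a\c\ol\Sigma_\Gra(A,R)\to(A,R)$ to its reflection $a^\dag$, which via $\phi$ is an $\ol\Sigma_\Rel$-algebra on $(A,R^\dag)$, and $h$ to $h^\dag$. The unit at $((A,R),a)$ is the lift of $(\id,e_R)$, and I would close the argument by checking directly that $(\id,e_R)\c((A,R),a)\to UF((A,R),a)$ is an $\ol\Sigma_\Gra$-algebra morphism, which follows routinely from naturality of the factorization morphisms $e$.
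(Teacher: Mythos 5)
Your proposal matches the paper's own proof: the paper likewise derives the corollary by combining \Cref{lem:ol-sigma-vs-dag} (the intertwining of $\ol{\Sigma}_\Gra$ and $\ol{\Sigma}_\Rel$ with the reflection, available precisely because $\Sigma$ preserves strong epimorphisms) with the fact that $(-)^\dag$ is a left adjoint, and then invokes \cite[Thm.~2.14]{hj98}. Your additional unwinding of the mate to recover the explicit formulas for $F$, $U$, and the unit is exactly the routine bookkeeping the paper leaves implicit.
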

With this preparation, we are ready to prove the proposition.
\begin{proof}[Proof of \Cref{prop:free-monad-lift}]
Let $(\Sigmas,\eta,\mu)$ be the free monad generated by $\Sigma$, with 
associated natural transformation \[\iota\c \Sigma\Sigmas\to
  \Sigmas;\] that is, $\iota_X\c \Sigma\Sigmas X\to \Sigmas X$ is the
free $\Sigma$-algebra on $X$ with the universal morphism $\eta_X\c X
\to \Sigmas X$.
\begin{enumerate} 
\item\label{lem:free-monad-lift-1} We first consider the graph lifting $\ol{\Sigma}=\ol{\Sigma}_\Gra$. It suffices to show that a free $\ol{\Sigma}$-algebra on $(X,R)\in \Gra(\C)$ is given by 
\[ 
  \ol{\Sigma}\,\barSigmas(X,R)=(\Sigma\Sigmas X,\Sigma\Sigmas R)
  \xra{(\iota_X,\iota_R)} 
(\Sigmas X, \Sigmas R) = \barSigmas(X,R) 
\] 
with the universal morphism
\[ 
(X,R)
\xra{(\eta_X,\eta_R)}
 (\Sigmas X,\Sigmas R)=\barSigmas(X,R). 
\]
To prove this, we verify the required universal property. Thus suppose that we are given a $\ol{\Sigma}$-algebra $a\c \ol{\Sigma}(A,S)\to (A,S)$ and a $\Gra(\C)$-morphism $h\c (X,R)\to (A,S)$. The universal property of the free $\Sigma$-algebra $(\Sigmas X,\iota_X)$ yields a unique $\Sigma$-algebra morphism
\[ \ol{h}_0\c (\Sigmas X,\iota_X)\to (A,a_0)\qquad\text{such that}\qquad \ol{h}_0\circ \eta_X=h_0, \]
and similarly the universal property of $(\Sigmas R,\iota_R)$ yields a unique $\Sigma$-algebra morphism
\[ \ol{h}_1\c (\Sigmas R,\iota_R) \to (S,a_1)\qquad\text{such that}\qquad \ol{h}_1\circ \eta_R=h_1.\]
This implies that the $\ol{\Sigma}$-algebra morphism
\[\ol{h}=(\ol{h}_0,\ol{h}_1)\c (\Sigmas(X,R),(i_X,i_R))\to ((A,S),a)\]
satisfies $\ol{h}\circ (\eta_X,\eta_R) = h$, and it is clearly unique with that property. 
\item Now consider the relation lifting ${\widetilde{\Sigma}}=\ol{\Sigma}_\Rel$. It suffices to show that a free $\widetilde{\Sigma}$-algebra on $(X,R)\in \Rel(\C)$ is given by
\[  
\widetilde{\Sigma}\widetilde{\Sigmas}(X,R) \xto{(\iota_X,\iota_R)^\dag} \widetilde{\Sigmas}(X,R)
\]
with the universal morphism 
\[ (X,R) \xto{(\eta_X,\eta_R)^\dag} \widetilde{\Sigmas}(X,R). \]
But this is immediate from part \ref{lem:free-monad-lift-1} above and \Cref{cor:free-algebra-adjunction}.\qedhere
\end{enumerate}
\end{proof}

\section*{Proof of \Cref{lem:sim-props}}
We prove a more refined result:
\begin{lemma}
Suppose that the functor $\barF$ satisfies the following conditions for all $X\in \C$ and $(X,R),(X,S)\in \Rel_X(\C)$:
\begin{enumerate}
\item the relation $\barF(X,X)$ is reflexive;
\item $\barF(X,R)\bullet \barF(X,S)\leq \barF((X,R)\bullet (X,S))$. 
\end{enumerate}Then for every coalgebra $c\c C\to FC$:
\begin{enumerate}[label=(\alph*)]
\item\label{lem:sim-props-1} If $(C,R_i)$, $i\in I$, are simulations on $(C,c)$, then the coproduct $\bigvee_{i\in I} (C,R_i)$ in $\Rel_C(\C)$ is a simulation.
\item\label{lem:sim-props-2} The identity relation $(C,C)$ is a simulation on $(C,c)$.
\item\label{lem:sim-props-3} If $(C,R)$ and $(C,S)$ are simulations on $(C,c)$, then the composite $(C,R)\bullet (C,S)$ is a simulation. 
\item\label{lem:sim-props-4} The similarity relation on $(C,c)$ exists, and it is reflexive and transitive.
\end{enumerate}
\end{lemma}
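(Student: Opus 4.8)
The plan is to establish parts \ref{lem:sim-props-1}--\ref{lem:sim-props-3} first and then derive \ref{lem:sim-props-4} as a formal consequence, exactly as in the classical theory of simulations. The recurring technical device is a passage from the graph level to the relation level, which I would isolate as a sub-lemma: \emph{if $(C,G)\in \Gra_C(\C)$ admits a $\Gra_{FC}(\C)$-morphism $d\c c_\star(C,G)\to \barF((C,G)^\dag)$, then the reflection $(C,G)^\dag$ is a simulation.} Its proof is a diagonal fill-in: writing $\barF((C,G)^\dag)=(FC,E)$ and $e_G\c G\epito G^\dag$ for the reflection, the equalities $\outl_E\comp d=c\comp\outl_G$, $\outr_E\comp d=c\comp\outr_G$ together with $\outl_G=\outl_{G^\dag}\comp e_G$ and $\outr_G=\outr_{G^\dag}\comp e_G$ yield a commuting square with the strong epi $e_G$ on one side and the monic $\langle\outl_E,\outr_E\rangle$ on the other; the induced diagonal $G^\dag\to E$ is precisely a simulation witness. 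This is where \Cref{asm:cat} (existence of (strong epi, mono)-factorizations, hence of diagonal fill-ins) is used.

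For \ref{lem:sim-props-2}, reflexivity of $\barF(C,C)$ (hypothesis (1)) supplies a section $r\c FC\to E_C$ over the diagonal, and $c_C:=r\comp c$ is a simulation witness for $(C,C)$. For \ref{lem:sim-props-1}, I would produce the graph-level witness and invoke the sub-lemma with $G=\coprod_i R_i$, so $G^\dag=\bigvee_i(C,R_i)$. Each inclusion $(C,R_i)\leq\bigvee_j(C,R_j)$ is a $\Rel_C(\C)$-morphism; applying $\barF$, which as a lifting of $F$ restricts to a monotone map $\Rel_C(\C)\to\Rel_{FC}(\C)$, gives morphisms $E_{R_i}\to E_{\bigvee R}$ over the legs. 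Composing these with the witnesses $c_{R_i}$ and copairing over the coproduct $\coprod_i R_i$ in $\C$ produces the required $d$.

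Part \ref{lem:sim-props-3} is the heart of the argument and the only place where hypothesis (2) enters. With $RS:=R\smc S$ the composite graph defined by the pullback \eqref{eq:composite-graph}, the witnesses $c_R$, $c_S$ satisfy $\outr_{E_R}\comp c_R=c\comp\outr_R$ and $\outl_{E_S}\comp c_S=c\comp\outl_S$; together with the pullback equation $\outr_R\comp\pi_{RS,R}=\outl_S\comp\pi_{RS,S}$ this lets me pair $c_R\comp\pi_{RS,R}$ and $c_S\comp\pi_{RS,S}$ into a map $RS\to E_R\smc E_S$. Post-composing with the reflection $(FC,E_R\smc E_S)\epito\barF(C,R)\bullet\barF(C,S)$ and then with the hypothesis-(2) comparison $\barF(C,R)\bullet\barF(C,S)\leq\barF((C,R)\bullet(C,S))$ yields a graph-level witness $c_\star(RS)\leq\barF((RS)^\dag)$, and the sub-lemma (with $G=RS$, so $G^\dag=(C,R)\bullet(C,S)$) concludes.

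Finally \ref{lem:sim-props-4} is purely formal. Since $\C$ is well-powered, simulations on $(C,c)$ form a set of subobjects of $C\times C$; by \ref{lem:sim-props-1} their join $\lesssim$ in the complete lattice $\Rel_C(\C)$ is again a simulation and hence the greatest one, so similarity exists. Reflexivity follows from \ref{lem:sim-props-2} via $(C,C)\leq{\lesssim}$, and transitivity from \ref{lem:sim-props-3}, which makes $\lesssim\bullet\lesssim$ a simulation and therefore $\lesssim\bullet\lesssim\leq{\lesssim}$. The main obstacle I anticipate is not any single part but the bookkeeping of the graph/relation distinction: each witness must be built at the graph level, where coproducts and composites are computed strictly in $\C$, and only then transported across the reflector $(-)^\dag$ via the diagonal fill-in. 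Getting the direction of the comparisons right and verifying that $\barF$ and $c_\star$ interact correctly with $(-)^\dag$ is where the real care is needed.
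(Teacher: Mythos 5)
Your proof is correct and takes essentially the same approach as the paper's: the same decomposition into parts (a)--(d), graph-level witnesses transferred across the reflector $(-)^\dag$ via a strong-epi/mono diagonal fill-in, hypotheses (1) and (2) entering exactly in (b) and (c), and (d) obtained formally from well-poweredness. The only difference is packaging: your transfer sub-lemma and explicit copairing/pullback constructions inline what the paper factors through its general reindexing lemmas relating $c_\star$ to $(-)^\dag$, $\smc$, and $\bullet$ (proved in \Cref{sec:more-on-graphs-and-relations}).
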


\begin{proof}
\begin{enumerate}[label=(\alph*)]
\item Let $(C,R)=\bigvee_i (C,R_i)$ and denote by $\mathsf{in}_i\c (C,R_i)\to (C,R)$ the coproduct injection. We first compute
\[ c_\star(\coprod_i (C,R_i)) = \coprod_i c_\star(C,R_i) \leq \coprod_i \barF(C,R_i) \leq \barF(C,R), \]
where the first $\coprod$ refers to the coproduct in $\Gra_C(\C)$ and the other two to the coproduct in $\Gra_{FC}(\C)$. The first step follows from the definition of $\coprod$ and $c_\star$, the second step uses that $(C,R_i)$ is a simulation, and the third step is witnessed by the morphism $[\barF(\mathsf{in}_i)]_i$. It follows that
\[ c_\star(C,R) = c_\star((\coprod_i (C,R_i))^\dag) \leq (c_\star(\coprod_i (C,R_i)))^\dag \leq \barF(C,R)^\dag = \barF(C,R). \]
The first step uses that $(C,R)=\bigvee_i (C,R_i)$ and the definition of $\bigvee$, the second one follows from \eqref{eq:f-star-respects-dag}, the third one from the computation above, and the last one since $\barF(C,R)$ is a relation.
\item This follows from the computation

\[ c_\star(C,C) \leq (FC,FC) \leq \barF(C,C) \]
where the first step is witnessed by the $\Gra_{FC}(\C)$-morphism $(\id,c)\c c_\star(C,C)\to (FC,FC)$ and the second one uses that $\barF(C,C)$ is reflexive by condition \ref{eq:good-for-simulations-1}.
\item This follows from the computation
\[ c_\star((C,R)\bullet (C,S)) \leq c_\star(C,R)\bullet c_\star(C,S) \leq\barF(C,R)\bullet \barF(C,S) \leq \barF((C,R)\bullet (C,S)). \]
The first step follows from \eqref{eq:f-star-respects-rel-comp}, the second step uses that $c_\star(C,R)\leq \barF(C,R)$ and $c_\star(C,S)\leq \barF(C,S)$ because $(C,R)$ and $(C,S)$ are simulations and that $\bullet$ is functorial, and the third step uses condition \ref{eq:good-for-simulations-2}.
\item Since the category $\C$ is well-powered (\Cref{asm:cat}), the collection $\{\,(C,R_i) : i\in I\,\}$ of all simulations, taken up to isomorphism in $\Rel_C(\C)$, forms a small set. Thus the greatest simulation is given by the coproduct $(C,R)=\bigvee_i (C,R_i)$, see part \ref{lem:sim-props-1}. It is reflexive because $(C,C)$ is a simulation by part \ref{lem:sim-props-2} and thus $(C,C)\leq (C,R)$ because $(C,R)$ is the greatest simulation. It is transitive because $(C,R)\bullet (C,R)$ is a simulation by part \ref{lem:sim-props-3} and thus $(C,R)\bullet (C,R)\leq (C,R)$.\qedhere
\qedhere
\end{enumerate}
\end{proof}

\begin{remark}\label{rem:congruence}
By \eqref{eq:f-star-respects-dag} the congruence property in \Cref{def:cong} is equivalent to
\[ a_\star\ol{\Sigma}_\Gra(A,R)\leq (A,R), \]
that is, existence of a morphism $a_R\c \Sigma R\to R$ such that the diagram below commutes.
\[
\begin{tikzcd}
\Sigma A \ar{d}[swap]{a} & \Sigma R \ar{l}[swap]{\Sigma\outl_R} \ar{d}{a_R} \ar{r}{\Sigma\outr_R} & \Sigma A \ar{d}{a} \\
A  & \ar{l}[swap]{\outl_{R}} R \ar{r}{\outr_{R}} & A
\end{tikzcd}
\]
\end{remark}

\section*{Proof of \Cref{lem:howe-props}}

\begin{enumerate}
\item Suppose that $(X,R)$ is reflexive. Then $(X,\hatR)$ is reflexive
  because
  \[
    (X,X)\leq (X,R) \leq (X,\hatR).
  \]
  The first step uses that $(X,R)$ is reflexive and the second step is
  witnessed by $\alpha_{R,\xi}\comp \inl$, see
  \eqref{eq:howe-alg-struct}. Similarly, $(X,\hatR)$ is a congruence
  on $(X,\xi)$ because
  \[ \xi_\star\ol{\Sigma}(X,\hatR) = (\xi_\star\ol{\Sigma}(X,\hatR))
    \bullet (X,X) \leq (\xi_\star\ol{\Sigma}(X,\hatR)) \bullet (X,R) \leq
    (X,\hatR).
  \]
  The first step follows from the definition of relation composition, the
  second step uses that $(X,R)$ is reflexive and that relation
  composition is functorial, and the third step is witnessed by
  $\alpha_{R,\xi}\comp \inr$.
\item Suppose that $(X,R)$ is transitive. Putting $\vee=\vee_X$, we obtain
\begin{align*} 
(X,\hatR)\bullet (X,R) & \cong ((X,R)\vee(\xi_\star \ol{\Sigma}(X,\hatR))\bullet (X,R))\bullet (X,R) \\
&\cong (X,R)\bullet (X,R) \vee (\xi_\star \ol{\Sigma}(X,\hatR))\bullet (X,R)\bullet (X,R) \\
& \leq (X,R) \vee (\xi_\star \ol{\Sigma}(X,\hatR))\bullet (X,R) \\
& \cong (X,\hatR).
\end{align*}
The first and the last step are witnessed by the isomorphism \eqref{eq:howe-alg-struct}, the second one uses \Cref{lem:comp-dist-over-coproducts}, and the third one uses that $(X,R)\bullet (X,R)\leq (X,R)$ by transitivity of $(X,R)$ and that $\vee$ and $\bullet$ are monotone.
\end{enumerate}

\section*{Proof of \Cref{prop:howe}}
The proof of the proposition requires some notation:
\begin{notation}\label{not:b-rho}
\begin{enumerate}
\item For each $(X,R),(Y,S)\in \Rel(\C)$ we denote the relation $\barB((X,R),(Y,S))$ by
\[
\begin{tikzcd}[row sep=2em]
E_{R,S} \ar[bend right=2em]{d}[swap]{\outl_{R,S}}\ar[bend left=2em]{d}{\outr_{R,S}} \\
B(X,Y)
\end{tikzcd}\]
Thus the domain and codomain of the $\Rel(\C)$-morphism
\[ \barrho_{(X,R),(Y,S)}\c \ol{\Sigma}((X,R)\times \barB((X,R),(Y,S))) \to \barB((X,R),\ol{\Sigma}^\star((X,R)+(Y,S))) \]
are the relations
\[ \ol{\Sigma}((X,R)\times \barB((X,R),(Y,S))) = (\Sigma(X\times B(X,Y)), (\Sigma(R\times E_{R,S}))^\dag) \]
and 
\[ \barB((X,R),\ol{\Sigma}^\star((X,R)+(Y,S))) = (B(X,\Sigmas(X+Y)), E_{R,(\Sigmas(R\vee S))^\dag}).  \]
We put
\[ t_{R,S} = (\, \Sigma(R\times E_{R,S}) \xto{e_{\Sigma(R\times E_{R,S})}} (\Sigma(R\times E_{R,S}))^\dag \xto{(\barrho_{(X,R),(Y,S)})_1} E_{R,(\Sigmas(R\vee S))^\dag}  \,). \]  
Then we have the following commutative diagram, where $\out\in \{\outl,\outr\}$; the lower right-hand part commutes because $\barrho_{(X,R),(Y,S)}$ is a $\Rel(\C)$-morphism and $(\barrho_{(X,R),(Y,S)})_0=\rho_{X,Y}$.
\begin{equation}\label{eq:t-diagram}
\begin{tikzcd}[column sep=4em]
\Sigma(R\times E_{R,S}) \ar{d}[swap]{\Sigma(\out_R\times \out_{R,S})} \ar{r}{e_{\Sigma(R\times E_{R,S})}} \ar[shiftarr = {yshift=2em}]{rr}{t_{R,S}} & (\Sigma(R\times E_{R,S}))^\dag \ar{r}{(\barrho_{(X,R),(Y,S)})_1} \ar{dl}[description]{\out_{(\Sigma(R\times E_{R,S}))^\dag}} & E_{R,(\Sigmas(R\vee S))^\dag} \ar{d}{\out_{R,(\Sigmas(R\vee S))^\dag}} \\
 \Sigma(X\times B(X,Y)) \ar{rr}{\rho_{X,Y}} & & B(X,\Sigmas(X+Y)
\end{tikzcd}
\end{equation} 
\item  Since $(\mS,\hatR)$ is a congruence on $(\mS,\ini)$ by \Cref{lem:howe-props}, there exists a $\C$-morphism $\ini_\hatR$ such that diagram below commutes, cf.\ \Cref{rem:congruence}.
\begin{equation}\label{eq:ini-hatR}
\begin{tikzcd}
\Sigma(\mS) \ar{d}[swap]{\ini} & \Sigma \hatR  \ar{l}[swap]{\Sigma\outl_\hatR} \ar{d}{\ini_\hatR} \ar{r}{\Sigma\outr_\hatR} & \Sigma(\mS) \ar{d}{\ini} \\
\mS  & \ar{l}[swap]{\outl_{\hatR}} \hatR \ar{r}{\outr_{\hatR}} & \mS
\end{tikzcd} 
\end{equation}
\item Since $(\mS,R)$ is a weak simulation on $(\mS,\gamma)$, there exist $\C$-morphisms $\wt{\gamma}_R$ and $\dbtilde{\gamma}_R$ such that the following diagrams commute, cf.\ \Cref{def:weakaning-and-weak-sim}.
\begin{equation}\label{eq:gamma-R}
\begin{tikzcd}[row sep=2.3em]
\mS \ar{d}[swap]{{\gamma}} & R \ar{l}[swap]{\outl_R} \ar{d}{\wt{\gamma}_R} \ar{r}{\outr_R} & \mS \ar{d}{\wt{\gamma}} \\
B(\mS,\mS)  & \ar{l}[swap]{\outl_{\mS,R}} E_{\mS,R} \ar{r}{\outr_{\mS,R}} & B(\mS,\mS)
\end{tikzcd} 
\qquad
\begin{tikzcd}[row sep=2.3em]
\mS \ar{d}[swap]{\wt{\gamma}} & R  \ar{l}[swap]{\outl_R} \ar{d}{\dbtilde{\gamma}_R} \ar{r}{\outr_R} & \mS \ar{d}{\wt{\gamma}} \\
B(\mS,\mS)  & \ar{l}[swap]{\outl_{\mS,R}} E_{\mS,R} \ar{r}{\outr_{\mS,R}} & B(\mS,\mS)
\end{tikzcd} 
\end{equation}
\end{enumerate}
\end{notation}

\begin{proof}[Proof of \Cref{prop:howe}]
  Suppose that $(\mS,R)$ is a reflexive and transitive weak simulation on
 $(\mS,\gamma)$. Our task is to show that the Howe closure $(\mS,\hatR)$ 
  is a weak simulation on $(\mS,\gamma)$. To this end, form the
  relation $(\mS,P)\in \Rel_{\mS}(\C)$ via the pullback below; it is a relation because monomorphisms are stable under pullbacks.
\begin{equation}\label{diag:P}
\begin{tikzcd}
P \pullbackangle{-45} \ar{r}{p} \ar{d}[swap]{\langle \outl_P, \outr_P \rangle} & E_{\hatR,\hatR} \ar{d}{\langle \outl_{\hatR,\hatR},\outr_{\hatR,\hatR}\rangle} \\
\mS\times \mS \ar{r}{\langle \gamma, \wt{\gamma}\rangle} & B(\mS,\mS)\times B(\mS,\mS)
\end{tikzcd}
\end{equation}
The key to the proof is showing the existence of $\Rel_{\mS}(\C)$-morphisms
  \begin{equation}\label{eq:beta-l-r}
    \beta^\lft \c (\mS,R)\to (\mS,P)
    \qqand
    \beta^\rgt\c \iota_\star\ol{\Sigma}((\mS,\hatR)\times_{\mS} (\mS,P))\bullet (\mS,R) \to (\mS,P)
  \end{equation}
where $\times_{\mS}$ denotes the product in $\Rel_\mS(\C)$, see \Cref{sec:graph-cocomplete}. Once this is achieved, we can conclude the proof as follows. By copairing $\beta^\lft$ and $\beta^\rgt$ we obtain the $\Rel_{\mS}(\C)$-morphism
  \[
    \beta = [\beta^\lft,\beta^\rgt]\c 
    \ol{\Sigma}_{R,\iota} \big((\mS,\hatR) \times_{\mS}
    (\mS,P)\big)
    \to (\mS,P)
  \]
and thus primitive recursion \eqref{eq:primitive-recursion} yields the $\Rel_\mS(\C)$-morphism
\[
\primrec\beta\c \mu\ol{\Sigma}_{R,\ini}=(\mS,\hatR)\to (\mS,P).
\]
Choose a witness $r\c (\mS,\mS)\to (\mS,\hatR)$ that $(\mS,\hatR)$ is reflexive, see \Cref{lem:howe-props}. Then the following commutative diagram proves $(\mS,\hatR)$ to be a weak simulation on $(\mS,\gamma)$:
\[
  \begin{tikzcd}[row sep=3em, column sep=3em]
& \hatR \ar[bend right=2em]{dl}[swap]{\outl_\hatR} \ar[bend left=2em]{dr}{\outr_\hatR} \ar{d}{(\primrec\beta)_1} & \\
\mS \ar{d}[swap]{\gamma} & P  \ar{l}[swap]{\outl_{P}} \ar{d}{p} \ar{r}{\outr_{P}} & \mS \ar{d}{\wt{\gamma}} \\
B(\mS,\mS)  & \ar{l}[swap]{\outl_{\hatR,\hatR}} E_{\hatR,\hatR} \ar{d}[description]{\barB(r,(\mS,\hatR))_1} \ar{r}{\outr_{\hatR,\hatR}} & B(\mS,\mS) \\
& \ar[bend left=2em]{ul}{\outl_{\mS,\hatR}} \ar[bend right=2em]{ur}[swap]{\outr_{\mS,\hatR}} E_{\mS,\hatR} & 
\end{tikzcd}
\]
 It only remains to define the $\Rel_\mS(\C)$-morphisms $\beta^\lft$ and $\beta^\rgt$ in \eqref{eq:beta-l-r}. Their construction 
  imitates the arguments for the induction base and
  induction step, resp., in the proof of
  \Cref{thm:cool}.

  \medskip\noindent\textbf{Construction of
    $\beta^\lft\c (\mS,R)\to (\mS,P)$:}

\begin{enumerate}
%\item We first consider the following diagram:
%  \begin{equation}\label{diag:cev-g1}
%    \begin{tikzcd}
%      & & B(\hatR,\hatR) \ar{d}{B(\hatR,\outl_\hatR)}
%      \\
%      \cev{T}_{\hatR,\hatR}
%      \ar[bend left=3em]{urr}{\cev{q}_{\hatR,\hatR}}
%      \ar[bend right=3em]{drr}[swap]{\cev{p}_{\hatR,\hatR}}
%      &
%      \mS
%      \ar[dashed]{l}[swap]{\cev{g}_{1}}
%      \ar{ur}{\snd\circ (\iter \iota_\hatR^\clubsuit)}
%      \ar{dr}[swap]{\gamma}
%      &
%      B(\hatR,\mS)
%      \\
%      & &
%      B(\mS,\mS) \ar{u}[swap]{B(\outl_\hatR,\mS)}
%    \end{tikzcd}
%  \end{equation}
%  Note that its right-hand triangle commutes by \Cref{lem:comm-diag-morph}.
%  Thus, the universal property of the pullback $\cev{T}_{\hatR,\hatR}$
%  yields a unique morphism $\cev{g}_{1}$ such the upper and lower
%  left-hand parts commute. 
%  Analogously, we obtain $\vec{g}_{1}\c \mS\to \vec{T}_{\hatR,\hatR}$
%  such that
%  \begin{equation}\label{diag:g1}
%    \vec{q}_{\hatR,\hatR}\circ \vec{g}_{1}
%    =
%    \gamma \qqand \vec{p}_{\hatR,\hatR}\circ \vec{g}_{1} = \snd\circ
%    (\iter \ini_\hatR^\clubsuit).
%  \end{equation}
%  
\item We define the morphism $f\c \mS\to E_{\hatR,\hatR}$ via primitive recursion as follows:
\begin{equation}\label{diag:g1}
\begin{tikzcd}[column sep=3.7em]
\Sigma(\mS) \ar{rrrr}{\iota} \ar{d}[swap]{\Sigma\langle \id,f\rangle} & & & & \mS \ar{d}{f} \\
\Sigma(\mS\times E_{\hatR,\hatR}) \ar{r}{\Sigma(r_1\times \id)} & \Sigma(\hatR \times E_{\hatR,\hatR}) \ar{r}{t_{\hatR,\hatR}} & E_{\hatR,(\Sigmas(\hatR\vee\hatR))^\dag} \ar{r}{\barBs(\id,\ol{\Sigma}^\star \nabla)_1}  & E_{\hatR,(\Sigmas \hatR)^\dag} \ar{r}{\barBs(\id,\widehat{(\iota,\iota_\hatR)^\dag})_1 } & E_{\hatR,\hatR} 
\end{tikzcd}
\end{equation}
Here $t_{\hatR,\hatR}$ is defined in \Cref{not:b-rho}, and $\widehat{(\iota,\iota_\hatR)^\dag}\c
\ol{\Sigma}^\star(\mS,\hatR)\to (\mS,\hatR)$ is the
$\ol{\Sigma}^\star$-algebra corresponding to the
$\ol{\Sigma}$-algebra $(\iota,\iota_\hatR)^\dag\c
\ol{\Sigma}(\mS,\hatR)\to (\mS,\hatR)$, cf.\ \Cref{prop:free-monad-lift} and its proof.
\item The morphism $f\c \mS\to E_{\hatR,\hatR}$ satisfies
\begin{equation}\label{eq:gamma-vs-g1} \gamma = \outl_{\hatR,\hatR}\comp f \qqand \gamma = \outr_{\hatR,\hatR}\comp f. \end{equation}
By symmetry it suffices to prove the first equation, which follows from the observation that the morphism $\outl_{\hatR,\hatR}\comp f$ satisfies the commutative diagram \eqref{diag:gamma} defining $\gamma$:
\[  
\begin{tikzcd}
\Sigma(\mS) \ar{rrrr}{\ini} \ar{d}[swap, description]{\Sigma\langle\id,f\rangle} & & & & \mS \ar{d}{f}\\
\Sigma(\mS\times E_{\hatR,\hatR}) \ar{r}{\Sigma(r_1\times \id)} \ar{d}[swap, description]{\Sigma(\id\times \outl_{\hatR,\hatR})} & \Sigma(\hatR\times E_{\hatR,\hatR}) \ar[phantom]{dr}[description]{\eqref{eq:t-diagram}} \ar{dl}[near start]{\Sigma(\outl_{\hatR}\times \outl_{\hatR,\hatR})} \ar{r}{t_{\hatR,\hatR}} & E_{\hatR,(\Sigmas(\hatR\vee\hatR))^\dag} \ar{d}{\outl_{\hatR,(\Sigmas(\hatR\vee\hatR))^\dag}} \ar{r}{\barBs(\id,\ol{\Sigma}^\star\nabla)_1} & E_{\hatR,(\Sigmas\hatR)^\dag} \ar{d}{\outl_{\hatR,(\Sigmas\hatR)^\dag}} \ar{r}{\barBs(\id,\widehat{(\ini,\ini_\hatR)^\dag})_1} & E_{\hatR,\hatR} \ar{d}{\outl_{\hatR,\hatR}} \\
\Sigma(\mS\times B(\mS,\mS)) \ar{rr}{\rho_{\mS,\mS}} & & B(\mS,\Sigmas(\mS+\mS)) \ar{r}{B(\mS,\Sigmas\nabla)} & B(\mS,\Sigmas(\mS)) \ar{r}{B(\mS,\hat\ini)} & B(\mS,\mS) \\
\end{tikzcd}
\] 
\item Now consider the composite graph
  $\barB((\mS,\hatR),(\mS,\hatR))\smc
  \barB((\mS,\mS),(\mS,R))$.
  The universal property of the corresponding pullback
  $E_{\hatR,\hatR}\smc E_{\mS,R}$ yields a unique morphism $g$
  making the diagram below commute, where $\pi$ and $\pi'$ are the
  left and right projection of the pullback:
  \begin{equation}\label{diag:g2}
    \begin{tikzcd}
      \mS \ar[swap]{d}{f}
      &
      R
      \ar{l}[swap]{\outl_R}
      \ar[bend left=2em]{dr}{\wt{\gamma}_R}
      \ar{d}{g}
      &
      \\
      E_{\hatR,\hatR}
      &
      \ar{l}[swap]{\pi} \ar{r}{\pi'} E_{\hatR,\hatR}\smc E_{\mS,R}
      &
      E_{\mS,R}
    \end{tikzcd}
  \end{equation}
  Indeed, by definition $E_{\hatR,\hatR}\smc E_{\mS,R}$ is the pullback
  of $\outr_{\hatR,\hatR}$ and $\outl_{\mS,R}$, and the commutative
  diagram below shows that these two morphisms merge
  $f\comp \outl_R$ and $\wt{\gamma}_R$.
  \[
    \begin{tikzcd}[row sep=3em]
      R
      \ar[phantom]{dr}{\eqref{eq:gamma-R}}
      \ar{d}[swap]{\wt{\gamma}_R}
      \ar{r}{\outl_R}
      &
      \mS
      \ar{drr}{\gamma}
      \ar{rr}{f}
      & { }&
      E_{\hatR,\hatR}
      \ar{d}{\outr_{\hatR,\hatR}}
      &
      \\
      E_{\mS,R} 
	  \ar[phantom]{urrr}[description, pos=.75, xshift=2em]{\eqref{eq:gamma-vs-g1}}
      \ar{rrr}[swap]{\outl_{\mS,R}}
      & { } &{ } &
      B(\mS,\mS)
    \end{tikzcd}
  \]
\item\label{prop:howe:4} Next, we observe that there exists a $\Gra_{B(\mS,\mS)}(\C)$-morphism
\begin{equation}\label{eq:morphism-k}
 k\c \barB\big((\mS,\hatR),(\mS,\hatR)\big)\smc \barB\big((\mS,\mS),(\mS,R)\big) \to  \barB\big((\mS,\hatR),\, (\mS,\hatR) \big).
\end{equation}
This is shown by the following computation:
  \begin{align*}
    ~& \barB\big((\mS,\hatR),(\mS,\hatR)\big)\smc \barB\big((\mS,\mS),(\mS,R)\big) \\
    \leq~& \barB\big((\mS,\hatR),(\mS,\hatR)\big)\bullet \barB\big((\mS,\mS),(\mS,R)\big)
    \\
    \leq~&
    \barB\big((\mS,\hatR),\, (\mS,\hatR)\bullet (\mS,R)\big)
    \\
    \leq~&
    \barB\big((\mS,\hatR),\, (\mS,\hatR) \big).
  \end{align*}
The first step holds by definition of relation composition $\bullet$; the second step follows from \ref{eq:good-for-simulations-bifunctor-3}; the third step uses that $(\mS,\hatR)$ is
  weakly transitive by \Cref{lem:howe-props}.
\item\label{prop:howe:5} Finally, we have the following commutative diagram:
\[  
\begin{tikzcd}[row sep=3em]
\mS \ar{d}[swap]{\gamma} \ar{dr}{f} & & R \ar{ll}[swap]{\outl_R} \ar{rr}{\outr_R} \ar{d}{g}  \ar{dr}{\wt{\gamma}_R} & & \mS \ar{d}{\wt{\gamma}} \\
B(\mS,\mS) \ar[phantom]{urr}[pos=.15, xshift=-1.5em]{\eqref{eq:gamma-vs-g1}} & E_{\hatR,\hatR} \ar[phantom]{dr}{\eqref{eq:morphism-k}} \ar[phantom]{ur}{\eqref{diag:g2}} \ar{l}[swap]{\outl_{\hatR,\hatR}} & E_{\hatR,\hatR}\smc E_{\mS,R} \ar[phantom]{urr}[pos=.15, xshift=-1.5em]{\eqref{eq:gamma-vs-g1}}  \ar{l}[swap]{\pi} \ar{d}{k_1} \ar{r}{\pi'} & E_{\mS,R} \ar[phantom]{dl}{\eqref{eq:morphism-k}} \ar[phantom]{ur}{\eqref{eq:gamma-R}} \ar{r}{\outr_{\mS,R}} & B(\mS,\mS) \\
& & E_{\hatR,\hatR} \ar[bend left=1em]{ull}{\outl_{\hatR,\hatR}} \ar[bend right=1em]{urr}[swap]{\outr_{\hatR,\hatR}} & & 
\end{tikzcd}
\]
The universal property of the pullback \eqref{diag:P} yields a unique $\beta^\lft_1\c R\to P$ such that $\outl_R=\outl_P\comp \beta^\lft_1$ and $\outr_R=\outr_P\comp \beta^\lft_1$ and $k_1\comp g = p\comp \beta^\lft_1$. By the first two equations, we thus obtain the $\Rel_{\mS}(\C)$-morphism
\[ \beta^\lft=(\id,\beta^\lft_1)\c (\mS,R)\to (\mS,P). \]
\end{enumerate}

\medskip\noindent\textbf{Construction of $\beta^\rgt\c \iota_\star\ol{\Sigma}((\mS,\hatR)\times_{\mS} (\mS,P))\bullet (\mS,R) \to (\mS,P)$:}
\begin{enumerate}
\item Recall that the product $(\mS,\hatR)\times_{\mS} (\mS,P)$ is the relation $(\mS,Q)$ where $Q$ is the pullback
\begin{equation}\label{diag:Q}
  \begin{tikzcd}[column sep=5em]
    Q
    \pullbackangle{-45}
    % \ar[phantom]{dr}[very near start]{\pb{2}{180}}
    \ar{d}[swap]{q_{\hatR}} \ar{r}{q_{P}}
    &
    P \ar{d}{\langle \outl_{P},\, \outr_{P}\rangle}
    \\  
    \hatR \ar{r}{\langle \outl_\hatR,\, \outr_\hatR \rangle}
    & \mS \times \mS 
  \end{tikzcd}
\end{equation}
and
\begin{equation}\label{eq:outq}
  \outl_Q=\outl_\hatR\comp q_\hatR,\qqand \outr_Q=\outr_\hatR\comp
  q_\hatR.
\end{equation}
%The product projections are the $\Rel_\mS(\C)$-morphisms 
%\[(\id_{\mS},q_\hatR)\c (\mS,Q)\to (\mS,\hatR) \qqand (\id_{\mS},q_{P})\c (\mS,Q)\to (\mS,P).\]

\item\label{prop:howe:2:2}
Consider the $\C$-morphism $h\c \Sigma Q\to E_{\hatR,\hatR}$ defined as the composite
\[ 
\begin{tikzcd}[column sep=2.1em]
h\;=\;(\,\Sigma Q  \ar{r}[yshift=.3em]{\Sigma\langle q_\hatR,q_{P}\rangle} & \Sigma(\hatR\times {P})  \ar{r}[yshift=.3em]{\Sigma(\id\times p)} & \Sigma(\hatR\times E_{\hatR,\hatR}) \ar{r}{t_{\hatR,\hatR}} & E_{\hatR,(\Sigmas(\hatR\vee\hatR))^\dag} \ar{r}[yshift=.3em]{\barBs(\id,\ol{\Sigma}^\star\nabla)_1} & E_{\hatR,(\Sigmas\hatR)^\dag}  \ar{r}[yshift=.3em]{\barBs(\id,\widehat{(\ini,\ini_\hatR)^\dag})_1} & E_{\hatR,\hatR}\,).
\end{tikzcd}
\]
% Let $h\c \ol{\Sigma}((\mS,\hatR)\times_{\Rel_\mS(\C)} (\mS,P))\to \barB((\mS,\hatR),(\mS,\hatR))$ be the $\Rel(\C)$-morphism given by
%\begin{equation}\label{diag:h1}  
%\begin{tikzcd}
%\ol{\Sigma}((\mS,\hatR)\times_{\mS} (\mS,P)) \ar[equals]{d} \\
%\ol{\Sigma}(\mS,Q) \ar{d}{\ol{\Sigma}\langle (\id_{\mS},q_\hatR),\, (\id_\mS,q_{P}) \rangle} \\
%\ol{\Sigma}((\mS,\hatR)\times (\mS,P)) \ar{d}{\ol{\Sigma}((\mS,\hatR)\times (\gamma,p))} \\
%\ol{\Sigma}((\mS,\hatR)\times \barB(\mS,\hatR),(\mS,\hatR)) \ar{d}{\barrho_{(\mS,\hatR),(\mS,\hatR)}} \\
%\barB((\mS,\hatR), \ol{\Sigma}^\star((\mS,\hatR)+(\mS,\hatR))) \ar{d}{\barBs((\mS,\hatR),\ol{\Sigma}^\star\nabla)} \\
%\barB((\mS,\hatR), \ol{\Sigma}^\star(\mS,\hatR)) \ar{d}{\barBs((\mS,\hatR), \widehat{(\iota,\iota_\hatR)} )} \\
%\barB((\mS,\hatR), (\mS,\hatR)) 
%\end{tikzcd}
%\end{equation}
%\item 
We claim that the diagram below commutes laxly, where $\outl_{\Sigma Q}=\ini\comp \Sigma \outl_Q$ and $\outr_{\Sigma Q}=\ini\comp \Sigma \outr_Q$ are the projections of the relation $\ini_\star\ol{\Sigma}(\mS,Q)$:
\begin{equation}\label{eq:h1-diag}
  \begin{tikzcd}
\mS \ar[phantom]{dr}[description, pos=.4]{\deq{-45}} \ar{d}[swap]{\gamma} & \Sigma Q \ar[phantom]{dr}[description, pos=.4]{\dleq{45}} \ar{l}[swap]{\outl_{\Sigma Q}} \ar{d}{h} \ar{r}{\outr_{\Sigma Q}} & \mS \ar{d}{\wt{\gamma}} \\
B(\mS,\mS)  & \ar{l}[swap]{\outl_{\hatR,\hatR}} E_{\hatR,\hatR} \ar{r}{\outr_{\hatR,\hatR}} & B(\mS,\mS)
\end{tikzcd}
\end{equation}
This is proven by the two diagrams below. In the part $(\ast)$ of the second diagram, we make use of our assumption that $(\mS,\ini,\wt{\gamma})$ is a lax $\rho$-bialgebra.
\[
\begin{tikzcd}[column sep=1.5em, row sep=3em]
\Sigma Q  \ar[shiftarr = {yshift=20}]{rrrrr}{h} \ar{r}{\Sigma\langle q_\hatR,q_{P}\rangle} \ar{dd}[swap]{\Sigma\outl_Q} \ar{dr}[description, pos=.7, xshift=-1em]{\Sigma\langle \outl_Q,\outl_Q\rangle} & \Sigma(\hatR\times {P}) \ar[phantom]{dr}[description]{\eqref{diag:P}}  \ar{r}{\Sigma(\id\times p)} \ar{d}[description, pos=.3]{\Sigma(\outl_\hatR\times \outl_{P})} & \Sigma(\hatR\times E_{\hatR,\hatR}) \ar[phantom]{dr}[description]{\eqref{eq:t-diagram}} \ar{r}{t_{\hatR,\hatR}} \ar{d}[description]{\Sigma(\outl_{\hatR}\times \outl_{\hatR,\hatR})} & E_{\hatR,(\Sigmas(\hatR\vee\hatR))^\dag} \ar{r}{\barBs(\id,\ol{\Sigma}^\star\nabla)_1} \ar{d}[description]{\outl_{\hatR,(\Sigmas(\hatR\vee\hatR))^\dag}} & E_{\hatR,(\Sigmas\hatR)^\dag}  \ar{r}{\barBs(\id,\widehat{(\ini,\ini_\hatR)^\dag})_1} \ar{d}[description]{\outl_{\hatR,(\Sigmas\hatR)^\dag}} & E_{\hatR,\hatR} \ar{dd}{\outl_{\hatR,\hatR}} \\
& \Sigma(\mS\times \mS) \ar{r}{\Sigma(\id\times \gamma)} & \Sigma(\mS\times B(\mS,\mS)) \ar{r}{\rho_{\mS,\mS}} & B(\mS,\Sigmas(\mS+\mS))  \ar{r}[yshift=.5em]{B(\id,\Sigmas\nabla)} & B(\mS,\Sigmas(\mS)) \ar{dr}{B(\id,\hat\ini)} & \\
\Sigma(\mS) \ar{ur}{\Sigma\langle\id,\id\rangle} \ar{urr}[swap]{\Sigma\langle \id,\gamma\rangle} \ar{rr}{\ini} & & \mS \ar[phantom]{ur}[description]{\text{\eqref{diag:gamma}}} \ar{rrr}{\gamma} & & & B(\mS,\mS) 
\end{tikzcd}
\]
\[
\begin{tikzcd}[scale cd=.98, column sep=1.5em, row sep=3em]
\Sigma Q  \ar[shiftarr = {yshift=20}]{rrrrr}{h} \ar{r}{\Sigma\langle q_\hatR,q_{P}\rangle} \ar{dd}[swap]{\Sigma\outr_Q} \ar{dr}[description, pos=.7, xshift=-1em]{\Sigma\langle \outr_Q,\outr_Q\rangle} & \Sigma(\hatR\times {P}) \ar[phantom]{dr}[description]{\eqref{diag:P}} \ar{r}{\Sigma(\id\times p)} \ar{d}[description, pos=.3]{\Sigma(\outr_\hatR\times \outr_{P})} & \Sigma(\hatR\times E_{\hatR,\hatR}) \ar[phantom]{dr}[description]{\eqref{eq:t-diagram}} \ar{r}{t_{\hatR,\hatR}} \ar{d}[description]{\Sigma(\outr_{\hatR}\times \outr_{\hatR,\hatR})} & E_{\hatR,(\Sigmas(\hatR\vee\hatR))^\dag} \ar{r}{\barBs(\id,\ol{\Sigma}^\star\nabla)_1} \ar{d}[description]{\outr_{\hatR,(\Sigmas(\hatR\vee\hatR))^\dag}} & E_{\hatR,(\Sigmas\hatR)^\dag}  \ar{r}{\barBs(\id,\widehat{(\ini,\ini_\hatR)^\dag})_1} \ar{d}[description]{\outr_{\hatR,(\Sigmas\hatR)^\dag}} & E_{\hatR,\hatR} \ar{dd}{\outr_{\hatR,\hatR}} \\
& \Sigma(\mS\times \mS) \ar{r}{\Sigma(\id\times \wt{\gamma})} & \Sigma(\mS\times B(\mS,\mS)) \ar{r}{\rho_{\mS,\mS}} & B(\mS,\Sigmas(\mS+\mS))  \ar{r}[yshift=.5em]{B(\id,\Sigmas\nabla)} & B(\mS,\Sigmas(\mS)) \ar{dr}{B(\id,\hat\ini)} & \\
\Sigma(\mS) \ar{ur}{\Sigma\langle\id,\id\rangle} \ar{urr}[swap]{\Sigma\langle \id,\wt{\gamma}\rangle} \ar{rr}{\ini} & & \mS \ar[phantom]{ur}[description]{\dgeq{45}~(\ast)} \ar{rrr}{\wt{\gamma}} & & & B(\mS,\mS) 
\end{tikzcd}
\]
From \eqref{eq:h1-diag} and \ref{eq:good-for-simulations-bifunctor-1}, it follows that there exists a $\C$-morphism $l\c \Sigma Q\to E_{\hatR,\hatR}$ such that the following diagram commutes:
\begin{equation}\label{eq:h2-diag}
  \begin{tikzcd}
\mS \ar{d}[swap]{\gamma} & \Sigma Q \ar{l}[swap]{\outl_{\Sigma Q}} \ar{d}{l} \ar{r}{\outr_{\Sigma Q}} & \mS \ar{d}{\wt{\gamma}} \\
B(\mS,\mS)  & \ar{l}[swap]{\outl_{\hatR,\hatR}} E_{\hatR,\hatR} \ar{r}{\outr_{\hatR,\hatR}} & B(\mS,\mS)
\end{tikzcd}
\end{equation}

\item Again consider the composite graph
  $\barB((\mS,\hatR),(\mS,\hatR))\smc
  \barB((\mS,\mS),(\mS,R))$.
  The universal property of the corresponding pullback
  $E_{\hatR,\hatR}\smc E_{\mS,R}$ yields a unique morphism $m$
  making the diagram below commute, where the upper row refers to the composite graph $\ini_\star\ol{\Sigma}(\mS,Q)\smc (\mS,R)$.
\begin{equation}\label{eq:l-diag}
\begin{tikzcd}
\Sigma Q \ar{d}[swap]{l} & \Sigma Q\smc R \ar{l}[swap]{\pi_{\Sigma Q\smc R,\Sigma Q}} \ar{d}{m} \ar{r}{\pi_{\Sigma Q\smc R, R}} & R \ar{d}{\dbtilde{\gamma}_R} \\
E_{\hatR,\hatR} & E_{\hatR,\hatR} \smc E_{\mS,R} \ar{l}[swap]{\pi} \ar{r}{\pi'} & E_{\mS,R} 
\end{tikzcd}
\end{equation}
  Indeed, by definition $E_{\hatR,\hatR}\smc E_{\mS,R}$ is the pullback
  of $\outr_{\hatR,\hatR}$ and $\outl_{\mS,R}$, and the
  diagram below shows that these two morphisms merge
  $l\comp \pi_{\Sigma Q\smc R, \Sigma Q}$ and $\dbtilde{\gamma}_R\comp \pi_{\Sigma Q\smc R,R}$.
\[
\begin{tikzcd}
\Sigma Q \smc R \ar{r}{\pi_{\Sigma Q\smc R,\Sigma Q}} \ar{dd}[swap]{\pi_{\Sigma Q\smc R, R}} & \Sigma Q \ar{r}{l} \ar{d}{\outr_{\Sigma Q}} \ar[phantom]{drd}[description]{\eqref{eq:h2-diag}} & E_{\hatR,\hatR} \ar{dd}{\outr_{\hatR,\hatR}} \\
& \mS \ar[phantom]{d}[description]{\eqref{eq:gamma-R}} \ar{dr}{\wt{\gamma}} & \\
R \ar{ur}{\outl_R} \ar{r}{\dbtilde{\gamma}_R}  & E_{\mS,R} \ar{r}{\outl_{\mS,R}} & B(\mS,\mS) 
\end{tikzcd}
\]
\item Finally, we have the commutative diagram below:
\[
\begin{tikzcd}
\mS \ar[phantom]{dr}[description]{\eqref{eq:h1-diag}} \ar{d}[swap]{\gamma} & \Sigma Q \ar[phantom]{dr}[description]{\text{\eqref{eq:l-diag}}} \ar{l}[swap]{\outl_{\Sigma Q}} \ar{d}{l} & \Sigma Q\smc R  \ar[phantom]{dr}[description]{\text{\eqref{eq:l-diag}}} \ar{l}[swap]{\pi_{\Sigma Q\smc R,\Sigma Q}} \ar{r}{\pi_{\Sigma Q\smc R, R}} \ar{d}{m} & R \ar[phantom]{dr}[description]{{\eqref{eq:gamma-R}}} \ar{r}{\outr_R} \ar{d}{\dbtilde{\gamma}_R} & \mS \ar{d}{\wt{\gamma}} \\
B(\mS,\mS) & E_{\hatR,\hatR} \ar[phantom]{dr}[description]{\eqref{eq:morphism-k}} \ar{l}[swap]{\outl_{\hatR,\hatR}} & E_{\hatR,\hatR}\smc E_{\mS,R} \ar{d}{k_1} \ar{l}[swap]{\pi} \ar{r}{\pi'} & E_{\mS,R} \ar{r}{\outr_{\mS,\hatR}} & B(\mS,\mS) \\
& & E_{\hatR,\hatR} \ar[phantom]{ur}[description]{\eqref{eq:morphism-k}} \ar[bend left=1em]{ull}{\outl_{\hatR,\hatR}} \ar[bend right=1em]{urr}[swap]{\outr_{\hatR,\hatR}} & & 
\end{tikzcd}
\]
The universal property of the pullback \eqref{diag:P} yields a unique $\gamma^\rgt_1\c \Sigma Q\smc R\to P$ such that $\outl_{\Sigma Q}\comp \pi_{\Sigma Q\smc R,\Sigma Q}=\outl_P\comp \gamma^\rgt_1$ and $\outr_R\comp \pi_{\Sigma Q\smc R,R}=\outr_P\comp \gamma^\rgt_1$ and $k_1\comp m = p\comp \gamma^\rgt_1$. Thus
\[ \gamma^\rgt=(\id,\gamma^\rgt_1)\c \iota_\star\ol{\Sigma}((\mS,\hatR)\times_{\mS} (\mS,P))\smc (\mS,R) \to (\mS,P) \]
is a $\Gra_{\mS}(\C)$-morphism, which yields the $\Rel_\mS(\C)$-morphism
\[ \beta^\rgt =(\gamma^\rgt)^\dag\c \iota_\star\ol{\Sigma}((\mS,\hatR)\times_{\mS} (\mS,P))\bullet (\mS,R) \to (\mS,P).  \]
This concludes the proof.\qedhere
\end{enumerate}
\end{proof}

\section*{Proof details for \Cref{S:HO-specs}}
We will prove that the given data satisfy the \Cref{asm-sim}:
\begin{enumerate}
\item\label{lem:assumptions-satisfied-ho-1} The functor $\Sigma$ preserves strong epimorphisms.
\item\label{lem:assumptions-satisfied-ho-2} The functor $\barB$ is good for simulations.
\item\label{lem:assumptions-satisfied-ho-3} The higher-order GSOS law $\barrho$ admits a relation lifting.
\end{enumerate}

\begin{proof}
\begin{enumerate}
\item Every set functor preserves epimorphisms (= surjections) since the latter are precisely the right-invertible maps.
\item By definition, we have $\barB=\ol{P}\comp \barB_0$ where $\barPow$ and $\barB_0$ are the relation liftings of \Cref{rem:weak-vs-strong}. Condition  \ref{eq:good-for-simulations-bifunctor-1} holds by \Cref{ex:good-for-simulations}. For
 \ref{eq:good-for-simulations-bifunctor-2} we first observe that for each $X\in \Set$ the relation $\barPow(X,X)$ is given by $(\Pow X,\seq)$, hence it is reflexive. The following computation then proves $\barB((X,X),(Y,Y))$ to be reflexive:
\begin{align*}
(B(X,Y)),B(X,Y)) &= (\Pow B_0(X,Y),\Pow B_0(X,Y)) \\
&\leq \barPow(B_0(X,Y), B_0(X,Y)) \\ 
&\cong \barPow\,\barB_0((X,X),(Y,Y)) \\
&= \barB((X,X),(Y,Y)).
\end{align*}
In the second step we use the above observation about $\barPow$. The third step follows from \Cref{prop:olB-resp-identities}, using the fact that $\barB_0$ is the canonical lifting of $B_0$. For \ref{eq:good-for-simulations-bifunctor-3} note that
\begin{equation}\label{eq:pow-pres-comp} 
\barPow(X,R)\bullet \barPow(X,S) \leq \barPow((X,R)\bullet (X,S)) \qquad \text{for every $(X,R),(X,S)\in \Rel_X(\C)$},
\end{equation}
which is immediate from the definition of the Egli-Milner relation. Thus
\begin{align*}
\barB((X,R),(Y,S))\bullet \barB((X,X),(Y,S'))  &= \barPow(\barB_0((X,R),(Y,S))) \bullet \barPow(\barB_0((X,X),(Y,S'))) \\ 
&\leq \barPow(\barB_0((X,R),(Y,S))\bullet \barB_0((X,X),(Y,S'))) \\
&\leq \barPow(\barB_0((X,R),(Y,S)\bullet(Y,S'))) \\
&= \barB((X,R),(Y,S)\bullet(Y,S')).
\end{align*}
The second step uses \eqref{eq:pow-pres-comp}. The third step follows from \Cref{eq:olB-resp-rel-comp} and the fact that $\barB_0$ is the canonical relation lifting of $B_0$, see \Cref{sec:bifunctor-lift}. 
\item By definition of $\rho$ as the composite \eqref{eq:rho0-to-rho}, it suffices to show that for all relations $(X,R)$ and $(Y,S)$ the maps
\begin{align*}
\st_{X,Y} & \c X\times \Pow Y \to \Pow(X\times Y), \\
\delta_{X} & \c \Sigma\Pow X  \to \Pow\Sigma X, \\
\rho_{X,Y}^0 &\c \Sigma(X\times B_0(X,Y)) \to B_0(X,\Sigmas(X+Y)), 
\end{align*}
are relation-preserving with respect to the relations induced by the liftings $\ol{\Sigma}$, $\barPow$, and $\barB_0$. For the first two maps this is clear by definition of the Egli-Milner relation, and for the third one it follows from \Cref{cons:rho-lift-rel}, using again that $\barB_0$ is the canonical lifting of $B_0$ \qedhere
\end{enumerate}
\end{proof}

\section{Canonical Liftings of Bifunctors}
\label{sec:bifunctor-lift}
In this section we demonstrate how to construct canonical graph and relation liftings of mixed-variance bifunctors on $\C$.
\begin{definition}Let $B\c \C^\opp\times \C \to\C$ be a bifunctor.
\begin{enumerate}
\item A \emph{graph lifting} of $B$ is a bifunctor $\barB$ on $\Gra(\C)$ making the first diagram commute.
\item  A \emph{relation lifting} of $B$ is a bifunctor $\barB$ on $\Rel(\C)$ making the second diagram commute.
\end{enumerate}
\[
\begin{tikzcd}
 \Gra(\C)^\opp\times \Gra(\C)  
 \ar{d}[swap]{\under{-}^\opp\times \under{-}} \ar{r}{\barBs} & \Gra(\C) \ar{d}{\under{-}}  \\
\C^\opp \times \C \ar{r}{B}  & \C 
\end{tikzcd}
\quad\quad 
\begin{tikzcd}
 \Rel(\C)^\opp\times \Rel(\C)  
 \ar{d}[swap]{\under{-}^\opp\times \under{-}} \ar{r}{\barBs} & \Rel(\C) \ar{d}{\under{-}}  \\
\C^\opp \times \C \ar{r}{B}  & \C 
\end{tikzcd}
\]
\end{definition}
 Recall the relation lifting of $B_0(X,Y)=Y+Y^X$ in \Cref{rem:weak-vs-strong}\ref{rem:weak-vs-strong-1}. Its construction can be generalized to the present categorical setting:
\begin{construction}\label{cons:lifting-bifunctor}
For every bifunctor $B\c \C^\opp\times \C\to \C$ we construct a \emph{canonical graph lifting}
\[\barB=\barB_\Gra\c \Gra(\C)^\opp\times \Gra(\C) \to \Gra(\C). \]
\begin{enumerate}
\item Given $(X,R),(Y,S)\in \Gra(\C)$, we define $\barB((X,R),(Y,S))\in \Gra(\C)$ to be the graph
\begin{equation}\label{eq:olB}
\begin{tikzcd}[row sep=2em]
T_{R,S} \ar[bend right=2em]{d}[swap]{\outl_{R,S}}\ar[bend left=2em]{d}{\outr_{R,S}} \\
B(X,Y)
\end{tikzcd}
\end{equation}
where $T_{R,S}$ is obtained by the pullback below and $\outl_{R,S}$ and $\outr_{R,S}$ are given by
\begin{equation}\label{eq:out}
  \outl_{R,S}=\cev{p}_{R,S}\circ p_{R,S}
  \qqand
  \outr_{R,S}=\vec{q}_{R,S}\circ q_{R,S}.
\end{equation}
\begin{equation}\label{eq:bartrs}
\begin{tikzcd}[column sep=5em, row sep=3em]
  &
  {T}_{R,S}
  \pullbackangle{-90}
  \ar[bend right=4em]{ddl}[swap]{{p}_{R,S}}
  \ar[bend left=4em]{ddr}{{q}_{R,S}}
  % \ar[phantom]{d}[very near start]{\pb{2}{135}}
  &
  \\
  &
  B(R,S)  
  \ar[shift right=.5em]{d}[swap]{B(\id,\outl_S)} \ar[shift
  left=.5em]{d}{B(\id,\outr_S)}
  & \\
  \cev{T}_{R,S}
  \pullbackangle{0}
  % \ar[phantom]{r}[very near start]{\pb{2}{225}}
  \ar[bend right=2em]{dr}[swap]{\cev{p}_{R,S}}  \ar[bend left=2em]{ur}{\cev{q}_{R,S}}
  &
  B(R,Y)
  &  \vec{T}_{R,S}
  \pullbackangle{180}
  \ar[bend right=2em]{ul}[swap]{\vec{p}_{R,S}}
  \ar[bend left=2em]{dl}{\vec{q}_{R,S}}
  %\ar[phantom]{l}[very near start]{\pb{2}{45}}
  \\
& B(X,Y) \ar[shift left=.5em]{u}{B(\outl_R,\id)}  \ar[shift right=.5em]{u}[swap]{B(\outr_R,\id)} & 
\end{tikzcd}
\end{equation}
In addition, we put
\begin{equation}\label{eq:b}
  b_{R,S} \;:=\;\cev{q}_{R,S}\circ p_{R,S} = \vec{p}_{R,S}\circ
  q_{R,S}.
\end{equation}

\item\label{cons:lifting-bifunctor:2} Given $\Gra(\C)$-morphisms $h\c (X',R')\to (X,R)$ and $k\c (Y,S)\to (Y',S')$, we define
  \[
    \barB(h,k)\c \barB((X,R),(Y,S))\to \barB((X',R'),(Y',S')),
  \]
  which is%
  \smnote{@Henning: to say it in Larry's words: you're using `i.e.'
    way to often! We should all try and cut down on it. (I believe
    Pitts' book does not have a single `i.e.' and uses `that is' sparingly.)}
  a pair of morphisms $\barB(h,k)_0$ and $\barB(h,k)_1$ making the following diagram commute:
\begin{equation}\label{diag:Bbar10}
  \begin{tikzcd}[row sep=2em, column sep = 7em] 
    T_{R,S}
    \ar[bend right=2em]{d}[swap]{\outl_{R,S}}
    \ar[bend left=2em]{d}{\outr_{R,S}}
    \ar{r}{\barBs(h,k)_1}
    &
    T_{R',S'}
    \ar[bend right=2em]{d}[swap]{\outl_{R',S'}}
    \ar[bend left=2em]{d}{\outr_{R',S'}}
    \\
    B(X,Y)
    \ar{r}{\barBs(h,k)_0}
    &
    B(X',Y')
\end{tikzcd}
\end{equation}
We put
\[
  \barB(h,k)_0
  = \big(B(X,Y)\xra{B(h_0,k_0)} B(X',Y') \big).
\]
For the definition of $\barB(h,k)_1$, we first use the universal
property of the pullback $\cev{T}_{R',S'}$ to obtain a unique morphism
$\cev{d}_{h,k}$ making the outside and the upper rectangle of the
diagram below commute. Note that the central and the lower rectangle
commute because $h$ and $k$ are $\Gra(\C)$-morphisms and the left and
right parts commute by \eqref{eq:bartrs}
\begin{equation}\label{diag:cevd}
\begin{tikzcd}[column sep=4em]
  &
  {T}_{R,S}
  \ar[shiftarr = {xshift=-50}]{ddd}[swap]{\outl_{R,S}}
  \ar[dashed]{r}{\cev{d}_{h,k}}
  \ar{d}[swap]{b_{R,S}}
  &
  \cev{T}_{R',S'}
  \ar{d}{\cev{q}_{R',S'}}
  \ar[shiftarr = {xshift=50}]{ddd}{\cev{p}_{R',S'}}
  \\
  { }
  &
  B(R,S)
  \descto[near start]{l}{\eqref{eq:bartrs}}
  \descto{rd}{\eqref{diag:hom}}
  \ar{r}{B(h_1,k_1)}
  \ar{d}[swap]{B(\id,\outl_S)}
  &
  B(R',S')
  \descto[near start]{r}{\eqref{eq:bartrs}}
  \ar{d}{B(\id,\outl_{S'})}
  &
  { }
  \\
  &
  B(R,Y) \ar{r}{B(h_1,k_0)}
  \descto{rd}{\eqref{diag:hom}}
  &
  B(R',Y')
  \\
  &
  B(X,Y) \ar{u}{B(\outl_R,\id)} \ar{r}{B(h_0,k_0)}
  &
  B(X',Y') \ar{u}[swap]{B(\outl_{R'},\id)} 
\end{tikzcd} 
\end{equation}
Analogously, we obtain a unique morphism $\vec{d}_{h,k}\c {T}_{R,S}\to \vec{T}_{R',S'}$ such that
\begin{equation}\label{eq:vecd}
  \vec{q}_{R',S'}\circ \vec{d}_{h,k}
  =
  B(h_0,k_0)\circ \outr_{R,S}
  \qqand
  \vec{p}_{R',S'}\circ \vec{d}_{h,k}
  =
  B(h_1,k_1)\circ b_{R,S}.
\end{equation}
In particular, we have
\[ \cev{q}_{R',S'}\circ \cev{d}_{h,k} = B(h_1,k_1)\circ b_{R,S} = \vec{p}_{R',S'} \circ \vec{d}_{h,k}, \]
so the universal property of the pullback $T_{R',S'}$ yields a unique morphism $\barB(h,k)_1$ making both triangles in the diagram below commute:
\begin{equation}\label{diag:Bbar}
  \begin{tikzcd}
    & {T}_{R,S} \ar[bend right=2em]{dl}[swap]{\cev{d}_{h,k}} \ar[bend left=2em]{dr}{\vec{d}_{h,k}} \ar[dashed]{d}{\barBs(h,k)_1}  & \\
    \cev{T}_{R',S'} &  \ar{l}[swap]{p_{R',S'}} \ar{r}{{q}_{R',S'}} T_{R',S'} & \vec{T}_{R',S'} 
  \end{tikzcd}
\end{equation}
\end{enumerate}
\end{construction}
\begin{lemma}\label{lem:dbar-welldef}
The assignment $\barB$ is a functor, and it forms a graph lifting of $B$.
\end{lemma}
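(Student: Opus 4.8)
The plan is to verify the three ingredients packaged into the statement: that $\barB$ sends graph morphisms to graph morphisms (well-definedness on arrows), that it preserves identities and composites (functoriality, contravariantly in the first argument), and that it commutes with the projection $\under{-}$ (the lifting condition). The action on objects is unproblematic: since $\C$ is complete, all pullbacks in \eqref{eq:bartrs} exist, so $\barB((X,R),(Y,S))$ is a well-defined graph. The lifting condition is then immediate, because $\under{\barB((X,R),(Y,S))}=B(X,Y)$ by \eqref{eq:olB} and $\under{\barB(h,k)}=\barB(h,k)_0=B(h_0,k_0)$ by \cref{cons:lifting-bifunctor}; as $\under{-}$ is $(X,R)\mapsto X$ and $h\mapsto h_0$, this is exactly $B$ applied to the underlying data.

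First I would confirm that $\barB(h,k)$ really is a $\Gra(\C)$-morphism, that is, that \eqref{diag:Bbar10} commutes. Postcomposing the left triangle of \eqref{diag:Bbar} with $\cev p_{R',S'}$ gives $\outl_{R',S'}\circ\barB(h,k)_1=\cev p_{R',S'}\circ\cev d_{h,k}$, and the outer rectangle of \eqref{diag:cevd} identifies the latter with $B(h_0,k_0)\circ\outl_{R,S}=\barB(h,k)_0\circ\outl_{R,S}$; the equation for $\outr$ is symmetric, using \eqref{eq:vecd} and $\outr_{R,S}=\vec q_{R,S}\circ q_{R,S}$. A companion identity that I would record here, as it is the workhorse for functoriality, is the compatibility of $\barB(h,k)_1$ with the diagonal legs: postcomposing \eqref{diag:Bbar} with $\cev q_{R',S'}$ and invoking the upper rectangle of \eqref{diag:cevd} yields $b_{R',S'}\circ\barB(h,k)_1=B(h_1,k_1)\circ b_{R,S}$.

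The core of the argument is functoriality, which I would reduce throughout to uniqueness of morphisms into the pullbacks $T$. For identities, the characterizing equations of $\cev d$ and $\vec d$ together with \eqref{eq:out} and \eqref{eq:b} show $\cev d_{\id,\id}=p_{R,S}$ and $\vec d_{\id,\id}=q_{R,S}$, whence $\barB(\id,\id)_1=\id_{T_{R,S}}$ and so $\barB(\id,\id)=\id$. For composites, given $h,h'$ and $k,k'$ I would prove $\cev d_{h',k'}\circ\barB(h,k)_1=\cev d_{h\circ h',\,k'\circ k}$ by checking the two defining equations of the right-hand side: the one for $\cev p$ combines the graph-morphism property of $\barB(h,k)$ just established with the bifunctoriality $B(h'_0,k'_0)\circ B(h_0,k_0)=B(h_0\circ h'_0,\,k'_0\circ k_0)$, while the one for $\cev q$ combines the $b$-compatibility identity with the analogous bifunctoriality $B(h'_1,k'_1)\circ B(h_1,k_1)=B(h_1\circ h'_1,\,k'_1\circ k_1)$ on the $(-)_1$-components. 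The symmetric computation gives $\vec d_{h',k'}\circ\barB(h,k)_1=\vec d_{h\circ h',\,k'\circ k}$. Since $p_{R'',S''}\circ\barB(h',k')_1=\cev d_{h',k'}$ and $q_{R'',S''}\circ\barB(h',k')_1=\vec d_{h',k'}$, the morphisms $\barB(h',k')_1\circ\barB(h,k)_1$ and $\barB(h\circ h',\,k'\circ k)_1$ have equal composites with both projections of $T_{R'',S''}$, hence coincide by uniqueness; together with bifunctoriality on the $0$-components this yields $\barB(h',k')\circ\barB(h,k)=\barB(h\circ h',\,k'\circ k)$.

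I expect the only real difficulty to be bookkeeping rather than conceptual: one must track the contravariant variance of $B$ in its first argument consistently, and carefully align the universal properties of the three nested pullbacks $\cev T$, $\vec T$ and $T$ so that the induced comparison morphisms are pinned down by uniqueness. No idea beyond these universal properties and the functoriality of $B$ enters, and every commuting cell needed has already been isolated in \eqref{eq:bartrs}, \eqref{diag:cevd}, \eqref{eq:vecd} and \eqref{diag:Bbar}.
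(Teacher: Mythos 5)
Your proposal is correct and follows essentially the same route as the paper's proof: establish the graph-morphism property and the $b$-compatibility identity (the paper's diagram \eqref{eq:brs-diag}) from the outer and upper rectangles of \eqref{diag:cevd}, then derive preservation of identities and composites by uniqueness of induced morphisms into the pullbacks $\cev{T}$, $\vec{T}$ and $T$, with the lifting condition being immediate. Your intermediate identity $\cev{d}_{h',k'}\circ\barB(h,k)_1=\cev{d}_{h\circ h',\,k'\circ k}$ is just a repackaging of the paper's direct verification that the composite satisfies the defining triangles of $\barB(h\circ h',k'\circ k)_1$; the underlying diagram chases are identical.
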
 
\begin{rem}
In the proof and later on, we will repeatedly make use of the
following diagram which commutes for every $h\c (X,R')\to (X,R)$ and $k\c (Y,S)\to (Y,S')$:
\begin{equation}\label{eq:brs-diag}
\begin{tikzcd}
T_{R,S} \ar{dr}{\cev{d}_{h,k}} \ar{rr}{\ol B(h,k)_1}
\ar{dd}[swap]{b_{R,S}}
&
\descto{d}{\eqref{diag:Bbar}}
&
T_{R',S'}
\ar{dd}{b_{R',S'}}
\ar{dl}[swap, inner sep =0]{p_{R',S'}}
\\
&
\cev{T}_{R',S'}
\descto{r}{\eqref{eq:b}}
\descto[near start]{dl}{\eqref{diag:cevd}}
\ar{dr}[inner sep =0]{\cev{q}_{R',S'}}
&
{ }
\\
B(R,S) \ar{rr}{B(h_1,k_1)} & & B(R',S')
\end{tikzcd}
\end{equation}
\end{rem} 

\begin{proof}
\begin{enumerate}
\item\label{lem:dbar-welldef:1} We show that $\barB(h,k)$ defined in
  \Cref{cons:lifting-bifunctor}\ref{cons:lifting-bifunctor:2} is a
  $\Gra(\C)$-morphism. In fact, the commutative diagram below shows
  that $\barB(h,k)$ is compatible with left projections; the argument
  for right projections is symmetric.
\begin{equation}\label{diag:Bbar-gra}
\begin{tikzcd}
  T_{R,S}
  \ar{rr}{\barB(h,k)_1}
  \ar{dd}[swap]{\outl_{R,S}}
  \ar{dr}{\cev{d}_{h,k}}
  &
  \descto{d}{\eqref{diag:Bbar}}
  &
  T_{R',S'}
  \ar{dd}{\outl_{R',S'}}
  \ar{dl}[swap,inner sep = 0]{p_{R',S'}}
  \\
  &
  \cev{T}_{R,S}
  \descto{r}{\eqref{eq:out}}
  \descto[near start]{ld}{\eqref{diag:cevd}}
  \ar{rd}[inner sep = 0]{\cev{p}_{R',S'}}
  &
  { } 
  \\
  B(X,Y) \ar{rr}{\barBs(h,k)_0 = B(h_0,k_0)} & & B(X',Y')
\end{tikzcd}
\end{equation}

\item We show that $\barB$ preserves identity morphisms, that is
  \[
    \barB(\id_{(X,R)},\id_{(Y,S)})=\id_{\ol{B}((X,R),(X,R))}
  \]
  for all graphs $(X,R)$ and $(Y,S)$. In the following we omit the
  subscripts of $\id$. By the uniqueness part of the definition of
  $\barB(\id,\id)_1$, it suffices to prove that the following diagram
  commutes:
  \[
    \begin{tikzcd}
      & {T}_{R,S} \ar[bend right=2em]{dl}[swap]{\cev{d}_{\id,\id}} \ar[bend left=2em]{dr}{\vec{d}_{\id,\id}} \ar[dashed]{d}{\id}  & \\
      \cev{T}_{R,S} &  \ar{l}[swap]{p_{R,S}} \ar{r}{{q}_{R,S}} T_{R,S} & \vec{T}_{R,S} 
    \end{tikzcd}
\]
The left-hand triangle commutes because it commutes when postcomposed
with the pullback projections $\cev{q}_{R,S}$ and $\cev{p}_{R,S}$, as
shown by the two commutative diagrams below. The argument for the
right-hand triangle is analogous.
\[
  \begin{tikzcd}[column sep=4em]
    T_{R,S}
    \descto{rd}{\eqref{diag:cevd}}
    \ar{r}{\cev{d}_{\id,\id}}
    \ar[shiftarr = {xshift = -30}]{dd}[swap]{\id}
    \ar{d}[swap]{b_{R,S}}
    & 
    \cev{T}_{R,S}
    \ar[shiftarr = {xshift = 30}]{dd}{\id}
    \ar{d}[swap]{\cev{q}_{R,S}}
    \\
    \llap{\footnotesize\commu\ }B(R,S)
    \descto[pos = .4]{rd}{\eqref{eq:b}}
    \ar{r}{B(\id,\id)=\id}
    &
    B(R,S)\rlap{\ \footnotesize\commu}
    \\
    T_{R,S}
    \ar{r}{p_{R,S}}
    \ar{u}{b_{R,S}}
    &
    \cev{T}_{R,S}
    \ar{u}[swap]{\cev{q}_{R,S}} 
\end{tikzcd}
\qquad
\begin{tikzcd}[column sep=4em]
  T_{R,S}
  \descto{rd}{\eqref{diag:cevd}}
  \ar{r}{\cev{d}_{\id,\id}}
  \ar[shiftarr = {xshift = -30}]{dd}[swap]{\id}
  \ar{d}[swap]{\outl_{R,S}}
  &
  \cev{T}_{R,S}
  \ar{d}{\cev{p}_{R,S}}
  \ar[shiftarr = {xshift = 30}]{dd}{\id}
  \\
  \llap{\footnotesize\commu\ }
  B(X,Y)
  \descto[pos = .4]{rd}{\eqref{eq:out}}
  \ar{r}{B(\id,\id)=\id}
  &
  B(X,Y)
  \rlap{\ \footnotesize\commu}
  \\
  T_{R,S}
  \ar{u}{\outl_{R,S}}
  \ar{r}{p_{R,S}}
  &
  \cev{T}_{R,S}
  \ar{u}[swap]{\cev{p}_{R,S}} 
\end{tikzcd}
\]
(Above and subsequently, when we indicate the reason why a part of a
diagram commutes unless it is easy to see without further reference.)
\item We show that $\barB$ preserves composition, i.e.\
\[ \barB(h\circ h',k'\circ k)=\barB(h',k')\circ \barB(h,k) \]  
for all graph morphisms $h\c (X',R')\to (X,R)$, $h'\c (X'',R'')\to (X',R')$, $k\c (Y,S)\to (Y',S')$ and $k'\c (Y',S')\to (Y'',S'')$. By the uniqueness part of the definition of $\barB(h\circ h',k'\circ k)_1$, it suffices to prove that the following diagram commutes:
\[
\begin{tikzcd}[column sep = 40]
& {T}_{R,S} \ar[bend right=3em]{ddl}[swap]{\cev{d}_{h\circ h', k'\circ k}} \ar[bend left=4em]{ddr}{\vec{d}_{h\circ h',k'\circ k}} \ar{d}{\barBs(h,k)_1}  & \\
& T_{R',S'} \ar{d}{\barBs(h',k')_1} & \\
\cev{T}_{R'',S''} &  \ar{l}[swap]{p_{R'',S''}} \ar{r}{{q}_{R'',S''}} T_{R'',S''} & \vec{T}_{R'',S''} 
\end{tikzcd}
\]
The left-hand part commutes because it commutes when postcomposed with the pullback projections $\cev{q}_{R'',S''}$ and $\cev{p}_{R'',S''}$, as shown by the two commutative diagrams below; in the second diagram, we use that $\barB(h,k)$ is a graph morphism, see~\Cref{lem:dbar-welldef:1}. The argument for the right-hand part is analogous.
\[
\begin{tikzcd}[column sep=4em]
  T_{R,S} \descto{ddr}{\eqref{eq:brs-diag}} \ar{rrr}{\id}
  \ar{dd}[swap]{\barBs(h,k)_1}
  & & &
  T_{R,S} \ar{dl}{b_{R,S}} \ar{dddd}{\cev{d}_{h\circ h', k'\circ k}}
  \\
  & &
  B(R,S)
  \ar{dl}[swap]{B(h_1,k_1)}
  \ar{dd}{B(h_1\circ h_1', k_1'\circ  k_1)}
  & \\
  T_{R',S'} \ar{r}{b_{R',S'}} \ar{dd}[swap]{\barBs(h',k')_1}
  &
  B(R',S') \ar{dr}[swap]{B(h_1',k_1')}
  \descto{r}{\commu}
  &
  { }
  \descto[pos = .8]{r}{\eqref{diag:cevd}}
  &
  { }
  \\
  & &
  B(R'',S'')
  \descto{d}{\eqref{eq:b}}
  &
  \\
  T_{R'',S''}
  \descto{uur}{\eqref{eq:brs-diag}}
  \ar{rrr}{p_{R'',S''}} \ar{urr}{b_{R'',S''}}
  & & { } &
  \cev{T}_{R'',S''} \ar{ul}[swap]{\cev{q}_{R'',S''}} 
\end{tikzcd}
\]
\[
\begin{tikzcd}[column sep=4em]
  T_{R,S} \ar{rrr}{\id} \ar{dd}[swap]{\barBs(h,k)_1}
  & & &
  T_{R,S} \ar{dl}{\outl_{R,S}} \ar{dddd}{\cev{d}_{h\circ h', k'\circ k}}
  \\
  & &
  B(X,Y) \ar{dl}[swap]{B(h_0,k_0)} \ar{dd}{B(h_0\circ h_0', k_0'\circ k_0)}
  \\
  T_{R',S'} \ar{r}{\outl_{R',S'}} \ar{dd}[swap]{\barBs(h',k')_1}
  &
  B(X',Y') \ar{dr}[swap]{B(h_0',k_0')}
    \descto{r}{\commu}
  \descto{luu}{\Cref{lem:dbar-welldef:1}}
  \descto{ldd}{\Cref{lem:dbar-welldef:1}}
  &
  { }
  \descto[pos = .8]{r}{\eqref{diag:cevd}}
  &
  { }
  \\
  & &
  B(X'',Y'')
  \descto{d}{\eqref{eq:out}}
  \\
  T_{R'',S''} \ar{rrr}{p_{R'',S''}} \ar{urr}{\outl_{R'',S''}}
  & & { } &
  \cev{T}_{R'',S''} \ar{ul}[swap]{\cev{p}_{R'',S''}} 
\end{tikzcd}
\]
\item The functor $\barB$ is a lifting of $B$, i.e.~the first diagram in
  \Cref{def:bifunctor-lift} commutes. Indeed,
\[
      \under{\barB((X,R),(Y,S))}
      =B(X,Y)=B(\under{(X,R)},\under{(Y,S)})
\]
and
\[
      \under{\barB(h,k)} = \barB(h,k)_0 = B(h_0,k_0) =
      B(\under{h},\under{k}).\tag*{\qedhere}
\] 
\end{enumerate}
\end{proof}

\begin{example}[$\C=\Set$]\label{ex:graph-lifting-bifunctor} Let us spell out the construction of the lifted bifunctor
\[\barB=\barB_\Gra\c \Gra(\Set)^\opp \times \Gra(\Set)\to \Gra(\Set)\] 
for $B\c \Set^\opp\times \Set\to \Set$. Recall that in $\Set$, the pullback of $f_i\c A_i \to B$ ($i=1,2$) is given by \[P= \{\, (a_1,a_2) \in A_1\times A_2: f_1(a_1)=f_2(a_2)  \,\}\] with the projections 
\[p_i\c P\to A_i, \qquad (a_1,a_2) \mapsto a_i.\] Thus, we arrive at the following concrete description of $\barB$:
\begin{enumerate}
\item The pullback ${T}_{R,S}$ in \eqref{eq:bartrs} is the set of all triples $(f,z,g)$ such that $f,g\in B(X,Y)$, $z\in B(R,S)$, 
\[ B(R,\outl_S)(z)=B(\outl_R,Y)(f) \qqand B(R,\outr_S)(z)=B(\outr_R,Y)(g), \]
and $\outl_{R,S}$ and $\outr_{R,S}$ are the projections $(f,z,g)\mapsto f$ and $(f,z,g)\mapsto g$, respectively. 
\item  The components of the morphism $\barB(h,k)$ are given by
\[ \barB(h,k)_0 = {B}(h_0,k_0)\c B(X,Y)\to B(X',Y') \]
and   
\[ \barB(h,k)_1\c T_{R,S}\to T_{R',S'},\qquad (f,z,g) \mapsto (B(h_0,k_0)(f), B(h_1,k_1)(z), B(h_0,k_0)(g) )  \] 
\end{enumerate}
\end{example}

\begin{construction}\label{cons:bifunctor-lift-rel}
For every bifunctor $B\c \C^\opp\times \C\to\C$ the \emph{canonical relation lifting} $\barB_\Rel$ is defined via the following commutative diagram:
\[ 
\begin{tikzcd}
\Rel(\C)^\opp\times\Rel(\C) \ar[hook]{d}  \ar{r}{\barB_\Rel} & \Rel(\C)  \\
\Gra(\C)^\opp\times\Gra(\C) \ar{r}{\barB_\Gra}  & \Gra(\C) \ar{u}[swap]{(-)^\dag}
\end{tikzcd}
 \]
\end{construction}

\begin{example}
The canonical relation lifting of $B_0(X,Y)=Y+Y^X$ on $\Set$ is the lifting $\barB_0$ described in \Cref{rem:weak-vs-strong}\ref{rem:weak-vs-strong-1}.
\end{example}
The following two propositions show that the canonical liftings satisfy the properties \ref{eq:good-for-simulations-bifunctor-2} and \ref{eq:good-for-simulations-bifunctor-3} of \Cref{def:good-for-simulations-bifunctor}:
\begin{proposition}\label{prop:olB-resp-identities}
For every $B\c \C^\opp\times \C\to \C$ and $X,Y\in \C$,
\begin{eqnarray*}
 (B(X,Y),B(X,Y)) & \cong & \barB_\Gra((X,X),(Y,Y)),\\
 (B(X,Y),B(X,Y)) & \cong & \barB_\Rel((X,X),(Y,Y)).
\end{eqnarray*}
\end{proposition}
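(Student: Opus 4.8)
The plan is to unfold \Cref{cons:lifting-bifunctor} at the identity relations $(X,X)=(X,X,\id_X,\id_X)$ and $(Y,Y)=(Y,Y,\id_Y,\id_Y)$ and to observe that every pullback occurring in \eqref{eq:bartrs} degenerates. First I would substitute $R=X$ with $\outl_R=\outr_R=\id_X$, and $S=Y$ with $\outl_S=\outr_S=\id_Y$. Then $B(R,S)=B(R,Y)=B(X,Y)$, and by functoriality of $B$ each of the four structural morphisms $B(\id,\outl_S)$, $B(\id,\outr_S)$, $B(\outl_R,\id)$, $B(\outr_R,\id)$ equals $\id_{B(X,Y)}$.

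Next I would compute the three nested pullbacks. The cospan defining $\cev{T}_{X,Y}$ becomes $B(X,Y)\xra{\id}B(X,Y)\xla{\id}B(X,Y)$, so $\cev{T}_{X,Y}\cong B(X,Y)$ with both projections $\cev{p}_{X,Y},\cev{q}_{X,Y}$ corresponding to $\id$; symmetrically $\vec{T}_{X,Y}\cong B(X,Y)$ with $\vec{p}_{X,Y},\vec{q}_{X,Y}$ corresponding to $\id$. Consequently $T_{X,Y}$, which by \eqref{eq:b} is the pullback of $\cev{q}_{X,Y}$ and $\vec{p}_{X,Y}$, is once more a pullback of two identities, whence $T_{X,Y}\cong B(X,Y)$ with $p_{X,Y},q_{X,Y}$ corresponding to $\id$. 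By \eqref{eq:out} the outer projections are $\outl_{X,Y}=\cev{p}_{X,Y}\comp p_{X,Y}$ and $\outr_{X,Y}=\vec{q}_{X,Y}\comp q_{X,Y}$, both of which reduce to $\id_{B(X,Y)}$ under these isomorphisms. Thus $\barB_\Gra((X,X),(Y,Y))\cong(B(X,Y),B(X,Y),\id,\id)=(B(X,Y),B(X,Y))$, which is the first claim. Since pullbacks are determined only up to isomorphism, any choice of apexes in the construction yields a graph isomorphic to the identity relation, which is all that is asserted.

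For the second isomorphism I would invoke \Cref{cons:bifunctor-lift-rel}, by which $\barB_\Rel=(-)^\dag\comp\barB_\Gra$. The identity relation $(B(X,Y),B(X,Y))$ is genuinely a relation, its pairing being the diagonal $\langle\id,\id\rangle$, which is split monic; hence $(B(X,Y),B(X,Y))^\dag\cong(B(X,Y),B(X,Y))$, as the reflector fixes relations up to isomorphism (\Cref{sec:relations-in-a-category}). Since $(-)^\dag$ preserves isomorphisms, the graph isomorphism above gives $\barB_\Rel((X,X),(Y,Y))=(\barB_\Gra((X,X),(Y,Y)))^\dag\cong(B(X,Y),B(X,Y))^\dag\cong(B(X,Y),B(X,Y))$, as required.

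The argument is essentially a routine collapse of pullbacks of identity morphisms, and I do not expect a genuine obstacle; the only point needing care is the bookkeeping of \eqref{eq:out}, namely confirming that the composite projections $\cev{p}_{X,Y}\comp p_{X,Y}$ and $\vec{q}_{X,Y}\comp q_{X,Y}$ reduce to the identity rather than to some nontrivial idempotent. The $\Set$ picture of \Cref{ex:graph-lifting-bifunctor} serves as a sanity check: there the defining equations on a triple $(f,z,g)\in T_{X,Y}$ collapse to $f=z=g$, so $T_{X,Y}\cong B(X,Y)$ via $f\mapsto(f,f,f)$ with both projections the identity.
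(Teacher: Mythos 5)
Your proof is correct and takes the same (indeed, the only) route as the paper, whose entire proof reads ``Immediate from the definitions'': with $R=X$, $S=Y$ all four structural morphisms in \eqref{eq:bartrs} become identities, the three pullbacks collapse so that both outer projections of $T_{X,Y}$ coincide and are isomorphisms, and the reflector $(-)^\dag$ fixes the identity relation up to isomorphism. Your write-up is simply the detailed verification of that immediacy (the lone nitpick being that the pullback defining $T_{R,S}$ is part of \eqref{eq:bartrs}, with \eqref{eq:b} only recording its commutativity).
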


\begin{proof}
Immediate from the definitions.
\end{proof}

\begin{proposition}\label{prop:resp-comp}
If $B\c \C^\opp\times \C\to \C$ weakly preserves pullbacks in the second component, then
\begin{eqnarray}
 \barB_\Gra((X,R),(Y,S))\smc\barB_\Gra((X,R'),(Y,S')) &\leq& \barB_\Gra((X,R)\smc
 (X,R'),(Y,S)\smc (Y,S')), \label{eq:olB-resp-graph-comp} \\
 \barB_\Rel((X,R),(Y,S))\bullet\barB_\Rel((X,X),(Y,S')) &\leq& \barB_\Rel((X,R),(Y,S)\bullet (Y,S')), \label{eq:olB-resp-rel-comp}
\end{eqnarray}
where $(X,R),(X,R'),(Y,S),(Y,S')\in \Gra(\C)$ in \eqref{eq:olB-resp-graph-comp} and $(X,R),(Y,S),(Y,S')\in \Rel(\C)$ in \eqref{eq:olB-resp-rel-comp}.
\end{proposition}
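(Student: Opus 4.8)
The plan is to prove the graph-level inequality \eqref{eq:olB-resp-graph-comp} by exhibiting an explicit comparison morphism in $\Gra_{B(X,Y)}(\C)$, and then to deduce the relation-level inequality \eqref{eq:olB-resp-rel-comp} by specializing to $R'=X$ and reflecting via $(-)^\dag$. For the graph version I would keep all the notation of \Cref{cons:lifting-bifunctor}: the objects $T_{R,S},\cev T_{R,S},\vec T_{R,S}$ with the structure maps $p_{R,S},q_{R,S},b_{R,S},\outl_{R,S},\outr_{R,S}$, and similarly for $R',S'$ and for the target $T_{R\smc R',\,S\smc S'}$. Let $\pi,\pi'$ denote the projections of the pullback defining $\barB_\Gra((X,R),(Y,S))\smc\barB_\Gra((X,R'),(Y,S'))$, so that $\outr_{R,S}\circ\pi=\outl_{R',S'}\circ\pi'$. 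The first step is to build a witness $v\c T_{R,S}\smc T_{R',S'}\to B(R\smc R',\,S\smc S')$. The two composites $B(\pi_{R\smc R',R},\id)\circ b_{R,S}\circ\pi$ and $B(\pi_{R\smc R',R'},\id)\circ b_{R',S'}\circ\pi'$ land in $B(R\smc R',S)$ and $B(R\smc R',S')$; a diagram chase using the defining equations of $\vec T_{R,S},\cev T_{R',S'}$, the pullback identity $\outr_R\circ\pi_{R\smc R',R}=\outl_{R'}\circ\pi_{R\smc R',R'}$, and the matching condition above shows that $B(\id,\outr_S)$ and $B(\id,\outl_{S'})$ carry them to a common morphism into $B(R\smc R',Y)$. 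Since $S\smc S'$ is the pullback of $\outr_S$ and $\outl_{S'}$, and $B(R\smc R',-)$ weakly preserves it by hypothesis, these two maps factor through some $v$.

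Next I would lift $v$ to a morphism $\phi\c T_{R,S}\smc T_{R',S'}\to T_{R\smc R',\,S\smc S'}$. Together with the $B(X,Y)$-valued legs $\outl_{R,S}\circ\pi$ and $\outr_{R',S'}\circ\pi'$, the morphism $v$ determines maps into the intermediate pullbacks $\cev T_{R\smc R',\,S\smc S'}$ and $\vec T_{R\smc R',\,S\smc S'}$; the required compatibilities over $B(R\smc R',Y)$, namely $B(\outl_{R\smc R'},\id)\circ\outl_{R,S}\circ\pi=B(\id,\outl_{S\smc S'})\circ v$ and its right-hand analogue, follow by the same kind of chase (reducing $\outl_{S\smc S'}$ to $\outl_S\circ\pi_{S\smc S',S}$ and pushing the first argument through $\pi_{R\smc R',R}$). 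Because $v$ is their common value on $B(R\smc R',\,S\smc S')$, the pullback defining $T_{R\smc R',\,S\smc S'}$ yields the desired $\phi$. Reading off the outer legs gives $\outl_{R\smc R',S\smc S'}\circ\phi=\outl_{R,S}\circ\pi$ and $\outr_{R\smc R',S\smc S'}\circ\phi=\outr_{R',S'}\circ\pi'$, i.e. $(\id_{B(X,Y)},\phi)$ is a $\Gra_{B(X,Y)}(\C)$-morphism, which is exactly \eqref{eq:olB-resp-graph-comp}.

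For the relation version \eqref{eq:olB-resp-rel-comp} I would put $R'=X$ in \eqref{eq:olB-resp-graph-comp}, using $(X,R)\smc(X,X)=(X,R)$, and apply $(-)^\dag$. By \Cref{lem:reflection-comp} the left-hand side reflects to $\barB_\Rel((X,R),(Y,S))\bullet\barB_\Rel((X,X),(Y,S'))$, since $\barB_\Rel=(-)^\dag\circ\barB_\Gra$ on objects. Monotonicity of $(-)^\dag$ bounds the reflected right-hand side by $(\barB_\Gra((X,R),(Y,S\smc S')))^\dag$, and applying $\barB_\Gra$ (covariant in its second argument) to the reflection unit $(Y,S\smc S')\to(Y,(S\smc S')^\dag)$ bounds this in turn by $(\barB_\Gra((X,R),(Y,(S\smc S')^\dag)))^\dag=\barB_\Rel((X,R),(Y,S)\bullet(Y,S'))$. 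It is worth recording why the specialization $R'=X$ is essential: for general $R'$ one would need $\barB_\Gra((X,R)\smc(X,R'),-)\leq\barB_\Gra(((X,R)\smc(X,R'))^\dag,-)$, but contravariance of $\barB_\Gra$ in the first argument sends the reflection unit to a morphism pointing the opposite way, so no such bound is available; with $R'=X$ the first component $(X,R)$ is already a relation and the difficulty disappears.

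I expect the main obstacle to be the weak-pullback-preservation step, i.e. the existence of $v$: the compatibility condition must be verified over $B(R\smc R',Y)$ through the \emph{contravariant} first argument, interleaving covariant and contravariant applications of $B$ along several nested pullbacks. Moreover, weak preservation supplies only existence and not uniqueness of the factorization, so the construction produces a comparison morphism but never an isomorphism — which is precisely why the statement asserts only $\leq$ rather than $\cong$. The remaining effort is organizing the numerous universal-property factorizations (for $T_{R,S}\smc T_{R',S'}$, for the two intermediate pullbacks, and for the target) into coherently commuting diagrams; this is routine but bookkeeping-heavy, and in $\C=\Set$ the whole argument reduces to the transparent manipulation of triples $(f,z,g)$ described in \Cref{ex:graph-lifting-bifunctor}.
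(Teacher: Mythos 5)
Your proposal is correct and follows essentially the same route as the paper's proof: the witness $v$ is precisely the paper's map $f_{RR',SS'}$ obtained from weak preservation of the pullback $S\smc S'$ by $B(R\smc R',-)$, the lift $\phi$ through the intermediate pullbacks $\cev{T}$ and $\vec{T}$ matches the paper's construction of $\cev{g}_{RR',SS'}$, $\vec{g}_{RR',SS'}$ and $g_{RR',SS'}$, and the relation-level inequality is derived by exactly the same chain (reflection via \Cref{lem:reflection-comp}, the isomorphism $(X,R)\smc(X,X)\cong(X,R)$, and pushing the reflection unit through the covariant second argument). Your closing observation about why the $(X,X)$ specialization is forced by contravariance in the first argument is a correct insight that the paper leaves implicit.
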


\hunote{This strongly suggests that the following more general result is true: Every bifunctor $B\c \C^\opp\times \C\to \C$ that weakly preserves pullbacks in the second component has a canonical extension to a lax bifunctor on the bicategory of spans over $\C$.}

\begin{proof}
\begin{enumerate}
\item Consider the first diagram in \Cref{fig:resp-comp-diag:1}, where the morphisms $\pi,\pi'$ are the projections of the
  pullback $T_{R,S}\smc T_{R',S'}$. Since the endofunctor
  $B(RR',-)\c \C\to\C$ weakly preserves pullbacks by assumption, the
  part $(\ast)$ is a weak pullback square. Note that the outside and
  all remaining parts commute by definition. Thus, there exists a
  morphism $f_{RR',SS'}$ making the two upper parts of the diagram
  commute.
\begin{figure}%[p!]
  % \begin{equation}\label{diag:bigpi}
  \[
\begin{tikzcd}[row sep=4em, column sep=2em, ampersand replacement=\&]
  \& \& \&
  T_{R,S}T_{R',S'}
  \ar[dashed]{d}{f_{RR',SS'}}
  \ar[%
  rounded corners,
  to path = {--(\tikztotarget |- \tikztostart)\tikztonodes -- (\tikztotarget)}
  ]{llldd}[swap]{\pi}
  \ar[%
  rounded corners,
  to path = {--(\tikztotarget |- \tikztostart)\tikztonodes -- (\tikztotarget)}
  ]{rrrdd}{\pi'}
  %\ar[bend right=3em]{llldd}[swap]{\pi}
  %\ar[bend left=3em]{rrrdd}{\pi'} \& \&
  \&
  \\
  \& \& \& B(RR',SS') \ar[phantom]{dd}[description]{(\ast)}
  \ar{dl}[swap]{B(RR',\pi_{SS',S})} \ar{dr}{B(RR',\pi_{SS',S'})} \& \&
  \&
  \\
  T_{R,S}
  \ar[%
  rounded corners,
  to path = {--(\tikztostart |- \tikztotarget) -- (\tikztotarget)\tikztonodes}
  ]{rrrddd}{\outr_{R,S}}
  %\ar[bend right=3em]{dddrrr}[swap]{\outr_{R,S}}
  \ar{r}{b_{R,S}}
  \&
  B(R,S)
  \descto[pos = .4]{ld}{\eqref{eq:bartrs}}
  \ar[bend right=1em]{ddr}[swap]{B(R,\outr_S)}
  \ar{r}[yshift=.5em]{B(\pi_{RR',R},S)}
  \&
  B(RR',S)
  \ar{dr}[swap]{B(RR',\outr_S)}
  \& \&
  B(RR',S')
  \ar{dl}{B(RR',\outl_{S'})}
  \&
  B(R',S')
  \descto[pos = .4]{rd}{\eqref{eq:bartrs}}
  \ar[bend left=1em]{ddl}{B(R',\outl_{S'})}  \ar{l}[swap,
  yshift=.5em]{B(\pi_{RR',R'},S')}
  \&
  T_{R',S'}
  \ar[%
  rounded corners,
  to path = {--(\tikztostart |- \tikztotarget) -- (\tikztotarget)\tikztonodes}
  ]{lllddd}[swap]{\outl_{R',S'}}
  %\ar[bend left=3em]{dddlll}{\outl_{R',S'}}
  \ar{l}[swap]{b_{R',S'}}
  \\
  { }\& \& \&
  B(RR',Y)
\descto{dd}{\eqref{eq:composite-graph}}
\& \& \& { } \& { }
\\
\& \& B(R,Y) \ar{ur}[description]{B(\pi_{RR',R},Y)} \& \&  B(R',Y)
\ar{ul}[description]{B(\pi_{RR',R'},Y)} \& \& \\
\& \& \& B(X,Y) \ar{ul}[description]{B(\outr_R,Y)} \ar{ur}[description]{B(\outl_{R'},Y)} \& \& \& 
\end{tikzcd}
% \end{equation}
\]
\caption{First diagram for the proof of \Cref{prop:resp-comp}}\label{fig:resp-comp-diag:1}
\end{figure}

\begin{figure}[p!]
  \[
\begin{tikzcd}[row sep=4em, column sep=3.5em, ampersand replacement=\&]
  \& \& \& \&
  B(RR',SS')
  \ar{dl}[description]{B(RR',\pi_{SS',S})}
  \ar{dd}{B(RR',\outl_{SS'})}
  \\
  \& \& { }\&
  B(RR',S)
  \descto{r}{\eqref{eq:composite-graph}}
  \ar{dr}[description]{B(RR',\outl_S)} \& { }
  \\
  \cev{T}_{RR',SS'}
  \ar[%
  rounded corners,
  to path = {--(\tikztostart |- \tikztotarget) -- (\tikztotarget)\tikztonodes}
  ]{uurrrr}{\cev{q}_{RR',SS'}}
  \ar[%
  rounded corners,
  to path = {--(\tikztostart |- \tikztotarget) -- (\tikztotarget)\tikztonodes}
  ]{ddrrrr}{\cev{p}_{RR',SS'}}
  %\ar[bend left=3em]{uurrrr}{\cev{q}_{RR',SS'}}
  %\ar[bend right=3em]{ddrrrr}[swap]{\cev{p}_{RR',SS'}}
  \&
  T_{R,S}T_{R',S'}
  \ar[bend left=1.5em]{uurrr}{f_{RR',SS'}}
  \ar[dashed]{l}[swap]{\cev{g}_{RR',SS'}} \ar{r}{\pi}
  \&
  T_{R,S}
  \descto{u}{\Crefname{figure}{Fig.}{Figs.}(\Cref{fig:resp-comp-diag:1})}
  \ar[bend right=2em]{ddrr}[swap]{\outl_{R,S}}  \ar{r}{b_{R,S}}   \&
  B(R,S)
  \ar{u}{B(\pi_{RR',R},S)} \ar{d}[swap]{B(R,\outl_S)}
  \descto[pos= .2]{ld}{\eqref{diag:cevd}}
  \&
  B(RR',Y)
  \\
  \& \&
  { }
  \&
  B(R,Y)
  \descto{r}{\eqref{eq:composite-graph}}
  \ar{ur}[description]{B(\pi_{RR',R},Y)}
  \& { }
  \\
\& \& \& \& B(X,Y) \ar{ul}[description]{B(\outl_R,Y)} \ar{uu}[swap]{B(\outl_{RR'},Y)}
\end{tikzcd}
\]
\caption{Second diagram for the proof of \Cref{prop:resp-comp}}\label{fig:resp-comp-diag:2}
\end{figure}

\item Next, consider the diagram in \Cref{fig:resp-comp-diag:2}. Note
  that the outside and all remaining parts commute by definition. By
  the universal property of the pullback $\cev{T}_{RR',SS'}$, there
  exists a unique morphism $\cev{g}_{RR',SS'}$ making the top left and
  bottom left parts of the diagram commute.  Symmetrically, we obtain
  a unique $\vec{g}_{RR',SS'}\c T_{R,S}T_{R',S'}\to \vec{T}_{RR',SS'}$
  such that
  \[
    \vec{p}_{RR,SS'} \circ \vec{g}_{RR',SS'}
    =
    f_{RR',SS'}\qqand \vec{q}_{RR',SS'}\circ \vec{g}_{RR',SS'}
    =
    \outr_{R',S'} \circ \pi'.
  \]
  
\item The universal property of the pullback $T_{RR,SS'}$ now yields a unique $g_{RR',SS'}$ making both triangles in the diagram below commute: 
\begin{equation}\label{diag:g}
\begin{tikzcd}
&  T_{R,S}T_{R',S'} \ar[bend right=2em]{dl}[swap]{\cev{g}_{RR',SS'}} \ar[bend left=2em]{dr}{\vec{g}_{RR',SS'}} \ar[dashed]{d}{g_{RR',SS'}}  & \\
\cev{T}_{RR',SS'} &  \ar{l}[swap]{p_{RR',SS'}} \ar{r}{{q}_{RR',SS'}} T_{RR',SS'} & \vec{T}_{RR',SS'} 
\end{tikzcd}
\end{equation}
\item To prove \eqref{eq:olB-resp-graph-comp}, we show that
  \[
    \barB_\Gra((X,R),(Y,S))\smc\barB_\Gra((X,R'),(Y,S'))
    \xra{(\id_{B(X,Y)},g_{RR',SS'})}
    \barB_\Gra((X,R)\smc (X,R'),(Y,S)\smc (Y,S'))
  \]
  is a $\Gra(\C)$-morphism. To this end, consider the commutative
  diagram below, all whose parts commute by definition:
\[ 
\begin{tikzcd}
  T_{R,S}T_{R',S'}
  \ar{dr}{\cev{g}_{RR',SS'}}
  \ar{rr}{g_{RR',SS'}}
  \ar{d}[swap]{\pi}
  & { } &
  T_{RR',SS'} \ar{dd}{\outl_{RR',SS'}} \ar{dl}[swap]{p_{RR',SS'}}
  \\
  T_{R,S}  \ar{d}[swap]{\outl_{R,S}}
  &
  \cev{T}_{RR',SS'}
  \descto[pos=.2]{u}{\eqref{diag:g}}
  \descto{r}{\eqref{eq:out}}
  \descto{l}{\Crefname{figure}{Fig.}{Figs.}(\Cref{fig:resp-comp-diag:2})}
  \ar{dr}{\cev{p}_{RR',SS'}}
  &
  { }
  \\
  B(X,Y) \ar{rr}{\id}
  & &
  B(X,Y) 
\end{tikzcd}
\]
The morphism $\outl_{R,S}\comp \pi$ is the left projection of the graph
$\barB_\Gra((X,R),(Y,S))\smc\barB_\Gra((X,R'),(Y,S'))$, so the above
commutative diagram shows that $(\id,g_{RR',SS'})$ is compatible with
left projections. The proof that $(\id,g_{RR',SS'})$ is
compatible with right projections is symmetric.
\item Finally, \eqref{eq:olB-resp-rel-comp} follows from the computation
\begin{align*}
 \barB_\Rel((X,R),(Y,S))\bullet\barB_\Rel((X,X),(Y,S')) & = \barB_\Gra((X,R),(Y,S))^\dag\bullet \barB_\Gra((X,X),(Y,S'))^\dag   \\
& = (\barB_\Gra((X,R),(Y,S))\smc \barB_\Gra((X,X),(Y,S')))^\dag  \\
& \leq \barB_\Gra((X,R)\smc
 (X,X),(Y,S)\smc (Y,S'))^\dag \\
& \cong \barB_\Gra((X,R),(Y,S)\smc (Y,S'))^\dag \\
& \leq \barB_\Gra((X,R),(Y,S)\bullet (Y,S'))^\dag \\
& =
 \barB_\Rel((X,R),(Y,S)\bullet (Y,S')).
\end{align*}
The first step uses the definition of $\barB_\Rel$; the second step follows from \Cref{lem:reflection-comp}; the third step uses \eqref{eq:olB-resp-graph-comp}; the fourth step uses that $(X,R)\smc (X,X) \cong (X,R)$ by definition of graph composition; the fifth step uses that $(Y,S)\smc (Y,S')\leq (Y,S)\bullet (Y,S')$ by definition of relation composition; the last step uses the definition of $\barB_\Rel$.
\qedhere
\end{enumerate}
\end{proof}

\section{Canonical Liftings of Higher-Order GSOS Laws}\label{sec:can-lift-gsos-laws}
Next, we show how to lift higher-order abstract GSOS laws from $\C$ to $\Gra(\C)$ and $\Rel(\C)$. 

\begin{rem} The notion of relation lifting of a higher-order GSOS law is given in \Cref{def:rho-lift}. \emph{Graph liftings} are defined analogously: replace $\Rel$ by $\Gra$ and $\vee$ by $+$. Note that unlike relation liftings, graphs liftings are usually not unique. 
\end{rem}

\begin{construction}\label{cons:rho-lift}
Let $\Sigma\c \C\to \C$ and $B\c \C^\opp \times \C \to \C$ be functors with their canonical graph liftings 
\[\ol{\Sigma}=\ol{\Sigma}_\Gra\c \Gra(\C)\to \Gra(\C) \qqand \barB=\barB_\Gra\c \Gra(\C)^\opp\times \Gra(\C)\to \Gra(\C) \]
given by \Cref{cons:lifting-endofunctor} and \Cref{cons:lifting-bifunctor}, and let 
\[ \rho_{X,Y}\c \Sigma(X\times B(X,Y))\to B(X,\Sigmas(X+Y)) \qquad ((X,p_X)\in \Pt/\C,Y\in \C)\]
be a $\Pt$-pointed higher-order GSOS law
of $\Sigma$ over $B$. The \emph{canonical graph lifting} of $\rho$ is the $(\Pt,\Pt)$-pointed higher-order GSOS law $\barrho=\barrho^\Gra$ of $\ol{\Sigma}$ over $\barB$ whose components
\[
\begin{tikzcd} \ol{\Sigma}((X,R)\times \barB((X,R),(Y,S))) \ar{d}{\barrho_{(X,R),(Y,S))}}\\ \barB((X,R),\ol{\Sigma}^\star((X,R)+(Y,S)))
\end{tikzcd}
\qquad\qquad (((X,R),(p_{(X,R)})\in (V,V)/\Gra(\C),\,(Y,S)\in \Gra(\C))
\]
are defined as follows. First, put
\[ (\barrho_{(X,R),(Y,S)})_0 \c \big(\Sigma(X\times B(X,Y))\xra{\rho_{X,Y}} B(X,\Sigmas(X+Y))\big). \]
Second, we define 
\[ (\barrho_{(X,R),(Y,S)})_1\c \Sigma(R\times T_{R,S})\to T_{R,\Sigmas(R+S)} \]
in two steps via the universal properties of the pullbacks occurring in the construction of $\barB$: 
\begin{enumerate}
\item Consider the diagram below, where we regard the objects $X$ and $R$ as $V$-pointed by the morphisms $p_X=(p_{(X,R)})_0\c V\to X$ and $p_R=(p_{(X,R)})_1\c V\to R$.
  % \begin{figure}[p!]
  % SM: It somehow doesn't work to put it in a figure anymore; don't
  % know why. 
  \begin{equation}\label{diag:fig4}
    % \rotatebox{270}{%
    \hspace*{-5pt}
    \mbox{
    \begin{tikzcd}[cramped,row sep = 30, column sep=4em, ampersand
      replacement=\&,baseline = (U.north)]
      { } \& { } \& { } \&
      B(R,\Sigmas(R+S))
      \ar[shiftarr = {xshift = 50}]{dd}[rotate = 90, pos = .9, yshift=-5]{B(R,\Sigmas(\outl_R+\outl_S))}
      \ar{d}[swap]{B(R,\Sigmas(R+\outl_S))}
      \ar[phantom]{d}[xshift=20]{\text{\footnotesize\commu}}
      \\
      \&
      \Sigma(R\times B(R,S))
      \ar[%
      rounded corners,
      to path = {--(\tikztostart |- \tikztotarget) -- (\tikztotarget)\tikztonodes}
      ]{urr}{\rho_{R,S}}
      \ar{r}[yshift=.3em]{\Sigma(R\times B(R,\outl_S))}
      \descto{rd}{($*$)}
      \&
      \Sigma(R\times B(R,Y))
      \ar{r}{\rho_{R,Y}}
      \descto[pos = .4]{u}{\text{(nat.~$\rho_{R,-}$)}}
      \&
      B(R,\Sigmas(R+Y))
      \ar{d}[description]{B(R,\Sigmas(\outl_R+Y))}
      \\
      \cev{T}_{R,\Sigmas(R+S)}
      \ar[%
      rounded corners,
      to path={--([yshift=20]\tikztostart |- \tikztotarget.center) --
        ([yshift=20]\tikztotarget.center)\tikztonodes --
        (\tikztotarget)}
      ]{uurrr}{\cev{q}_{R,\Sigmas(R+S)}}
      \ar[%
      rounded corners,
      to path={--([yshift=-20]\tikztostart |- \tikztotarget.center) --
        ([yshift=-20]\tikztotarget.center)\tikztonodes --
        (\tikztotarget)}
      ]{drrr}{\cev{p}_{R,\Sigmas(R+S)}}
      \&
      \Sigma(R\times T_{R,S})
      \ar[dashed]{l}[swap]{(\cev{\rho}_{(X,R),(Y,S)})_1}
      \ar{u}{\Sigma(R\times b_{R,S})}
      \ar{r}{\Sigma(R\times \outl_{R,S})}
      \ar{dr}[swap]{\Sigma(\outl_R\times \outl_{R,S})}
      \& 
      \Sigma(R\times B(X,Y))
      \descto[pos=.2]{ld}{\commu}
      \ar{u}[description]{\Sigma(R\times B(\outl_R,Y))}
      \ar{d}[description]{\Sigma(\outl_R\times B(X,Y))}
      \descto[yshift=15]{r}{\text{(dinat.~$\rho_{-,Y}$)}}
      \&
      |[alias = A]|
      B(R,\Sigmas(X+Y))
      \\
      \& { } \&
      \Sigma(X\times B(X,Y))
      \ar{r}{\rho_{X,Y}}
      \&
      |[alias = B]|
      B(X,\Sigmas(X+Y))
      % \arrow[u, ""{name=U}, description]{B(\outl_R,\Sigmas(X+Y))}
      \arrow[u, "{B(\outl_R,\Sigmas(X+Y))}" description, ""{name=U}]{}
    \end{tikzcd}
    \hspace*{-30pt}
    }
%    \caption{Diagram for \Cref{cons:rho-lift}}\label{fig:rho-lift}
%  \end{figure} 
  \end{equation}
  % 
  % Don't delete - test diagram
  %
%  \[
%    \begin{tikzcd}
%      & & Y \\
%      X
%      \arrow[%
%      rounded corners,
%      to path={--([yshift=20]\tikztostart |- \tikztotarget.center) --
%        ([yshift=20]\tikztotarget.center)\tikztonodes -- (\tikztotarget)}]{urr}{m}
%    \end{tikzcd}
%  \]

  Its outside commutes due to~\eqref{eq:bartrs} and using $\outl_{R+S}= \outl_R + \outl_S$ (see \Cref{sec:graph-cocomplete}), and for the part marked ($*$) we remove $\Sigma$ and consider the
  product components separately: the left-hand one is the identity on
  $R$, and for the right-hand one we have the commutative diagram
  below:
  \[
    \begin{tikzcd}
      B(R,S)
      \ar{rr}{B(R,\outl_S)}
      &&
      B(R,Y)
      \ar{dd}{B(\outl_R, Y)}
      \\
      \descto{r}{\eqref{eq:b}}
      &
      \cev{T}_{R,S}
      \descto{ru}{\eqref{eq:bartrs}}
      \descto[pos=.4]{d}{\eqref{eq:out}}
      \ar{lu}[swap]{\cev{q}_{R,S}}
      \ar{rd}{\cev{p}_{R,S}}
      \\
      T_{R,S}
      \ar{rr}{\outl_{R,S}}
      \ar{ru}{p_{R,S}}
      \ar{uu}{b_{R,S}}
      & { } &
      B(X,Y)
    \end{tikzcd}
  \]
  The universal property of the pullback
  $\cev{T}_{R,\Sigmas(R+S)}$ now yields a unique morphism
  $(\cev{\rho}_{(X,R),(Y,S)})_1$ making the top and bottom part of the
  diagram ab commute. Analogously, we obtain a unique morphism
  $(\vec{\rho}_{R,\Sigmas(R+S)})_1\c \Sigma(R\times T_{R,S})\to
  \vec{T}_{R,\Sigmas(R+S)}$ such that
  \begin{align*}
    \vec{p}_{R,\Sigmas(R+S)}\circ (\vec{\rho}_{R,\Sigmas(R+S)})_1
    &=
    \rho_{R,S}\circ \Sigma(R\times b_{R,S}), \text{and}
    \\
    \vec{q}_{R,\Sigmas(R+S)} \circ (\vec{\rho}_{R,\Sigmas(R+S)})_1
    &=
    \rho_{X,Y}\circ \Sigma(\outr_R\times \outr_{R,S}).
  \end{align*}
  
\item We take $(\barrho_{(X,R),(Y,S)})_1$ to be the unique morphism
  making both triangles in the diagram below commute, using the
  universal property of the pullback $T_{R,\Sigmas(R+S)}$:
  \begin{equation}\label{diag:barrho}
    \begin{tikzcd}[row sep=3em]
      &
      \Sigma(R\times T_{R,S})
      \ar[bend right=2em]{dl}[swap]{(\cev{\rho}_{(X,R),(Y,S)})_1}
      \ar[bend left=2em]{dr}{(\vec{\rho}_{(X,R),(Y,S)})_1}
      \ar[dashed]{d}[description]{(\barrho_{(X,R),(Y,S)})_1}
      &
      \\
      \cev{T}_{R,\Sigmas(R+S)}
      &
      \ar{l}[swap]{p_{R,\Sigmas(R+S)}}
      \ar{r}{q_{R,\Sigmas(R+S)}} T_{R,\Sigmas(R+S)}
      &
      \vec{T}_{R,\Sigmas(R+S)} 
    \end{tikzcd}
  \end{equation}
\end{enumerate}
\end{construction}
\begin{proposition}\label{prop:rho-lift-cat}
  The family $\barrho^\Gra$ is a $(V,V)$-pointed higher-order GSOS law of $\ol{\Sigma}_\Gra$ over $\barB_\Gra$.
\end{proposition}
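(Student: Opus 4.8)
The plan is to verify the three conditions that make $\barrho^\Gra$ a $(V,V)$-pointed higher-order GSOS law of $\ol{\Sigma}_\Gra$ over $\barB_\Gra$ in the sense of \Cref{def:ho-gsos-law}: that each component $\barrho_{(X,R),(Y,S)}$ is a genuine $\Gra(\C)$-morphism, that the family is dinatural in $((X,R),p)\in (V,V)/\Gra(\C)$, and that it is natural in $(Y,S)\in \Gra(\C)$. Unlike the relation case (\Cref{rem:rho-rel-lifting}), none of this is automatic, since a $\Gra(\C)$-morphism is not determined by its $0$-component. The guiding principle throughout is that every object in sight is built by iterated pullbacks, so that an equation between $\Gra(\C)$-morphisms splits into its $0$- and $1$-components, and each $1$-component equation into a pullback $T_{R,\Sigmas(R+S)}$ reduces --- via the universal properties of $T$, $\cev T$, $\vec T$ of \eqref{eq:bartrs} --- to equations among $\rho_{R,S}$, $\rho_{R,Y}$, $\rho_{X,Y}$ together with the functorial actions of $\Sigma$ and $B$.

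First I would dispatch the graph-morphism condition. By \eqref{diag:barrho} we have $p_{R,\Sigmas(R+S)}\circ (\barrho_{(X,R),(Y,S)})_1=(\cev\rho_{(X,R),(Y,S)})_1$ and $q_{R,\Sigmas(R+S)}\circ (\barrho_{(X,R),(Y,S)})_1=(\vec\rho_{(X,R),(Y,S)})_1$. Postcomposing with $\outl_{R,\Sigmas(R+S)}=\cev p_{R,\Sigmas(R+S)}\circ p_{R,\Sigmas(R+S)}$ and invoking the bottom part of \eqref{diag:fig4} then gives $\outl_{R,\Sigmas(R+S)}\circ (\barrho)_1=\rho_{X,Y}\circ\Sigma(\outl_R\times\outl_{R,S})$, and symmetrically $\outr_{R,\Sigmas(R+S)}\circ (\barrho)_1=\rho_{X,Y}\circ\Sigma(\outr_R\times\outr_{R,S})$ using $\vec q_{R,\Sigmas(R+S)}$. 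Since $\outl_R\times\outl_{R,S}$ and $\outr_R\times\outr_{R,S}$ are precisely the projections of the domain graph $\ol{\Sigma}_\Gra((X,R)\times\barB((X,R),(Y,S)))$ and $(\barrho)_0=\rho_{X,Y}$, this is exactly compatibility with left and right projections, i.e.\ $\barrho_{(X,R),(Y,S)}$ is a $\Gra(\C)$-morphism.

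For naturality in $(Y,S)$ and dinaturality in $(X,R)$ I would note that on $0$-components both sides of the relevant (di)naturality square/hexagon reduce, using $(\barrho)_0=\rho_{X,Y}$, $\barB_\Gra(h,k)_0=B(h_0,k_0)$ and $\ol{\Sigma}_\Gra^\star=(\barSigmas)_\Gra$ (\Cref{prop:free-monad-lift}), to the corresponding diagram for $\rho$, which commutes as $\rho$ is a higher-order GSOS law. It then remains to treat the $1$-components, each an equation of morphisms into a pullback $T_{R,\Sigmas(R+S')}$ (resp.\ $T_{R',\Sigmas(R'+S)}$). These I would verify by that pullback's universal property: it suffices to show agreement after postcomposition with $p$ and $q$, i.e.\ with $(\cev\rho)_1$ and $(\vec\rho)_1$, and then with the projections of $\cev T$ and $\vec T$, at which point the defining equations of \Cref{cons:rho-lift} turn everything into statements about $\rho_{R,S}$, $\rho_{R,Y}$, $\rho_{X,Y}$. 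The required commutativities follow from naturality of $\rho_{R,-}$ and $\rho_{X,-}$ in the second variable (for naturality) and dinaturality of $\rho_{-,S}$ and $\rho_{-,Y}$ in the first variable (for dinaturality), combined with functoriality of $\Sigma$ and $B$ and uniqueness of the mediating morphisms.

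I expect the main obstacle to be the $1$-component of dinaturality in $(X,R)$. Because the contravariant variable occurs simultaneously inside $\Sigma(X\times-)$, inside $B(X,-)$, and inside $\Sigma^\star(X+-)$, the dinaturality hexagon does not collapse to a naturality square, and one must carefully track the contravariant occurrences of $R$ both in $B(R,-)$ and in the pullback $\cev T_{R,\Sigmas(R+S)}$ while chasing through the nested pullbacks. Keeping this manageable will hinge on systematically reducing along the projections $p,q,\cev p,\cev q,\vec p,\vec q$ and reusing the already-established identities \eqref{eq:bartrs}, \eqref{eq:out}, \eqref{eq:b} and \eqref{diag:cevd}, exactly in the style of the proof of \Cref{lem:dbar-welldef}.
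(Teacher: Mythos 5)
Your proposal is correct and follows essentially the same route as the paper's proof: splitting into the graph-morphism condition, naturality, and dinaturality, reducing $0$-components to the corresponding properties of $\rho$, and verifying the $1$-components by postcomposing with the jointly monic pullback projections $p,q$ and then $\cev p,\cev q$ (resp.\ $\vec p,\vec q$) so that everything collapses to the defining equations of \Cref{cons:rho-lift} together with naturality of $\rho_{X,-},\rho_{R,-}$ and dinaturality of $\rho_{-,Y},\rho_{-,S}$. Your identification of the dinaturality $1$-component as the delicate case, handled by exactly this projection-chasing, also matches the paper.
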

\begin{remark}
  We shall use in the proof below that the following diagram commutes:
  \begin{equation}\label{diag:II}
    \begin{tikzcd}[column sep = 50]
      \Sigma(R \times T_{R,S})
      \ar{r}{(\ol\rho_{(X,R),(Y,S)})_1}
      \ar{dd}[swap]{\Sigma(R\times b_{R,S})}
      \ar{rd}[swap]{(\cev\rho_{(X,R),(Y,S)})_1}
      &
      T_{R,\Sigmas(R+S)}
      \ar{d}{p_{R,\Sigmas(R+S)}}
      \ar[shiftarr = {xshift=50}]{dd}{b_{R,\Sigmas(R+S)}}
      \descto[pos=.2]{ld}{\eqref{diag:barrho}}
      \\
      { }
      &
      \cev{T}_{R,\Sigmas(R+S)}
      \ar{d}{\cev{q}_{R,\Sigmas(R+S)}}
      \descto[pos=.4]{ld}{\eqref{diag:fig4}}
      \descto[pos=.1]{r}{\eqref{eq:bartrs}}
      &
      { }
      \\
      \Sigma(R\times B(R,S))
      \ar{r}{\rho_{R,S}}
      &
      B(R,\Sigmas(R+S))
    \end{tikzcd}
  \end{equation}
\end{remark}
\begin{proof}
\begin{enumerate}
\item We show that $\barrho_{(X,R),(Y,S)}$ is a $\Gra(\C)$-morphism
  for every $(X,R),p_{(X,R)})\in (V,V)/\Gra(\C)$ and $(Y,S)\in \Gra(\C)$. In fact, that
  $\barrho_{(X,R),(Y,S)}$ is compatible with left projections is
  shown by the diagram below, all of whose parts commute by
  definition. The proof that $\barrho_{(X,R),(Y,S)}$ is compatible
  with right projections is symmetric.
  \begin{equation}\label{diag:barrho-gra}
    \begin{tikzcd}[row sep=3em]
      \Sigma(R\times T_{R,S})
      \ar{dd}[swap]{\Sigma(\outl_R\times \outl_{R,S})}
      \ar{rr}{(\barrho_{(X,R),(Y,S)})_1}
      \ar{dr}[description]{(\cev{\rho}_{(X,R),(Y,S)})_1}
      &
      { }
      &
      T_{R,\Sigmas(R+S)} \ar{dd}{\outl_{R,\Sigmas(R+S)}}
      \ar{dl}[description]{p_{R,\Sigmas(R+S)}}
      \\
      { }
      &
      \cev{T}_{R,\Sigmas(R+S)}
      \ar{dr}[description]{\cev{p}_{R,\Sigmas(R+S)}}
      \descto{u}{\eqref{diag:barrho}}
      \descto{r}{\eqref{eq:out}}
      % \descto{l}{\Crefname{figure}{Fig.}{Figs.}(\Cref{fig:rho-lift})}
      \descto{l}{\eqref{diag:fig4}}
      & { }
      \\
      \Sigma(X\times B(X,Y))
      \ar{rr}{(\barrho_{(X,R),(Y,S)})_0\,=\,\rho_{X,Y}}
      & &
      B(X,\Sigmas(X+Y)) 
\end{tikzcd}
\end{equation}
\item To prove naturality, we need to show that for every $(X,R),p_{(X,R)})\in (V,V)/\Gra(\C)$ and every $\Gra(\C)$-morphism $k\c (Y,S)\to (Y',S')$ the diagram below commutes:
\[
\begin{tikzcd}[column sep=5em, row sep=3em]
\ol{\Sigma}((X,R)\times\barB((X,R),(Y,S))) \ar{r}{\barrho_{(X,R),(Y,S)}} \ar{d}[swap]{\Sigma((X,R)\times B((X,R),k))} & \barB((X,R),\ol{\Sigma}^\star((X,R)+(Y,S))) \ar{d}{\barBs((X,R),\ol{\Sigma}^\star((X,R)+k))} \\
\ol{\Sigma}((X,R)\times\barB((X,R),(Y',S'))) \ar{r}{\barrho_{(X,R),(Y',S')}}  & \barB((X,R),\ol{\Sigma}^\star((X,R)+(Y',S'))) 
\end{tikzcd}
\]  
Commutativity in the $(-)_0$-component is clear because $(\barrho_{(X,R),(Y,S)})_0=\rho_{X,Y}$ and $(\barrho_{(X,R),(Y',S')})_0=\rho_{X,Y'}$ by definition and $\rho_{X,-}$ is natural. Commutativity in the $(-)_1$-component amounts to showing that the following rectangle commutes:
\begin{equation}\label{eq:nat-bar-rho}
\begin{tikzcd}[column sep=5em, row sep=3em]
\Sigma(R\times T_{R,S}) \ar{r}{(\barrho_{(X,R),(Y,S)})_1} \ar{d}[swap]{\Sigma(R\times \barBs((X,R),k)_1)}  & T_{R,\Sigmas(R+S)}  \ar{d}{ \barBs((X,R),\ol{\Sigma}^\star((X,R)+k))_1 } \\
\Sigma(R\times T_{R,S'}) \ar{r}{(\barrho_{(X,R),(Y',S')})_1} & T_{R,\Sigmas(R+S')} 
\end{tikzcd}
\end{equation}
Indeed, the  diagrams below show that \eqref{eq:nat-bar-rho} commutes when postcomposed with 
\[ \outl_{R,\Sigmas(R+S')} = \cev{p}_{R,\Sigmas(R+S')}\circ
  p_{R,\Sigmas(R+S')} \qqand
  b_{R,\Sigmas(R+S')}=\cev{q}_{R,\Sigmas(R+S')}\circ
  p_{R,\Sigmas(R+S')}.\]

\[
\begin{tikzcd}[row sep=2em, column sep=4em, ampersand replacement=\&]
  \Sigma(R\times T_{R,S})
  \ar{rrr}{(\barrho_{(X,R),(Y,S)})_1}
  \ar{dr}[description]{\Sigma(\outl_R\times \outl_{R,S})}
  \ar{ddd}[description]{\Sigma(R\times \barBs((X,R),k)_1)}
  \& \& { } \&
  T_{R,\Sigmas(R+S)}
  \ar{dl}[description]{\outl_{R,\Sigmas(R+S)}}
  \ar{ddd}[description, pos=.6]{
    \barBs((X,R),\ol{\Sigma}^\star((X,R)+k))_1 }
  \\
  \&
  \Sigma(X\times B(X,Y))
  \descto{ru}{\eqref{diag:barrho-gra}}
  \descto{rd}{\text{(nat. $\rho_{X,-}$)}}
  \descto{ld}{\eqref{diag:Bbar-gra}}
  \ar{r}{\rho_{X,Y}}
  \ar{d}[swap]{\Sigma(R\times B(R,k_0))}
  \&
  B(X,\Sigmas(X+Y)) \ar{d}[description]{B(R,\Sigmas(R+k_0))}
  \descto{rd}{\eqref{diag:Bbar-gra}}
  \&
  \\
  { }
  \&
  \Sigma(X\times B(X,Y')) \ar{r}{\rho_{X,Y'}}
  \descto{rd}{\eqref{diag:barrho-gra}}
  \&
  B(X,\Sigmas(X+Y'))
  \& { }
  \\
  \Sigma(R\times T_{R,S'})
  \ar{ur}[description]{\Sigma(\outl_R\times \outl_{R,S'})}
  \ar{rrr}{(\barrho_{(X,R),(Y',S')})_1}
  \& \& { }\&
  T_{R,\Sigmas(R+S')}  \ar{ul}[description]{\outl_{R,\Sigmas(R+S')}}
\end{tikzcd}
\]
\[
\begin{tikzcd}[row sep=2em, column sep=4em, ampersand replacement=\&]
  \Sigma(R\times T_{R,S})
  \ar{rrr}{(\barrho_{(X,R),(Y,S)})_1}
  \ar{dr}[description]{\Sigma(R\times b_{R,S})}
  \ar{ddd}[description]{\Sigma(R\times \barBs((X,R),k)_1)}
  \& \& { } \&
  T_{R,\Sigmas(R+S)}
  \ar{dl}[description]{b_{R,\Sigmas(R+S)}}
  \ar{ddd}[description, pos=.6]{
    \barBs((X,R),\ol{\Sigma}^\star((X,R)+k))_1 }
  \\
  \&
  \Sigma(R\times B(R,S))
  \descto{rd}{(\text{nat. $\rho_{R,-}$})}
  \descto{ru}{\eqref{diag:II}}
  \descto{ld}{\eqref{eq:brs-diag}}
  \ar{r}{\rho_{R,S}} \ar{d}[swap]{\Sigma(R\times B(R,k_1))}
  \&
  B(R,\Sigmas(R+S))
  \ar{d}[description]{B(R,\Sigmas(R+k_1))}
  \descto{rd}{\eqref{eq:brs-diag}}
  \& 
  \\
  { }
  \&
  \Sigma(R\times B(R,S'))
  \descto{rd}{\eqref{diag:II}}
  \ar{r}{\rho_{R,S'}}
  \&
  B(R,\Sigmas(R+S'))
  \&
  { }
  \\
  \Sigma(R\times T_{R,S'})
  \ar{ur}[description]{\Sigma(R\times b_{R,S'})}
  \ar{rrr}{(\barrho_{(X,R),(Y',S')})_1}
  \& \& { } \&
  T_{R,\Sigmas(R+S')} \ar{ul}[description]{b_{R,\Sigmas(R+S')}}
\end{tikzcd}
\]

 Since $\cev{q}_{R,\Sigmas(R+S')}$ and
$\cev{p}_{R,\Sigmas(R+S')}$ are the projections of the pullback
$\cev{T}_{R,\Sigmas(R+S')}$ and thus jointly monic, it follows that
\eqref{eq:nat-bar-rho} commutes when postcomposed with
$p_{R,\Sigmas(R+S')}$. A symmetric argument shows that
\eqref{eq:nat-bar-rho} commutes when postcomposed with
$q_{R,\Sigmas(R+S')}$. Finally, since $p_{R,\Sigmas(R+S')}$ and
$q_{R,\Sigmas(R+S')}$ are the projections of the pullback
$T_{R,\Sigmas(R+S')}$ and thus jointly monic, we conclude
that~\eqref{eq:nat-bar-rho} commutes.
\item To prove dinaturality, we need to show that for every $(Y,S)\in \Gra(\C)$ and every $(V,V)/\Gra(\C)$-morphism $h\c (X',R')\to (X,R)$ the diagram below commutes:
\[
\begin{tikzcd}[column sep=5em]
 \ol{\Sigma}((X,R)\times \barB((X,R),(Y,S))) \ar{r}{\rho_{(X,R),(Y,S)}} & \barB((X,R),\ol{\Sigma}^\star((X,R)+(Y,S))) \ar{d}{\barBs(h,\ol{\Sigma}^\star((X,R)+(Y,S)))}  \\
\ol{\Sigma}((X',R')\times \barB((X,R),(Y,S))) \ar{u}{\ol{\Sigma}(h\times \barBs((X,R),(Y,S)))} \ar{d}[swap]{\ol{\Sigma}((X',R')\times \barBs(h,(Y,S))) } & \barB((X',R'),\ol{\Sigma}^\star((X,R)+(Y,S)))  \\
 \ol{\Sigma}((X',R')\times \barB((X',R'),(Y,S))) \ar{r}{\rho_{(X',R'),(Y,S)}} & \barB((X',R'),\ol{\Sigma}^\star((X',R')+(Y,S))) \ar{u}[swap]{\barBs((X',R'),\ol{\Sigma}^\star(h+(Y,S))) }
\end{tikzcd}
\]
Commutativity in the $(-)_0$-component is clear because $(\barrho_{(X,R),(Y,S)})_0=\rho_{X,Y}$ and $(\barrho_{(X',R'),(Y,S)})_0=\rho_{X',Y}$ by definition and $\rho_{-,Y}$ is dinatural. Commutativity in the $(-)_1$-component amounts to showing that the following diagram commutes:
\begin{equation}\label{eq:dinat-bar-rho}
\begin{tikzcd}[column sep=5em]
 \Sigma(R\times T_{R,S}) \ar{r}{(\rho_{(X,R),(Y,S)})_1} & T_{R,\Sigmas(R+S)} \ar{d}{\barBs(h,\ol{\Sigma}^\star((X,R)+(Y,S)))_1}  \\
 \Sigma(R'\times T_{R,S}) \ar{u}{\Sigma(h_1\times T_{R,S})} \ar{d}[swap]{\Sigma(R'\times \barBs(h,(Y,S))_1)} & T_{R',\Sigmas(R+S)} \\
 \Sigma(R'\times T_{R',S}) \ar{r}{(\rho_{(X',R'),(Y,S)})_1} & T_{R',\Sigmas(R'+S)} \ar{u}[swap]{\barBs((X',R'),\ol{\Sigma}^\star(h+(Y,S)))_1 }
\end{tikzcd}
\end{equation}
The argument is similar to the one for naturality: The two diagrams below show that~\eqref{eq:dinat-bar-rho} commutes when postcomposed
with
\[\outl_{R',\Sigmas(R+S)}=\cev{p}_{R',\Sigmas(R+S)}\circ
  p_{R',\Sigmas(R+S)} \qqand b_{R',\Sigmas(R+S)} =
  \cev{q}_{R',\Sigmas(R+S)}\circ p_{R',\Sigmas(R+S)}.\] 
Thus it
commutes when postcomposed with $p_{R',\Sigmas(R'+S)}$ (and
analogously for $q_{R',\Sigmas(R'+S)}$), and so it commutes.\qedhere
\[
\begin{tikzcd}[row sep=2em, column sep=3.75em, ampersand replacement=\&]
  \Sigma(R\times T_{R,S})
  \ar{dr}[description]{\Sigma(\outl_R\times\outl_{R,S})}
  \ar{rrr}{(\ol\rho_{(X,R),(Y,S)})_1}
  \& \& { }\&
  T_{R,\Sigmas(R+S)}
  \ar{dl}[description]{\outl_{R,\Sigmas(R+S)}}
  \ar{dd}[description]{\barBs(h,\ol{\Sigma}^\star((X,R)+(Y,S)))_1}
  \\
  { }
  \&
  \Sigma(X\times B(X,Y))
  \descto{l}{(\text{morph. $h$})}
  \ar{r}{\rho_{X,Y}}
  \descto{dr}{(\text{dinat. $\rho_{-,Y}$})}
  \descto{ru}{\eqref{diag:barrho-gra}}
  \&
  B(X,\Sigmas(X+Y))
  \descto[pos=.3]{r}{\eqref{eq:brs-diag}}
  \ar{d}[description]{B(h_0,\Sigmas(X+Y))}
  \&
  { }
  \\
  \Sigma(R'\times T_{R,S})
  \descto{dr}{\eqref{eq:brs-diag}}
  \ar{r}[yshift=.5em]{\Sigma(\outl_{R'}\times \outl_{R,S})}
  \ar{uu}[description]{\Sigma(h_1\times T_{R,S})}
  \ar{dd}[description]{\Sigma(R'\times \barBs(h,(Y,S))_1)}
  \&
  \Sigma(X'\times B(X,Y))
  \ar{u}[description]{\Sigma(h_0\times B(X,Y))}
  \ar{d}[description]{\Sigma(X'\times B(h_0,Y))}
  \&
  B(X',\Sigmas(X+Y))
  \&
  T_{R',\Sigmas(R+S)} \ar{l}[swap]{\outl_{R',\Sigmas(R+S)}}
  \\
  \&
  \Sigma(X'\times B(X',Y))
  \descto{rd}{\eqref{diag:barrho-gra}}
  \ar{r}{\rho_{X',Y}}
  \&
  B(X',\Sigmas(X'+Y))
  \descto{ru}{\eqref{eq:brs-diag}}
  \ar{u}[description]{B(X',\Sigmas(h_0+Y))}
  \&
  \&
  { }
  \\
  \Sigma(R'\times T_{R',S})
  \ar{ur}[description]{\Sigma(\outl_{R'}\times \outl_{R',S})}
  \ar{rrr}{(\rho_{(X',R'),(Y,S)})_1}
  \& \& { }\&
  T_{R',\Sigmas(R'+S)}
  \ar{uu}[description]{\barBs((X',R'),\ol{\Sigma}^\star(h+(Y,S)))_1 }
  \ar{ul}[description]{\outl_{R',\Sigmas(R'+S)}}
\end{tikzcd}
\]
\[
\begin{tikzcd}[row sep=2em, column sep=4em, ampersand replacement=\&]
  \Sigma(R\times T_{R,S})
  \ar{dr}[description]{\Sigma(R\times b_{R,S})}
  \ar{rrr}{(\rho_{(X,R),(Y,S)})_1}
  \& \& { }\&
  T_{R,\Sigmas(R+S)}
  \ar{dl}[description]{b_{R,\Sigmas(R+S)}}
  \ar{dd}[description]{\barBs(h,\ol{\Sigma}^\star((X,R)+(Y,S)))_1}
  \\
  \&
  \Sigma(R\times B(R,S))
  \descto{ru}{\eqref{diag:II}}
  \ar{r}{\rho_{R,S}}
  \descto{dr}{(\text{dinat. $\rho_{-,S}$})}
  \&
  B(R,\Sigmas(R+S))
  \descto[pos=.3]{r}{\eqref{eq:brs-diag}}
  \ar{d}[description]{B(h_1,\Sigmas(R+S))}
  \&
  { }
  \\
  \Sigma(R'\times T_{R,S})
  \descto{dr}{\eqref{eq:brs-diag}}
  \ar{r}[yshift=.5em]{\Sigma(R'\times b_{R,S})}
  \ar{uu}[description]{\Sigma(h_1\times T_{R,S})}
  \ar{dd}[description]{\Sigma(R'\times \barBs(h,(Y,S))_1)}
  \&
  \Sigma(R'\times B(R,S))
  \ar{u}[description]{\Sigma(h_1\times B(R,S))}
  \ar{d}[description]{\Sigma(R'\times B(h_1,Y))}
  \&
  B(R',\Sigmas(R+S))
  \&
  T_{R',\Sigmas(R+S)}
  \ar{l}[swap]{b_{R',\Sigmas(R+S)}}
  \\
  \&
  \Sigma(R'\times B(R',S)) \ar{r}{\rho_{R',S}}
  \descto{dr}{\eqref{diag:II}}
  \&
  B(R',\Sigmas(R'+S))
  \descto{ru}{\eqref{eq:brs-diag}}
  \ar{u}[description]{B(R',\Sigmas(h_1+Y))}
  \&
  \\
  \Sigma(R'\times T_{R',S})
  \ar{ur}[description]{\Sigma(R'\times b_{R',S})}
  \ar{rrr}{(\rho_{(X',R'),(Y,S)})_1}
  \& \& { } \&
  T_{R',\Sigmas(R'+S)}
  \ar{uu}[description]{\barBs((X',R'),\ol{\Sigma}^\star(h+(Y,S)))_1 }
  \ar{ul}[description]{b_{R',\Sigmas(R'+S)}}
\end{tikzcd}
\]
\end{enumerate}
\end{proof}
Relation liftings of higher-order GSOS laws can be derived from their canonical graph liftings.
\begin{construction}\label{cons:rho-lift-rel}
Let $\Sigma\c \C\to \C$ and $B\c \C^\opp \times \C \to \C$ be functors with their canonical relation liftings
\[\ol{\Sigma}_\Rel\c \Rel(\C)\to \Rel(\C) \qqand \barB_\Rel\c \Rel(\C)^\opp\times \Rel(\C)\to \Rel(\C), \]
and suppose that $\Sigma$ preserves strong epimorphisms. Every $\Pt$-pointed higher-order GSOS law
\[ \rho_{X,Y}\c \Sigma(X\times B(X,Y))\to B(X,\Sigmas(X+Y)) \qquad ((X,p_X)\in \Pt/\C,Y\in \C)\]
of $\Sigma$ over $B$ has a (necessarily unique) lifting to a $(V,V)$-pointed higher-order GSOS law $\barrho^\Rel$ of $\ol{\Sigma}_\Rel$ over $\barB_\Rel$. Its component at $((X,R),(p_{(X,R)})\in (V,V)/\Rel(\C)$ and $(Y,S)\in \Rel(\C))$ is given by the composite
\[
\begin{tikzcd} 
\ol{\Sigma}_\Rel((X,R)\times \barB_\Rel((X,R),(Y,S))) \ar[equals]{d}\\
\ol{\Sigma}_\Rel((X,R)\times \barB_\Gra((X,R),(Y,S))^\dag) \ar{d}{\cong}\\
(\ol{\Sigma}_\Gra((X,R)\times \barB_\Gra((X,R),(Y,S))))^\dag \ar{d}{(\barrho^\Gra_{(X,R),(Y,S)})^\dag}\\
\barB_\Gra((X,R),\ol{\Sigma}_\Gra^\star((X,R)+(Y,S)))^\dag 
\ar{d}{\barBs_\Gra(\id, \ol{\Sigma}_\Gra^\star e)^\dag}\\
\barB_\Gra((X,R),\ol{\Sigma}_\Gra^\star((X,R)\vee (Y,S)))^\dag 
\ar{d}{\barBs_\Gra(\id, h)^\dag}\\
\barB_\Gra((X,R),\ol{\Sigma}_\Rel^\star((X,R)\vee (Y,S)))^\dag \ar[equals]{d} \\
\barB_\Rel((X,R),\ol{\Sigma}_\Rel^\star((X,R)\vee (Y,S))) \\
\end{tikzcd}
\]
Here the isomorphism in the second step follows from \Cref{lem:lift-dag-prod}, and
\[e\c (X,R)+(Y,S)\epito ((X,R)+(X,S))^\dag= (X,R)\vee (Y,S)\]
and 
\[ 
h\c \ol{\Sigma}_\Gra^\star((X,R)\vee (X,S))\epito \ol{\Sigma}_\Rel^\star((X,R)\vee (X,S))
\]
are the reflections. For the latter recall that the free $\ol{\Sigma}_\Rel$-algebra on a relation $(X,T)$ is given by applying $(-)^\dag$ to the free $\ol{\Sigma}_\Gra$-algebra on $(X,T)$, see \Cref{cor:free-algebra-adjunction}.
\end{construction}

\section{The $\lambda$-Calculus}\label{sec:lambda}We give a more detailed account of the $\lambda$-calculus in the higher-order abstract GSOS framework. Recall from \Cref{sec:lambda-sketch} that we work with the functors
\[  
\Sigma X = V + \delta X + X\times X \qqand B_0(X,Y)=\llangle X,Y\rrangle \times (Y+Y^X+1)
\]
on the presheaf category $\vcat$, where $\fset$ is the category of finite cardinals and functions and the presheaves $V$, $\delta X$ and $\llangle X,Y\rrangle$ are given by
\[ V(n)=n,\qquad \delta X(n) = X(n+1),\qquad \llangle X,Y\rrangle(n)=\vcat(X^n,Y).  \]
 The initial algebra for $\Sigma$ is the presheaf $\Lambda$ of $\lambda$-terms modulo $\alpha$-equivalence~\cite{DBLP:conf/lics/FiorePT99}. To introduce the higher-order GSOS law for the $\lambda$-calculus, we need some notation.

\begin{notation}
\begin{enumerate}
\item Given $X\in \vcat$ we sometimes write $X_n$ for $X(n)$. For a presheaf morphism (i.e.\ a natural transformation) $f\c X\to Y$, we drop subscripts of components and write $f$ for $f_n\c X_n\to Y_n$.
\item We let $\ev\c Y^X\times X\to X$ denote the evaluation morphism of the exponential object $Y^X$. Given $n\in \fset$, $f\in Y^X(n)$ and $e\in X(n)$ we write $f(e)$ for $\ev(f,e)$.
\item We define the maps $n \xrightarrow{\oname{old}_{n}} n + 1 \xleftarrow{\oname{new}_{n}} 1$ by $\oname{old}_{n}(i)=i$ and $\oname{new}_{n}(0)=n$.
\item For a presheaf $X\in\vcat$ we define $\oname{up}_{X,n}
    = \big(\,X(n) \xra{X(\oname{old}_n)} X(n + 1)\,\big)$.
\item Given a pointed presheaf $(X,\var)\in \V/\vcat$ and a presheaf $Y\in \vcat$ we define
$\rho_{1} \c \delta\llangle X,Y \rrangle \to \llangle X,\delta Y \rrangle$
to be the map sending a natural transformation $f\colon X^{n+1}\to Y$ to the natural transformation $\rho_1(f)\c X^n\to \delta Y$ given by
\[
  \vec{u}
  \in X(m)^{n} \quad \mapsto \quad f_{m+1}(\oname{up}_{X,m}(\vec{u}),\var_{m+1}(\oname{new}_m)) \in Y(m+1).
\]
\item Similarly, for a pointed presheaf $(X,\var)\in \V/\vcat$ and a presheaf $Y\in \vcat$ we define the map $\rho_{2} \c \delta\llangle X,Y \rrangle \to Y^{X}$ by 
\[ \rho_2(f)(e) = f_{n}(\var_n(0),\ldots, \var_n(n-1),e)\qquad\text{for $f\c X^{n+1}\to Y$ and $e\in X(n)$.} \]
\item We write
%  \[\lambda.(-)\c \delta\Sigmas\to \Sigmas \qquad\text{and}\qquad
%    \circ\c \Sigmas\times \Sigmas\to \Sigmas
%  \]
  $\lambda.(-)\c \delta\Sigmas\to \Sigmas$ and $\circ\c \Sigmas\times \Sigmas\to \Sigmas$
  for the natural
  transformations whose components come from the $\Sigma$-algebra structure
  on free $\Sigma$-algebras; here $\circ$ denotes application. In the following we will consider free
  algebras of the form $\Sigmas(X+Y)$. For simplicity, we usually keep inclusion maps implicit: Given $t_1,t_2\in X(n)$ and
  $t_1'\in Y(n)$ we write $t_1\app t_2$
  for $[\eta\comp \inl(t_1)]\circ [\eta\comp \inl(t_2)]$, and similarly
  $t_1\app t_1'$ for $[\eta\comp\inl(t_1)]\circ [\eta\comp \inr(t_1')]$
  etc., where $\inl$ and $\inr$ are the coproduct injections and $\eta\c \Id\to\Sigmas$ is the unit of the free monad $\Sigmas$.

\item Finally, define \(\pi\colon V\to \llangle X,\Sigmas(X+Y)\rrangle \) to be the map sending
   $v\in V(n)=n$ to the the natural transformation
  $\pi(v)(n)\colon X^n\to \Sigmas(X+Y)$ given by the $v$-th projection
  $X^n\to X$ followed by $\eta\comp \inl$.
\end{enumerate} 
\end{notation}
With these preparations at hand, we are now ready to phrase the small-step operational semantics of the
call-by-name $\lambda$-calculus in terms of a
$V$-pointed higher-order GSOS law of the syntax endofunctor $\Sigma X = V+\delta X + X \product X$ over the behaviour bifunctor
$B_0(X,Y)=\llangle X,Y\rrangle \times (Y+Y^X+1)$. A law of this type is given by a family of presheaf maps
\[
\begin{tikzcd}
  V+\delta(X\times \llangle X,Y\rrangle \product (Y + Y^{X} + 1) ) + (X\times \llangle X,Y \rrangle \product (Y + Y^{X} + 1))^2 \ar{d}{\rho_{X,Y}^0}  \\
 \llangle X,\Sigmas(X+Y)  \rrangle \product (\Sigmas(X+Y) + (\Sigmas(X+Y))^{X} + 1)
\end{tikzcd}
\]
dinatural in $(X,\var_X)\in V/\vcat$ and natural in $Y\in \vcat$. We let $\rho_{X,Y,n}$ denote the component of $\rho_{X,Y}$ at $n\in \fset$.

\begin{definition}[$V$-pointed higher-order GSOS law for the call-by-name
  $\lambda$-calculus]
  \label{def:lamgsos}
  \begin{align*}
    \rho^{0}_{X,Y} \c \quad & \Sigma(X \times B_0(X,Y)) & \to \quad & B_0(X, \Sigma^\star (X+Y)) \\
  \rho^{0}_{X,Y,n}(tr) = \quad & \texttt{case}~tr~\texttt{of} \\
  & v \in V(n) & \mapsto \quad & \pi(v),* & \\
  & \mathsf{\lambda}.(t, f,\_) & \mapsto
    \quad & \llangle X, \lambda.(-)\comp \eta \comp \inr \rrangle (\rho_{1}(f)),
   (\eta \comp \inr)^{X}(\rho_{2}(f)) & \\
  & (t_{1}, g, t_1') \app (t_{2}, h,\_) & \mapsto
    \quad & \lambda \vec{u}. ( g_{m}(\vec{u}) \app h_{m}(\vec{u}) ), t_1' \app t_{2} & \\
  & (t_{1}, g, k) \app (t_{2}, h,\_) & \mapsto
    \quad & \lambda \vec{u}. (g_{m}(\vec{u}) \app h_{m}(\vec{u})),\eta \comp \inr \comp k(t_{2}) & \\
  & (t_{1}, g, *) \app (t_{2}, h,\_) & \mapsto
    \quad & \lambda \vec{u}. (g_{m}(\vec{u}) \app h_{m}(\vec{u})),* &
\end{align*}
where $t\in \delta X(n)$, $f\in \delta\llangle X,Y\rrangle(n)$, $g,h\in \llangle X,Y\rrangle(n)$, $\vec{u} \in
X(m)^{n}$ for $m \in \mathbb{N}$, $k\in Y^X(n)$, $t_1, t_2\in X(n)$
and $t_1'\in Y(n)$ (we have omitted the brackets around the pairs on
the right).
\end{definition}

\begin{rem}\label{rem:tau}
By the above definition the first component 
\[ \fst\comp \rho^0_{X,Y}\c V+\delta(X\times \llangle X,Y\rrangle \product (Y + Y^{X} + 1) ) + (X\times \llangle X,Y \rrangle \product (Y + Y^{X} + 1))^2  \to 
 \llangle X,\Sigmas(X+Y)  \rrangle \]
of $\rho_{X,Y}^0$ only depends on $f$ in the clause for abstraction, and only on $g$ and $h$ in the clauses for application. Therefore $\fst\comp \rho_{X,Y}^0$ can be expressed as a composite
\[
\begin{tikzcd}
  V+\delta(X\times \llangle X,Y\rrangle \product (Y + Y^{X} + 1) ) + (X\times \llangle X,Y \rrangle \product (Y + Y^{X} + 1))^2 \ar{d}{\id+\delta(p) + p^2}  \\
  V+\delta \llangle X,Y\rrangle + \llangle X,Y \rrangle^2 \ar{d}{\tau_{X,Y}^0}  \\
 \llangle X,\Sigmas(X+Y)  \rrangle 
\end{tikzcd}
\] 
for suitable $\tau_{X,Y}^0$, where $p$ is the middle product projection.
\end{rem}
The higher-order GSOS law $\rho^0$ correctly captures the operational semantics of the call-by-name $\lambda$-calculus:

\begin{proposition}[\cite{gmstu23}]\label{prop:lambda-law}
The operational model
\begin{equation}\label{eq:operational-model-lambda-app}
 \gamma^0=\langle \gamma_1^0,\gamma_2^0\rangle \c \Lambda\to \llangle \Lambda,\Lambda\rrangle \times (\Lambda+\Lambda^\Lambda +1) 
\end{equation}
of the higher-order GSOS law $\rho^0$ satisfies the following for $n\in \fset$ and $t\in \Lambda(n)$:
\begin{enumerate}
\item\label{lem:lambda-law-1} $\gamma_1^0(t)(\vec{u}) = t[u_0,\ldots,u_{n-1}/0,\ldots,n-1]$ for  all $m\in\fset$ and $\vec{u}\in \Lambda(m)^n$.
\item\label{lem:lambda-law-2} If $t\to t'$, then $\gamma_2^0(t)=t'\in \Lambda(n)$.
\item\label{lem:lambda-law-3} If $t=\lambda x.t'$, then $\gamma_2^0(t)\in \Lambda^\Lambda(n)$ and $\gamma_2(t)(e)=t'[e]$ for all $e\in \Lambda(n)$.
\item\label{lem:lambda-law-4} Otherwise (that is, if $t$ is stuck), one has $\gamma_2^0(t)=\ast$.
\end{enumerate}
\end{proposition}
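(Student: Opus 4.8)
The plan is to unfold the defining diagram of the operational model and then argue by structural induction on $t$. By \Cref{def:operational-model}, $\gamma^0=\langle\gamma_1^0,\gamma_2^0\rangle$ is the unique morphism making the instance of \eqref{diag:gamma} for $B_0$ commute; since the initial-algebra structure $\ini\c\Sigma\Lambda\to\Lambda$ is an isomorphism, this yields an explicit recursive description of $\gamma^0$ on each of the three summands of $\Sigma\Lambda=V+\delta\Lambda+\Lambda\times\Lambda$. Concretely, feeding $\Sigma\langle\id,\gamma^0\rangle$ into $\rho^0_{\Lambda,\Lambda}$ and collapsing the result along $\wh{\ini}\comp\Sigmas\nabla$ (which acts on the codomain $\llangle\Lambda,\Sigmas(\Lambda+\Lambda)\rrangle\times(\Sigmas(\Lambda+\Lambda)+\cdots)$ of $\rho^0$), I would read off from the five clauses of \Cref{def:lamgsos} one equation for a variable, one for an abstraction $\lambda.t'$, and one for an application $t_1\app t_2$.

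I would establish property~\ref{lem:lambda-law-1} first, as it is independent of the behaviour component. By \Cref{rem:tau}, $\fst\comp\rho^0_{X,Y}$ factors through $\tau^0_{X,Y}$ and depends only on the $\llangle X,Y\rrangle$-components of its arguments; hence $\gamma_1^0$ satisfies a recursion referring only to $\gamma_1^0$ on subterms. I would then prove $\gamma_1^0(t)(\vec u)=t[\vec u]$ by induction on $t$, uniformly in $n$ and in the generalized point $\vec u\in\Lambda(m)^n$. The variable case reduces to the fact that $\pi(v)(\vec u)$ is the $v$-th projection $u_v=v[\vec u]$. The application case is immediate from the clause $\lambda\vec u.(g_m(\vec u)\app h_m(\vec u))$ together with the induction hypothesis for $t_1,t_2$ and the fact that $\app$ commutes with substitution. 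The abstraction case is the delicate one: its clause uses $\rho_1$, and one must verify that $\rho_1$ implements the \emph{weaken-and-adjoin-a-fresh-variable} operation on substitutions, namely that $\gamma_1^0(\lambda.t')(\vec u)$ equals $\lambda$ applied to $t'$ substituted with $\oname{up}_{\Lambda,m}(\vec u)$ together with $\var(\oname{new}_m)$ in the new variable, which modulo $\alpha$-equivalence is exactly $(\lambda.t')[\vec u]$. This is where the bulk of the de Bruijn bookkeeping resides.

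With property~\ref{lem:lambda-law-1} in hand, properties~\ref{lem:lambda-law-2}--\ref{lem:lambda-law-4} follow by matching the second component $\gamma_2^0$ against the three summands of $\Lambda+\Lambda^\Lambda+1$. For an abstraction $t=\lambda.t'$, the $\lambda$-clause gives $\gamma_2^0(t)=(\eta\comp\inr)^{\Lambda}(\rho_2(f))\in\Lambda^\Lambda(n)$, and unfolding $\rho_2$ together with property~\ref{lem:lambda-law-1} yields $\gamma_2^0(t)(e)=t'[e]$, which is property~\ref{lem:lambda-law-3}. For an application $t_1\app t_2$ the behaviour is dictated by $\gamma_2^0(t_1)$: if $t_1\to t_1'$ (the $\Lambda$-summand) the clause returns $t_1'\app t_2$, an instance of rule \texttt{app1}; if $t_1=\lambda x.s$ is a value (the $\Lambda^\Lambda$-summand, represented by $k$) the clause returns $k(t_2)$, which by property~\ref{lem:lambda-law-3} equals $s[t_2]$, the \texttt{app2}-redex; both give property~\ref{lem:lambda-law-2}. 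If $t_1$ is stuck (the $1$-summand) the clause returns $\ast$, as does the variable clause, yielding property~\ref{lem:lambda-law-4}. Since every $t$ has exactly one of these three shapes, the four cases are exhaustive.

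The main obstacle is the substitution lemma~\ref{lem:lambda-law-1}, and within it the abstraction case: checking that the categorically defined map $\rho_1$ correctly performs the context extension under a binder. Everything else is a routine unfolding of the clauses of $\rho^0$ against the collapse $\wh{\ini}\comp\Sigmas\nabla$, but the binder case requires matching the use of $\oname{old}_n$, $\oname{new}_n$, and $\oname{up}_{\Lambda,n}$ in $\rho_1$ with capture-avoiding substitution on $\alpha$-equivalence classes, which is the technical heart of the argument.
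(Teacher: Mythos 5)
The paper does not actually prove \Cref{prop:lambda-law}: it is stated with the citation tag \cite{gmstu23} and imported from that earlier work (where the law $\rho^0$ is \cite[Def.~5.8]{gmstu23}), so there is no in-paper proof to compare yours against. Judged on its own merits, your argument is correct and is the natural way to establish the result. Unfolding \eqref{diag:gamma} using that $\ini$ is an isomorphism gives the recursive clauses of $\gamma^0$ on the three summands of $\Sigma\Lambda$; proving the substitution property \ref{lem:lambda-law-1} first is legitimate precisely for the reason you give (\Cref{rem:tau}: $\fst\comp\rho^0$ factors through $\tau^0$, so $\gamma_1^0$ satisfies a self-contained recursion), and you correctly identify the abstraction case — checking that $\rho_1$, via $\oname{old}_n$, $\oname{new}_n$, $\oname{up}_{\Lambda,m}$, implements context extension under the binder so that $\gamma_1^0(\lambda.t')(\vec u)=(\lambda.t')[\vec u]$ — as the only non-routine step. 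One small refinement worth making explicit: properties \ref{lem:lambda-law-2}--\ref{lem:lambda-law-4} must be proved \emph{jointly} by structural induction, since for an application $t_1\app t_2$ the relevant clause of $\rho^0$ is selected by which summand $\gamma_2^0(t_1)$ lies in, and relating that summand back to whether $t_1$ reduces, is an abstraction, or is stuck uses the inductive hypotheses \ref{lem:lambda-law-2}, \ref{lem:lambda-law-3}, and \ref{lem:lambda-law-4} for $t_1$ respectively (together with the cbn trichotomy you invoke). Your case analysis does exactly this implicitly, so the proof goes through as written.
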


\begin{remark}\label{rem:rho0-to-rho-lambda}
In order to deal with weak similarity, we need to work with a nondeterministic version $B$ of the bifunctor $B_0(X,Y)=\llangle X,Y\rrangle \times (Y+Y^X+1)$. The nondeterminism is introduced via the pointwise powerset functor $\Pow_\star\c \vcat\to\vcat$ given by $X\mapsto \Pow\comp X$, and we put
\[ B(X,Y)=\llangle X,Y\rrangle \times \Pow_\star(Y+Y^X). \]
Note that we dropped the ``$+1$''; the reason is our intended notion of weak similarity, viz.\ the open extension of applicative similarity, which does not detect whether a term is stuck. 
We extend the law $\rho^0$ of \Cref{def:lamgsos} to a higher-order GSOS law $\rho$ of $\Sigma$ over $B$ as follows. Given $(X,\var_X)\in V/\vcat$ and $Y\in \vcat$, the first component $\fst\comp \rho_{X,Y}$ is the following composite, where $p$ is the middle product projection and $\tau_{X,Y}^0$ has been introduced in \Cref{rem:tau}:
\[
\begin{tikzcd}
\Sigma(X\times \llangle X,Y\rrangle \times \Pow_\star(Y+Y^X)) \ar[equals]{d} \\
V + \delta (X\times \llangle X,Y\rrangle \times \Pow_\star(Y+Y^X)) + (X\times \llangle  X,Y\rrangle\times \Pow(Y+Y^X))^2 \ar{d}{\id+\delta(p)+p^2} \\
V + \delta \llangle X,Y\rrangle + \llangle  X,Y\rrangle^2 \ar{d}{\tau_{X,Y}^0} \\
\llangle X,\Sigmas(X+Y)\rrangle
\end{tikzcd}
\]
For the second component $\snd\comp \rho_{X,Y}$ we need a number of auxiliary natural transformations involving the powerset functor:
\begin{align*}
\st_{A,B} &\c A\times \Pow(B)\to \Pow(A\times B), && (a,S)\mapsto \{ (a,b) : b\in S \};\\
\delta_A &\c \Pow A\times \Pow A \to \Pow(A\times A), && (S,T)\mapsto \{ (s,t) : s\in S, t\in T \};\\
\phi_{A} &\c \Pow(A+1)\to \Pow(A), && S\mapsto S\smin \{\ast\};\\
\varepsilon_A & \c \Pow(A)\to \Pow(A+1), && \emptyset\mapsto 1,\, S\mapsto S~(S\neq \emptyset);\\
\eta_A &\c A\to \Pow A, && a\mapsto \{a\}; \\
\can_{A,B,C}&\c \Pow A + \Pow B + \Pow Z \to \Pow(A+B+C) && S\mapsto S.
\end{align*}
%From the definition of the Egli-Milner relation it is immediate that the map $\st_{A,B}$ is relation-preserving, that is, it lifts to a $\Rel(\C)$-morphism from $(A,R)\times \barPow(B,S)$ to $\barPow((A,R),(B,S))$ for all relations $(A,R)$ and $(B,S)$. Analogously for $\delta_A$, $\phi_{A,B}$, $\eta_A$ and $\can_{A,B,C}$. The map $\epsilon_A$ is not relation-preserving: one has $E_R(\emptyset,S)$ for every nonempty set $S\seq A$, but $E_R(\epsilon_A(\emptyset),\epsilon_A(S))$ (that is, $E_R(1,S)$) does not hold. However, the domain restriction of $\epsilon_A$ to $P_{+}(A)$ (nonempty subsets) is relation-preserving. 

The map $\snd\comp \rho_{X,Y}$ is the defined to be composite given as follows for $n\in\fset$. We drop subscripts of $\st,\delta,\phi,\epsilon,\eta$.
\begin{equation}\label{eq:rho0-to-rho-lambda}
\begin{tikzcd}[scale cd=.95]
\Sigma(X\times \llangle X,Y\rrangle \times \Pow_\star(Y+Y^X))(n) \ar[equals]{d} \\
V(n)+ X(n+1) \times \llangle X,Y\rrangle(n+1) \times \Pow(Y(n+1)+Y^X(n+1)) + (X(n)\times \llangle X,Y\rrangle(n)\times \Pow(Y(n)+Y^X(n)))^2 \ar{d}{\id+\id\times\id\times \varepsilon + (\id\times\id\times\varepsilon)^2}\\
V(n)+ X(n+1) \times \llangle X,Y\rrangle(n+1) \times \Pow(Y(n+1)+Y^X(n+1)+1) + (X(n)\times \llangle X,Y\rrangle(n)\times \Pow(Y(n)+Y^X(n)+1))^2 \ar{d}{\id + \st + \st^2}\\
V(n)+ \Pow(X(n+1)\times \llangle X,Y\rrangle(n+1)\times (Y(n+1)+Y^X(n+1)+1)) + (\Pow(X(n)\times \llangle X,Y\rrangle (n) \times (Y(n)+Y^X(n)+1)))^2 \ar[equals]{d}\\
V(n) + \Pow(X(n+1)\times B_0(X,Y)(n+1) ) + (\Pow(X(n)\times B_0(X,Y)(n) ))^2 \ar{d}{\eta + \id + \delta}\\
\Pow V(n)+ \Pow(X(n+1)\times B_0(X,Y)(n+1) ) + \Pow((X(n)\times B_0(X,Y)(n) )^2) \ar{d}{\can}\\
\Pow(V(n)+ X(n+1)\times B_0(X,Y)(n+1) + (X(n)\times B_0(X,Y)(n))^2) \ar[equals]{d} \\
\Pow(\Sigma(X\times B_0(X,Y))(n)) \ar{d}{\Pow \rho^0_{X,Y,n}} \\
\Pow(B_0(X,\Sigmas(X+Y))(n)) \ar[equals]{d}\\
\Pow(\llangle X,\Sigmas(X\times Y) \rrangle(n) \times (\Sigmas(X+Y)(n)+(\Sigmas (X+Y))^X(n)+1)) \ar{d}{\Pow\snd} \\
\Pow((\Sigmas(X+Y)(n)+(\Sigmas (X+Y))^X(n)+1)) \ar{d}{\phi} \\
\Pow((\Sigmas(X+Y)(n)+(\Sigmas (X+Y))^X(n)))
\end{tikzcd}
\end{equation}
\end{remark}

\begin{lemma}\label{lem:operational-model-lambda-extended}
The operational model of the higher-order GSOS law $\rho$ is given by
\[ \gamma=\langle \gamma_1,\gamma_2\rangle\c \Lambda\to \llangle \Lambda,\Lambda \rrangle \times \Pow_\star(\Lambda+\Lambda^\Lambda) \]
where $\gamma_1=\gamma_1^0$ and $\gamma_2$ is the following composite at $n\in \fset$:
\[ \gamma_2 = (\, \Lambda(n)\xto{\gamma_2^0} \Lambda(n)+\Lambda^\Lambda(n)+1 \xto{\eta} \Pow(\Lambda(n)+\Lambda^\Lambda(n)+1) \xto{\phi} \Pow(\Lambda(n)+\Lambda^\Lambda(n)) \,). \]
\end{lemma}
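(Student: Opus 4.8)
The plan is to exhibit the claimed coalgebra as the image of the deterministic operational model $\gamma^0$ under a canonical ``singletonification'' natural transformation, and then invoke the uniqueness clause of the primitive recursion defining the operational model (\Cref{def:operational-model}). Concretely, define
\[
  \theta_{X,Y} \c B_0(X,Y)\to B(X,Y),
  \qquad
  \theta_{X,Y}=\id_{\llangle X,Y\rrangle}\times(\phi\comp\eta),
\]
whose second component sends $b\in Y+Y^X$ to the singleton $\{b\}$ and the stuck marker $\ast$ to $\emptyset$, with $\eta,\phi$ as in \eqref{eq:rho0-to-rho-lambda}. Since $\theta$ is the identity on the first factor and $\phi\comp\eta$ is natural, $\theta$ is natural in $Y$ and dinatural in $X$. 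The claimed coalgebra is exactly $\theta_{\Lambda,\Lambda}\comp\gamma^0$, because $\theta_{\Lambda,\Lambda}\comp\langle\gamma^0_1,\gamma^0_2\rangle=\langle\gamma^0_1,\phi\comp\eta\comp\gamma^0_2\rangle$, which matches $\langle\gamma_1,\gamma_2\rangle$ in the statement. Thus it suffices to prove that $\theta_{\Lambda,\Lambda}\comp\gamma^0$ is the operational model of $\rho$.

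First I would establish the key \emph{compatibility square}
\[
\begin{tikzcd}[column sep=5em]
\Sigma(X\times B_0(X,Y)) \ar{r}{\rho^0_{X,Y}} \ar{d}[swap]{\Sigma(\id\times\theta_{X,Y})} & B_0(X,\Sigmas(X+Y)) \ar{d}{\theta_{X,\Sigmas(X+Y)}} \\
\Sigma(X\times B(X,Y)) \ar{r}{\rho_{X,Y}} & B(X,\Sigmas(X+Y))
\end{tikzcd}
\]
which expresses that $\rho$, restricted to ``already deterministic'' behaviours, agrees with $\rho^0$ followed by $\theta$. For the first component this is immediate from \Cref{rem:tau}: both $\fst\comp\rho^0$ and $\fst\comp\rho$ factor through the same $\tau^0_{X,Y}$ after the projection $\id+\delta(p)+p^2$, and $\theta$ preserves first components. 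For the second component one traces the explicit composite \eqref{eq:rho0-to-rho-lambda} on an input all of whose subterm behaviours lie in the image of $\theta$: there $\varepsilon$ recovers the original $B_0$-behaviour as a singleton (and reinstates $\ast$ from $\emptyset$), the strength $\st$ and the merging map $\delta$ then produce a singleton, $\Pow\rho^0$ applies $\rho^0$ to its unique element, and the final $\phi$ erases the stuck marker exactly as $\theta$ does. Hence the composite collapses to $\phi\comp\eta\comp\snd\comp\rho^0=\snd\comp\theta\comp\rho^0$.

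With the square in hand, the identification is a formal diagram chase. Starting from the defining equation \eqref{diag:gamma} for $\gamma^0$ and postcomposing with $\theta_{\Lambda,\Lambda}$, I would push $\theta$ leftwards through the steps $B_0(\id,\wh\iota)$ and $B_0(\id,\Sigmas\nabla)$ using naturality of $\theta$ in its second argument, then rewrite $\theta_{\Lambda,\Sigmas(\Lambda+\Lambda)}\comp\rho^0_{\Lambda,\Lambda}=\rho_{\Lambda,\Lambda}\comp\Sigma(\id\times\theta_{\Lambda,\Lambda})$ via the compatibility square, and finally use $\Sigma(\id\times\theta)\comp\Sigma\langle\id,\gamma^0\rangle=\Sigma\langle\id,\theta\comp\gamma^0\rangle$. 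The result is precisely the instance of \eqref{diag:gamma} asserting that $\theta_{\Lambda,\Lambda}\comp\gamma^0$ satisfies the defining recursion for the operational model of $\rho$; uniqueness of primitive recursion then forces $\gamma=\theta_{\Lambda,\Lambda}\comp\gamma^0=\langle\gamma^0_1,\phi\comp\eta\comp\gamma^0_2\rangle$, which is the claim.

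I expect the main obstacle to be the second-component verification of the compatibility square: the composite \eqref{eq:rho0-to-rho-lambda} is long, and one must check case by case (variable, abstraction, and the three application clauses of \Cref{def:lamgsos}) that feeding behaviours of the form $\theta(b)$ through $\varepsilon$, $\st$, $\delta$, $\Pow\rho^0$ and $\phi$ genuinely reproduces $\theta\comp\rho^0$. In particular, one must confirm that the product-of-singletons formed by $\delta$ in the application clauses introduces no spurious nondeterminism and that the $\varepsilon/\phi$ pair correctly round-trips the stuck marker.
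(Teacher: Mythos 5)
Your proposal is correct and is in substance the same as the paper's proof: the paper likewise argues that $\langle\gamma_1^0,\,\phi\comp\eta\comp\gamma_2^0\rangle$ satisfies the defining diagram \eqref{diag:gamma} of the operational model and concludes by uniqueness of primitive recursion, describing the remaining work as a routine verification using elementary properties of $\st,\delta,\can,\varepsilon,\phi$. Your compatibility square for the singleton embedding $\theta=\id\times(\phi\comp\eta)$, together with the naturality chase, is simply a clean modular packaging of exactly that verification (the essential computational facts being the round-trip $\varepsilon\comp\phi\comp\eta=\eta$ and that $\st$, $\delta$, $\can$ send singletons to singletons), so no further comparison is needed.
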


\begin{proof}
One only needs to show that the map $\langle \gamma_1^0,\,\phi\comp \eta\comp \gamma_2^0\rangle$ satisfies the diagram \eqref{diag:gamma} defining $\gamma$. This follows via a lengthy routine verification from the definition of $\rho$, using elementary properties of the involved maps $\st,\delta,\can,\epsilon,\phi$.  
\end{proof}

We now instantiate the data of \Cref{asm-sim} to
\begin{enumerate}
\item the functor  $\Sigma X=V+\delta X+X\times X$;
\item the functor $B(X,Y)=\llangle X,Y\rrangle \times \Pow_\star(Y+Y^X)$ of \Cref{rem:rho0-to-rho-lambda}, preordered by equality in the first component and inclusion in the second one. Its relation lifting is $\barB=F\times (\barPow_\star \comp G)$, where $F$ and $G$ are the canonical relation liftings of the bifunctors $(X,Y)\mapsto \llangle X,Y\rrangle$ and $(X,Y)\mapsto (Y+Y^X)$ and $\barPow_\star$ is the lifting of $\Pow_\star$ given by
\[ \ol{\Pow_\star}(X,R) = (\Pow_\star(X),S_R), \]
where $S_R(n)\seq \Pow(X(n)) \times \Pow(X(n))$ is the (one-sided) Egli-Milner relation induced by $R(n)\seq X(n)\times X(n)$, cf. \Cref{rem:weak-vs-strong};
\item the higher-order GSOS law $\rho$ of $\ol{\Sigma}$ over $\barB$ as described in \Cref{rem:rho0-to-rho-lambda}.
\end{enumerate}
Let us verify that this data satisfies the required properties:
\begin{lemma}
\begin{enumerate}
\item The functor $\Sigma$ preserves strong epimorphisms.
\item The functor $\barB$ is good for simulations.
\item The higher-order GSOS law $\rho$ admits a relation lifting.
\end{enumerate}
\end{lemma}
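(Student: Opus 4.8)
The plan is to follow the template of the analogous verification for $\HO$ specifications (the \emph{Proof details for \Cref{S:HO-specs}}), treating the two product factors of $\barB = F \times (\ol{\Pow_\star}\comp G)$ separately and reusing the general machinery of \Cref{prop:resp-comp} and \Cref{cons:rho-lift}. For (1), recall that in the presheaf topos $\vcat = \Set^\fset$ the strong epimorphisms are exactly the objectwise surjective natural transformations. Since $\Sigma X = V + \delta X + X\times X$ is assembled from the constant functor $V$, the shift functor $\delta$ (which acts by $(\delta f)_n = f_{n+1}$), the diagonal product, and a coproduct, and since objectwise surjections are closed under all of these operations, $\Sigma$ preserves strong epimorphisms.

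For (2) I would check the three conditions of \Cref{def:good-for-simulations-bifunctor} against the preorder on $B(X,Y)$ given by equality in the first component and inclusion in the second. Condition \ref{eq:good-for-simulations-bifunctor-1}: a product of good-for-simulations relations is again good for simulations with respect to the product preorder, since a right-lax morphism into a product decomposes into right-lax morphisms into the factors, which goodness upgrades to strict ones; the first factor $F((X,R),(Y,S))$ is good because it carries the equality preorder, w.r.t.\ which every right-lax morphism is already strict, and the second factor is good by up-closure of the (pointwise) Egli--Milner relation, exactly as in \Cref{ex:good-for-simulations}. Condition \ref{eq:good-for-simulations-bifunctor-2}: by \Cref{prop:olB-resp-identities}, $F$ and $G$ being canonical liftings, $F((X,X),(Y,Y))$ and $G((X,X),(Y,Y))$ are identity relations; applying $\ol{\Pow_\star}$ to the identity relation yields the reflexive relation $(\Pow_\star(Y+Y^X),\seq)$, and a product of reflexive relations is reflexive, so $\barB((X,X),(Y,Y))$ is reflexive. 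Condition \ref{eq:good-for-simulations-bifunctor-3}: composition of product relations is computed factorwise, i.e.\ $\barB((X,R),(Y,S))\bullet\barB((X,X),(Y,S'))\cong \bigl(F((X,R),(Y,S))\bullet F((X,X),(Y,S'))\bigr)\times\bigl(\ol{\Pow_\star}G((X,R),(Y,S))\bullet\ol{\Pow_\star}G((X,X),(Y,S'))\bigr)$; on the first factor \eqref{eq:olB-resp-rel-comp} of \Cref{prop:resp-comp} applies because $\llangle X,-\rrangle$, being pointwise representable, preserves pullbacks; on the second factor I combine the pointwise Egli--Milner analogue of \eqref{eq:pow-pres-comp} with \eqref{eq:olB-resp-rel-comp} for $G$, using that $Y\mapsto Y+Y^X$ preserves pullbacks (as $(-)^X$ is a right adjoint and coproducts commute with pullbacks in the lextensive category $\vcat$, cf.\ \Cref{ex:categories}). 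Both factors then land in $\barB((X,R),(Y,S)\bullet(Y,S'))$, giving (G3).

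For (3), by \Cref{rem:rho-rel-lifting} it suffices to show that each component $\rho_{X,Y}$ is a $\Rel$-morphism, i.e.\ preserves the relations induced by $\ol{\Sigma}_\Rel$ and $\barB$, and this can be verified along the decomposition \eqref{eq:rho0-to-rho-lambda}. The powerset-bookkeeping maps $\st,\delta,\phi,\varepsilon,\eta,\can$ are relation-preserving directly from the definition of the Egli--Milner relation, exactly as in the $\HO$ case; the remaining map $\rho^0_{X,Y}$ is relation-preserving because \Cref{cons:rho-lift}/\Cref{cons:rho-lift-rel} produce a canonical relation lifting of $\rho^0$ with respect to $\ol{\Sigma}_\Rel$ and the canonical lifting $\barB_0$ of $B_0(X,Y)=\llangle X,Y\rrangle\times(Y+Y^X+1)$, whose two factor liftings are precisely $F$ and the canonical lifting of $Y\mapsto Y+Y^X+1$. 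Composing relation-preserving maps yields relation-preservation of $\rho_{X,Y}$, whence the (necessarily unique) relation lifting $\barrho$ exists.

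The main obstacle I expect is condition (G3) in item (2): it requires the two auxiliary facts that relation composition distributes over the product lifting factorwise, and that the constituent bifunctors weakly preserve pullbacks in their second argument, so that \Cref{prop:resp-comp} is applicable; the pullback preservation of $Y\mapsto Y+Y^X$ hinges on lextensivity of $\vcat$ and must be stated carefully for the mixed-variance bifunctor. A secondary subtlety in item (3) is matching the $+1$-summand handled by $\phi$ and $\varepsilon$ in \eqref{eq:rho0-to-rho-lambda} with the relation lifting of $B_0$, which I would settle by the same Egli--Milner computations used for the other bookkeeping maps.
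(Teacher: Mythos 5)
Parts (1) and (2) of your proposal are sound. In (1) your componentwise argument for $\delta$ (its action $(\delta f)_n=f_{n+1}$ makes preservation of objectwise surjections immediate) is a slightly more elementary substitute for the paper's appeal to $\delta$ being a left adjoint; in (2) your factorwise verification of (G1)--(G3) for $\barB=F\times(\ol{\Pow_\star}\comp G)$ supplies exactly the details the paper compresses into the remark that all the structure involved, including relation composition, is formed componentwise in $\Set$, and the auxiliary facts you invoke (factorwise computation of composites of product relations, pullback preservation of $\llangle X,-\rrangle$ by representability and of $Y\mapsto Y+Y^X$ by extensivity) are all correct and make \Cref{prop:resp-comp} applicable.

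Part (3) has a genuine gap, and it sits exactly at the point you dismissed as a ``secondary subtlety''. Your strategy is to factor $\snd\comp\rho_{X,Y}$ as in \eqref{eq:rho0-to-rho-lambda} and argue that each factor is relation-preserving; but the map $\varepsilon_A\c\Pow(A)\to\Pow(A+1)$, $\emptyset\mapsto\{\ast\}$, $S\mapsto S$ ($S\neq\emptyset$), is \emph{not} relation-preserving for the one-sided Egli--Milner liftings: $S_R(\emptyset,V)$ holds for \emph{every} $V$, yet $S_{R+\Delta_1}(\{\ast\},\varepsilon(V))$ requires $\ast\in\varepsilon(V)$, i.e.\ $V=\emptyset$, since the canonical lifting of $A\mapsto A+1$ relates $\ast$ only to $\ast$. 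This case is not vacuous: it is precisely the situation of related application nodes $(t_1,g,\emptyset)\app(t_2,h,A_2)$ and $(t_1',g',A_1')\app(t_2',h',A_2')$ with $A_1'\neq\emptyset$, which must be handled in order to obtain a relation lifting of $\rho$. Note also that the $\HO$-specification composite \eqref{eq:rho0-to-rho}, which you cite as your template, contains no $\varepsilon$ or $\phi$ at all, so ``exactly as in the $\HO$ case'' does not apply. The paper therefore proves (3) by a direct case analysis on elements: in the subcase $A_1=\emptyset$ the \emph{whole} composite outputs $\emptyset$ (the $\ast$ created by $\varepsilon$ is removed again by $\phi$ at the very end), and $\emptyset$ is Egli--Milner-below everything; relation preservation here is a property of the composite, not of its individual factors. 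One could repair your factorization by putting non-canonical relations on the intermediate objects (relating $\ast$ to everything on the right), but then relation-preservation of $\Pow\rho^0_{X,Y}$ with respect to those relations is no longer furnished by \Cref{cons:rho-lift-rel} and would have to be re-proved by hand --- which essentially amounts to the paper's case analysis.
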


\begin{proof}
\begin{enumerate}
\item Since strong epimorphisms (i.e.~componentwise surjective natural transformations) are stable under coproducts, it suffices to show that the functors $\delta$ and $\Id\times \Id$ preserve strong epimorphisms. For the functor $\delta$ this follows from the fact that it is a left adjoint~\cite{DBLP:conf/lics/FiorePT99}. For $\Id\times \Id$ use that strong epimorphisms are stable under products, see \Cref{rem:c-props}.
\item This is shown as in the proof for \Cref{S:HO-specs} verifying the
  \Cref{asm-sim} \Cref{lem:assumptions-satisfied-ho-2}, using that all the structure involved (including relation composition $\bullet$ in $\vcat$) is just formed componentwise in $\Set$. 
\item Given a $(V,V)$-pointed relation $(X,R),p_{(X,R)})\in (V,V)/\Rel(\vcat)$ and a relation $(Y,S)\in \Rel(\vcat)$ we need to show that the map $\rho_{X,Y}$ is relation-preserving with respect to the relations on the domain $\Sigma(X\times \llangle X,Y\rrangle \times \Pow_\star(Y+Y^X))(n)$ and codomain $\llangle X, \Sigmas(X+Y)\rrangle \times \Pow(\Sigmas(X+Y)(n)+\Sigmas(X+Y)^X(n))$ obtained by applying the relation liftings $\ol{\Sigma}$, $\barPow_\star$, $F$, $G$ of the functors $\Sigma$, $\Pow_\star$, $(X,Y)\mapsto \llangle X,Y\rrangle$, $(X,Y)\mapsto Y+Y^X$. There are several cases; we follow the notation of \Cref{def:lamgsos}.
\begin{enumerate} 
\item Suppose that $\lambda.(t,f,A)$ and $\lambda.(t',f',A')\in X(n+1)\times \llangle X,Y\rrangle(n+1)\times \Pow(Y(n+1)+Y^X(n+1))$ are related. Then $\rho_{X,Y}$ sends $\lambda.(t,f,A)$ to the pair $(\llangle X, \lambda.(-)\comp \eta \comp \inr \rrangle (\rho_{1}(f)),
   (\eta \comp \inr)^{X}(\rho_{2}(f)))$ and $\lambda.(t',f',A')$ to the pair $(\llangle X, \lambda.(-)\comp \eta \comp \inr \rrangle (\rho_{1}(f')),
   (\eta \comp \inr)^{X}(\rho_{2}(f')))$. These pairs are related because $f$ and $f'$ are related in $Y^X(n+1)\seq Y(n+1)+Y^X(n+1)$ and $\rho_1$, $\rho_2$ are relation-preserving, which is easy to see by their definition.
\item Suppose that $(t_1,g,A_1)\app (t_2,h,A_2)$ and $(t_1',g',A_1')\app (t_2',h',A_2')$ are related in $(X(n)\times \llangle X,Y\rrangle(n)\times \Pow(Y(n)+Y^X(n)))^2$. Then $\fst\comp \rho_{X,Y}$ sends the two pairs to $\lambda \vec{u}. ( g_{m}(\vec{u}) \app h_{m}(\vec{u})$ and $\lambda \vec{u}. ( g'_{m}(\vec{u}) \app h'_{m}(\vec{u})$, respectively, and these are related in $\llangle X,\Sigmas(X+Y)\rrangle(n)$ because the $\Sigma$-algebra structure on $\Sigmas(X+Y)$ is relation-preserving by \Cref{prop:free-monad-lift}. For $\snd\comp \rho_{X,Y}$ we consider two subcases:
\begin{enumerate}
\item If $A_1=\emptyset$, then $\rho_{X,Y}$ sends $(t_1,g,A_1)\app (t_2,h,A_2)$ to $\emptyset\in \Pow(\Sigmas(X+Y)(n)+(\Sigmas(X+Y))^X(n))$, which is related to every element of $\Pow(\Sigmas(X+Y)(n)+(\Sigmas(X+Y))^X(n))$ by definition of the Egli-Milner relation.
\item If $A_1\neq \emptyset$, then $\rho_{X,Y}$ sends $(t_1,g,A_1)\app (t_2,h,A_2)$ to 
$\{ s \app t_2 : s\in A_1\cap Y(n) \} \cup \{ \eta\comp \inr\comp k(t_2) : k\in A_1\cap Y^X(n) \}$ and $(t_1',g',A_1')\app (t_2',h',A_2')$ to 
$\{ s' \app t_2' : s'\in A_1'\cap Y(n) \} \cup \{ \eta\comp \inr\comp k'(t_2') : k'\in A_1'\cap Y^X(n) \}$, and these two sets are clearly related by the Egli-Milner relation because $A_1,A_1'$ are related, the $\Sigma$-algebra structure on $\Sigmas(X+Y)$ is relation-preserving, and by definition of the relation lifting of $(X,Y)\mapsto Y+Y^X$.\qedhere  
\end{enumerate}
\end{enumerate}
\end{enumerate}

\end{proof}
Next we describe the weakening $\wt{\gamma}$ of the operational model \eqref{eq:operational-model-lambda-app}.
\begin{definition}\label{def:weak-operational-model-lambda}
The \emph{weak operational model} is the $B(\Lambda,-)$-coalgebra
\[ \wt{\gamma}=\langle \wt{\gamma}_1, \wt{\gamma}_2\rangle \c \Lambda\to \llangle \Lambda,\Lambda\rrangle \times \Pow_\star(\Lambda+\Lambda^\Lambda +1)   \]
given for $t\in \Lambda(n)$ by
\[
\wt{\gamma}_1(t)=\gamma_1(t) \qqand \wt{\gamma}_2(t) = \;\{ \ol{t}\in \Lambda(n) : t\To \ol{t} \}\;\cup \; \{ f\in \Lambda^\Lambda(n) : \exists \ol{t}.\,t\To \ol{t} \wedge \gamma^0_2(\ol{t})=f  \}.
\]
Here $\To$ is the reflexive transitive hull of the reduction relation $\to$.\end{definition}

\begin{lemma}\label{lem:weakening-lambda}
The coalgebra $\wt\gamma$ is a weakening of $\gamma$, cf.\ \Cref{def:weakaning-and-weak-sim}.
\end{lemma}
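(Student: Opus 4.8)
The plan is to unfold \Cref{def:weakaning-and-weak-sim} into a concrete pointwise condition and then discharge it by induction on reduction sequences. Since $\barB=F\times(\ol{\Pow_\star}\comp G)$ and products of relations are formed componentwise, the relation $E_R=\barB((\Lambda,\Lambda),(\Lambda,R))$ on $B(\Lambda,\Lambda)=\llangle\Lambda,\Lambda\rrangle\times\Pow_\star(\Lambda+\Lambda^\Lambda)$ is the product of the lifted hom-relation on the first factor and the one-sided Egli--Milner relation $S$ on the second, where $S$ is induced by the relation $E^0:=G((\Lambda,\Lambda),(\Lambda,R))$ on $\Lambda+\Lambda^\Lambda$; as for the lifting $\barB_0$ in \Cref{rem:weak-vs-strong}\ref{rem:weak-vs-strong-1}, $E^0$ relates two terms iff they are $R$-related and two functions iff they send equal inputs to $R$-related outputs, and in particular it relates the two coproduct summands $\Lambda$ and $\Lambda^\Lambda$ separately. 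Because $E_R$ is a relation (jointly monic projections) and all liftings are computed componentwise in $\Set$, a morphism $\wt\gamma_R$ in either diagram exists iff a pointwise containment holds, and then assembles uniquely into a $\vcat$-morphism. Thus condition~(2) unfolds to: for every $n$ and every $(p,q)\in R(n)$ the pair $(\gamma(p),\wt\gamma(q))$ lies in $E_R(n)$, and condition~(1) to the same with $(\wt\gamma(p),\wt\gamma(q))$. Since $\wt\gamma_1=\gamma_1$ by \Cref{def:weak-operational-model-lambda}, the first-component constraint is identical in both, so I would reduce the whole statement to the equivalence, for every $R$, of ``$S(\gamma_2(p),\wt\gamma_2(q))$ for all $(p,q)\in R$'' and ``$S(\wt\gamma_2(p),\wt\gamma_2(q))$ for all $(p,q)\in R$''.

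By \Cref{lem:operational-model-lambda-extended} and \Cref{prop:lambda-law}, $\gamma_2(p)$ is the single-step behaviour: $\{p'\}$ if $p\to p'$, the singleton $\{\gamma_2^0(p)\}$ holding the function $e\mapsto s[e]$ if $p=\lambda x.s$, and $\emptyset$ if $p$ is stuck. First I would observe that $\gamma_2(p)\seq\wt\gamma_2(p)$ in every case (a one-step reduct satisfies $p\To p'$; a $\lambda$-abstraction has $p\To p$; and $\emptyset$ is trivially contained). As $S(U,V)$ only requires a witness in $V$ for each element of $U$, it is preserved under shrinking its first argument, so condition~(1) immediately yields condition~(2). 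This is the easy direction.

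For the converse I would isolate a propagation lemma: assuming condition~(2) for all pairs in $R$, for every $(p,q)\in R$ and every $\ol p$ with $p\To\ol p$ there exists $\ol q$ with $q\To\ol q$ and $R(\ol p,\ol q)$. This is proved by induction on the length of $p=p_0\to\cdots\to p_k=\ol p$, the base case being trivial. In the step $p\to p_1$, determinacy of call-by-name reduction gives $\gamma_2(p)=\{p_1\}$, so condition~(2) at $(p,q)$ forces a $v\in\wt\gamma_2(q)$ with $(p_1,v)\in E^0$; since $p_1$ is a term and $E^0$ relates the summands separately, $v=q_1\in\Lambda$ with $R(p_1,q_1)$, and $q_1\in\wt\gamma_2(q)\cap\Lambda$ means $q\To q_1$. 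Applying the induction hypothesis to $(p_1,q_1)\in R$ and composing $q\To q_1\To\ol q$ finishes the lemma. From it, condition~(1) follows: given $(p,q)\in R$ and $u\in\wt\gamma_2(p)$, either $u=\ol p$ is a term reachable via $p\To\ol p$, matched by the $\ol q$ of the lemma with $(\ol p,\ol q)\in E^0$; or $u=\gamma_2^0(\ol p)$ is the function of some $\ol p=\lambda x.s$ with $p\To\ol p$, in which case the lemma first yields $q\To\ol q'$ with $R(\ol p,\ol q')$, and one more application of condition~(2) at $(\ol p,\ol q')$ produces $v\in\wt\gamma_2(\ol q')$ with $(u,v)\in E^0$; transitivity of $\To$ then places $v$ in $\wt\gamma_2(q)$. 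In both cases the required witness is found, so $S(\wt\gamma_2(p),\wt\gamma_2(q))$ holds.

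The main obstacle is exactly this converse, concentrated in the propagation lemma: condition~(2) only supplies single-step simulation data, so the entire weak behaviour of $p$ must be reconstructed by threading the hypothesis along a reduction sequence. This relies on three ingredients working together---determinacy of reduction (so that $\gamma_2$ is single-valued on reducts and condition~(2) pins down the unique next step), the coproduct structure of $E^0$ (so a term is only ever matched by a term), and transitivity of $\To$ (to splice the matching weak transitions). The remaining work---checking naturality in $n\in\fset$ and that the pointwise factorisations glue into $\vcat$-morphisms---is routine, since everything is computed componentwise and the factorisation through the jointly monic $E_R$ is unique.
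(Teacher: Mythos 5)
Your proposal is correct and takes essentially the same route as the paper: both proofs unfold the two diagrams of \Cref{def:weakaning-and-weak-sim} into pointwise conditions on pairs of $R$ (using the concrete descriptions of $\gamma_2$, $\wt{\gamma}_2$ and the Egli--Milner-style lifting, and noting the first components agree since $\wt{\gamma}_1=\gamma_1$) and then establish that the two sets of conditions are equivalent. The only difference is that the paper declares this final equivalence ``clear,'' whereas you make explicit the nontrivial direction via your propagation lemma (induction along reduction sequences, using determinacy of reduction, the summand-separation of $E^0$, and transitivity of $\To$), which is exactly the argument left implicit in the paper.
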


\begin{proof}
For each relation $(\Lambda,R)$ we need to prove that existence of a morphism $\delta$ making \eqref{eq:weak-sim-diag-lambda-1} commute is equivalent to existence of a morphism $\epsilon$ making \eqref{eq:weak-sim-diag-lambda-2} commute. Here we denote the relation $\ol{B}((\Lambda,\Lambda), (\Lambda, R))$ by $(B(\Lambda,\Lambda), E_{\Lambda,R})$.
\begin{equation}\label{eq:weak-sim-diag-lambda-1}
\begin{tikzcd}[column sep = 35]
\Lambda \ar{d}[swap]{\gamma} & R \ar{l}[swap]{\outl_R} \ar{d}{\delta} \ar{r}{\outr_R} & \Lambda \ar{d}{\wt{\gamma}} \\
\llangle \Lambda,\Lambda\rrangle \times \Pow_\star (\Lambda+\Lambda^\Lambda) & \ar{l}[swap]{\outl_{\Lambda,R}} E_{\Lambda,R} \ar{r}{\outr_{E_{\Lambda,R}}} & \llangle \Lambda,\Lambda\rrangle \times \Pow_\star (\Lambda+\Lambda^\Lambda)
\end{tikzcd}
\end{equation}
\begin{equation}\label{eq:weak-sim-diag-lambda-2}
\begin{tikzcd}[column sep = 35]
\Lambda \ar{d}[swap]{\wt\gamma} & R \ar{l}[swap]{\outl_R} \ar{d}{\varepsilon} \ar{r}{\outr_R} & \Lambda \ar{d}{\wt{\gamma}} \\
\llangle \Lambda,\Lambda\rrangle \times \Pow_\star (\Lambda+\Lambda^\Lambda) & \ar{l}[swap]{\outl_{\Lambda,R}} E_{\Lambda,R} \ar{r}{\outr_{E_{\Lambda,R}}} & \llangle \Lambda,\Lambda\rrangle\times \Pow_\star (\Lambda+\Lambda^\Lambda)
\end{tikzcd}
\end{equation}
By \Cref{prop:lambda-law} the existence of $\delta$ in \eqref{eq:weak-sim-diag-lambda-1} is equivalent to the following properties for every $n\in \fset$ and $R_n(t_1,t_2)$:
\begin{enumerate}
\item $R_m(t_1[\vec{u}], t_2[\vec{u}])$ for all $m\in \fset$ and $\vec{u}\in \Lambda(m)^n$;
\item $t_1\to t_1' \implies \exists t_2'.~ t_2\To t_2' \wedge R_n(t_1', t_2')$;
\item $t_1=\lambda x.t_1' \implies \exists t_2'.t_2\To\lambda x.t_2' \wedge \forall e\in \Lambda(n). R_n(t_1'[e/x], t_2'[e/x])$.
\end{enumerate}
Similarly, the existence of $\varepsilon$ in \eqref{eq:weak-sim-diag-lambda-2} is equivalent to the following properties for every $n\in \fset$ and $R_n(t_1,t_2)$:
\begin{enumerate}[label=(\arabic*')]
\item $R_m(t_1[\vec{u}], t_2[\vec{u}])$ for all $m\in \fset$ and $\vec{u}\in \Lambda(m)^n$;
\item $t_1\To t_1' \implies \exists t_2'.~ t_2\To t_2' \wedge R_n(t_1', t_2')$;
\item $t_1\To\lambda x.t_1' \implies \exists t_2'.t_2\To\lambda x.t_2' \wedge \forall e\in \Lambda(n). R_n(t_1'[e/x], t_2'[e/x])$.
\end{enumerate}
The conditions (1)--(3) are clearly equivalent to (1')--(3').
\end{proof}

Recall from \Cref{def:app-sim} the notion of applicative similarity and its open extension.

\begin{proposition}
Weak similarity on the operational model \eqref{eq:operational-model-lambda-app} coincides with the open extension of applicative similarity:
\[ \lesssim \;=\; \lesssim^\ap. \]
\end{proposition}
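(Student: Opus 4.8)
The plan is to reduce both relations to a common combinatorial characterization and then prove mutual inclusion. By \Cref{lem:weakening-lambda} together with \Cref{def:weakaning-and-weak-sim}, a relation $(\Lambda,R)$ is a weak simulation on $(\Lambda,\gamma)$ with respect to the weakening $\wt\gamma$ if and only if its components $R_n$ satisfy the weak-premise conditions $(1')$--$(3')$ appearing in the proof of \Cref{lem:weakening-lambda}: substitutivity $R_m(t_1[\vec u],t_2[\vec u])$; simulation of weak reductions $t_1\To t_1'$; and, for $t_1\To\lambda x.t_1'$, existence of $t_2'$ with $t_2\To\lambda x.t_2'$ and $R_n(t_1'[e/x],t_2'[e/x])$ for all $e\in\Lambda(n)$. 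Hence $\lesssim$ is the greatest relation on $\Lambda$ satisfying $(1')$--$(3')$, and it suffices to show that this greatest relation coincides with $\lesssim^\ap$ of \Cref{def:app-sim}. Before the two inclusions I would record three standard facts about call-by-name reduction, all flowing from \Cref{prop:lambda-law}: reduction is deterministic, so weak head normal forms are unique; reduction commutes with substitution, i.e.\ $t\to t'$ implies $t[\vec v]\to t'[\vec v]$ and more generally a head redex is preserved under substitution; and applicative similarity is closed under reduction on the left, i.e.\ $s\lesssim^\ap_0 s'$ and $s\To r$ entail $r\lesssim^\ap_0 s'$ (proved by checking that $\lesssim^\ap_0\cup\{(r,s'):\exists s.\,s\To r\wedge s\lesssim^\ap_0 s'\}$ is an applicative simulation, since $r\To\lambda x.a$ yields $s\To\lambda x.a$).

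For the inclusion $\lesssim^\ap\seq\lesssim$ I would verify that $\lesssim^\ap$ satisfies $(1')$--$(3')$. Condition $(1')$ is associativity of substitution: if $t_1\lesssim^\ap_n t_2$ then $(t_1[\vec u])[\vec w]=t_1[\vec u[\vec w]]\lesssim^\ap_0 t_2[\vec u[\vec w]]=(t_2[\vec u])[\vec w]$ for all closed $\vec w$, whence $t_1[\vec u]\lesssim^\ap_m t_2[\vec u]$. For $(2')$ I would take the matching term to be $t_2$ itself and use reduction-closure: for each closed $\vec v$ we have $t_1[\vec v]\To t_1'[\vec v]$ and $t_1[\vec v]\lesssim^\ap_0 t_2[\vec v]$, so $t_1'[\vec v]\lesssim^\ap_0 t_2[\vec v]$, giving $t_1'\lesssim^\ap_n t_2$.

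The crux is condition $(3')$. Given $t_1\lesssim^\ap_n t_2$ and $t_1\To\lambda x.t_1'$, every closed instance satisfies $t_1[\vec v]\To\lambda x.(t_1'[\vec v])$, so applicative similarity forces $t_2[\vec v]\To\lambda x.s_{\vec v}$ for every closed $\vec v$. I would then argue that $t_2$ itself weakly reduces to an \emph{open} abstraction $\lambda x.t_2'$: by determinism the reduction of $t_2$ either reaches an open abstraction, gets stuck at a term $y\,a_1\cdots a_k$ with free head variable $y$, or diverges, and substituting $\Omega$ for $y$ (respectively using any $\vec v$) in the latter two cases makes $t_2[\vec v]$ diverge, contradicting $t_2[\vec v]\To\lambda x.s_{\vec v}$. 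Uniqueness of weak head normal forms then gives $s_{\vec v}=t_2'[\vec v]$, and the substitution lemma with $e':=e[\vec v]$ yields $(t_1'[e/x])[\vec v]=(t_1'[\vec v])[e'/x]\lesssim^\ap_0 (t_2'[\vec v])[e'/x]=(t_2'[e/x])[\vec v]$ for all $e\in\Lambda(n)$ and closed $\vec v$, i.e.\ $t_1'[e/x]\lesssim^\ap_n t_2'[e/x]$. Thus $\lesssim^\ap$ is a weak simulation and $\lesssim^\ap\seq\lesssim$. For the reverse inclusion $\lesssim\seq\lesssim^\ap$, condition $(1')$ for $\lesssim$ at $m=0$ reduces the claim to $\lesssim_0\seq\lesssim^\ap_0$, and $\lesssim_0$ is an applicative simulation because condition $(3')$ for $\lesssim$ at $n=0$ is exactly the clause defining applicative similarity in \Cref{def:app-sim}; hence $\lesssim_0\seq\lesssim^\ap_0$ and therefore $\lesssim\seq\lesssim^\ap$. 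I expect the derivation that $t_2$ reduces to an open abstraction, i.e.\ the divergence/stuck-term elimination, to be the main obstacle, as it is the one step that genuinely exploits the dynamics of call-by-name evaluation rather than formal manipulation of the definitions.
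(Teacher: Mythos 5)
Your proposal is correct and follows essentially the same route as the paper's proof: reduce weak similarity to the concrete conditions from \Cref{lem:weakening-lambda}, show $\lesssim\,\seq\,\lesssim^\ap$ by observing that the closed part $\lesssim_0$ is an applicative simulation and lifting via the substitutivity condition, and show $\lesssim^\ap\,\seq\,\lesssim$ by verifying that $\lesssim^\ap$ is a weak simulation, with the crux being exactly the paper's case analysis (divergence versus a stuck term with free head variable, refuted by substituting a diverging term such as $\Omega$). The only cosmetic difference is that you verify the weak-premise conditions $(1')$--$(3')$ where the paper verifies the strong-premise conditions $(1)$--$(3)$; since \Cref{lem:weakening-lambda} makes these interchangeable, and since you also supply the (correct) reduction-closure argument for $\lesssim^\ap_0$ that the paper merely asserts, this changes nothing of substance.
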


\begin{proof}
By \Cref{lem:weakening-lambda} and its proof, weak similarity is the greatest relation $\lesssim\,\seq \Lambda\times \Lambda$ such that for every $n\in \fset$ and $t_1 \lesssim_n t_2$:
\begin{enumerate}
\item $t_1[\vec{u}] \lesssim_m t_2[\vec{u}]$ for all $m\in \fset$ and $\vec{u}\in \Lambda(m)^n$;
\item $t_1\to t_1' \implies \exists t_2'.~ t_2\To t_2' \wedge t_1' \lesssim_n t_2'$;
\item $t_1=\lambda x.t_1' \implies \exists t_2'.t_2\To\lambda x.t_2' \wedge \forall e\in \Lambda(n).  t_1'[e/x] \lesssim_n t_2'[e/x]$.
\end{enumerate}

\medskip\noindent
\emph{Proof of $\lesssim \,\seq\, \lesssim^\ap$.}  Note first that $\lesssim_0\,\seq\, \Lambda(0)\times \Lambda(0)$ is an applicative simulation by the above conditions (2) and (3) for $n=0$. Therefore $\lesssim_0\,\seq\, \lesssim^\ap_0$ because $\lesssim^\ap_0$ is the greatest applicative simulation. Moreover, for $n>0$ and $t_1\lesssim_n t_2$, we have
\[ t_1[\vec{u}] \lesssim_0 t_2[\vec{u}]\quad \text{for every $\vec{u}\in \Lambda(0)^n$}\]
by condition (1), whence 
\[ t_1[\vec{u}] \lesssim^\ap_0 t_2[\vec{u}]\quad \text{for every $\vec{u}\in \Lambda(0)^n$}\]
because $\lesssim_0\,\seq\, \lesssim^\ap_0$, and so $t_1\lesssim^\ap_n t_2$. This proves $\lesssim_n \,\seq\, \lesssim^\ap_n$ for $n>0$ and thus $\lesssim\,\seq\, \lesssim^\ap$ overall.

\medskip\noindent
\emph{Proof of $\lesssim^\ap \,\seq\, \lesssim$.} Since $\lesssim$ is the greatest weak simulation, it suffices to show that $\lesssim^\ap$ is a weak simulation. Thus suppose that $n\in \fset$ and $t_1\lesssim^\ap_n t_2$; we need to verify the above conditions (1)--(3) with $\lesssim_n$ replaced by $\lesssim^\ap_n$.

 Let us first consider the case $n=0$, i.e.~$t_1\lesssim^\ap_0 t_2$.
\begin{enumerate}
\item Since $t_1$ and $t_2$ are closed terms, this condition simply states that $t_1\lesssim^\ap_m t_2$ for every $m>0$. This holds by definition of $\lesssim^\ap_m$ because $t_1[\vec{u}]=t_1 \lesssim^\ap_0 t_2=t_2[\vec{u}]$ for every $\vec{u}\in \Lambda(0)^m$. 
\item holds because $\lesssim^\ap_0$ is closed under reduction: $t_1\to t_1'$ and $t_1\lesssim^\ap_0 t_2$ implies $t_1'\lesssim^\ap_0 t_2$. Thus we can take $t_2'=t_2$.
\item holds by definition of $\lesssim^\ap_0$.
\end{enumerate} 
Now suppose that $t_1\lesssim_n^\ap t_2$ for some $n>0$: 
\begin{enumerate}
\item Let $\vec{u}=(u_0,\ldots,u_{n-1})\in \Lambda(m)^n$. If $m=0$ we have $t_1[\vec{u}]\lesssim^\ap_0 t_2[\vec{u}]$ by definition of $\lesssim^\ap_n$. If $m>0$ and   $\vec{v}\in \Lambda(0)^m$ we have
\[ t_1[\vec{u}][\vec{v}] = t_1[u_0[\vec{v}],\ldots, u_{n-1}[\vec{v}]] \lesssim^\ap_0 t_2[u_0[\vec{v}],\ldots, u_{n-1}[\vec{v}]] = t_2[\vec{u}][\vec{v}], \]
whence $t_1[\vec{u}]\lesssim^\ap_m t_2[\vec{u}]$.   
\item Suppose that $t_1\to t_1'$. Then $t_1[\vec{u}]\to t_1'[\vec{u}]$ for every $\vec{u}\in \Lambda(0)^n$ because reductions respect substitution, and moreover $t_1[\vec{u}]\lesssim^\ap_0 t_2[\vec{u}]$ because $t_1\lesssim^\ap_n t_2$. It follows that $t_1'[\vec{u}] \lesssim^\ap_0 t_2[\vec{u}]$ because $\lesssim^\ap_0$ is closed under reduction, whence $t_1'\lesssim^\ap_n t_2$. 
\item Suppose that $t_1=\lambda x.t_1'$. To show that $t_2\To \lambda x. t_2'$ for some $t_2'$, suppose the contrary. There are two cases:

\medskip\noindent\underline{Case 1:} The term $t_2$ diverges, that is, its reduction sequence $t_2\to t_2'\to t_2'' \to \cdots$ is infinite.

\medskip\noindent Choose an arbitrary $\vec{u}\in \Lambda(0)^n$. Then $t_1[\vec{u}]$ is a $\lambda$-abstraction, while $t_2[\vec{u}]$ diverges. It follows that $t_1[\vec{u}]\not\lesssim^\ap_0 t_2[\vec{u}]$, in contradiction to $t_1\lesssim^\ap_n t_2$.

\medskip\noindent\underline{Case 2:} The term $t_2$ reduces in finitely many steps to $y \app s_1 \app \cdots \app s_m$ for some variable $y\in n$ and terms $s_1,\ldots, s_m\in \Lambda(n)$.

\medskip\noindent Choose an arbitrary $\vec{u}\in \Lambda(0)^n$ such that the term $u_y$ diverges, e.g. $u_y= (\lambda x. x\app x) \app (\lambda x. x\app x)$. Then $t_1[\vec{u}]$ is a $\lambda$-abstraction while
$t_2[\vec{u}]$ diverges, again contradicting $t_1\lesssim^\ap_n t_2$.

\medskip\noindent Thus $t_2\To \lambda x.t_2'$ for $x=n$ and $t_2'\in \Lambda(n+1)$. Moreover, for every  $e\in \Lambda(n)$ and $\vec{u}\in \Lambda(0)^n$ we have
\[ t_1'[e/x][\vec{u}] = t_1'[\vec{u},e[\vec{u}]] = t_1'[\vec{u},x][e[\vec{u}]/x] \lesssim^\ap_0 t_2'[\vec{u},x][e[\vec{u}]/x] = t_2'[\vec{u},e[\vec{u}]] = t_2'[e/x][\vec{u}]   \] 
using that $t_1[\vec{u}]\lesssim^\ap_0 t_2[\vec{u}]$ by definition of $\lesssim^\ap_n$. This proves $t_1'[e/x] \lesssim^\ap_n t_2'[e/x]$.\qedhere
\end{enumerate} 
\end{proof}
Finally, let us verify that the condition of \Cref{thm:compositionality-cat} is satisfied:

\begin{proposition}
The triple $(\Lambda,\ini,\wt\gamma)$ forms a lax $\rho$-bialgebra.
\end{proposition}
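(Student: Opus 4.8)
The plan is to verify lax commutativity of the diagram in \Cref{D:lax-bialg} with $X=\Lambda$, $a=\ini$ and $c=\wt\gamma$ by splitting it into its two behavioural components. Since the preorder on $B(X,Y)=\llangle X,Y\rrangle\times\Pow_\star(Y+Y^X)$ is equality in the first factor and inclusion in the second, and since every morphism of the form $B(\id,h)$ as well as the two product projections respect this decomposition, laxness of the full diagram is equivalent to the conjunction of (i) \emph{strict} commutativity in the $\llangle\Lambda,\Lambda\rrangle$-component, and (ii) the inclusion $\wt\gamma_2(\ini(w))\supseteq(\text{bottom path})_2(w)$ for every $w\in\Sigma\Lambda$, where the subscript $2$ denotes the projection to the $\Pow_\star(\Lambda+\Lambda^\Lambda)$-component and the bottom path is $B(\id,\wh\ini)\circ B(\id,\Sigmas\nabla)\circ\rho_{\Lambda,\Lambda}\circ\Sigma\langle\id,\wt\gamma\rangle$.

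For (i) I would use that $\wt\gamma_1=\gamma_1=\gamma_1^0$ is the substitution structure, which does not involve the weakening at all, together with \Cref{rem:tau} and \Cref{rem:rho0-to-rho-lambda}: the first component $\fst\circ\rho_{\Lambda,\Lambda}$ factors through the map $\tau^0_{\Lambda,\Lambda}$ and hence depends only on the first components of the subterm behaviours, that is, on $\gamma_1^0$. Consequently the first-component equation of the lax-bialgebra diagram coincides with the first-component part of the operational-model diagram \eqref{diag:gamma} for $\gamma^0$, which holds by \Cref{prop:lambda-law}. Thus (i) holds on the nose.

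For (ii), which is the crux, I would unwind $\rho$ through \Cref{rem:rho0-to-rho-lambda} and \Cref{def:lamgsos}, proceeding by cases on the shape of $w$. If $w$ is a variable or an abstraction, the term $\ini(w)$ does not reduce, and the behaviour produced by the bottom path is exactly the single reduct or function already recorded in $\wt\gamma_2(\ini(w))$ via the reflexive step $\ini(w)\To\ini(w)$, so the inclusion is immediate. The essential case is application, $w=s_1\app s_2$. Here $\Sigma\langle\id,\wt\gamma\rangle$ feeds the sets $\wt\gamma_2(s_1),\wt\gamma_2(s_2)$ into the application clauses of \Cref{def:lamgsos}, and after the powerset plumbing of \eqref{eq:rho0-to-rho-lambda} followed by $\Sigmas\nabla$ and $\wh\ini$ the resulting set consists of the elements $s_1'\app s_2$ for $s_1'\in\wt\gamma_2(s_1)\cap\Lambda$, together with the elements $s_1'[s_2/x]$ for each function $(e\mapsto s_1'[e])\in\wt\gamma_2(s_1)\cap\Lambda^\Lambda$. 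By \Cref{def:weak-operational-model-lambda} the former means $s_1\To s_1'$, whence $s_1\app s_2\To s_1'\app s_2$ by iterating rule \texttt{app1}; the latter means, using \Cref{prop:lambda-law}\ref{lem:lambda-law-3}, that $s_1\To\lambda x.s_1'$, whence $s_1\app s_2\To(\lambda x.s_1')\app s_2\to s_1'[s_2/x]$ by \texttt{app1} followed by \texttt{app2}. In both cases the produced element lies in $\wt\gamma_2(s_1\app s_2)$, which establishes the inclusion.

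The main obstacle is purely computational: faithfully tracking the auxiliary natural transformations $\st,\delta,\can,\varepsilon,\phi$ of \eqref{eq:rho0-to-rho-lambda} together with the collapsing maps $\Sigmas\nabla$ and $\wh\ini$, and checking that they realize exactly the case analysis above. In particular, the $\varepsilon/\phi$ pair serves to insert and then discard the stuck marker $\ast$, which is legitimate precisely because the weak semantics ignores stuckness (the ``$+1$'' summand having been dropped in \Cref{rem:rho0-to-rho-lambda}). This bookkeeping is the lengthy but routine verification already referenced after \Cref{lem:operational-model-lambda-extended}, and it introduces no idea beyond the soundness of the rules \eqref{eq:lambda-cbn} for weak transitions.
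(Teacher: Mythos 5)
Your proposal is correct and follows essentially the same route as the paper's proof: strict commutativity in the $\llangle\Lambda,\Lambda\rrangle$-component (since $\fst\comp\rho$ depends only on the substitution structure $\gamma_1^0=\wt\gamma_1$), and inclusion in the $\Pow_\star$-component, which reduces exactly to soundness of the weak rules \texttt{w-app1} and \texttt{w-app2} obtained by iterating \texttt{app1} and \texttt{app2}. The only (harmless) imprecision is in the variable case, where the bottom path yields $\emptyset$ rather than a singleton because $\phi$ discards the stuck marker, so the inclusion there holds vacuously rather than by matching the reflexive step.
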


\begin{proof}
We need to prove lax commutativity of the following diagram:
\[
\begin{tikzcd}[row sep=3em, scale cd=.85]
\Sigma(\Lambda) \ar{r}{\ini} \ar{d}[swap]{\Sigma\langle \id, \wt{\gamma}\rangle} & \Lambda \ar{r}{\wt{\gamma}} & \llangle \Lambda,\Lambda\rrangle \times \Pow_\star(\Lambda+\Lambda^\Lambda) \\
\Sigma(\Lambda\times \llangle \Lambda,\Lambda\rrangle \times \Pow_\star(\Lambda+\Lambda^\Lambda)) \ar{r}{\rho_{\Lambda,\Lambda}} & \llangle \Lambda, \Sigmas(\Lambda+\Lambda)\rrangle \times \Pow_\star(\Sigmas(\Lambda+\Lambda) + (\Sigmas(\Lambda+\Lambda))^\Lambda) \ar[phantom]{u}[description]{\dgeq{-90}} \ar{r}[yshift=.6em]{\llangle \Lambda, \Sigmas \nabla\rrangle \times \Pow(\Sigmas\nabla+(\Sigmas\nabla)^\Lambda)} & \llangle \Lambda, \Sigmas(\Lambda)\rrangle \times \Pow_\star(\Sigmas(\Lambda) + (\Sigmas(\Lambda))^\Lambda) \ar{u}[swap]{\llangle \Lambda,\hat\ini\rrangle\times \Pow(\hat\ini+\hat\ini^\Lambda)}  
\end{tikzcd}
\]
In the first component, the diagram strictly commutes by definition of $\fst\comp \rho_{X,Y}$. In the second component, by definition of $\snd\comp \rho_{X,Y}$, lax commutativity amounts to the assertion that the weak versions
\[
    \begin{array}{l@{\qquad}l@{\qquad}l}
      \inference[\texttt{w-app1}]{s\To s'} {s \app t \To s' \app t}
      &
        \inference[\texttt{w-app2}]{s\To \lambda x.s'}{s \app t \To s'[t/x]}
    \end{array}
\]
of the rules $\texttt{app1}$ and $\texttt{app2}$ in \eqref{eq:lambda-cbn} are sound. This is clearly the case, since $\texttt{w-app1}$ and $\texttt{w-app2}$ amount to repeated application of $\texttt{app1}$ and $\texttt{app2}$. 
\end{proof}
We thus obtain \Cref{thm:app-sim-cong} as an instance of \Cref{thm:compositionality-cat}.

\end{document}

%%% Local Variables:
%%% mode: latex
%%% TeX-master: t
%%% End: